\begin{document}

\title[Normalizing the Taylor expansion of non-deterministic $λ$-terms]{
	Normalizing the Taylor expansion of non-deterministic $λ$-terms,
	\emph{via} parallel reduction of resource vectors
}

\author{Lionel Vaux}
\address{Aix Marseille Univ, CNRS, Centrale Marseille, I2M, Marseille, France}
\email{lionel.vaux@univ-amu.fr}
\thanks{This work was supported by
	French ANR projects Coquas (number ANR-12-JS02-0006)
	and Récré (number ANR-11-BS02-0010),
  as well as the French-Italian Groupement de Recherche International on Linear Logic.}

\keywords{lambda-calculus, non-determinism, normalization, denotational semantics}
\ACMCCS{\textit{Theory of computation~Denotational semantics; }
Theory of computation~Lambda calculus;
Theory of computation~Linear logic;}
\titlecomment{
	An extended abstract \cite{vaux:taylor-beta} of this paper, focused on the
	translation of parallel β-reduction steps through Taylor expansion, appeared
	in the proceedings of \emph{CSL 2017}.
}

\begin{abstract}
	It has been known since Ehrhard and Regnier’s seminal work 
	on the Taylor expansion of λ-terms that this operation 
	commutes with normalization: the expansion of a λ-term 
	is always normalizable and its normal form is the expansion
	of the Böhm tree of the term.

	We generalize this result to the non-uniform setting of the 
	algebraic λ-calculus, \ie, λ-calculus extended with linear
	combinations of terms. This requires us to tackle two difficulties:
	foremost is the fact that Ehrhard and Regnier’s techniques rely heavily on
	the uniform, deterministic nature of the ordinary λ-calculus, and thus cannot
	be adapted; second is the absence of any satisfactory generic extension of
	the notion of Böhm tree in presence of quantitative non-determinism, which is
	reflected by the fact that the Taylor expansion of an algebraic λ-term is not
	always normalizable.

	Our solution is to provide a fine grained study of the dynamics of
	β-reduction under Taylor expansion, by introducing a notion of reduction on
	resource vectors, \ie infinite linear combinations of resource λ-terms.
	The latter form the multilinear fragment of the differential λ-calculus, and
	resource vectors are the target of the Taylor expansion of λ-terms.
	We show the reduction of resource vectors contains the image of any β-reduction
	step, from which we deduce that Taylor expansion and normalization commute on
	the nose.

	We moreover identify a class of algebraic λ-terms, encompassing both normalizable
	algebraic λ-terms and arbitrary ordinary λ-terms: the expansion of these is
	always normalizable, which guides the definition of a generalization of Böhm
	trees to this setting.
\end{abstract}

\maketitle

\section{Introduction}

Quantitative semantics was first proposed by Girard \cite{girard:quantitative}
as an alternative to domains and continuous functionals, for defining
denotational models of λ-calculi with a natural interpretation of
non-determinism: a type is given by a collection of “atomic states”; a term
of type $A$ is then represented by a vector (\ie a possibly infinite formal linear
combination) of states. The main matter is the treatment of the
function space: the construction requires the interpretation of function terms to be analytic, \ie
defined by power series.

This interpretation of λ-terms was at the origin of linear
logic: the application of an analytic map to its argument boils down to the
linear application of its power series (seen as a matrix) to the vector of
powers of the argument; similarly, linear logic decomposes the
application of λ-calculus into the
linear cut rule and the promotion operator. Indeed, the seminal model of linear
logic, namely coherence spaces and stable/linear functions, was introduced as a
qualitative version of quantitative semantics \cite[especially Appendix C]{girard:fifteen}. 

Dealing with power series, quantitative semantics must account for infinite
sums. The interpretations of terms in Girard's original model can be seen as a special case of Joyal's analytic
functors \cite{joyal:especes}: in particular, coefficients are sets
and infinite sums are given by coproducts. This allows to give a semantics to fixed
point operators and to the pure, untyped λ-calculus. On the other hand, it does
not provide a natural way to deal with weighted (\eg, probabilistic)
non-determinism, where coefficients are taken in an external semiring of
scalars.

In the early 2000’s, Ehrhard introduced an alternative presentation of
quantitative semantics \cite{ehrhard:fs}, limited to a typed setting, but where
types can be interpreted as particular vector spaces, or more generally
semimodules over an arbitrary fixed semiring; called \emph{finiteness spaces},
these are moreover equipped with a linear topology, allowing to interpret
linear logic proofs as linear and continuous maps, in a standard sense.
In this setting, the formal operation of differentiation of power series
recovers its usual meaning of linear approximation of a function, and morphisms
in the induced model of λ-calculus are subject to Taylor expansion:
the application $φ(α)$ of the
analytic function $φ$ to the vector $α$ boils down to the sum
$\sum_{n\in\naturals}\frac 1{\factorial n}
\pars{\frac{\partial^n φ}{\partial x^n}}_{x=0}\cdot{α^n}$
where $\pars{\frac{\partial^n φ}{\partial x^n}}_{x=0}$
is the $n$-th derivative of $φ$ computed at $0$, 
which is an $n$-linear map, and $α^n$ is the $n$-th tensor power of $α$.

Ehrhard and Regnier gave a computational meaning to such derivatives by 
introducing linearized variants of application and substitution in the λ-calculus,
which led to the differential λ-calculus \cite{er:tdlc},
and then the resource λ-calculus \cite{er:resource} ---
the latter retains iterated derivatives at zero as the only form of application.
They were then able to recast the above Taylor expansion formula in a syntactic,
untyped setting: to every λ-term $M$, they associate a vector $\TaylorExp M$ of
resource λ-terms, \ie terms of the resource λ-calculus.

The Taylor expansion of a λ-term can be seen as an intermediate, infinite
object, between the term and its denotation in quantitative semantics. Indeed,
resource terms still retain a dynamics, if a very simple, finitary one: the size of
terms is strictly decreasing under reduction. Furthermore, normal resource terms are
in close relationship with the atomic states of quantitative semantics of the
pure λ-calculus (or equivalently with the elements of a reflexive object in the
relational model \cite{bem:enough}; or with normal type derivations in a
non-idempotent intersection type system \cite{carvalho:exec}), so that the normal
form of $\TaylorExp M$ can be considered as the denotation of $M$, which allows
for a very generic description of quantitative semantics.

Other approaches to quantitative semantics generally impose a constraint on the
computational model \emph{a priori}. For instance, the model of finiteness
spaces \cite{ehrhard:fs} is, by design, limited to strongly normalizing
computation. Another example is that of probabilistic coherence spaces
\cite{de:pcoh}, a model of untyped λ-calculi extended with probabilistic
choice, rather than arbitrary weighted superpositions.
Alternatively, one can interpret non-deterministic extensions of PCF
\cite{lmcp:weighted,laird:fixed}, provided the semiring of scalars has
all infinite sums. By contrast, the “normalization of
Taylor expansion” approach is more canonical, as it does not rely on a
restriction on the scalars, nor on the terms to be interpreted.

Of course, there is a price attached to such canonicity:
in general, the normal form of a vector of resource
λ-terms is not well defined, because we may have to consider infinite sums of
scalars.
Ehrhard and Regnier were nonetheless able to prove that the Taylor expansion
$\TaylorExp M$ of a pure λ-term is always normalizable \cite{er:resource}. This
can be seen as a new proof of the fact that Girard’s quantitative semantics of pure
λ-terms uses finite cardinals only \cite{hasegawa:generating}.
They moreover established that this normal form is exactly the Taylor expansion
of the Böhm tree $\BohmTree M$ of $M$ \cite{er:bkt} ($\BohmTree M$ is the
possibly infinite tree obtained by hereditarily applying the head reduction
strategy in $M$).
Both results rely heavily on the uniformity property of the pure λ-calculus:
all the resource terms in $\TaylorExp M$ follow a single syntactic tree pattern.
This is a bit disappointing since quantitative semantics 
was introduced as a model of non-determinism,
which is ruled out by uniformity.

Actually, the Taylor expansion operator extends naturally to the algebraic
λ-calculus \cite{vaux:alglam}: a generic, non-uniform extension of
λ-calculus, augmenting the syntax with formal finite linear
combinations of terms. Then it is not difficult to find terms whose 
Taylor expansion is not normalizable.
Nonetheless, interpreting types as finiteness spaces of resource terms, 
Ehrhard \cite{ehrhard:finres} proved by a reducibility technique that 
the Taylor expansion of algebraic λ-terms typed in a variant of system $F$ is
always normalizable.

\subsection{Main results}

In the present paper, we generalize Ehrhard’s result and show that all weakly
normalizable algebraic λ-terms have a normalizable Taylor expansion (Theorem
\ref{theorem:converges:normalizes}, p.\pageref{theorem:converges:normalizes}).\footnote{
	We had already obtained such a result for strongly normalizable λ-terms
	in a previous work with Pagani and Tasson \cite{ptv:taylorsn}: there, we
	further proved that the finiteness structure on resource λ-terms could be
	refined to characterize exactly the strong normalizability property in a
	λ-calculus with finite formal sums of terms.
	Here we rely on a much coarser notion of finiteness: see subsection
	\ref{subsection:normalizable}.
}

We moreover relate the normal form of the expansion of a term with the
normal form of the term itself, both in a computational sense (\ie\ the
irreducible form obtained after a sequence of reductions) and in a more
denotational sense, via an analogue of the notion of Böhm tree:
Taylor expansion does commute with normalization, in both those senses
(Theorem \ref{theorem:NormalForm:TaylorExp}, p.\pageref{theorem:NormalForm:TaylorExp};
Theorem \ref{theorem:taylor:determinable}, p.\pageref{theorem:taylor:determinable}).

When restricted to pure λ-terms, Theorem \ref{theorem:taylor:determinable}
provides a new proof, not relying on uniformity, that the normal form of
$\TaylorExp M$ is isomorphic to $\BohmTree M$.
In their full extent, our results provide a generalization of the notion of
non-deterministic Böhm tree \cite{deliguoro-piperno:ndlc} in a weighted,
quantitative setting.

Let us stress that neither Ehrhard’s work \cite{ehrhard:finres} nor our own
previous work with Pagani and Tasson \cite{ptv:taylorsn} addressed the
commutation of normalization and Taylor expansion. Indeed, in the absence of
uniformity, the techniques used by Ehrhard and Regnier
\cite{er:resource,er:bkt} are no longer available, and we had to design another
approach.\footnote{
	It is in fact possible to refine Ehrhard and Regnier’s approach, \emph{via}
	the introduction of a \emph{rigid} variant of Taylor expansion
	\cite{tao:species}, which can then be adapted to the non-deterministic
	setting.
	This allows to describe the coefficients in the normal form of Taylor
	expansion, like in the uniform case, and then prove that Taylor expansion
	commutes with the computation of Böhm trees.
	It does not solve the problem of possible divergence, though, and one has to
	assume the semiring of coefficients is complete, \ie that all sums converge.
	See Subsection \ref{subsection:related} on related work for more details.
} Our solution is to introduce a notion of reduction on resource vectors,
so that: (i) this reduction contains the translation of any β-reduction step
(Lemma \ref{lemma:vpresRed:correct}, p.\pageref{lemma:vpresRed:correct});
(ii) normalizability (and the value of the normal form) of resource vectors is
preserved under reduction (Lemma \ref{lemma:vpresRed:NormalForm},
p.\pageref{lemma:vpresRed:NormalForm}). This approach turns out to be quite
delicate, and its development led us to two technical contributions that we
deem important enough to be noted here:
\begin{itemize}
\item the notion of \emph{reduction structure} (subsection
	\ref{subsection:structures}) that allows to control the families of resource
	terms simultaneously involved in the reduction of a resource vector:
	in particular this provides a novel, modular mean to circumvent the
	inconsistency of β-reduction in presence of negative coefficients
	(a typical deficiency of the algebraic λ-calculus \cite{vaux:alglam});
\item our analysis of the effect of parallel reduction on the size of resource
	λ-terms (Section \ref{section:parallel}): this constitutes the technical core
	of our approach, and it plays a crucial rôle in establishing key additional
	properties such as confluence (Lemma \ref{lemma:vpbdresRed:diamond},
	p.\pageref{lemma:vpbdresRed:diamond}, and Corollary
	\ref{corollary:splitVariant:presRed:fullReduct},
	p.\pageref{corollary:splitVariant:presRed:fullReduct}) and conservativity
	(Lemma \ref{lemma:resEq:conservative:zerosumfree},
	p.\pageref{lemma:resEq:conservative:zerosumfree}, and
	Lemma \ref{lemma:resEq:conservative:normalizable},
	p.\pageref{lemma:resEq:conservative:normalizable}).
\end{itemize}

\subsection{Structure of the paper}

The paper begins with a few mathematical preliminaries, in section
\ref{section:preliminaries}: we recall some definitions
about semirings and semimodules (Subsection \ref{subsection:semirings}), if only to fix
notations and vocabulary; we also provide a very brief review of finiteness
spaces (Subsection \ref{subsection:finiteness}),
then detail the particular case of
linear-continuous maps defined by summable families of vectors (subsection
\ref{subsection:summable}), the latter notion pervading the paper.

In Section \ref{section:resource} we review the syntax and the reduction
relation of the resource λ-calculus, as introduced by Ehrhard and Regnier
\cite{er:resource}. The subject is quite standard now, and the only new
material we provide is about minor and unsurprising combinatorial properties of
multilinear substitution.

Section \ref{section:vectors} contains our first notable contribution: after
recalling the Taylor expansion construction, we prove that it is compatible
with substitution. This result is related with the functoriality of promotion 
in quantitative denotational models and the proof technique is quite similar.
In the passing, we recall the syntax of the algebraic λ-calculus and briefly
discuss the issues raised by the contextual extension of β-reduction in
presence of linear combinations of terms, as evidenced by previous work
\cite[\emph{etc.}]{vaux:alglam2,ad:lineal,vaux:alglam}.

In Section \ref{section:reduction:vectors}, we discuss the possible extensions
of the reduction of the resource λ-calculus to resource vectors, \ie{} infinite
linear combinations of resource terms, and identify two main issues.  First, in
order to simulate β-reduction, we
are led to consider the parallel reduction of resource terms in resource
vectors, which is not always well defined. Indeed, a single resource term
might have unboundedly many antecedents by parallel reduction, hence this
process might generate infinite sums of coefficients: we refer to this
phenomenon as the \emph{size collapse} of parallel resource reduction
(Subsection \ref{subsection:collapse}).
Second, similarly to the case of the algebraic λ-calculus, the induced
equational theory might become trivial, due the interplay between coefficients
in vectors and the reduction relation.
To address the latter problem we introduce the notion of reduction structure
(Subsection \ref{subsection:structures}) which allows us to modularly restrict
the set of resource terms involved in a reduction: later in the paper, we will
identify reduction structures ensuring the consistency of the reduction of
resource vectors (Subsection \ref{subsection:conservativity}).

In Section \ref{section:parallel}, we introduce successive restrictions of the
parallel reduction of resource vectors, in order to avoid the abovementioned
size collapse. We first observe that, to bound the size of a term as a
function of the size of any of its reducts, it is sufficient to bound the
length of chains of immediately nested fired redexes in a single parallel
reduction step (Subsection \ref{subsection:pbresRed}).
This condition does not allow us to close a pair of reductions to a common
reduct, because it is not stable under unions of fired redexes. We thus tighten
it to bounding the length of all chains of (not necessarily immediately) nested
fired redexes (Subsection \ref{subsection:pbdresRed}): this enables us to
obtain a strong confluence result, under a mild hypothesis on the semiring.
An even more demanding condition is to require the fired redexes as well
as the substituted variables to occur at a bounded depth (subsection
\ref{subsection:pbsresRed}): then we can define a maximal parallel reduction
step for each bound, which entails strong confluence without any additional hypothesis.
Finally, we consider reduction structures involving resource terms of bounded
height (Subsection \ref{subsection:bounded}): when restricting to such a
bounded reduction structure, the strongest of the above three conditions
is automatically verified.

We then show, in Section \ref{section:simulation:beta}, that the translation 
of β-reduction through Taylor expansion fits into this setting: the height of
the resource terms involved in a Taylor expansion is bounded by that of the
original algebraic λ-term, and every β-reduction step is an instance of the
previously introduced parallel reduction of resource vectors. As a consequence
of our strongest confluence result, we moreover obtain that any reduction step
from the Taylor expansion of a λ-term can be extended into the translation of a
parallel β-reduction step.

We turn our attention to normalization in Section \ref{section:normalization}.
We first show that normalizable resource vectors are stable under reduction.
We moreover establish that their normal form is obtained as the limit of the
parallel left reduction strategy (Subsection \ref{subsection:normalizable}).
Then we introduce \emph{Taylor normalizable} algebraic λ-terms as those having a
normalizable Taylor expansion, and deduce from the previous results that they
are stable under β-reduction (Subsection \ref{subsection:Taylor:normalizable}):
in particular, the normal form of Taylor expansion does define a denotational
semantics for that class of terms.
Then we establish that normalizable terms are Taylor normalizable (subsection
\ref{subsection:normalizable:terms}): it follows that normalization and Taylor
expansion commute on the nose.

We conclude with Section \ref{section:determinable}, showing how our techniques
can be applied to the class of hereditarily determinable terms, that we
introduce \emph{ad-hoc}: those include pure λ-terms as well as normalizable
algebraic λ-terms as a particular case, and we show that all
hereditarily determinable terms are Taylor normalizable and the
coefficients of the normal form are given by a sequence of approximants, close
to the Böhm tree construction.

\subsection{Related and future work}

\label{subsection:related}

Besides the seminal work by Ehrhard and Regnier \cite{er:resource,er:bkt} in
the pure case, we have already cited previous approaches to the normalizability
of Taylor expansion based on finiteness conditions \cite{ehrhard:finres,ptv:taylorsn}.

A natural question to ask is how our generic notion of normal form of Taylor
expansion compares with previously introduced notions of denotation in non-deterministic
settings: non-deterministic Böhm trees \cite{deliguoro-piperno:ndlc},
probabilistic Böhm trees \cite{leventis:phd}, weighted relational models
\cite{de:pcoh,lmcp:weighted,laird:fixed}, \emph{etc.}
The very statement of such a question raises several difficulties, prompting
further lines of research.

One first obstacle is the fact that, by contrast with the uniform case of the
ordinary λ-calculus, the Taylor expansion operator is not injective on
algebraic λ-terms (see Subsection \ref{subsection:taylor}), not even on the
partial normal forms that we use to introduce the approximants in section
\ref{section:determinable}. This is to be related with the quotient that the
non-deterministic Böhm trees of de'Liguoro and Piperno
\cite{deliguoro-piperno:ndlc} must undergo in order to capture observational
equivalence. On the other hand, to our knowledge, finding
sufficient conditions on the semiring of scalars ensuring that the Taylor
expansion becomes injective is still an open question.

Also, we define normalizable vectors based on the notion of summability: a sum
of vectors converges when it is componentwise finite \ie{}, for each component,
only finitely many vectors have a non-zero coefficient (see subsection
\ref{subsection:summable}).  If more information is available on scalars,
namely if the semiring of scalars is complete in some topological or
order-theoretic sense, it becomes possible to normalize the Taylor expansion of
all terms.

Indeed, Tsukada, Asada and Ong have recently established \cite{tao:species} the
commutation between computing Böhm trees and Taylor expansion with
coefficients taken in the complete semiring of positive reals $[0,+\infty]$
where all sums converge.
Let us precise that they do not consider weighted non-determinism, only formal
binary sums of terms, and that the notion of Böhm tree they consider is a very
syntactic one, similar to the partial normal forms we introduce in section
\ref{section:determinable}.
Their approach is based on a precise description of the relationship
between the coefficients of resource terms in the expansion of a term and those in
the expansion of its Böhm-tree, using a \emph{rigid} Taylor expansion as an
intermediate step: this avoids the ambiguity between the sums of coefficients
generated by redundancies in the expansion and those representing
non-deterministic superpositions.

Tsukada, Asada and Ong’s work can thus be considered as a refinement of Ehrhard
and Regnier’s method, that they are moreover able to generalize to the
non-deterministic case provided the semiring of scalars is complete.
By contrast, our approach is focused on β-reduction and identifies a class of
algebraic λ-terms for which the normalization of Taylor expansion converges
independently from the topology on scalars.
It seems only natural to investigate the connections between both approaches,
in particular to tackle the case of weighted non-determinism in a complete
semiring, as a first step towards the treatment of probabilistic or quantum
superposition, as also suggested by the conclusion of their paper.

In the probabilistic setting, though, the Böhm tree
construction \cite{leventis:phd} relies on both the topological properties of
real numbers and the restriction to discrete probability subdistributions.
Relying on this, Dal Lago and Leventis have recently shown
\cite{dll:pbt} that the sum defining the normal form
of Taylor expansion of an arbitrary probabilistic λ-term always converges with
finite coefficients, and that this normal form is the Taylor expansion of its
probabilistic Böhm tree, in the non-extensional sense \cite[section
4.2.1]{leventis:phd}.
To get a better understanding of the shape of Taylor
expansions of probabilistic λ-terms and their stability under reduction, a possible
first step is to investigate probabilistic coherence spaces
\cite{de:pcoh} on resource λ-terms: these would be the analogue, in the probabilistic setting, of
the finiteness structures ensuring the summability of normal forms in the
non-deterministic setting (see Subsection \ref{subsection:normalizable:terms}).

Apart from relating our version of quantitative semantics with pre-existing
notions of denotation for non-deterministic λ-calculi, we plan to investigate
possible applications to other proof theoretic or computational frameworks:
namely, linear logic proof nets \cite{girard:ll} and infinitary λ-calculus
\cite{kksv:infinitary}.

The Taylor expansion of λ-terms can be generalized to linear logic proof nets:
the case of linear logic can even be considered as being more primitive, as it
is directly related with the structure of those denotational models that
validate the Taylor expansion formula \cite{ehrhard:dill}.
Proof nets, however, do not enjoy the uniformity property of λ-terms: 
no general coherence relation is satisfied by the elements of the Taylor expansion of a
proof net \cite[section V.4.1]{tasson:phd}. This can be related with the
non-injectivity of coherence semantics \cite{tortora:obsessional}.
In particular, it is really unclear how Ehrhard and Regnier’s methods, or even
Tsukada, Asada and Ong’s could be transposed to this setting.
By contrast, our recent work with Chouquet \cite{cv:antireduits-csl}
shows that our study of reduction under Taylor expansion can be adapted to
proof nets.

It is also quite easy to extend the Taylor expansion operator to infinite
λ-terms, at least for those of $\Lambda^{001}$, where only the argument
position of applications is treated coinductively.
For infinite λ-terms, it is no longer the case that the support of Taylor
expansion involves resource λ-terms of bounded height only.
Fortunately, we can still rely on the results of subsection
\ref{subsection:pbdresRed}, where we only require a bound on the nesting of
fired redexes: this should allow us to give a counterpart, through Taylor
expansion, of the strongly converging reduction sequences of infinite λ-terms.
More speculatively, another possible outcome is a characterization of
hereditarily head normalizable terms via their Taylor expansion, adapting our
previous work on normalizability with Pagani and Tasson \cite{ptv:taylorsn}.

\section{Technical preliminaries}

\label{section:preliminaries}

We write:
\begin{itemize}
	\item$\naturals$ for the semiring of natural numbers;
	\item$\powerset X$ for the powerset of a set $X$: $\calX\in\powerset X$ iff $\calX\subseteq X$;
	\item$\card X$ for the cardinal of a finite set $X$;
	\item$\oc X$ for the set of finite multisets of elements of $X$;
	\item$\mset{x_1,\dotsc,x_n}\in\oc X$ for the 
		multiset with elements $x_1,\dotsc,x_n\in X$
		(taking repetitions into account),
		and then $\card{\mset{x_1,\dotsc,x_n}}=n$ for its 
		cardinality;
	\item$\prod_{i\in I} X_i$ and $\sum_{i\in I} X_i$ respectively 
		for the product and sum of a family $\pars{X_i}_{i\in I}$ of sets:
		in particular $\sum_{i\in I} X_i=\Union_{i\in I}\set i\times X_i$;
	\item$X^I=\prod_{i\in I}X$ for the set of applications from $I$ to $X$ or,
		equivalently, for the set of $I$ indexed families of 
		elements of $X$.
\end{itemize}

Throughout the paper we will be led to consider various categories of sets and
elements associated with a single base set $X$: elements of $X$, subsets of
$X$, finite multisets of elements of $X$, \emph{etc.} In order to help keeping
track of those categories, we generally adopt the following typographic
conventions:
\begin{itemize}
	\item we use small latin letters for 
		the elements of $X$, say $a,b,c\in X$;
	\item for subsets of $X$, we use 
		cursive capitals, say $\calA,\calB,\calC\in\powerset X$;
	\item for sets of subsets of $X$, we use 
		Fraktur capitals, say $\fA,\fB,\fC\subseteq\powerset X$;
	\item for (possibly infinite) linear combinations of elements of $X$, we use
    small greek letters, say $α,β,γ\in \rigS^X$,
    where $\rigS$ denotes some set of scalar coefficients;
	\item we transpose all of the above conventions to the 
		set $\oc X$ of finite multisets by overlining:
		\eg, we write $\ms a=\mset{a_1,\dotsc,a_n}\in\oc X$, 
		$\ms\calA\subseteq\oc X$ or $\ms{α}\in\rigS^{\oc X}$.
\end{itemize}

In the remaining of this section, we introduce basic mathematical content that
will be used throughout the paper.

\subsection{Semirings and semimodules}

\label{subsection:semirings}

A \emph{semiring}\footnote{
	The terminology of semirings is much less well established 
	than that of rings, and one can find various non equivalent 
	definitions depending on the presence of units or on commutativity
	requirements. Following Golan’s terminology \cite{golan:semirings},
	our semirings are \emph{commutative semirings}, which is required 
	here because we consider multilinear applications between 
	modules.
} $\rigS$ is the data of a carrier set $\carrier {\rigS}$, together 
with commutative monoids $(\carrier {\rigS},+_{\rigS},0_{\rigS})$
and $(\carrier {\rigS},\cdot_{\rigS},1_{\rigS})$ such that
the multiplicative structure distributes over the additive one, \ie
for all $a,b,c\in \carrier {\rigS}$,
$a\mathbin{\cdot_{\rigS}} 0_{\rigS}=0_{\rigS}$ and
$a\cdot_{\rigS}(b+_{\rigS}c)=a\cdot_{\rigS} b+_{\rigS}a\cdot_{\rigS} c$.

We will in general abuse notation and identify $\rigS$ with its carrier set $\carrier{\rigS}$.
We will moreover omit the subscripts on symbols $+$, $\cdot$, $0$ and $1$, and 
denote multiplication by concatenation: $ab=a\cdot b$.
We also use standard notations for finite sums and products in $\rigS$, \eg
$\sum_{i=1}^n a_i=a_1+\cdots+a_n$.
For any semiring $\rigS$, there is a unique semiring morphism (in the obvious
sense) from $\naturals$ to $\rigS$: to $n\in\naturals$ we associate the sum
$\sum_{i=1}^n 1\in\rigS$ that we also write $n\in\rigS$, although this morphism
is not necessarily injective. Consider for instance the semiring $\booleans$
of booleans, with $\carrier{\booleans}=\set{0,1}$, $\mathord+_{\booleans} =
\mathord{\max}$ and $\mathord\cdot_{\booleans}=\mathord\times$.

We finish this subsection by recalling the definitions of 
semimodules and their morphisms.
A \emph{(left) $\rigS$-semimodule} $\calM$ is the data of a commutative
monoid $(\carrier{\calM},0_\calM,+_\calM)$ together with an
external product $\sm_\calM:\rigS\times \carrier{\calM}\to\carrier{\calM}$
subject to the following identities:
\begin{align*}
	0\sm_\calM m &= 0_\calM &
	1\sm_\calM m &= m \\
	(a+b)\sm_\calM m &= a\sm_\calM m+_\calM b\sm_\calM m&
	a\sm_\calM(b\sm_\calM m) &= ab\sm_\calM m \\
	a\sm_\calM 0_\calM &= 0_\calM &
	a\sm_\calM(m+_\calM n) &= a\sm_\calM m +_\calM a\sm_\calM n
\end{align*}
for all $a,b\in\rigS$ and $m,n\in\carrier{\calM}$.
Again, we will in general abuse notation and identify $\calM$ with its carrier set $\carrier{\calM}$,
and omit the subscripts on symbols $+$, $\sm$ and $0$.

Let $\calM$ and $\calN$ be $\rigS$-semimodules.
We say $\phi:\calM\to\calN$ is \emph{linear} if 
\[\phi\pars{\sum_{i=1}^n a_i\sm m_i}=\sum_{i=1}^n a_i\sm \phi\pars{m_i}\]
for all $m_1,\dotsc,m_n\in\calM$ and all $a_1,\dotsc,a_n\in\rigS$.
If moreover $\calM_1,\dotsc,\calM_n$ are $\rigS$-semimodules,
we say $\psi:\calM_1\times\cdots\times\calM_n\to\calN$ is \emph{$n$-linear}
if it is linear in each component.

Given a set $X$, $\vectors X$ is
the semimodule of formal linear combinations of elements of $X$:
a \emph{vector} $ξ\in\vectors X$ is nothing but an $X$-indexed 
family of scalars $\pars{ξ_x}_{x\in X}$, that we may 
also denote by $\sum_{x\in X} ξ_x\sm x$.
The \emph{support} $\support{ξ}$ of a vector $ξ\in\vectors X$ is the set of elements
of $X$ having a non-zero coefficient in $ξ$:
\[ \support{ξ}\eqdef\set{x \in X\st ξ_x\not=0}. \]
We write $\finiteVectors X$ for the set of vectors with finite support:
\[ \finiteVectors X\eqdef\set{ξ \in \vectors X\st \support{ξ}\text{ is finite}}. \]
In particular $\finiteVectors X$ is the semimodule freely generated by $X$,
and is a subsemimodule of $\vectors X$.

\subsection{Finiteness spaces}

\label{subsection:finiteness}

A finiteness space \cite{ehrhard:fs} is a subsemimodule of $\vectors X$
obtained by imposing a restriction on the support of vectors, as follows.

If $X$ is a set, we call \emph{structure on $X$} any set $\fS\subseteq\powerset X$, 
and then the dual structure is
\[\dual\fS\eqdef\set[\mnorm]{\calX'\subseteq X\st\text{ for all $\calX\in\fS$,
$\calX\inter \calX'$ is finite}}.\]
A \emph{relational finiteness space} is a pair $(X,\fF)$,
where $X$ is a set (the \emph{web} of the finiteness space) and
$\fF\subseteq\powerset{X}$ is a structure on $X$
such that $\fF=\bidual{\fF}$: $\fF$ is then called
a \emph{finiteness structure}, and we say $\calX \subseteq X$
is \emph{finitary} in $(X,\fF)$ iff $\calX\in\fF$.
The \emph{finiteness space} generated by $(X,\fF)$,
denoted by $\finitaryVectors{X,\fF}$,
or simply $\finitaryVectors{\fF}$,
is then the set of vectors on $X$ with finitary support:
$ξ\in\finitaryVectors{\fF}$ iff $\support{ξ}\in\fF$.
By this definition, if $ξ\in\finitaryVectors{\fF}$ and
$ξ'\in\finitaryVectors{\dual\fF}$ then the sum $\sum_{x\in X} ξ_xξ'_x$ involves
finitely many nonzero summands.

Finitary subsets are downwards closed for 
inclusion, and finite unions of finitary subsets are finitary,
hence $\finitaryVectors{X,\fF}$ is a subsemimodule of $\vectors X$.
Moreover, the least (resp. greatest) finiteness structure
on $X$ is the set $\finiteSubsets X$ of finite subsets of $X$ (resp. the powerset $\powerset X$),
generating the finiteness space $\finiteVectors X$ (resp.\ $\vectors X$).

We do not describe the whole category of finiteness spaces and linear-continuous
maps here. In particular we do not recall the details of the linear topology induced
on $\finitaryVectors{X,\fF}$ by $\fF$: the reader
may refer to Ehrhard’s original paper \cite{ehrhard:fs} or his survey presentation of
differential linear logic \cite{ehrhard:dill}.

In the following, we focus on a very particular case, where the
finiteness structure on base types is trivial (\ie there is no restriction on the
support of vectors):
linear-continuous maps are then univocally generated by \emph{summable functions}.

We started with the general notion of finiteness space
nonetheless, because it provides a good background for the general spirit of
our contributions: we are interested in infinite objects restricted so that,
componentwise, all our constructions involve finite sums only.
Also, the semimodule of normalizable resource vectors introduced in section
\ref{section:normalization} is easier to work with once its finiteness space
structure is exposed.

\subsection{Summable functions}

\label{subsection:summable}

Let $\vec {ξ}=\pars{ξ_i}_{i\in I}\in\pars{\vectors X}^I$ be a family of vectors:
write $ξ_i=\sum_{x\in X} ξ_{i,x}\sm x$.
We say $\vec {ξ}$ is \emph{summable}
if, for all $x\in X$,
$\set{i\in I\st x\in\support{ξ_i}}$ is finite.
In this case, we define the sum $\sum\vec{ξ}=\sum_{i\in I}ξ_i\in\vectors X$
in the obvious, pointwise way:\footnote{
	The reader can check that the family $\vec{ξ}$ is summable iff 
	the support set \[ \set{(i,x)\in I\times X\st ξ_{i,x}\not =0} \] is finitary
	in the relational arrow finiteness space $(I\times X,\powerset I\multimap\powerset X)$
	as defined by Ehrhard \cite[see in particular Lemma 3]{ehrhard:fs}.
	Then $\sum\vec{ξ}$ is the result of applying the matrix $\pars{ξ_{i,x}}_{i\in I,x\in X}$
	to the vector $\pars{1}_{i\in I}\in\finitaryVectors{\powerset I}=\vectors I$.
} \[ \pars{\sum\vec{ξ}}_x \eqdef \sum_{i\in I} ξ_{i,x}.\]

Of course, any finite family of vectors is summable and, fixing an index set
$I$ and a base set $X$, summable families in $\pars{\vectors X}^I$ form an
$\rigS$-semimodule, with operations defined pointwise.

Moreover, if $\pars{ξ_i}_{i\in I}\in\pars{\vectors X}^I$ is summable,
then it follows from the inclusion
$\support{a_i\sm ξ_i}\subseteq \support{ξ_i}$ that $\pars{a_i\sm ξ_i}_{i\in I}$
is also summable for any family of scalars $\pars{a_i}_{i\in I}\in\rigS^I$.
Whenever the $n$-ary function
$f:X_1\times\cdots\times X_n\to\vectors Y$ 
(\ie the family
$\pars{f(x_1,\dotsc,x_n)}_{\pars{x_1,\dotsc,x_n}\in X_1\times\cdots\times X_n}$)
is summable,
we can thus define its \emph{extension} $\contExt f:\vectors {X_1}\times\cdots\times\vectors{X_n}\to\vectors Y$
by \[\contExt f(ξ_1,\dotsc,ξ_n)\eqdef
\sum_{\pars{x_1,\dotsc,x_n}\in X_1\times\cdots\times X_n} ξ_{1,x_1}\cdots ξ_{n,x_n}\sm f(x_1,\dotsc,x_n).\]

Note that we can consider $f:X\to\vectors Y$ as a $Y\times X$ matrix:
$f_{y,x}=f(x)_y$. Then if $f$ is summable and $ξ\in\vectors X$,
$\contExt f({ξ})$ is nothing but the application of the matrix
$f$ to the column $ξ$: the summability hypothesis ensures that 
this is well defined.

It turns out that the linear extensions of summable functions 
are exactly the linear-continuous maps, defined as follows:
\begin{definition}
	\label{definition:continuous}
	Let $φ:\vectors {X_1}\times\cdots\times\vectors{X_n}\to\vectors Y$.
	We say ${φ}$ is \emph{$n$-linear-continuous} if, for all
	summable families
	$\vec{ξ_1}=\pars{ξ_{1,i}}_{i\in I_1}\in\pars{\vectors {X_1}}^{I_1},
	\dotsc,
	\vec{ξ_n}=\pars{ξ_{n,i}}_{i\in I_n}\in\pars{\vectors {X_n}}^{I_n}$,
	the family
	$\pars{φ(ξ_{1,i_1},\dotsc,ξ_{n,i_n})}_{(i_1,\dotsc,i_n)\in I_1\times\cdots\times I_n}$ is summable and,
	for all families of scalars,
	$\vec{a_1}=\pars{a_{1,i}}_{i\in I_1}\in{\rigS}^{I_1},
	\dotsc,
	\vec{a_n}=\pars{a_{n,i}}_{i\in I_n}\in{\rigS}^{I_n}$,
	we have 
	\[φ\pars{
		\sum_{i_1\in I_1} a_{1,i_1}\sm ξ_{1,i_1},
		\dotsc,
	  \sum_{i_n\in I_n} a_{n,i_n}\sm ξ_{n,i_n}
	}=\sum_{(i_1,\dotsc,i_n)\in I_1\times\cdots\times I_n}
a_{1,i_1}\cdots a_{n,i_n}\sm {φ}(ξ_{1,i_1},\dotsc,ξ_{n,i_n}).\]
\end{definition}

\begin{lemma}
	If $φ:\vectors {X_1}\times\cdots\times\vectors{X_n}\to\vectors Y$
	is $n$-linear-continuous then its restriction
	$\restr {φ}{X_1\times\cdots\times X_n}$
		is a summable $n$-ary function and 
	$φ=\contExt{\restr {φ}{X_1\times\cdots\times X_n}}$.
	Conversely, if $f:X_1\times\cdots\times X_n\to\vectors Y$ is 
	a summable $n$-ary function then $\contExt f$ is $n$-linear-continuous.
\end{lemma}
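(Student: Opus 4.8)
The plan is to prove the two directions separately, treating the bijective correspondence between summable functions and linear-continuous maps as essentially a matter of unwinding the definitions and checking that the formulas agree coefficient-wise.

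For the converse direction, starting from a summable function $f:X_1\times\cdots\times X_n\to\vectors Y$, I would verify directly that $\contExt f$ is $n$-linear-continuous. First I note that for fixed arguments $x_j\in X_j$, the vector $\contExt f(ξ_1,\dotsc,ξ_n)$ is well defined because the summability of $f$ guarantees that, for each $y\in Y$, only finitely many tuples $(x_1,\dotsc,x_n)$ contribute to the coefficient of $y$. To establish the continuity property, I would plug in sums $\sum_{i_j}a_{j,i_j}\sm ξ_{j,i_j}$ for the arguments, expand each coefficient $\pars{\sum_{i_j}a_{j,i_j}\sm ξ_{j,i_j}}_{x_j}=\sum_{i_j}a_{j,i_j}ξ_{j,i_j,x_j}$, and then use distributivity and commutativity of the semiring to reorganize the resulting multiple sum. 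The key is that each summand regroups into $a_{1,i_1}\cdots a_{n,i_n}$ times the coefficient of $\contExt f(ξ_{1,i_1},\dotsc,ξ_{n,i_n})$; the summability of the family $\pars{\contExt f(ξ_{1,i_1},\dotsc,ξ_{n,i_n})}$ must be checked as part of this, which again reduces to the observation that the support of $f$ is finitary.

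For the forward direction, given an $n$-linear-continuous $φ$, I would set $f\eqdef\restr{φ}{X_1\times\cdots\times X_n}$, regarding each basis element $x_j\in X_j$ as the vector $1\sm x_j\in\vectors{X_j}$. Summability of $f$ is immediate: the singleton families $\pars{1\sm x_j}_{x_j\in X_j}$ are trivially summable, so the definition of $n$-linear-continuity forces the family $\pars{φ(x_1,\dotsc,x_n)}_{(x_1,\dotsc,x_n)}$ to be summable. It then remains to show $φ=\contExt f$. For this I would write an arbitrary argument $ξ_j\in\vectors{X_j}$ as the sum $\sum_{x_j\in X_j}ξ_{j,x_j}\sm x_j$, indexed over the summable family of its basis components, and apply the continuity equation of $φ$ with $a_{j,x_j}=ξ_{j,x_j}$; the right-hand side is then exactly the defining formula for $\contExt f(ξ_1,\dotsc,ξ_n)$.

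The main obstacle, and the point deserving genuine care, is the bookkeeping of summability at each regrouping of sums: one must confirm not merely that the final coefficient sums are finite, but that every intermediate family manipulated is summable so that the reindexing is legitimate. This is where the hypothesis that the base finiteness structures are trivial (no restriction on supports) interacts with the finiteness guaranteed by summability. I expect the argument to be otherwise routine semiring algebra, so I would present the summability verifications explicitly and let the coefficient-wise equality of the two formulas follow by the distributive laws.
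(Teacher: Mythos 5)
Your proposal is correct and follows essentially the same route as the paper's own direct proof: the forward direction via summability of the diagonal (basis) families and the continuity equation applied to $ξ_j=\sum_{x_j}ξ_{j,x_j}\sm x_j$, and the converse via a summability check on the doubly indexed family followed by reassociation of sums. The only point to tighten is that the summability of $\pars{\contExt f(ξ_{1,i_1},\dotsc,ξ_{n,i_n})}$ does not reduce to the finitariness of $f$ alone but also uses the summability of each family $\vec{ξ}_k$ (first finitely many tuples $(x_1,\dotsc,x_n)$ by summability of $f$, then finitely many $i_k$ per coordinate), which is exactly how the paper argues.
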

\begin{proof}
	It is possible to derive both implications from general results 
	on finiteness spaces.\footnote{
		One might check that a map
		$φ:\vectors {X_1}\times\cdots\times\vectors{X_n}\to\vectors Y$
		is $n$-linear-continuous in the sense of Definition \ref{definition:continuous}
		iff it is $n$-linear and continuous in the sense of the linear topology of
		finiteness spaces, observing that the topology on $\vectors
		X=\finitaryVectors{\powerset X}$ is the product topology ($\rigS$ being
		endowed with the discrete topology) \cite[Section 3]{ehrhard:fs}.
		Moreover, $n$-ary summable functions $f:X_1\times\cdots\times X_n\to\vectors Y$
		are the elements of the finiteness space
		$\finitaryVectors{\powerset{X_1}\otimes\cdots\otimes\powerset{X_n}\multimap\powerset{Y}}$.
		As a general fact, the
		linear-continuous maps $\finitaryVectors{\fF}\to\finitaryVectors{\fG}$ are exactly
		the linear extensions of vectors in $\finitaryVectors{\fF\multimap\fG}$.
		But linear-continuous maps from a tensor product of
		finiteness spaces correspond with multi\-linear-\emph{hypocontinuous} maps
		\cite[Section 3]{ehrhard:fs}
		rather than the more restrictive multi\-linear-continuous maps.
		In the very simple setting of summable functions, though, both notions
		coincide, since $\vectors X$ is always locally linearly compact 
		\cite[Proposition 15]{ehrhard:fs}.
	}
	We also sketch a direct proof.

	The first implication follows directly from the definitions, observing 
	that each diagonal family of vectors $(x)_{x\in X_i}$ is obviously summable.

	For the converse: 
	let $\vec{ξ_1}=\pars{ξ_{1,i}}_{i\in I_1}\in\pars{\vectors {X_1}}^{I_1},
	\dotsc,
	\vec{ξ_n}=\pars{ξ_{n,i}}_{i\in I_n}\in\pars{\vectors {X_n}}^{I_n}$
	be summable families.
	We first prove that the family
	\[\pars{
	        ξ_{1,i_1,x_1}\cdots ξ_{n,i_n,x_n}\sm f(x_1,\dotsc,x_n)
	}_{(i_1,\dotsc,i_n)\in I_1\times\cdots\times I_n,(x_1,\dotsc,x_n)\in X_1\times\cdots\times X_n}\]
	is summable. Fix $y\in Y$.
	If $y\in\support{
	        ξ_{1,i_1,x_1}\cdots ξ_{n,i_n,x_n}\sm f(x_1,\dotsc,x_n)
	}$ then in particular
	$y\in\support{f(x_1,\dotsc,x_n)}$: since $f$ is summable, there are finitely many
	such tuples $(x_1,\dotsc,x_n)\in X_1\times\cdots\times X_n$.
	For each such tuple $(x_1,\dotsc,x_n)$ and each $k\in\set{1,\dotsc,n}$,
	since $\vec{ξ}_k$ is summable, there are finitely many $i_k$'s such that
	$ξ_{k,i_k,x_k}\not=0$. The necessary equation then follows from the
	associativity of sums.
\end{proof}

From now on, we will identify summable functions with their
multilinear-continuous extensions. Moreover, it should be clear that
multilinear-continuous maps compose.

\section{The resource λ-calculus}

\label{section:resource}

In this section, we recall the syntax and reduction of the resource λ-calculus,
that was introduced by Ehrhard and Regnier \cite{er:resource} as the
multilinear fragment of the differential λ-calculus \cite{er:tdlc}.
The syntax is very similar to that of Boudol’s resource λ-calculus
\cite{boudol:resource} but the intended meaning (multilinear approximations of
λ-terms) as well as the dynamics  is fundamentally different.

We also recall the definitions of the multilinear counterparts of term substitution:
partial differentiation and multilinear substitution.

In the passing, we introduce various quantities on resource λ-terms (size,
height, and number and maximum depth of occurrences of a variable) and we state
basic results that will be used throughout the paper.

Finally, we present the dynamics of the calculus: resource reduction and 
normalization.

\subsection{Resource expressions}

\label{subsection:resourceExpressions}

In the remaining of the paper, we suppose an infinite, countable set
$\variables$ of variables is fixed: we use small letters $x,y,z$ to denote
variables.

We define the sets $\resourceTerms$ of \emph{resource terms}
and $\resourceMonomials$ of \emph{resource monomials} by mutual induction as follows:\footnote{
We use a self explanatory if not standard variant of BNF notation
for introducing syntactic objects:
\[ \resourceMonomials \ni \ms s,\ms t,\ms u,\ms v,\ms w \recdef \mset {} \mid \mset s\cdot\ms t\]
means that we define the set $\resourceMonomials$ of resource monomials
as that inductively generated by the empty monomial, and 
addition of a term to a monomial, and that we will denote resource monomials 
using overlined letters among $\ms s,\ms t,\ms u,\ms v,\ms w$, possibly with
sub- and superscripts.
}
\[
	\begin{array}{rclcl}
		\resourceTerms & \ni & s,t,u,v,w & \recdef & x \mid \labs xs \mid \rappl s{\ms t} \\
		\resourceMonomials  & \ni & \ms s,\ms t,\ms u,\ms v,\ms w & \recdef & \mset {} \mid \mset s\cdot\ms t.
	\end{array}
\]

Terms are considered up to α-equivalence and monomials up to permutativity:
we write $\mset{t_1,\dotsc,t_n}$ for $\mset{t_1}\cdot\pars{\cdots\cdot\pars{\mset{t_n}\cdot\mset{}}}$
and equate $\mset{t_1,\dotsc,t_n}$ with $\mset{t_{f(1)},\dotsc,t_{f(n)}}$ for all 
permutation $f$ of $\set{1,\dotsc,n}$,
so that resource monomials coincide with finite multisets of resource terms.\footnote{
	Resource monomials are often called \emph{bags}, \emph{bunches} or \emph{poly-terms} in the
	literature, but we prefer to strengthen the analogy with power series here.
}
We will then write $\ms s\cdot\ms t$ for the multiset union of $\ms s$ and $\ms t$,
and $\card{\mset{s_1,\dotsc,s_n}}\eqdef n$.

We call \emph{resource expression} any resource term or resource monomial
and write $\resourceExpressions$ for either $\resourceTerms$ or $\resourceMonomials$:
whenever we use this notation several times in the same context, 
all occurrences consistently denote the same set. When we make a definition 
or a proof by induction on resource expressions, we actually use a mutual induction 
on resource terms and monomials.

\begin{definition}
We define by induction over a resource expression
$e\in\resourceExpressions$, its \emph{size} $\size{e}\in\naturals$
and its \emph{height} $\height{e}\in\naturals$:
\begin{align*}
	\size{x}&\eqdef1&
	\height{x}&\eqdef1\\
	\size{\labs xs}&\eqdef 1+\size{s}&
	\height{\labs xs}&\eqdef 1+\height{s}\\
	\size{\rappl{s}{\ms t}}&\eqdef 1+\size s+\size{\ms t}&
	\height{\rappl{s}{\ms t}}&\eqdef \max\set{\height s, 1+\height{\ms t}}\\
	\size{\mset{s_1,\dotsc,s_n}}&\eqdef \sum_{i=1}^n \size{s_i}&
	\height{\mset{s_1,\dotsc,s_n}}&\eqdef \max\set{\height{s_i}\st 1\le i\le n}.
\end{align*}
\end{definition}
It should be clear that, for all $e\in\resourceExpressions$,
$\height e\le\size e$. Also observe that $\size s>0$ and $\height s>0$ for 
all $s\in\resourceTerms$, and $\size{\ms s}\ge\card{\ms s}$ for all $\ms s\in\resourceMonomials$.
In the application case, we chose not to increment the height of the function:
this is not crucial but it will allow to simplify some of our 
computations in Section \ref{section:parallel}. In particular, 
in the case of a redex we have
$\height{\rappl{\labs xs}{\ms t}}=1+\max\set{\height s,\height{\ms t}}$.

For all resource expression $e$, we write
$\fv e$ for the set of its free variables.
In the remaining of the paper, we will often have to prove that 
some set $\calE\subseteq\resourceExpressions$ is finite:
we will generally use the fact that $\calE$ is finite
iff both $\set{\size e\st e\in\calE}$ and $\fv{\calE}\eqdef\Union_{e\in\calE}\fv e$
are finite.

Besides the size and height of an expression,
we will also need finer grained information on
occurrences of variables, providing a quantitative counterpart to the 
set of free variables:
\begin{definition}
	\label{definition:occ}
	We define by induction over resource expressions 
	the \emph{number $\occnum xe\in\naturals$
		of occurrences} and the \emph{set $\occdepth xe\in\naturals$ of
	occurrence depths} of a variable $x$ in $e\in\resourceExpressions$:
\begin{align*}
	\occnum x{y}&\eqdef \begin{cases}1&\text{if $x=y$}\\0&\text{otherwise}\end{cases} \\
	\occnum x{\labs ys}&\eqdef  \occnum x{s} && (\text{choosing }y\not=x)\\
	\occnum x{\rappl{s}{\ms t}}&\eqdef \occnum x{s}+\occnum x{\ms t} \\
	\occnum x{\mset{s_1,\dotsc,s_n}}&\eqdef  \sum_{i=1}^n \occnum x{s_i} 
\end{align*}
and
\[\begin{array}{rlr}
	\occdepth x{y}&\eqdef \begin{cases}\set 1&\text{if $x=y$}\\\emptyset&\text{otherwise}\end{cases} \\
	\occdepth x{\labs ys}&\eqdef  \set{d+1\st d\in\occdepth x{s}} & (\text{choosing }y\not=x)\\
	\occdepth x{\mset{s_1,\dotsc,s_n}}&\eqdef  \Union_{i=1}^n\occdepth x{s_i}\\
	\occdepth x{\rappl{s}{\ms t}}&\eqdef  \occdepth x s \union \set{d+1\st d\in\occdepth x{\ms t}}.
\end{array}\]
We then write $\maxoccdepth xe\eqdef \max\occdepth xe$
for the \emph{maximal depth of occurrences} of $x$ in $e$.
\end{definition}

Again, it should be clear that
$\occnum xe\le\size e$ and $\maxoccdepth xe\le\height e$.
Moreover,
$x\in\fv e$ iff $\occnum xe\not=0$ iff $\occdepth xe\not=\emptyset$ iff $\maxoccdepth xe\not=0$.

\subsection{Partial derivatives}

In the resource λ-calculus, the substitution $\subst exs$ of a term $s$ for a variable $x$ in $e$ admits a
linear counterpart: this operator was initially introduced in the differential
λ-calculus \cite{er:tdlc} in the form of a partial differentiation operation,
reflecting the interpretation of λ-terms as analytic maps in quantitative
semantics.

Partial differentiation enforces the introduction of formal finite sums of
resource expressions: these are the actual objects of the resource λ-calculus,
and in particular the dynamics will act on finite sums of terms rather than on
simple resource terms (see Subsection \ref{subsection:resRed}).
We extend all syntactic constructs to finite sums of resource expressions
by linearity: if $σ=\sum_{i=1}^n s_i\in\finiteTermSums$
and $\ms{τ}=\sum_{j=1}^p \ms t_j\in\finiteMonomialSums$,
we set $\labs x{σ}\eqdef\sum_{i=1}^n\labs x{s_i}$, 
$\rappl{σ}{\ms{τ}}\eqdef\sum_{i=1}^n\sum_{j=1}^p \rappl{s_i}{\ms t_j}$
and $\mset{σ}\cdot{\ms{τ}}\eqdef\sum_{i=1}^n\sum_{j=1}^p \mset{s_i}\cdot{\ms t_j}$.

This linearity of syntactic constructs will be generalized to vectors
of resource expressions in the next section.
For now, up to linearity, it is already possible to consider the substitution
$\subst ex{σ}$ of a finite sum of terms $σ$ for a variable term $x$ in an
expression $e$: in particular $\subst ex0=0$ whenever $x\in\fv e$.
This is in turn extended to sums by linearity: $\subst{ε}x{σ}=\sum_{i=1}^n
\subst{e_i}x{σ}$ when $ε=\sum_{i=1}^n e_i$. Observe that this is \emph{not}
linear in $σ$, because $x$ may occur several times in $e$:
for instance, with a monomial of degree 2,
$\subst{\mset{x,x}}x{t+u}=\mset{t,t}+\mset{t,u}+\mset{u,t}+\mset{u,u}$.

Partial differentiation is then defined as follows:
\begin{definition}
For all $u\in\resourceTerms$ and $x\in\variables$,
we define the \emph{partial derivative} $\pdiff exu\in\finiteResourceSums$
of $e\in\resourceExpressions$, by induction on $e$:
\begin{align*}
	\pdiff{y}xu&\eqdef \begin{cases}u&\text{if $x=y$}\\0&\text{otherwise}\end{cases}\\
	\pdiff{\labs ys}xu&\eqdef \labs y[]{\pdiff sxu}&&\text{(choosing $y\not\in\set x\union\fv{u}$)}\\
	\pdiff{\rappl{s}{\ms t}}xu&\eqdef  \rappl{\pdiff sxu}{\ms t}+\rappl{s}[]{\pdiff {\ms t}xu}\\
	\pdiff{\mset{s_1,\dotsc,s_n}}xu&\eqdef  \sum_{i=1}^n \mset{s_1,\dotsc,\pdiff{s_i}xu,\dotsc,s_n}.
\end{align*}
\end{definition}

Partial differentiation is extended to 
finite sums of expressions by bilinearity: if
$ε=\sum_{i=1}^n e_i\in\finiteResourceSums$ and 
$σ=\sum_{j=1}^p s_j\in\finiteTermSums$,
we set \[\pdiff{ε}x{σ}=\sum_{i=1}^n\sum_{j=1}^p \pdiff {e_i}x{s_j}.\]
\begin{lemC}[{\cite[Lemma 2]{er:resource}}]
	\label{lemma:pdiff:schwarz}
	If $x\not\in\fv u$ then 
	\[
		\pDiff{\pars{\pdiff{e}xt}}yu
		= \pDiff{\pars{\pdiff{e}yu}}xt 
		+ \pdiff ex{\pars{\pdiff{t}yu}}
	.\]
\end{lemC}

If moreover $y\not\in\fv{t}$, we obtain a version of Schwarz’s theorem on
the symmetry of second derivatives:
	\[
		\pDiff{\pars{\pdiff{e}xt}}yu
		= \pDiff{\pars{\pdiff{e}yu}}xt 
	.\]
If $x\not\in\fv{s_i}$ for all $i\in\set{1,\dotsc,n}$, we write 
\[\npdiff n{e}x{\pars{s_1,\dotsc,s_n}}\eqdef
\pDiff{\pars{\cdots \pdiff {e}x{s_1}\cdots }}x{s_n}.\]
More generally, we write
\[\npdiff n{e}x{\pars{s_1,\dotsc,s_n}}\eqdef\subst{\pars{\npdiff n{\subst exy}y{\pars{s_1,\dotsc,s_n}}}}yx\]
for any $y\not\in\Union_{i=1}^n \fv{s_i} \union(\fv{e}\setminus\set x)$: it should 
be clear that this definition does not depend on the choice of such a variable $y$.
By the previous lemma,
\[\npdiff n{e}x{\pars{s_1,\dotsc,s_n}}=
\npdiff n{e}x{\pars{s_{f(1)},\dotsc,s_{f(n)}}}\]
for any permutation $f$ of $\set{1,\dotsc,n}$ and we will thus write
\[\npdiff n{e}x{\ms s}\eqdef \npdiff n{e}x{\pars{s_1,\dotsc,s_n}}\]
whenever $\ms s=\mset{s_1,\dotsc,s_n}$.

An alternative, more direct presentation of iterated partial derivatives
is as follows.
Suppose $\occnum xe=m$, and write $x_1,\dotsc,x_m$ for the occurrences of $x$ in $e$.
Then:
	\[\npdiff nex{\mset{s_1,\dotsc,s_n}}=
		\sum_{\substack{f:\set{1,\dotsc,n}\to\set{1,\dotsc,m}\\\text{$f$ injective}}} 
			\subst{e}{x_{f(1)},\dotsc,x_{f(n)}}{s_1,\dotsc,s_n}
	\]
More formally, we obtain:
\begin{lemma}
	\label{lemma:npdiff:decomposition}
	For all monomial $\ms u=\mset{u_1,\dotsc,u_n}\in\resourceMonomials$ and all variable
	$x\in\variables$:\footnote{
		In this definition and in the remaining of the paper, we say a tuple
		$(I_1,\dotsc,I_n)\in\powerset I^n$ is a partition of $I$
		if $I=\Union_{i=1}^n I_k$, and the $I_k$'s are pairwise disjoint.
		We do not require the $I_k$’s to be nonempty. Hence a partition of $I$
		into a $n$-tuple is uniquely defined by a function from $I$ to $\set{1,\dotsc,n}$.
	}
	\begin{align*}
		\npdiff n{y}x{\ms{u}}  &= \begin{cases}
			y&\text{if $n=0$}\\
			u_1&\text{if $x=y$ and $n=1$}\\
			0&\text{otherwise}
		\end{cases}\\
		\npdiff n{\labs y s}x{\ms{u}}&=
		\labs y[]{\npdiff nsx{\ms u}}&&&\text{(choosing $y\not\in\set x\union\fv{\ms u}$)}\\
		\npdiff n{\rappl s{\ms t}}x{\ms u}&=
		\lefteqn{
			\sum_{(I,J)\text{ partition of }\set{1,\dotsc,n}}
			\rappl{\npdiff {\card I}sx{\ms u_I}}
				{\npdiff {\card J}{\ms t}x{\ms u_J}}
		}\hspace{3em}\\
		\npdiff n{\mset{s_1,\dotsc,s_k}}x{\ms u}&=
		\lefteqn{
			\sum_{
				(I_1,\dotsc,I_k)\text{ partition of }\set{1,\dotsc,n}
			}
			\mset{
				\npdiff {\card{I_1}}{s_1}x{\ms u_{I_1}}
				\dotsc,
				\npdiff {\card{I_k}}{s_k}x{\ms u_{I_k}}
			}
		}\hspace{8em}
	\end{align*}
	where $\ms u_I$ denotes ${\mset{u_{i_1},\dotsc,u_{i_p}}}$
	whenever $I=\set{i_1,\dotsc,i_p}$ with $p=\card I$.
\end{lemma}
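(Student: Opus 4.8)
The plan is to prove all four identities simultaneously by a single induction on $n=\card{\ms u}$, uniformly in the shape of $e$, after first reducing to the case where $x\notin\fv{\ms u}$. The key definitional facts I will rely on are the single-step derivative rules for each constructor, their extension to sums by (bi)linearity, and the permutation invariance of iterated derivatives recalled just before the statement, which lets me peel off the last differentiation.

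First I would dispose of the freshness assumption. By the defining equation $\npdiff nex{\ms u}=\subst{\pars{\npdiff n{\subst exy}y{\ms u}}}yx$ with $y$ fresh, and since the back-substitution replacing $y$ by $x$ commutes with all the constructors $\labs z{}$, $\rappl{}{}$ and $\mset{}$, each of the four right-hand sides is transported to the corresponding right-hand side computed for $\npdiff n{\subst exy}y{\ms u}$ (here we use that $y\notin\fv{\ms u}$, so the $u_i$ are untouched). Hence it suffices to establish the identities when $x\notin\fv{\ms u}$, in which case $\npdiff nex{\ms u}$ is literally the iterated single-step derivative; then, writing $\ms v=\mset{u_1,\dotsc,u_n}$ so that $\ms u=\ms v\cdot\mset{u_{n+1}}$, permutation invariance gives $\npdiff{n+1}ex{\ms u}=\pdiff{\pars{\npdiff nex{\ms v}}}x{u_{n+1}}$.

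I would then run the induction on $n$. The base case $n=0$ is immediate, every displayed right-hand side collapsing to $e$ since the only partition of $\emptyset$ is the all-empty one. For the step, I case on the shape of $e$ and, in each case, substitute the induction hypothesis for $\npdiff nex{\ms v}$, apply the single-step rule of the relevant constructor (extended by bilinearity, as the inner derivatives are already sums), and recognize each once-more-differentiated factor $\pdiff{\pars{\npdiff{\card I}sx{\ms v_I}}}x{u_{n+1}}$ as $\npdiff{\card I+1}sx{\ms v_I\cdot\mset{u_{n+1}}}$ by definition. The variable and abstraction cases are straightforward: for $z\in\variables$ a second differentiation annihilates the result (using $x\notin\fv{u_1}$), which yields the three-way split; for $\labs zs$ the single-step rule commutes differentiation past the binder (choosing $z$ fresh), so the identity follows at once from the induction hypothesis.

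The combinatorial heart, and the step I expect to be the main obstacle, is the application case (the monomial case being identical, with $k$ summands). There the single-step Leibniz rule splits each summand indexed by a partition $(I,J)$ of $\set{1,\dotsc,n}$ into two, according to whether $u_{n+1}$ is fed to the function part or to the argument part:
\[
	\rappl{\npdiff{\card I+1}sx{\ms v_I\cdot\mset{u_{n+1}}}}{\npdiff{\card J}{\ms t}x{\ms v_J}}
	+\rappl{\npdiff{\card I}sx{\ms v_I}}{\npdiff{\card J+1}{\ms t}x{\ms v_J\cdot\mset{u_{n+1}}}}.
\]
Summing over all $(I,J)$ and both choices, I would observe that this reindexes exactly as the sum over all partitions $(I',J')$ of $\set{1,\dotsc,n+1}$, each such partition being uniquely obtained from a partition of $\set{1,\dotsc,n}$ by adjoining $n+1$ to one of its two blocks; this produces the claimed formula for $n+1$. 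The only real care required is the bookkeeping of this bijection between $\set{\text{partitions of }\set{1,\dotsc,n}}\times\set{1,2}$ (resp. $\times\set{1,\dotsc,k}$ for monomials) and the partitions of $\set{1,\dotsc,n+1}$, together with the systematic use of bilinearity to commute the differentiation operator past the finite sums generated by the induction hypothesis.
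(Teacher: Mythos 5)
Your proof is correct and follows exactly the route the paper intends: the paper's own proof is just ``Easy, by induction on $n$'', and your induction on $n=\card{\ms u}$ — peeling off the last derivative, applying the single-step Leibniz rules by bilinearity, and reindexing via the bijection between partitions of $\set{1,\dotsc,n+1}$ and pairs of a partition of $\set{1,\dotsc,n}$ with a chosen block — is the standard way to fill it in. The preliminary reduction to $x\not\in\fv{\ms u}$ via the fresh-variable definition is a sensible extra precaution and causes no issues.
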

\begin{proof}
	Easy, by induction on $n$.
\end{proof}

\begin{lemma}
	\label{lemma:npdiff:size}
	For all $e\in\resourceExpressions$, $\ms s\in\resourceMonomials$, $x\not=y\in\variables$
	and ${e'}\in\support{\npdiff n{e}x{\ms s}}$ with $n=\card{\ms s}$,
	moreover assuming that $x\not\in\fv{\ms s}$:
	\begin{itemize}
		\item $\occnum x{e}\ge n$ and $\occnum x{e'} = \occnum xe - n$;
		\item $\occnum y{e'} = \occnum ye + \occnum y{\ms s}$;
		\item $\occdepth x{e'}\subseteq\occdepth xe$;
		\item $\occdepth y{e} \subseteq \occdepth y{e'}\subseteq
			\occdepth ye\union\set{d+d'-1\st d\in\occdepth xe,d'\in\occdepth y{\ms s}}$;
		\item $\size {e'} = \size e + \size{\ms s} - n$;
		\item $\height e\le\height{e'}\le\max\set{\height e,\maxoccdepth xe+\height{\ms s}-1}$.
	\end{itemize}
\end{lemma}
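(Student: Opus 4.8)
The plan is to prove all six items simultaneously by structural induction on $e$, relying on the decomposition of $\npdiff n{e}x{\ms s}$ provided by Lemma \ref{lemma:npdiff:decomposition}. The two standing hypotheses play distinct rôles: $x\not=y$ ensures that the occurrences of $x$ and of $y$ can be accounted for independently, while $x\notin\fv{\ms s}$ guarantees that grafting the terms of $\ms s$ at occurrences of $x$ neither reintroduces $x$ nor disturbs the occurrences of $x$ that are left untouched. The guiding intuition is that every $e'\in\support{\npdiff n{e}x{\ms s}}$ is obtained by selecting $n$ distinct occurrences of $x$ in $e$ and grafting the $n$ terms of $\ms s$ onto them (in some order), the rest of the structure of $e$ being preserved; the explicit substitution formula stated just before Lemma \ref{lemma:npdiff:decomposition} makes this reading precise, and it is this reading that drives each computation.

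The statements about occurrence counts and size are the direct arithmetic consequences of this reading and fall out of the induction with no difficulty. Since $n$ distinct occurrences of $x$ must be available for the support to be nonempty, we get $\occnum xe\ge n$; the remaining $\occnum xe-n$ occurrences survive unchanged (this is where $x\notin\fv{\ms s}$ is used), giving $\occnum x{e'}=\occnum xe-n$. Dually, each grafted $s_i$ contributes its own occurrences of $y$, whence $\occnum y{e'}=\occnum ye+\occnum y{\ms s}$; and trading each of the $n$ chosen leaves $x$ (of size $1$) for the corresponding $s_i$ yields $\size{e'}=\size e+\size{\ms s}-n$. In the induction these follow case by case from the additivity of $\occnum y{-}$ and $\size{-}$ over the application and monomial constructors, the partitions $(I,J)$ and $(I_1,\dots,I_k)$ of $\set{1,\dots,n}$ in Lemma \ref{lemma:npdiff:decomposition} distributing $\ms s$ and the budget $n$ across the immediate subexpressions.

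The inclusions $\occdepth x{e'}\subseteq\occdepth xe$ and $\occdepth ye\subseteq\occdepth y{e'}$ are again immediate, since occurrences not involved in the grafting keep their depth. The delicate point is the depth offset: an occurrence of $y$ lying at depth $d'$ inside a term $s_i$ grafted at an occurrence of $x$ of depth $d$ ends up at depth $d+d'-1$ in $e'$, the $-1$ recording that the root of $s_i$ is identified with the position formerly occupied by $x$. Propagating this through the induction is where the real work lies: in the application case $e=\rappl s{\ms t}$ the definitions of $\occdepth{-}{-}$ and $\height{-}$ treat the function and argument positions asymmetrically --- depths and heights pass unchanged through $s$ but are incremented through $\ms t$, and $\height{\rappl s{\ms t}}=\max\set{\height s,1+\height{\ms t}}$ --- so the offset has to be reconciled with these shifts. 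I expect this bookkeeping, and the attendant off-by-one in the $-1$ terms, to be the main obstacle.

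Concretely, in that case one writes $e'=\rappl{s'}{\ms t'}$ with $s'\in\support{\npdiff{\card I}sx{\ms s_I}}$ and $\ms t'\in\support{\npdiff{\card J}{\ms t}x{\ms s_J}}$ for the relevant partition $(I,J)$, applies the induction hypothesis to $s'$ and to $\ms t'$, and combines the two using $\maxoccdepth xs\le\maxoccdepth xe$, the inequality $1+\maxoccdepth x{\ms t}\le\maxoccdepth xe$ (the $x$-occurrences of $\ms t$ appearing at depth one greater in $e$), and $\height{\ms s_I},\height{\ms s_J}\le\height{\ms s}$. This simultaneously closes the upper bound $\occdepth y{e'}\subseteq\occdepth ye\union\set{d+d'-1\st d\in\occdepth xe,d'\in\occdepth y{\ms s}}$ and the height bound $\height{e'}\le\max\set{\height e,\maxoccdepth xe+\height{\ms s}-1}$, while the lower bound $\height e\le\height{e'}$ follows from the monotone bounds $\height{s'}\ge\height s$ and $\height{\ms t'}\ge\height{\ms t}$ supplied by the induction hypothesis. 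The abstraction case needs only the standard $\alpha$-renaming of the bound variable away from $\set x\union\fv{\ms s}$ and from $y$, and the sole interesting base case is $e=x$ with $n=1$, where $e'=s_1$ and all six identities are checked by hand.
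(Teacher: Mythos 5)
Your proof is correct and follows exactly the route the paper takes: the paper's own proof is the one-line remark that each item ``is easily established by induction on $e$, using the previous lemma [Lemma \ref{lemma:npdiff:decomposition}] to enable the induction,'' which is precisely your structural induction driven by the grafting reading of the decomposition. Your detailed accounting of the depth offset $d+d'-1$ and of the asymmetric treatment of function and argument positions in the application case correctly fills in the bookkeeping the paper leaves implicit.
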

\begin{proof}
	Each result is easily established by induction on $e$,
	using the previous lemma to enable the induction.
\end{proof}

\subsection{Multilinear substitution}

Recall that Taylor expansion involves iterated derivatives at $0$.
If $n=\card {\ms s}$ and $x\not\in\fv {\ms s}$ we write 
\[\lsubst ex{\ms s}\eqdef \subst{\pars{\npdiff {n} ex{\ms s}}}x0.\]
Observe that by Lemma \ref{lemma:npdiff:size}: if $n>\occnum xe$ then $\npdiff n{e}x{\ms s}=0$;
and if $n<\occnum xe$ then $x\in\fv{e'}$ for all $e'\in\support{\npdiff nex{\ms s}}$,
and then $\subst{e'}x{0}=0$.
In other words, 
\[\lsubst ex{\ms s}= \begin{cases}
			\npdiff nex{\ms s}&
				\text{if $n=\occnum xe$}\\
				0&\text{otherwise}
		\end{cases}\quad.\]
We say $\lsubst ex{\ms s}$ is the \emph{$n$-linear substitution}
of $\ms s$ for $x$ in $e$.
More generally, we write
\[\lsubst ex{\ms s}\eqdef\subst{\pars{\lsubst {\subst exy}y{\ms s}}}yx\]
for any $y\not\in\fv{\ms s}\union(\fv{e}\setminus x)$ and it should again
be clear that this definition does not depend of the choice of such a $y$.
By a straightforward application of Lemma \ref{lemma:npdiff:size}, we obtain:

\begin{lemma}
	\label{lemma:lsubst:size}
	For all $e\in\resourceExpressions$, $\ms s\in\resourceMonomials$, $x\not=y\in\variables$
	and ${e'}\in\support{\lsubst ex{\ms s}}$, assuming $x\not\in\fv{\ms s}$:
	\begin{itemize}
		\item $\occnum x{e}=\card{\ms s}$ and $\occnum x{e'}=0$;
		\item $\occnum y{e'} = \occnum ye + \occnum y{\ms s}$;
		\item $\occdepth x{e'} = \emptyset$;
		\item $\occdepth y{e} \subseteq \occdepth y{e'} \subseteq
			\occdepth y{e}\union\set{d+d'-1\st d\in\occdepth xe,d'\in\occdepth y{\ms s}}$;
		\item $\size {e'} = \size e + \size{\ms s} - \card{\ms s}$;
		\item $\height e\le\height{e'}\le\max\set{\height e,\maxoccdepth xe+\height{\ms s}-1}$.
	\end{itemize}
\end{lemma}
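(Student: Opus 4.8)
The plan is to reduce everything to Lemma \ref{lemma:npdiff:size} via the closed-form description of $\lsubst ex{\ms s}$ established just above the statement. First I would observe that the hypothesis $x\notin\fv{\ms s}$ is exactly what licenses the characterization
\[\lsubst ex{\ms s}=\begin{cases}\npdiff nex{\ms s}&\text{if $n=\occnum xe$}\\0&\text{otherwise,}\end{cases}\]
with $n=\card{\ms s}$. So, fixing $e'\in\support{\lsubst ex{\ms s}}$, the mere fact that this support is inhabited forces $\lsubst ex{\ms s}\neq0$, hence $\occnum xe=n=\card{\ms s}$ and $e'\in\support{\npdiff nex{\ms s}}$. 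This already yields the first equation of the first item, and all the remaining items then become instances of Lemma \ref{lemma:npdiff:size} specialized to the value $n=\occnum xe$.

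Concretely, I would go through the bullets in order. The second equation $\occnum x{e'}=0$ follows from $\occnum x{e'}=\occnum xe-n=n-n=0$; the identity for $\occnum y{e'}$, the two-sided inclusion bounding $\occdepth y{e'}$, the value of $\size{e'}$, and the bounds on $\height{e'}$ are copied verbatim from Lemma \ref{lemma:npdiff:size}, the only cosmetic change being that $\size e+\size{\ms s}-n$ is rewritten as $\size e+\size{\ms s}-\card{\ms s}$.

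The one point that is not a literal instance of the previous lemma is $\occdepth x{e'}=\emptyset$: Lemma \ref{lemma:npdiff:size} only supplies the inclusion $\occdepth x{e'}\subseteq\occdepth xe$. Here I would strengthen it to equality with the empty set by invoking the equivalence, already noted after Definition \ref{definition:occ}, that $x\in\fv{e'}$ iff $\occnum x{e'}\neq0$ iff $\occdepth x{e'}\neq\emptyset$; since we have just shown $\occnum x{e'}=0$, the variable $x$ does not occur in $e'$, so its set of occurrence depths is empty. I do not expect any genuine obstacle, as the argument is essentially bookkeeping. The only point requiring care is the case split hidden in the definition of $\lsubst ex{\ms s}$, namely that a nonempty support guarantees we are in the branch $n=\occnum xe$, where the multilinear substitution and the $n$-th partial derivative coincide.
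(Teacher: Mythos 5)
Your proposal is correct and matches the paper's own argument, which simply invokes Lemma \ref{lemma:npdiff:size} together with the case analysis $\lsubst ex{\ms s}=\npdiff nex{\ms s}$ when $n=\occnum xe$ (and $0$ otherwise) stated just before the lemma. Your additional care about deriving $\occdepth x{e'}=\emptyset$ from $\occnum x{e'}=0$ is exactly the intended "straightforward application".
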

In particular, $\fv {e'}=(\fv e\setminus \set x)\union \fv{\ms s}$, and
$\max\set{\size{e},\size{\ms s}}\le\size{e'}\le\size{e}+\size{\ms s}$.

Again, we can give a direct presentation of multilinear substitution.
Suppose $\occnum xe=m$, and 
write $x_1,\dotsc,x_m$ for the occurrences of $x$ in $e$. Then:
\[
	\lsubst ex{\mset{s_1,\dotsc,s_n}} =
		\sum_{\substack{f:\set{1,\dotsc,n}\to\set{1,\dotsc,m}\\\text{$f$ bijective}}}
			\subst{e}{x_{f(1)},\dotsc,x_{f(n)}}{s_1,\dotsc,s_n}.
\]
More formally, as a consequence of Lemma \ref{lemma:npdiff:decomposition}:
\begin{lemma}
	\label{lemma:lsubst:decomposition}
	For all monomial $\ms u=\mset{u_1,\dotsc,u_n}\in\resourceMonomials$ and all variable
	$x\in\variables$:
	\begin{align*}
		\lsubst {y}x{\ms{u}}  &= \begin{cases}
			y&\text{if $y\not=x$ and $n=0$}\\
			u_1&\text{if $y=x$ and $n=1$}\\
			0&\text{otherwise}
		\end{cases}\\
		\lsubst{\labs y s}x{\ms{u}}&=
			\labs y[]{\lsubst{s}x{\ms u}}&&&\text{(choosing $y\not\in\set x\union\fv{\ms u}$)}\\
		\lsubst{\rappl s{\ms t}}x{\ms u}&=
		\lefteqn{
			\sum_{\substack{(I,J)\text{ partition of }\set{1,\dotsc,n}\\
				\text{ s.t. }\card I=\occnum x{s}\text{ and }\card J=\occnum x{\ms t}}}
			\rappl{\lsubst sx{\ms u_I}}
				{\lsubst {\ms t}x{\ms u_J}}
		}\\
		\lsubst{\mset{s_1,\dotsc,s_k}}x{\ms u}&=
		\lefteqn{
			\sum_{\substack{
				(I_1,\dotsc,I_k)\text{ partition of }\set{1,\dotsc,n}\\
				{
				\text{ s.t. }\forall j,\ \card {I_j}=\occnum x{s_j}
				}
			}}
			\mset{
				{\lsubst {s_1}x{\ms u_{I_1}}},
				\dotsc,
				{\lsubst {s_k}x{\ms u_{I_k}}}
			}
		}
	\end{align*}
	where the conditions on cardinalities of subsets of $\set{1,\dotsc,n}$
	in the application and monomial cases may be omitted.
\end{lemma}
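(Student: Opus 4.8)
The plan is to read off each clause directly from the corresponding clause of Lemma~\ref{lemma:npdiff:decomposition}, by applying the operation $\subst{\cdot}x0$ to both sides. This is legitimate because, for $n=\card{\ms u}$ and $x\notin\fv{\ms u}$, the definition gives $\lsubst ex{\ms u}=\subst{\pars{\npdiff ne x{\ms u}}}x0$; so it suffices to substitute $0$ for $x$ in the right-hand sides of Lemma~\ref{lemma:npdiff:decomposition} and to recognize the resulting expressions as multilinear substitutions.

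First I would record the two elementary facts that let $\subst{\cdot}x0$ be pushed inside. It is linear, hence commutes with the finite sums indexing partitions; and since all bound variables are chosen outside $\set x\union\fv{\ms u}$ and $x\notin\fv{\ms u}$, it commutes with every constructor, namely $\subst{\pars{\labs ys}}x0=\labs y[]{\subst sx0}$, $\subst{\pars{\rappl s{\ms t}}}x0=\rappl{\subst sx0}{\subst{\ms t}x0}$, and likewise for monomials. With this in hand the variable case is immediate: substituting $0$ for $x$ in the three cases of $\npdiff n yx{\ms u}$, using $\subst yx0=y$ for $y\neq x$ and $x\notin\fv{u_1}$ (so $\subst{u_1}x0=u_1$) in the case $y=x$, $n=1$, yields exactly the stated three cases. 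The abstraction case is equally direct: commuting $\subst{\cdot}x0$ through the abstraction rewrites $\subst{\pars{\npdiff nsx{\ms u}}}x0$ as $\lsubst sx{\ms u}$ by definition, giving $\lsubst{\labs ys}x{\ms u}=\labs y[]{\lsubst sx{\ms u}}$.

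For the application and monomial cases I would commute $\subst{\cdot}x0$ through the constructor and through the sum over partitions, turning each factor $\subst{\pars{\npdiff{\card I}sx{\ms u_I}}}x0$ into $\lsubst sx{\ms u_I}$; here I use that $\ms u_I$ is a submultiset of $\ms u$, so $x\notin\fv{\ms u_I}$ and $\card{\ms u_I}=\card I$, which makes the definition of $\lsubst sx{\ms u_I}$ applicable verbatim. The one point requiring care is the cardinality side-condition on the partitions, and the assertion that it may be omitted. This is settled by the characterization established above (from Lemma~\ref{lemma:npdiff:size}), namely $\lsubst ex{\ms u}=\npdiff{\card{\ms u}}ex{\ms u}$ when $\card{\ms u}=\occnum xe$ and $0$ otherwise: a factor $\lsubst sx{\ms u_I}$ vanishes unless $\card I=\occnum xs$, and $\lsubst{\ms t}x{\ms u_J}$ vanishes unless $\card J=\occnum x{\ms t}$. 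Thus the sum over all partitions $(I,J)$ of $\set{1,\dotsc,n}$ coincides with the sum over only those satisfying the cardinality constraints, which is precisely why the constraints may be dropped; since such a surviving term forces $n=\occnum x{\rappl s{\ms t}}$, this is moreover consistent with $\lsubst{\rappl s{\ms t}}x{\ms u}=0$ when $n\neq\occnum x{\rappl s{\ms t}}$. The monomial case is handled by the same vanishing argument applied to each of the $k$ factors.

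I do not expect a genuine obstacle here: once the commutation of $\subst{\cdot}x0$ is in place, the statement is a clause-by-clause rewriting of Lemma~\ref{lemma:npdiff:decomposition}, and the only subtlety---the interplay between summing over all partitions and enforcing the cardinality conditions---is absorbed uniformly by the fact that multilinear substitutions of the wrong degree vanish. The most error-prone part is merely keeping the freshness conditions on bound variables explicit, so that $\subst{\cdot}x0$ really does commute with the constructors.
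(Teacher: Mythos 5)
Your argument is correct and is exactly the route the paper intends: the lemma is stated there as an immediate consequence of Lemma~\ref{lemma:npdiff:decomposition}, obtained by applying $\subst{\cdot}x0$ clause by clause and using the fact (already recorded after Lemma~\ref{lemma:npdiff:size}) that $\lsubst ex{\ms u}$ vanishes unless $\card{\ms u}=\occnum xe$, which is precisely how you absorb the cardinality side-conditions. Nothing is missing.
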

A similar result is the commutation of multilinear substitutions:
\begin{lemma}
	\label{lemma:lsubst:commute}
	If $x\not\in\fv{\ms u}$ then:
	\[
		\lsubst{\pars{\lsubst ex{\ms t}}}y{\ms u}=
			\sum_{\substack{(I,J)\text{ partition of }\set{1,\dotsc,\card{\ms u}}\\
				\text{ s.t. }\card I=\occnum x{e}\text{ and }\card J=\occnum x{\ms t}}}
			\lsubst {\pars{\lsubst ey{\ms u_I}}}x{\pars{\lsubst {\ms t}y{\ms u_J}}}
		.
	\]
\end{lemma}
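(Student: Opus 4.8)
The plan is to prove the identity by a (mutual) induction on the resource expression $e$, expanding both sides with the decomposition of multilinear substitution (Lemma~\ref{lemma:lsubst:decomposition}) and matching the resulting sums of partitions. The hypothesis $x\notin\fv{\ms u}$ plays a double rôle: it makes the nested multilinear substitutions on the right-hand side well defined, and, by Lemma~\ref{lemma:lsubst:size}, it guarantees that substituting $\ms u$ for $y$ neither creates nor destroys occurrences of $x$, so that every resource term appearing in $\lsubst sy{\ms u'}$ still has exactly $\occnum xs$ occurrences of $x$, for any subterm $s$ and any submonomial $\ms u'$ of $\ms u$. This is what will make the cardinality side-conditions on the two outer $x$-substitutions agree across the induction. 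The variable case reduces to a case analysis on whether the variable is $x$, $y$ or neither, each side being nonzero for a single choice of the degrees of $\ms t$ and $\ms u$; and the abstraction case is immediate from the induction hypothesis, since by Lemma~\ref{lemma:lsubst:decomposition} all the substitutions commute through the binder (after renaming the bound variable to keep it fresh).

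The real content is the application case $e=\rappl s{\ms r}$, the monomial case $e=\mset{s_1,\dotsc,s_k}$ being entirely analogous. First I would expand the left-hand side: by Lemma~\ref{lemma:lsubst:decomposition}, $\lsubst ex{\ms t}$ cuts $\ms t$ along a partition $(A,B)$ between $s$ and $\ms r$; applying $\lsubst{\cdot}y{\ms u}$ and Lemma~\ref{lemma:lsubst:decomposition} once more cuts $\ms u$ along a partition $(C,D)$ between the two factors; and the induction hypothesis applied to $s$ and to $\ms r$ cuts $\ms u_C$ and $\ms u_D$ again, separating the part of $\ms u$ serving the original $y$-occurrences of the factor from the part serving the $y$-occurrences imported by $\ms t_A$, resp.\ $\ms t_B$. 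The outcome is that the left-hand side is a single sum indexed by a two-part partition of $\ms t$ together with a four-part partition of $\ms u$, recording for each index of $\ms t$ which factor it serves, and for each index of $\ms u$ both which factor and which occurrence of $y$ (original or imported) it serves.

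Then I would expand the right-hand side into the same parametrization: the outer $\lsubst{\cdot}y{\ms u_I}$ splits $\ms u_I$ between $s$ and $\ms r$, the substitution $\lsubst{\ms t}y{\ms u_J}$ distributes $\ms u_J$ over $\ms t$, and the outer $x$-substitution then cuts the so-modified $\ms t$ between $s$ and $\ms r$, which recovers the partition $(A,B)$. Reading off the resulting four pieces of $\ms u$, one finds $I$ to be the union of the two pieces sent to original $y$-occurrences (of $s$ and of $\ms r$) and $J$ the union of the two pieces imported through $\ms t$; hence the two expansions are literally the same sum over a two-part partition of $\ms t$ and a four-part partition of $\ms u$. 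The cardinality conditions match because the outer $x$-substitution on the right demands $\card A=\occnum xs$ and $\card B=\occnum x{\ms r}$ computed on the $y$-substituted factors, and these coincide with the counts on $s$ and $\ms r$ precisely since $\ms u$ contributes no occurrence of $x$.

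The step I expect to be the main obstacle is exactly this bookkeeping of nested partitions: organizing the two differently-ordered expansions into a common index set and verifying that the reindexing is a degree-preserving bijection which neither creates nor drops a summand. Should the combinatorics prove unwieldy, a fallback is to reduce multilinear substitution to iterated partial differentiation through $\lsubst ex{\ms s}=\subst{\pars{\npdiff{}ex{\ms s}}}x0$ and to establish the corresponding commutation of iterated derivatives by induction on $\card{\ms u}$, using the single-step Schwarz identity (Lemma~\ref{lemma:pdiff:schwarz}): each application of that identity realizes the binary choice of routing one direction of $\ms u$ either to $e$ or to $\ms t$, iterating assembles exactly the partition sum, and the specialization $x=y=0$ finally yields the multilinear form.
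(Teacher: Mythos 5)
Your proposal is correct, and the \emph{fallback} you sketch in your last paragraph is precisely the paper's proof: the paper reduces the statement to the corresponding commutation of iterated partial derivatives,
\[
	\npDiff p{\pars{\npdiff nex{\ms t}}}y{\ms u}=
		\sum_{(I,J)\text{ partition of }\set{1,\dotsc,p}}
		\npDiff {n} {\pars{\npdiff {\card I}ey{\ms u_I}}}x{\pars{\npdiff {\card J}{\ms t}y{\ms u_J}}}
\]
with $n=\card{\ms t}$ and $p=\card{\ms u}$, proved by induction on $n$ and $p$ using Lemma~\ref{lemma:pdiff:schwarz}, before specializing at $0$. You correctly identify the combinatorial content of that route: each application of the Schwarz-type identity routes one element of $\ms u$ either to $e$ or to $\ms t$, and iterating assembles exactly the partition sum. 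Your primary route---structural induction on $e$, expanding both sides with Lemma~\ref{lemma:lsubst:decomposition} and matching a two-part partition of $\ms t$ against a four-part partition of $\ms u$---is genuinely different and would also go through; the point you isolate, that $x\notin\fv{\ms u}$ together with Lemma~\ref{lemma:lsubst:size} keeps the number of occurrences of $x$ invariant under the $y$-substitution so that the cardinality side-conditions survive the induction, is indeed the only place the hypothesis is needed, and the rest is a degree-preserving reindexing. What the derivative route buys is exactly the avoidance of that four-way bookkeeping: the partition is built one element of $\ms u$ at a time, and the whole combinatorics is absorbed into a single, already-proved two-term identity, which is why the paper's proof fits in three lines. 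If you write this up, prefer the derivative route.
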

\begin{proof}
	Write $n=\card{\ms t}$ and $p=\card{\ms u}$.
	It is sufficient to prove 
	\[
		\npDiff p{\pars{\npdiff nex{\ms t}}}y{\ms u}=
			\sum_{(I,J)\text{ partition of }\set{1,\dotsc,p}}
			\npDiff {n} {\pars{\npdiff {\card I}sy{\ms u_I}}}x{\pars{\npdiff {\card J}{\ms t}y{\ms u_J}}}
	\]
	by induction on $n$ and $p$, using Lemma \ref{lemma:pdiff:schwarz}.
\end{proof}

\subsection{Resource reduction}

\label{subsection:resRed}

If $\reduce$ is a reduction relation, we will write $\R\reduce$ (resp.
$\T\reduce$; $\RT\reduce$) for its reflexive (resp.\ transitive; reflexive and
transitive) closure.

In the resource λ-calculus, a redex is a term of the form
$\rappl{\labs xt}{\ms u}\in\resourceTerms$ and its reduct is
$\lsubst tx{\ms u}\in\finiteTermSums$.
The resource reduction $\resRed$ is then the contextual closure
of this reduction step on finite sums of resource expressions. More precisely:
\begin{definition}
	We define the \emph{resource reduction} relation
	$\mathord{\resRed}\subseteq\resourceExpressions\times\finiteResourceSums$
	inductively as follows:
	\begin{itemize}
		\item $\rappl{\labs x s}{\ms t}\resRed \lsubst sx{\ms t}$ for all $s\in\resourceTerms$ 
			and $\ms t\in\resourceMonomials$;
		\item $\labs xs\resRed \labs x {σ'}$ as soon as $s\resRed σ'$;
		\item $\rappl s{\ms t}\resRed \rappl{σ'}{\ms{t}}$ as soon as $s\resRed σ'$;
		\item $\rappl s{\ms t}\resRed \rappl{s}{\ms{τ}'}$ as soon as $\ms t\resRed \ms{τ}'$;
		\item $\mset{s}\cdot \ms t\resRed\mset{σ'}\cdot\ms t$ as soon as $s\resRed σ'$.
	\end{itemize}
	We extend this reduction to finite sums of resource expressions:
	write $ε\resRed ε'$ if $ε=\sum_{i=0}^n e_i$ and $ε'=\sum_{i=0}^n ε'_i$ 
	with $e_0\resRed ε'_0$ and, for all $i\in\set{1,\dotsc,n}$,
	$e_i\resRedR ε'_i$.
\end{definition}
Observe that we allow for parallel reduction of any nonzero number of summands in a finite sum.
This reduction is particularly well behaved. In particular, it is confluent in a strong sense:
\begin{lemma}
	\label{lemma:reduction:strongconfluence}
	For all $ε,ε_0,ε_1\in\finiteResourceSums$,
	if $ε\resRed ε_0$ and $ε\resRed ε_1$ then
	there is $ε'\in\finiteResourceSums$ such 
	that $ε_0\R\resRed ε'$ and $ε_1\R\resRed ε'$.
\end{lemma}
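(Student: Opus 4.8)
The plan is to establish the diamond property by exploiting the fundamental \emph{linearity} of multilinear substitution: firing a resource redex neither duplicates nor erases any other redex, because each occurrence of the abstracted variable receives exactly one element of the argument monomial. This is precisely what fails in the ordinary λ-calculus, where it forces the use of Tait--Martin-L\"of parallel reduction; here it yields strong confluence directly, with a single step of $\R\resRed$ on each side.

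First I would isolate the commutation of reduction with multilinear substitution. Concretely, I would prove two auxiliary facts: (i) if $s\resRed σ'$ then firing the corresponding redex in each summand of $\lsubst sx{\ms t}$ yields $\lsubst{σ'}x{\ms t}$, so that $\lsubst sx{\ms t}\R\resRed\lsubst{σ'}x{\ms t}$; and (ii) if $\ms t\resRed\ms{τ}'$ then likewise $\lsubst sx{\ms t}\R\resRed\lsubst sx{\ms{τ}'}$. Both follow from Lemma~\ref{lemma:lsubst:decomposition}, together with the multilinearity of $\lsubst sx{-}$ in the elements of $\ms t$: in the nonzero case $\card{\ms t}=\occnum xs$, each element of $\ms t$ is placed at exactly one occurrence of $x$ in every summand, so the reduced redex occurs exactly once per summand and can be fired in parallel; in the zero case both sides vanish.

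With these in hand, I would prove the diamond property for a single resource expression $e$: if $e\resRed σ_0$ and $e\resRed σ_1$, then $σ_0\R\resRed σ'$ and $σ_1\R\resRed σ'$ for some $σ'$. This goes by induction on $e$ and a case analysis on the two fired redexes. When they coincide, $σ_0=σ_1$ and we are done. When they are disjoint, firing redex $R_1$ in $σ_0$ and $R_0$ in $σ_1$ produces a common reduct; the only point to check is that a redex sitting in the context around a fired redex survives into---and is fired exactly once in---each summand, which is again linearity. When one redex is nested inside the other, say $R_1$ occurs in the body $s$ or in the argument $\ms t$ of the outer redex $R_0=\rappl{\labs xs}{\ms t}$, the commutation facts (i)--(ii) close the diamond.

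Finally I would lift the result to finite sums. Here the only subtlety is that the decompositions $ε=\sum_i e_i$ witnessing $ε\resRed ε_0$ and $ε\resRed ε_1$ need not agree; I would pass to a common refinement of the two decompositions, so that the redexes fired on each side are aligned summand by summand. Applying the single-expression diamond to each summand (falling back to reflexivity, via $\R\resRed$, on those summands reduced on only one side) and summing the results gives the common reduct $ε'$. The main obstacle is the nested case of the single-expression diamond, i.e.\ establishing the substitution-commutation facts (i)--(ii): the bookkeeping that tracks how the partition indexing $\lsubst sx{\ms t}$ interacts with firing a redex inside $s$ or inside $\ms t$ is the technical heart of the argument, even though each individual step is elementary.
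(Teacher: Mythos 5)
Your proposal is correct and follows essentially the same route as the paper: first the commutation of $\resRed$ with multilinear substitution (noting that linearity makes the reflexive closure necessary only for the vanishing case and the transitive closure unnecessary since no redex is duplicated), then the diamond property for a single expression by induction on the pair of reductions, then the lift to finite sums. Your explicit treatment of aligning the two decompositions of a sum is a detail the paper leaves implicit, but it is not a different approach.
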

\begin{proof}
	The proof follows a well-trodden path for proving confluence.

	One first proves by induction on $s$ that if $s\resRed σ'$
	then $\lsubst sx{\ms t}\R\resRed\lsubst{σ'}x{\ms t}$,
	and if $\ms t\resRed \ms{τ}'$ 
	then $\lsubst sx{\ms t}\R\resRed\lsubst{s}x{\ms{τ}'}$.
	Note that the reflexive closure is made necessary by 
	the possibility that $\lsubst sx{\ms t}=0$,
	and the transitive closure is not needed because 
	there is no duplication of the redexes of $\ms t$ in the 
	summands of the multilinear substitution $\lsubst sx{\ms t}$.

	One then proves that if $e\resRed ε_0$ and $e\resRed ε_1$ then
	there is $ε'\in\finiteResourceSums$ such 
	that $ε_0\R\resRed ε'$ and $ε_1\R\resRed ε'$.
	The proof is straightforward, by induction on the pair of reductions
	$e\resRed ε_0$ and $e\resRed ε_1$, using the previous 
	result in case $e$ is a redex which is reduced in 
	$ε_0$ but not in $ε_1$ (or \emph{vice versa}).
\end{proof}
In other words, $\R\resRed$ enjoys the diamond property.\footnote{
	This strong confluence result was not mentioned 
	in Ehrhard and Regnier's papers about resource λ-calculus
	\cite{er:resource,er:bkt} but they established 
	a very similar result for differential nets \cite[Section 4]{er:diffnets}:
	Lemma \ref{lemma:reduction:strongconfluence} can be understood as a
	reformulation of the latter in the setting of resource calculus.
}
Moreover, the effect of reduction on the size of terms is very regular.
First introduce some useful notation:
write $e\oneStepGenerates e'$ if $e\resRed ε'$ with $e'\in \support{ε'}$.

\begin{lemma}
	\label{lemma:reduction:size}
	Let $e\oneStepGenerates e'$.
	Then $\fv{e'}=\fv{e}$, and $\size{e'}+2\le\size{e}\le 2\size{e'}+2$.
\end{lemma}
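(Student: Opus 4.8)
The plan is to reduce the statement to a single redex contraction, since $e\oneStepGenerates e'$ means $e\resRed\varepsilon'$ with $e'\in\support{\varepsilon'}$, and the reduction relation $\resRed$ is defined as the contextual closure of the single root step $\rappl{\labs xt}{\ms u}\resRed\lsubst tx{\ms u}$. I would argue by induction on the derivation of $e\resRed\varepsilon'$, following exactly the inductive clauses in the definition of resource reduction. The base case is the root redex, and each of the remaining clauses (abstraction, left of application, argument monomial, monomial element) is a congruence case where the enclosing construct contributes a fixed additive constant to both the size and the free-variable set, so the bounds propagate directly from the induction hypothesis. Concretely, in each congruence case the outer construct adds the same amount to $\size e$ and $\size{e'}$, so an estimate $\size{e'}+2\le\size e\le 2\size{e'}+2$ on the immediate subterm lifts verbatim to $e$ and $e'$; and since a redex contraction does not change the free variables of the redex (this is exactly the content of the first bullet of Lemma \ref{lemma:lsubst:size}, $\fv{e'}=(\fv e\setminus\set x)\union\fv{\ms u}$ together with $x\in\fv t$ and $\fv{\ms u}\subseteq\fv{\rappl{\labs xt}{\ms u}}$), the equality $\fv{e'}=\fv e$ is stable under all congruence rules.

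The crux is therefore the base case, where $e=\rappl{\labs xt}{\ms u}$ and $e'\in\support{\lsubst tx{\ms u}}$. Here I would invoke Lemma \ref{lemma:lsubst:size} directly. Writing $n=\card{\ms u}=\occnum xt$, that lemma gives $\size{e'}=\size t+\size{\ms u}-n$ for every $e'$ in the support. On the other hand, from the size definition,
\[
	\size e=\size{\rappl{\labs xt}{\ms u}}=1+\size{\labs xt}+\size{\ms u}=2+\size t+\size{\ms u}.
\]
Combining these two identities, $\size e=\size{e'}+2+n$. Since $n=\occnum xt\ge0$, this already yields $\size e\ge\size{e'}+2$, the lower bound. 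For the upper bound, I need $\size e\le2\size{e'}+2$, i.e.\ $n\le\size{e'}=\size t+\size{\ms u}-n$, equivalently $2n\le\size t+\size{\ms u}$. This holds because each of the $n$ occurrences of $x$ in $t$ contributes at least $1$ to $\size t$ (so $n=\occnum xt\le\size t$), and the monomial $\ms u$ has $n$ elements each of size at least $1$ (so $n\le\card{\ms u}\le\size{\ms u}$ by the remark $\size{\ms s}\ge\card{\ms s}$ following the size definition); adding these two inequalities gives $2n\le\size t+\size{\ms u}$ as required.

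The main obstacle is purely bookkeeping rather than conceptual: the upper bound $\size e\le2\size{e'}+2$ is the only nontrivial estimate, and everything hinges on getting the exact size identity $\size{e'}=\size t+\size{\ms u}-n$ from Lemma \ref{lemma:lsubst:size} and then bounding $n$ by both $\size t$ and $\size{\ms u}$ separately. I expect the congruence cases to be entirely mechanical, so the proof can be stated very briefly, essentially pointing to the root case and the additivity of size and free variables under the remaining syntactic constructs. The free-variable equality in the base case deserves one explicit line to note that $x$ genuinely occurs in $t$ (guaranteeing no variable is lost) and that substituting in the elements of $\ms u$ introduces exactly $\fv{\ms u}$, which are already free in $e$.
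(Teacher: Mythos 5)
Your proposal is correct and follows essentially the same route as the paper: induction on the derivation of $e\resRed ε'$, with the congruence cases dismissed as mechanical and the base case settled by Lemma \ref{lemma:lsubst:size} via the identity $\size{e'}=\size t+\size{\ms u}-n=\size e-2-n$ together with $n\le\size{e'}$ (the paper phrases this last bound as $n\le\size{\ms u}\le\size{e'}$, you as $2n\le\size t+\size{\ms u}$ — the same two facts $n\le\size t$ and $n\le\size{\ms u}$ in either case). The only nitpick is that $x$ need not occur in $t$ when $n=0$, but the free-variable equality holds anyway since $\fv t\setminus\set x=\fv t$ in that case.
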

\begin{proof}
	By induction on the reduction $e\resRed ε'$ with $e'\in\support{ε'}$.
	The inductive contextuality cases are easy, 
	and we only detail the base case, \ie
	$e=\rappl{\labs x{t}}{\ms u}$ 
	and $ε'=\lsubst {t}x{\ms u}$.

	Write $n=\occnum	xt$.
	The result then follows from Lemma \ref{lemma:lsubst:size}, observing that
	$\size{e'}=\size{t}+\size{\ms u}-n=\size{e}-2-n$ and $n\le\size{\ms u}\le\size{e'}$.
\end{proof}

We will write $\generates$ (resp.\ $\strictlyGenerates$) for $\oneStepGenerates^*$
(resp.\ $\oneStepGenerates^+$).
Observe that $e\generates e'$ (resp.\ $e \strictlyGenerates e'$)
iff there is $ε'\in\finiteResourceSums$ such that 
$e'\in\support{ε'}$ and $e\resRedRT ε'$ (resp.\ $e\resRedT ε'$).
Moreover, $\set{e'\st e\generates e'}$ is always finite and 
$\strictlyGenerates$ defines a well-founded strict partial order.
A direct consequence is that $\resRed$ always converges
to a unique normal form:
\begin{lemma}
	\label{lemma:resRed:SN}
	The reduction $\resRed$ is confluent and strongly normalizing.
	Moreover, for all $ε\in\finiteResourceSums$, the set $\set{ε'\st ε\resRedRT ε'}$
	is finite.
\end{lemma}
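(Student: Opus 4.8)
The plan is to treat the three assertions in turn, the genuine content lying entirely in strong normalization; confluence and the finiteness of the reduct set then come almost for free from the two preceding lemmas.

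For confluence, I would observe that Lemma~\ref{lemma:reduction:strongconfluence} is precisely the statement that $\R\resRed$ enjoys the diamond property: any two one-step reducts of a common vector are joined in at most one $\R\resRed$-step on each side (when one of the two given steps is reflexive, one closes trivially). It is standard that the diamond property of the reflexive closure of a relation makes its reflexive-transitive closure confluent, and since $\resRedRT$ coincides with the reflexive-transitive closure of $\R\resRed$, confluence of $\resRed$ follows at once, with no recourse to normalization.

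For strong normalization, I would lift the size decrease of Lemma~\ref{lemma:reduction:size} from single expressions to vectors by a multiset argument. Recall that $e\oneStepGenerates e'$ forces $\size{e'}\le\size e-2<\size e$, so that $\strictlyGenerates$ is a well-founded strict order, and that every expression reachable by $\generates$ from $\support{ε}$ lies in a fixed finite set of bounded size. To each $ε\in\finiteResourceSums$ I would associate the finite multiset of sizes of its support elements, and check that a single $\resRed$-step strictly decreases it in the well-founded Dershowitz--Manna multiset order: by definition of reduction on sums, at least one summand is genuinely reduced, and each element it contributes is a strict $\oneStepGenerates$-descendant, hence of strictly smaller size, while the remaining summands are reduced reflexively or again to strictly smaller descendants. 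An infinite $\resRed$-sequence would thus yield an infinite descending chain of multisets, a contradiction.

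For the finiteness of $\set{ε'\st ε\resRedRT ε'}$, I would note that $\resRed$ is finitely branching: since $\support{ε}$ is finite and each resource expression has finitely many one-step reducts (its redexes are finitely many and each has a single reduct), a one-step reduct of $ε$ is determined by choosing which support elements to reduce and to which of their finitely many reducts, so there are finitely many of them. Combining finite branching with the strong normalization just obtained, König's lemma shows that the tree of all reduction sequences issued from $ε$ is finite, whence its set of vertices, namely $\set{ε'\st ε\resRedRT ε'}$, is finite. The only mildly delicate point in the whole argument is the bookkeeping in the multiset step: because we reduce vectors rather than plain sets of expressions, distinct summands collapsing to a common descendant, or coefficients cancelling, can only remove or lower entries of the multiset, never raise one above a consumed value, so the strict decrease is preserved. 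I expect no real obstacle here, the substance having been carried by Lemmas~\ref{lemma:reduction:strongconfluence} and~\ref{lemma:reduction:size}.
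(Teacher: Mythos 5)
Your proof is correct and follows essentially the same route as the paper: confluence from the diamond property of Lemma~\ref{lemma:reduction:strongconfluence}, strong normalization via a multiset extension of the well-founded order provided by Lemma~\ref{lemma:reduction:size} (the paper orders multisets of expressions by $\oneStepGenerates^+$ where you order multisets of sizes by $<$, which amounts to the same thing), and finiteness of the reduct set by König's lemma from finite branching. The one point to state carefully is that your measure must be the multiset of sizes of the summands counted \emph{with multiplicity} --- as in the paper's identification of $\finiteResourceSums$ with finite multisets of resource expressions --- rather than of the support elements only, since otherwise reducing one of two identical summands would add smaller entries without removing anything and would not give a strict decrease in the Dershowitz--Manna order.
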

\begin{proof}
	Confluence is a consequence of Lemma \ref{lemma:reduction:strongconfluence}.
	By Lemma \ref{lemma:reduction:size}, 
	the transitive closure $\oneStepGenerates^+$
	is a well-founded strict partial order.
	Observe that the elements of $\finiteResourceSums$ 
	can be considered as finite multisets of resource expressions:
	then $\resRedT$ is included in the multiset ordering induced by 
	$\oneStepGenerates^+$, and 
	it follows that $\resRedT$ defines a well-founded strict partial order 
	on  $\finiteResourceSums$, \ie $\resRed$ is strongly
	normalizing.

	The final property follows from strong normalizability applying König's lemma
	to the tree of possible reductions, observing that each $ε$ has finitely 
	many $\resRed$-reducts.
\end{proof}

If $ε\in\finiteResourceSums$, we then write $\NormalForm{ε}$ for the
unique sum of normal resource expressions such that $ε\resRedRT\NormalForm{ε}$.
A consequence of the previous lemma is that any reduction discipline
reaches this normal form:
\begin{corollary}
	\label{corollary:resRed:strategies}
	Let $\mathord\to \subseteq \finiteResourceSums\times \finiteResourceSums$
	be such that $\mathord\to\subseteq\mathord{\resRedRT}$. Moreover assume 
	that, for all non normal $ε\in\finiteResourceSums$
	there is $ε'\not=ε$ such that $ε \to ε'$.
	Then $ε\to^*\NormalForm{ε}$ for all $ε\in\finiteResourceSums$.
\end{corollary}

\section{Vectors of resource expressions and Taylor expansion of algebraic λ-terms}

\label{section:vectors}

\subsection{Resource vectors}

\label{subsection:resourceVectors}

A vector $σ=\sum_{s\in\resourceTerms} σ_s\sm s$ of
resource terms will be called a \emph{term vector} whenever its set of free
variables $\fv{σ}\eqdef\Union_{s\in \support{σ}} \fv s$ is finite. Similarly,
we will call $\emph{monomial vector}$ any vector of resource monomials 
whose set of free variables is finite.
We will abuse notation and write $\termVectors$ for the set of term vectors and
$\monomialVectors$ for the set of monomial vectors.\footnote{
	The restriction to vectors with finitely many free variables is purely technical.
	For instance, it allows us to assume that a sum of abstractions 
	$σ=\sum_{i\in I} \labs {x_i}{s_i}$ can always use a common abstracted
	variable: $σ=\sum_{i\in I} \labs {x}[]{\subst{s_i}{x_i}x}$, with
	$x\not\in\union_{i\in I}\fv {\labs {x_i}{s_i}}$.
	Working without this restriction would 
	only lead to more contorted statements and tedious bookkeeping:
	consider, \eg,  what would happen to the definition of the substitution of a term vector for a
	variable (Definition \ref{definition:resourceVectors:substitution}),
	especially the abstraction case.
}

A \emph{resource vector} will be any of a term vector or a monomial vector, 
and we will write $\resourceVectors$ for either $\termVectors$ or $\monomialVectors$:
as for resource expressions, whenever we use this notation several times in the same context, 
all occurrences consistently denote the same set.

The syntactic constructs are extended to resource vectors by linearity:
for all $σ\in\termVectors$ and $\ms{σ},\ms{τ}\in\monomialVectors$, we set
	\begin{eqnarray*}
		\labs x{σ} &\eqdef&  \sum_{s\in\resourceTerms} {σ}_s\sm \labs xs, \\
		\rappl {σ}{\ms{τ}} &\eqdef& 
			\sum_{s\in\resourceTerms,\ms t\in\resourceMonomials} {σ}_s {\ms{τ}}_{\ms t}\sm \rappl{s}{\ms t}, \\
		\text{ and }\mset {σ_1,\dotsc,σ_n} &\eqdef&
			\sum_{s_1,\dotsc,s_n\in\resourceTerms} (σ_1)_{s_1}\cdots(σ_n)_{s_n}
			\sm \mset{s_1,\dotsc,s_n}.
	\end{eqnarray*}
This poses no problem for finite vectors: \eg, if $\support{σ}$
is finite then finitely many of the vectors $σ_s\sm\labs xs$ are non-zero,
hence the sum is finite. In the general case, however,
we actually need to prove that the above sums are well defined:
the constructors of the calculus define summable functions,
which thus extend to multilinear-continuous maps.\footnote{
	The one-to-one correspondence between summable $n$-ary
	functions and multilinear-continuous maps was established
	for semimodules of the form $\vectors X$, \ie the semimodules
	of all vectors on a fixed set. Due to the restriction we put on free
	variables, $\resourceVectors$ is not of this form: it should rather 
	be written $\Union_{V\in\finiteSubsets{\variables}}\resourceVectorsV V$
	where $\resourceExpressions[V]\eqdef\set{e\in\resourceExpressions\st\fv e\subseteq V}$.
	So when we say a function is multilinear-continuous on $\resourceVectors$,
	we actually mean that its restriction to each $\resourceVectorsV V$ with
	$V\in\finiteSubsets{\variables}$ is multilinear-continuous.
	In the present case, keeping this precision implicit is quite innocuous,
	but we will be more careful when considering the restriction to bounded vectors
	in Subsection \ref{subsection:bounded}, and to normalizable vectors in
	Section \ref{section:normalization}.
}
\begin{lemma}
	The following families of vectors are summable:
	\[ 
		\pars{ \labs xs }_{s\in\resourceTerms}
		,\quad
		\pars{ \rappl{s}{\ms t} }_{s\in\resourceTerms,\ms t\in\resourceMonomials}
		,\quad
		\pars{ \mset{s}}_{s\in\resourceTerms}
		\quad\text{and}\quad
		\pars{ {\ms{s}\cdot{\ms{t}}} }_{\ms s,\ms t\in\resourceMonomials}.
	\]
\end{lemma}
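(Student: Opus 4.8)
The plan is to unfold the definition of summability from Subsection~\ref{subsection:summable}. I would first observe that a resource expression $e$, regarded as a vector, has singleton support $\support{e}=\set e$. Consequently, each of the four families listed is of the form $\pars{c(i)}_{i\in I}$, where $c$ is one of the syntactic constructors and the index $i$ ranges over the appropriate (product of) sets of resource expressions; for a fixed target expression $e'$ we then have $e'\in\support{c(i)}$ if and only if $c(i)=e'$. Hence $\pars{c(i)}_{i\in I}$ is summable precisely when, for every target $e'$, the fiber $\set{i\in I\st c(i)=e'}$ is finite, i.e.\ when $c$ has finite fibers. This reduces the whole statement to a purely combinatorial finiteness condition on the constructors.

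I would then check this condition for the four constructors. For abstraction $s\mapsto\labs xs$, for application $\pars{s,\ms t}\mapsto\rappl s{\ms t}$, and for the singleton $s\mapsto\mset s$, the constructor is injective: this is immediate since $\resourceTerms$ and $\resourceMonomials$ are freely generated, up to $\alpha$-equivalence and permutativity of monomials. Each nonempty fiber is therefore a singleton, and summability is immediate. For application one notes in particular that $\rappl s{\ms t}$ and $\rappl{s'}{\ms t'}$ coincide exactly when $s=s'$ and $\ms t=\ms t'$, so a given target term has at most one antecedent (and none unless it is itself an application).

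The only family whose constructor is not injective is the multiset union $\pars{\ms s,\ms t}\mapsto\ms s\cdot\ms t$, and this is the one place where a genuine---if elementary---argument is required, so I expect it to be the sole obstacle. Fixing a target monomial $\ms u$, I would observe that any decomposition $\ms u=\ms s\cdot\ms t$ amounts to distributing the finitely many elements of $\ms u$ (counted with multiplicity) between $\ms s$ and $\ms t$; since $\ms u$ is a finite multiset, there are at most $2^{\card{\ms u}}$ such decompositions, hence finitely many. This bounds the fibers of multiset union and concludes the proof. The crux is simply that the finiteness of monomials controls the number of their splittings; everything else is formal.
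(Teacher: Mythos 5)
Your proof is correct and follows essentially the same route as the paper's: both reduce summability to finiteness of the constructors' fibers, note that abstraction, application and singleton are injective, and observe that a finite multiset admits only finitely many decompositions as a union of two multisets. Your explicit bound $2^{\card{\ms u}}$ on the number of splittings is a harmless refinement the paper leaves implicit.
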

\begin{proof}
	The proof is direct, but we detail it if only to make the requirements explicit.

	For all $u\in\resourceTerms$ there is at most one $s$ such that
	$u\in \support{\labs xs}$ (in which case $u=\labs xs$) and at most one pair $(s,\ms t)$ such that
	$u\in\support{\rappl{s}{\ms t}}$ (in which case $u=\rappl{s}{\ms t}$).

	For all $\ms u\in\resourceMonomials$ there is at most one $s$ such that 
	$\ms u\in\support{\mset s}$ (in which case $\ms u=\mset s$),
	and there are finitely many $\ms s$ and $\ms t$ such that
	$\ms u\in \support{{\ms{s}\cdot{\ms{t}}} }$
	(those such that $\ms u=\ms s\cdot \ms t$).
\end{proof}

For each term vector $σ$, we then write $σ^n$ for the monomial vector
\[\mset[\mnorm]{\overbrace{\strut σ,\dotsc,σ}^{\text{$n$ times}}}.\] 

\subsection{Partial differentiation of resource vectors.}
We can extend partial derivatives to vectors by linear-continuity
(recall that, via the unique semiring morphism from $\naturals$ to $\rigS$, we
can consider that $\finiteResourceSums\subseteq\resourceVectors$).

\begin{lemma}
	The function
	\begin{eqnarray*}
		\resourceExpressions\times\resourceMonomials
		&\to &
		\resourceVectors
		\\
		(e,\mset{s_1,\dotsc,s_n})
		&\mapsto&
		\npdiff n{e}x{\mset{ s_1,\dotsc,s_n}}
	\end{eqnarray*}
	is summable.
\end{lemma}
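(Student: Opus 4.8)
The claim is that the function
\[
(e,\mset{s_1,\dotsc,s_n})\mapsto\npdiff n{e}x{\mset{s_1,\dotsc,s_n}}
\]
from $\resourceExpressions\times\resourceMonomials$ to $\resourceVectors$ is summable. Unwinding the definition of summability (Subsection \ref{subsection:summable}), I must show that for every fixed target resource expression $e'$, the set of pairs $(e,\ms s)$ with $e'\in\support{\npdiff{\card{\ms s}}{e}x{\ms s}}$ is finite.

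**The plan.** The idea is to bound the size and free variables of any antecedent $(e,\ms s)$ of a fixed $e'$, so that only finitely many such pairs exist. The key tool is Lemma \ref{lemma:npdiff:size}, which I would like to apply; note however that it carries the hypothesis $x\notin\fv{\ms s}$, so the first thing I would do is reduce to that case by α-renaming (the value of $\npdiff n e x{\ms s}$ is, by the general definition via a fresh variable $y$, insensitive to whether $x$ occurs in $\ms s$, so working up to this convention is harmless). Granting $x\notin\fv{\ms s}$ and writing $n=\card{\ms s}$, Lemma \ref{lemma:npdiff:size} gives $\size{e'}=\size e+\size{\ms s}-n$ together with $\occnum x e\ge n$ and $\size{\ms s}\ge\card{\ms s}=n$. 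From $\size{\ms s}\ge n$ I get $\size{e}\le\size{e'}$, and from $\size e\ge 1$ (resp. $\size e\ge 0$ in the monomial case) I get $\size{\ms s}=\size{e'}-\size e+n\le\size{e'}-\size e+\occnum x e\le 2\size{e'}$. Thus both $\size e$ and $\size{\ms s}$ are bounded by a function of $\size{e'}$ alone.

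**Controlling free variables, then finiteness.** Bounding sizes is not quite enough, since finiteness of a set of resource expressions needs a bound on both size and free variables (as recalled just after the definition of size/height). For the free variables I would again use Lemma \ref{lemma:npdiff:size}: the identities on $\occnum y{e'}$ and on the occurrence-depth sets show that $\fv{e'}=(\fv e\setminus\set x)\cup\fv{\ms s}$ up to the distinguished variable $x$, so that $\fv e\subseteq\fv{e'}\cup\set x$ and $\fv{\ms s}\subseteq\fv{e'}$. Hence the free variables of any antecedent lie in the fixed finite set $\fv{e'}\cup\set x$. Combining the size bound and the free-variable bound, the set of candidate expressions $e$ is finite and the set of candidate monomials $\ms s$ is finite, so there are only finitely many pairs $(e,\ms s)$ producing $e'$ in their support. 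This is exactly summability.

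**Expected obstacle.** The only genuinely delicate point is the bookkeeping around the hypothesis $x\notin\fv{\ms s}$ and the passage through the fresh-variable definition of $\npdiff n e x{\ms s}$: I want the size and free-variable bounds to survive this α-renaming without subtle off-by-one errors in the cardinality condition $n=\occnum x e$. Everything else is a direct, routine extraction of inequalities from Lemma \ref{lemma:npdiff:size}, so I expect the proof to be short once the reduction to $x\notin\fv{\ms s}$ is stated cleanly.
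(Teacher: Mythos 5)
Your proof is correct and follows essentially the same route as the paper's: fix the target $e'$, extract from Lemma \ref{lemma:npdiff:size} the bounds $\size e\le\size{e'}$, $\size{\ms s}\le\size{e'}$ (your $2\size{e'}$ is slightly looser but equally sufficient), $\fv e\subseteq\fv{e'}\union\set x$ and $\fv{\ms s}\subseteq\fv{e'}$, and conclude that only finitely many antecedent pairs exist. The α-renaming point you flag is real but harmless, and the paper keeps it implicit exactly as you suggest.
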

\begin{proof}
	Let $e'\in\resourceExpressions$ and assume that 
	$e'\in\support{\npdiff nex{\ms s}}$ with $\card{\ms s}=n$.
	By Lemma \ref{lemma:npdiff:size}, $\fv e\subseteq \fv{e'}\union\set{x}$,
	$\fv {\ms s}\subseteq \fv{e'}$, $\size{e}\le\size{e'}$ and $\size{\ms s}\le \size{e'}$:
	$e'$ being fixed, there are finitely many $(e,\ms s)$ satisfying these constraints.
\end{proof}

The characterization of iterated partial derivatives given in Lemma
\ref{lemma:npdiff:decomposition} extends directly to resource vectors, by
the linear-continuity of syntactic constructs and partial derivatives.
For instance, given term vectors $σ,ρ_1,\dotsc,ρ_n\in\termVectors$ and a
monomial vector $\ms{τ}\in\monomialVectors$, we obtain:
\[
	\npdiff n{\rappl{σ}{\ms{τ}}}x{\mset{ρ_1,\dotsc,ρ_n}}=
			\sum_{(I,J)\text{ partition of }\set{1,\dotsc,n}}
			\rappl{\npdiff {\card I}{σ}x{\ms{ρ}_I}}
			{\npdiff {\card J}{\ms{τ}}x{\ms{ρ}_J}}.
\]
Now we can consider iterated differentiation along a fixed term vector $ρ$:
$\npdiff n{ε}x{ρ^n}$. We obtain:
\begin{lemma}
	\label{lemma:npdiff:multilinear}
	For all $σ,τ_1,\dotsc,τ_n,ρ\in\termVectors$ and
	all $\ms{τ}\in\monomialVectors$,
	\begin{eqnarray*}
		\npdiff k{\rappl{σ}{\ms{τ}}}x{ρ^k}&=&
			\sum_{l=0}^k \nchoose k{l,k-l}
			\rappl{\npdiff l{σ}x{ρ^l}}{\npdiff {k-l} {\ms{τ}}x{ρ^{k-l}}}\qquad\text{and}\\
		\npdiff k{\mset{τ_1,\dotsc,τ_n}}x{ρ^k}&=&
			\sum_{\substack{
					k_1,\dotsc,k_n\in\naturals\\
					k_1+\cdots+k_n=k
				}} \nchoose k{k_1,\dotsc,k_n}
			\mset{\npdiff{k_1}{τ_1}x{ρ^{k_1}},\dotsc,\npdiff{k_n}{τ_n}x{ρ^{k_n}}}.
	\end{eqnarray*}
\end{lemma}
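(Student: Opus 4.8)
The plan is to obtain both identities as specializations of the vectorial form of Lemma~\ref{lemma:npdiff:decomposition}, which the text has just recorded: for arbitrary term vectors $ρ_1,\dotsc,ρ_k$ one has
\[
	\npdiff k{\rappl{σ}{\ms{τ}}}x{\mset{ρ_1,\dotsc,ρ_k}}=
		\sum_{(I,J)\text{ partition of }\set{1,\dotsc,k}}
		\rappl{\npdiff {\card I}{σ}x{\ms{ρ}_I}}
		{\npdiff {\card J}{\ms{τ}}x{\ms{ρ}_J}},
\]
and likewise for a monomial $\mset{τ_1,\dotsc,τ_n}$, with the outer sum ranging over ordered partitions $(I_1,\dotsc,I_n)$ of $\set{1,\dotsc,k}$ and the $j$-th component of the produced monomial being $\npdiff{\card{I_j}}{τ_j}x{\ms{ρ}_{I_j}}$. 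First I would set $ρ_1=\dotsc=ρ_k=ρ$: then for every $I\subseteq\set{1,\dotsc,k}$ the submonomial $\ms{ρ}_I$ is literally $ρ^{\card I}$, the monomial consisting of $\card I$ copies of $ρ$, and consequently each summand depends on the partition only through the cardinalities of its blocks.

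It then remains to group the summands by these cardinalities. In the application case the summand indexed by $(I,J)$ depends only on $l\eqdef\card I$, with $\card J=k-l$; since the number of subsets $I\subseteq\set{1,\dotsc,k}$ of cardinality $l$ is $\nchoose{k}{l,k-l}$, collecting equal summands gives exactly the announced binomial sum over $l\in\set{0,\dotsc,k}$. In the monomial case the summand indexed by $(I_1,\dotsc,I_n)$ depends only on the profile $(k_1,\dotsc,k_n)\eqdef(\card{I_1},\dotsc,\card{I_n})$, which ranges over tuples of naturals summing to $k$; for each such profile the number of ordered partitions of $\set{1,\dotsc,k}$ realizing it is $\nchoose{k}{k_1,\dotsc,k_n}$, and collecting equal summands yields the multinomial sum. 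Here, adding a fixed vector to itself $m$ times coincides with multiplying it by the scalar image of $m\in\naturals$ in $\rigS$, so these counts legitimately appear as the coefficients $\nchoose{k}{\dotsc}$.

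I expect no genuine obstacle. Since $k$ is a fixed natural number, the sums over partitions of $\set{1,\dotsc,k}$ are \emph{finite} sums of resource vectors, so the regrouping above is an elementary rearrangement requiring no summability argument; the infinite-sum machinery is entirely confined to the already-granted passage from Lemma~\ref{lemma:npdiff:decomposition} to its vectorial form, which rests on the linear-continuity of the syntactic constructors and of partial differentiation. The only real content is thus the combinatorial count of ordered set-partitions with prescribed block sizes, which is standard. Alternatively, one could prove each identity by induction on $k$ from a single-step Leibniz rule, but the specialization route above is shorter and reuses machinery already in place.
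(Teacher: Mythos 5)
Your proof is correct and follows essentially the same route as the paper: the paper also derives both identities by specializing the vectorial form of Lemma~\ref{lemma:npdiff:decomposition} to $ρ_1=\dotsb=ρ_k=ρ$ and counting the partitions with prescribed block cardinalities via the multinomial coefficient. Your write-up merely makes explicit the finiteness of the regrouping and the rôle of the canonical morphism $\naturals\to\rigS$, both of which the paper leaves implicit.
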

\begin{proof}
	First recall that, if $k=\sum_{i=1}^n k_i$,
	the \emph{multinomial coefficient}
	$\nchoose k{k_1,\dotsc,k_n}\eqdef\frac{k!}{\prod_{i=1}^n k_i!}$
	is nothing but the number of partitions of $\set{1,\dotsc,k}$ into $n$ sets
	$I_1,\dotsc,I_n$ such that $\card I_j=k_j$ for $1\le j\le n$
	\cite[§26.4]{NIST:DLMF}.
	Then both results derive directly from Lemma \ref{lemma:npdiff:decomposition}.
\end{proof}

\subsection{Substitutions}
Since $\support{\lsubst ex{\ms s}}\subseteq \support{\npdiff nex{\ms s}}$,
multilinear substitution also defines a sum\-ma\-ble binary function 
and we will write
\[
	\lsubst{ε}x{\ms{σ}}\eqdef
	\sum_{e\in\resourceExpressions,\ms s\in\resourceMonomials}
	ε_e{\ms{σ}}_{\ms{s}}\sm\lsubst{e}x{\ms{s}}.
\]

By contrast with partial derivatives, the usual substitution is not linear, so
the substitution of resource vectors must be defined directly.
\begin{definition}
	\label{definition:resourceVectors:substitution}
	We define by induction over resource expressions
	the \emph{substitution} $\subst ex{σ}\in\resourceVectors$ of
	$σ\in\termVectors$ for a variable $x$ in $e\in\resourceExpressions$:
	\begin{align*}
		\subst{x}x{σ}
			&\eqdef \begin{cases}σ&\text{if $x=y$}\\y&\text{otherwise}\end{cases}\\
		\subst{\pars{\labs ys}}x{σ}
			&\eqdef\labs y{\subst sx{σ}}
			&&\text{(choosing $y\not\in\fv{σ}\union\set x$)}\\
		\subst{\mset{s_1,\dotsc,s_n}}x{σ}
			&\eqdef\mset{\subst{s_1}x{σ},\dotsc,\subst{s_n}x{σ}} \\
		\subst{\pars{\rappl{s}{\ms t}}}x{σ}
			&\eqdef\rappl{\subst sx{σ}}{\subst{\ms t}x{σ}}
	\end{align*}
\end{definition}

\begin{lemma}
	For all $e\in\resourceExpressions$, $x\in\variables$ and $σ\in\termVectors$:
	\begin{itemize}
		\item if $σ\in\resourceTerms$ then $\subst ex{σ}\in\resourceTerms$;
		\item if $σ\in\finiteTermVectors$ then $\subst ex{σ}\in\finiteResourceVectors$;
		\item if $x\not\in\fv e$ then $\subst ex{σ}=e$;
		\item if $x\in\fv e$ then $\subst ex0=0$;
		\item for all $e'\in\support{\subst ex{σ}}$,
			$\fv e\setminus\set{x}\subseteq\fv{e'}\subseteq\pars{\fv e\setminus\set{x}}\union\fv{σ}$
			and $\size {e'}\ge\size{e}$.
	\end{itemize}
\end{lemma}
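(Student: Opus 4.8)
The plan is to prove all five items by a single structural induction on the resource expression $e$, following the clauses of Definition \ref{definition:resourceVectors:substitution}. Two observations are used throughout. First, each syntactic construct has already been shown to be multilinear-continuous, so $\subst ex{σ}$ is a well-defined resource vector at each stage of the induction. Second, and crucially, the support of a construct applied to resource vectors is contained in the set of constructs applied to support elements of its arguments: thus every $e'\in\support{\subst{\pars{\rappl s{\ms t}}}x{σ}}$ has the form $\rappl{s'}{\ms t'}$ with $s'\in\support{\subst sx{σ}}$ and $\ms t'\in\support{\subst{\ms t}x{σ}}$, and every $e'\in\support{\subst{\mset{s_1,\dotsc,s_n}}x{σ}}$ has the form $\mset{s'_1,\dotsc,s'_n}$ with $s'_i\in\support{\subst{s_i}x{σ}}$. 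This decomposition is exactly what makes the induction hypotheses applicable in the contextual cases.

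For the first two items, I would observe that each clause of the definition, when fed single resource expressions (resp.\ vectors of finite support), returns a single resource expression (resp.\ a vector of finite support): an application $\rappl{s'}{\ms t'}$, an abstraction $\labs y{s'}$ and a monomial $\mset{s'_1,\dotsc,s'_n}$ built from single expressions are again single expressions, and in the finite case these constructs amount to ordinary finite sums, hence preserve finiteness of support. Both claims then follow by immediate induction.

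Items 3 and 4 are the two extreme cases, and I would treat them together using the fact that $x\in\fv e$ iff $\occnum xe\not=0$. If $x\not\in\fv e$ then $x$ is absent from every immediate subexpression, so the induction hypothesis leaves them unchanged and the surrounding construct rebuilds $e$; the base case $e=y\not=x$ returns $y$. For item 4, if $x\in\fv e$ then $x$ occurs in some immediate subexpression $e_i$, whence $\subst{e_i}x0=0$ by induction; since every construct is multilinear and therefore annihilated by a zero argument, the whole expression evaluates to $0$, the base case being $\subst xx0=0$.

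The last item---the free-variable containments and the size lower bound---is where I expect the (modest) main difficulty, precisely because it forces one to argue at the level of support elements rather than whole vectors, relying on the decomposition from the first paragraph. Fixing $e'\in\support{\subst ex{σ}}$ and decomposing it accordingly, the free-variable bounds follow by combining the induction hypotheses: in the application case from $\fv{e'}=\fv{s'}\union\fv{\ms t'}$ together with $\fv{\rappl s{\ms t}}=\fv s\union\fv{\ms t}$, and in the abstraction case from the side condition $y\not\in\fv{σ}\union\set x$, which guarantees that removing the bound $y$ interacts correctly with the bounds. For the size inequality, the crucial point is that every support element of $σ$ has size at least $1$---exactly the size of the variable occurrence it replaces---so no clause can decrease the size; the base cases $\subst xx{σ}=σ$ (where $\size{e'}=\size s\ge 1=\size x$) and $\subst yx{σ}=y$ are immediate, and the contextual cases follow from the additivity of $\size{\cdot}$. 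Finally, the upper free-variable bound shows that $\fv{e'}$ is finite, which together with the multilinear-continuity of the constructs confirms that $\subst ex{σ}$ is indeed a resource vector, closing the induction.
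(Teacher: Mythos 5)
Your proof is correct and takes essentially the same route as the paper, which disposes of this lemma with a one-line "each statement follows easily by induction on $e$." The details you supply — the observation that supports of constructs decompose into constructs of support elements, and the use of the side condition $y\not\in\fv{σ}\union\set x$ in the abstraction case — are precisely what is needed to make that induction go through.
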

\begin{proof}
Each statement follows easily by induction on $e$.
\end{proof}
A consequence of the last item is that the function
\begin{eqnarray*}
	\resourceExpressions&\to&\resourceVectors\\
	e&\mapsto& \subst ex{σ}
\end{eqnarray*}
is summable: we thus write 
\[\subst{ε}x{σ}\eqdef \sum_{e\in\resourceVectors}{ε}_e\sm \subst ex{σ}.\]

\subsection{Promotion}

\label{subsection:promotion}

Observe that the family $\pars{σ^n}_{n\in\naturals}$
is summable because the supports $\support{σ^n}$ for $n\in\naturals$ are 
pairwise disjoint.
We then define the \emph{promotion} of $σ$ as
$\prom{σ}\eqdef\sum_{n\in\naturals}\frac 1{n!}\sm σ^n$.

For this definition to make sense, we need inverses of natural numbers to be available:
we say $\rigS$ \emph{has fractions} if every $n\in\naturals\setminus\set 0$
admits a multiplicative inverse in $\rigS$.
This inverse is necessarily unique and we write it $\frac 1n$.
Observe that $\rigS$ has fractions iff there is a semiring morphism
from the semiring $\rationals^+$ of non-negative rational numbers
to $\rigS$, and then this morphism is unique, but not necessarily injective:
consider the semiring $\booleans$ of booleans.
\emph{Semifields}, \ie\ commutative semirings in which every non-zero element
admits an inverse, obviously have fractions:
$\rationals^+$ and $\booleans$ are actually semifields.
In the following, we will keep this requirement implicit:
whenever we use quotients by natural numbers,
it means we assume $\rigS$ has fractions.

\begin{lemma}
	\label{lemma:promotion:substitution}
	For all $σ$ and $τ\in\termVectors$,
	$\subst{\prom{σ}}x{τ}=\prom[]{\subst{σ}x{τ}}$.
\end{lemma}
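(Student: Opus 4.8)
The plan is to reduce the statement to two continuity facts: that the map $\subst{\cdot}x{\tau}$ is linear-continuous in its first argument, and that substitution commutes with the monomial constructor on vectors. Together these let me push the substitution through both the multiset former and the infinite sum that define the promotion $\prom{\sigma}=\sum_{n\in\naturals}\frac 1{n!}\sm\sigma^n$.

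First I would record that $\subst{\sigma}x{\tau}\in\termVectors$, so that its promotion is well defined: by the free-variable and size estimates of the preceding substitution lemma, $\fv{\subst{\sigma}x{\tau}}\subseteq(\fv{\sigma}\setminus\set x)\union\fv{\tau}$ is finite. I would also recall that, being the summable extension of $e\mapsto\subst ex{\tau}$ (whose summability was noted just after Definition \ref{definition:resourceVectors:substitution}), the map $\subst{\cdot}x{\tau}$ is linear-continuous, by the Lemma characterising linear-continuous maps in Subsection \ref{subsection:summable}.

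The key preliminary is the vector commutation $\subst{\mset{\rho_1,\dotsc,\rho_n}}x{\tau}=\mset{\subst{\rho_1}x{\tau},\dotsc,\subst{\rho_n}x{\tau}}$ for all $\rho_1,\dotsc,\rho_n\in\termVectors$. Both sides are $n$-linear-continuous functions of $(\rho_1,\dotsc,\rho_n)$: the left-hand side composes the $n$-linear-continuous monomial former with the linear-continuous substitution, the right-hand side composes substitution on each component with the monomial former, and $n$-linear-continuous maps compose. Since an $n$-linear-continuous map is determined by its restriction to $n$-tuples of resource terms, it suffices to check the identity when each $\rho_i$ is a resource term $s_i$, where it is exactly the monomial clause of Definition \ref{definition:resourceVectors:substitution}. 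Specialising to $\rho_1=\dotsb=\rho_n=\sigma$ yields $\subst{\sigma^n}x{\tau}=\pars{\subst{\sigma}x{\tau}}^n$.

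It then remains to commute substitution with the infinite sum. Since the family $\pars{\sigma^n}_{n\in\naturals}$ is summable and $\subst{\cdot}x{\tau}$ is linear-continuous, Definition \ref{definition:continuous} (in the case $n=1$) gives
\[\subst{\prom{\sigma}}x{\tau}=\subst{\pars{\sum_{n\in\naturals}\frac 1{n!}\sm\sigma^n}}x{\tau}=\sum_{n\in\naturals}\frac 1{n!}\sm\subst{\sigma^n}x{\tau}=\sum_{n\in\naturals}\frac 1{n!}\sm\pars{\subst{\sigma}x{\tau}}^n=\prom[]{\subst{\sigma}x{\tau}}.\]
There is no deep combinatorial content here; the only real care is bookkeeping, namely checking that the summability hypotheses needed to invoke linear-continuity hold at each step (the monomial family and the power family $\pars{\sigma^n}_n$ are summable, their supports having pairwise distinct cardinalities). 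The one genuine obstacle, such as it is, is the vector commutation of substitution with the monomial former: it cannot be proved by a naïve structural induction, since the arguments $\rho_i$ range over arbitrary (possibly infinite) vectors, and routing it through the uniqueness of multilinear-continuous extensions is what makes the argument clean.
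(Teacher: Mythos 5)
Your proof is correct and follows essentially the same route as the paper's: reduce to the identity $\subst{σ^n}x{τ}=\pars{\subst{σ}x{τ}}^n$ via the linear-continuity of $ε\mapsto\subst{ε}x{τ}$, and establish that identity from the $n$-linear-continuity of the monomial former together with the monomial clause of the definition of substitution. You merely spell out in more detail the steps the paper leaves implicit (summability of $\pars{σ^n}_n$, agreement of the two multilinear-continuous maps on tuples of resource terms).
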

\begin{proof}
	By the linear-continuity of 
	$ε\mapsto\subst {ε}x{σ}$, it is sufficient 
	to prove that 
	\[\subst{σ^n}x{τ}=\pars{\subst{σ}x{τ}}^n\]
	which follows from the $n$-linear-continuity of 
	$(σ_1,\dotsc,σ_n)\mapsto\mset{σ_1,\dotsc,σ_n}$
	and the definition of substitution.
\end{proof}

\begin{lemma}
	\label{lemma:vlsubst:promotion:constructors}
	The following identities hold:
	\begin{align*}
		\lsubst{x}x{\prom{ρ}} &= ρ\\ 
		\lsubst{y}x{\prom{ρ}} &= y\\ 
		\lsubst{\labs y{σ}}x{\prom{ρ}} &= \labs y[]{\lsubst{σ}x{\prom{ρ}}}&\text{(choosing $y\not\in\set x\union\fv{ρ}$)}\\
		\lsubst{\rappl{σ}{\ms{τ}}}x{\prom{ρ}} &=
		\rappl{\lsubst{σ}x{\prom{ρ}}}{\lsubst{\ms{τ}}x{\prom{ρ}}}\\
		\lsubst{\mset{σ_1,\dotsc,σ_n}}x{\prom{ρ}} &=
		\mset{\lsubst{σ_1}x{\prom{ρ}},\dotsc,\lsubst{σ_n}x{\prom{ρ}}} 
	\end{align*}
\end{lemma}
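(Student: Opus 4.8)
The plan is to reduce all five identities to statements about individual resource expressions, prove them there using the combinatorial description of iterated differentiation, and then lift them back to vectors by multilinear-continuity. The engine common to every case will be a single reformulation of multilinear substitution along a promotion. Fixing $ρ\in\termVectors$ with $x\notin\fv ρ$ (the case $x\in\fv ρ$ being reduced to this one by renaming $x$ to a fresh variable in the subject, as in the clause defining $\lsubst\cdot x\cdot$ in that situation), I would first establish, for every $e\in\resourceExpressions$ with $m=\occnum xe$, the \emph{key formula}
\[\lsubst ex{\prom ρ}=\frac1{m!}\,\npdiff m ex{ρ^m}.\]
This follows at once from the characterization of $\lsubst ex{\ms s}$ given after Lemma \ref{lemma:lsubst:size}: only the monomials $\ms s$ with $\card{\ms s}=m$ contribute to $\lsubst ex{\prom ρ}$, each with coefficient $\frac1{m!}\pars{ρ^m}_{\ms s}$ in $\prom ρ$, and summing over them by linear-continuity yields the displayed equation.

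Granting this, the variable cases are immediate: $e=x$ gives $m=1$ and $\npdiff 1 xx{ρ^1}=ρ$, while $e=y\neq x$ gives $m=0$ and $\npdiff 0 yx{ρ^0}=y$. For an abstraction I would choose $y\notin\set x\union\fv ρ$ and set $p=\occnum xs$ (so $\occnum x{\labs y s}=p$ too), reading off $\npdiff p{\labs y s}x{ρ^p}=\labs y[]{\npdiff p sx{ρ^p}}$ from the abstraction case of Lemma \ref{lemma:npdiff:decomposition}, so that the factor $\frac1{p!}$ lands inside the abstraction as required. The application and monomial cases are where the real work lies: writing $p=\occnum xs$, $q=\occnum x{\ms t}$ and $m=p+q$, I would expand $\npdiff m{\rappl s{\ms t}}x{ρ^m}$ by the first identity of Lemma \ref{lemma:npdiff:multilinear} into $\sum_{l=0}^m\nchoose m{l,m-l}\rappl{\npdiff l sx{ρ^l}}{\npdiff{m-l}{\ms t}x{ρ^{m-l}}}$, then use Lemma \ref{lemma:npdiff:size} to discard every summand with $l>p$ or $m-l>q$; only $l=p$ survives, and its coefficient $\nchoose m{p,q}=\frac{m!}{p!\,q!}$ cancels the $\frac1{m!}$, leaving precisely $\rappl{\lsubst sx{\prom ρ}}{\lsubst{\ms t}x{\prom ρ}}$. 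The monomial case would run identically through the second identity of Lemma \ref{lemma:npdiff:multilinear}, the constraint $\sum_i k_i=m=\sum_i\occnum x{s_i}$ together with Lemma \ref{lemma:npdiff:size} forcing the single surviving multi-index $(\occnum x{s_1},\dotsc,\occnum x{s_n})$.

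Finally I would lift these expression-level identities to the vector identities in the statement: both sides are multilinear-continuous in $σ,\ms τ,σ_1,\dotsc,σ_n$, so expanding each vector over its support and reindexing the summable families reduces the vector identities to the ones just proved. I expect the main obstacle to be precisely the factorial bookkeeping of the second paragraph, namely confirming that the coefficients $\frac1{n!}$ produced by $\prom ρ$ are annihilated by the multinomial coefficients of Lemma \ref{lemma:npdiff:multilinear} exactly when the off-diagonal derivatives vanish by Lemma \ref{lemma:npdiff:size}. Once the key formula is in place this is purely combinatorial, and the surrounding summability and continuity manipulations are routine.
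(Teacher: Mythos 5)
Your proposal is correct and follows essentially the same route as the paper: reduce to a single resource expression by multilinear-continuity, establish the key formula $\lsubst ex{\prom ρ}=\frac1{m!}\npdiff mex{ρ^m}$ with $m=\occnum xe$, and then let the multinomial coefficients from Lemma \ref{lemma:npdiff:multilinear} cancel the factorials in the application and monomial cases (the paper phrases the vanishing of off-diagonal summands via the characterization $\lsubst sx{\ms u}=0$ unless $\card{\ms u}=\occnum xs$, where you invoke Lemma \ref{lemma:npdiff:size}, but this is the same fact). No gaps.
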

\begin{proof}
	Since each syntactic constructor is multilinear-continuous,
	it is sufficient to consider the case of $\lsubst ex{\prom{ρ}}$
	for a resource expression $e\in\resourceExpressions$.
	First observe that, if $k=\occnum xe$ then 
	$\lsubst ex{\prom{ρ}}=
	\frac{1}{k!}\sm{\npdiff kex{ρ^k}}$.
	In particular the case of variables is straightforward.

	The case of abstractions follows directly, since
	$\npdiff k {\labs xs}x{ρ^k}=\labs x[]{\npdiff ksx{ρ^k}}$.

	If $e=\rappl s{\ms t}$, write $l=\occnum xs$ and $m=\occnum x{\ms t}$.
	It follows from Lemma \ref{lemma:npdiff:multilinear} that 
	$\lsubst ex{ρ^k}=\nchoose k{l,m}\sm
	\rappl{\lsubst sx{ρ^l}}{\lsubst{\ms t}x{ρ^m}}$
	and then 
	$\frac 1{k!}\sm\lsubst ex{ρ^k}=
	\rappl{\frac 1{l!}\sm\lsubst sx{ρ^l}}{\frac 1{m!}\sm\lsubst{\ms t}x{ρ^m}}$.
	
	Similarly, if $e=\mset{t_1,\dotsc,t_n}$,
	write $k_i=\occnum x{t_i}$ for all $i\in\set{1,\dotsc,n}$.
	It follows from Lemma \ref{lemma:npdiff:multilinear} that 
	$\lsubst ex{ρ^k}=\nchoose k{k_1,\dotsc,k_n}\sm
	\mset{\lsubst {t_1}x{ρ^{k_1}},\dotsc,\lsubst {t_n}x{ρ^{k_n}}}$
	and then 
	$\frac 1{k!}\sm\lsubst ex{ρ^k}= \mset{
		\frac 1{k_1!}\lsubst {t_1}x{ρ^{k_1}},
	  \dotsc,
	  \frac 1{k_n!}\lsubst {t_n}x{ρ^{k_n}}
	}$.
\end{proof}

\begin{lemma}
	\label{lemma:vlsubst:promotion:substitution}
	For all $ε\in\resourceVectors$ an $σ\in\termVectors$,
	\[\subst{ε}x{σ}=\lsubst{ε}x{\prom{σ}}.\]
\end{lemma}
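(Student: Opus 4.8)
The plan is to reduce the statement to the case of a single resource expression and then proceed by a straightforward structural induction, matching the defining clauses of ordinary substitution (Definition~\ref{definition:resourceVectors:substitution}) against the constructor identities already established in Lemma~\ref{lemma:vlsubst:promotion:constructors}.

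First I would observe that both sides are linear-continuous functions of $ε$: the left-hand side because $\subst{ε}x{σ}=\sum_{e\in\resourceExpressions}ε_e\sm\subst ex{σ}$ by the summability of $e\mapsto\subst ex{σ}$ noted just above, and the right-hand side because $\lsubst{ε}x{\prom{σ}}=\sum_{e\in\resourceExpressions}ε_e\sm\lsubst ex{\prom{σ}}$ by the bilinear-continuity of multilinear substitution (with the monomial argument fixed to $\prom{σ}$). Hence it suffices to establish the equality on generators, i.e.\ to show $\subst ex{σ}=\lsubst ex{\prom{σ}}$ for every resource expression $e$; the identity on arbitrary $ε$ then follows since two linear-continuous maps agreeing on the basis $\resourceExpressions$ coincide.

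I would then prove $\subst ex{σ}=\lsubst ex{\prom{σ}}$ by mutual induction on resource terms and monomials, the point being that the clauses of Definition~\ref{definition:resourceVectors:substitution} are in exact correspondence with the identities of Lemma~\ref{lemma:vlsubst:promotion:constructors}. In the variable cases, $\subst xx{σ}=σ=\lsubst xx{\prom{σ}}$ and $\subst yx{σ}=y=\lsubst yx{\prom{σ}}$ for $y\neq x$, directly from the first two identities. In the abstraction case, choosing $y\not\in\set x\union\fv{σ}$ as permitted by both the definition and the lemma, one has $\subst{\labs ys}x{σ}=\labs y{\subst sx{σ}}=\labs y[]{\lsubst sx{\prom{σ}}}=\lsubst{\labs ys}x{\prom{σ}}$, the middle step being the induction hypothesis. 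The application and monomial cases proceed identically: push the substitution through the constructor by definition, apply the induction hypothesis to each immediate subexpression, and repackage using the corresponding identity of Lemma~\ref{lemma:vlsubst:promotion:constructors} (instantiated with the relevant subterms or submonomials viewed as basis vectors).

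There is no genuine obstacle here: the combinatorial heart of the statement, namely the cancellation of the multinomial coefficients arising from iterated differentiation against the factorials hidden in $\prom{σ}=\sum_n\frac1{n!}\sm σ^n$, has already been carried out in the proof of Lemma~\ref{lemma:vlsubst:promotion:constructors}. The only points demanding care are bookkeeping ones: legitimising the reduction to generators by appealing to linear-continuity on both sides, and keeping a consistent choice of bound variable in the abstraction case so that the induction hypothesis applies verbatim.
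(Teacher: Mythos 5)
Your proof is correct and follows the same route as the paper: reduce to a single resource expression $e$ by the linear-continuity of both $ε\mapsto\subst{ε}x{σ}$ and $ε\mapsto\lsubst{ε}x{\prom{σ}}$, then argue by induction on $e$ using the constructor identities of Lemma~\ref{lemma:vlsubst:promotion:constructors} in each case. You merely spell out the inductive cases that the paper leaves implicit.
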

\begin{proof}
	By the linear-continuity of 
	$ε\mapsto\lsubst {ε}x{\prom{σ}}$ and
	$ε\mapsto\subst {ε}x{σ}$, 
	it is sufficient to show that
	\[\subst{e}x{σ}=\lsubst{e}x{\prom{σ}}\]
	for all resource expression $e$.
	The proof is then by induction on $e$, 
	using the previous  Lemma in each case.
\end{proof}

By Lemma \ref{lemma:promotion:substitution},
we thus obtain 
\[\lsubst{\prom{σ}}x{\prom{τ}}=\prom[]{\lsubst{σ}x{\prom{τ}}}\]
which can be seen as a counterpart of the functoriality 
of promotion in linear logic.
To our knowledge it is the first published proof of such a result 
for resource vectors. This will enable us to prove the 
commutation of Taylor expansion and substitution
(Lemma \ref{lemma:taylor:subst}), another unsurprising
yet non-trivial result.

\subsection{Taylor expansion of algebraic λ-terms}

\label{subsection:taylor}

Since resource vectors form a module, there is no reason to restrict the source
language of Taylor expansion to the pure λ-calculus: we can consider formal
finite linear combinations of λ-terms.

We will thus consider the terms given by the following grammar:
\[\begin{array}{rclcl}
	\rawTerms&\ni& M,N,P & \recdef & x \mid \labs xM \mid \appl MN \mid 0 \mid a\sm M\mid M+N
\end{array}\]
where $a$ ranges in $\rigS$.\footnote{We follow Krivine’s convention \cite{krivine:lc}, 
by writing $\appl MN$ for the application of term $M$ to term $N$.
We more generally write $\appl{M}{N_1\cdots N_k}$ for $\appl{\cdots\appl M{N_1}\cdots}{N_k}$.
Moreover, among term constructors, we give sums the lowest priority so that 
$\appl MN+P$ should be read as $\pars{\appl MN}+P$ rather than $\appl M{\pars{N+P}}$.
}
For now, terms are considered up to the usual α-equivalence only: 
the null term $0$, scalar multiplication $a\sm M$ and sum of terms
$M+N$ are purely syntactic constructs.

\begin{definition}
	We define the \emph{Taylor expansion} $\TaylorExp M\in\resourceVectors$ 
	of a term $M\in\rawTerms$ inductively as follows:
	\begin{align*}
		\TaylorExp{x}        &\eqdef x
		& \TaylorExp{0}        &\eqdef 0\\
		\TaylorExp{\labs xM} &\eqdef \labs x{\TaylorExp M}
		& \TaylorExp{a\sm M}   &\eqdef a\sm \TaylorExp{M}\\
		\TaylorExp{\appl MN} &\eqdef \rappl{\TaylorExp M}{\prom{\TaylorExp N}}
		& \TaylorExp{M+N}      &\eqdef \TaylorExp{M}+\TaylorExp{N}.
	\end{align*}
\end{definition}

\begin{lemma}
	\label{lemma:taylor:subst}
	For all $M,N\in\rawTerms$, and all variable $x$,
	\[\TaylorExp{\subst MxN}=\lsubst{\TaylorExp M}x{\prom{\TaylorExp N}}=\subst{\TaylorExp M}x{\TaylorExp N}.\]
\end{lemma}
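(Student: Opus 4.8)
The second equality $\lsubst{\TaylorExp M}x{\prom{\TaylorExp N}}=\subst{\TaylorExp M}x{\TaylorExp N}$ is nothing but the instance $ε=\TaylorExp M$, $σ=\TaylorExp N$ of Lemma \ref{lemma:vlsubst:promotion:substitution}, so the whole content lies in the first equality $\TaylorExp{\subst MxN}=\subst{\TaylorExp M}x{\TaylorExp N}$, which I would prove by induction on the raw term $M\in\rawTerms$.

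Before starting the induction, I would record the commutation of vector substitution with the syntactic constructors: for all $ρ\in\termVectors$, $\ms{τ}\in\monomialVectors$ and $σ\in\termVectors$,
\begin{align*}
	\subst{\labs yρ}x{σ} &= \labs y{\subst ρx{σ}} \quad (y\notin\fv σ\union\set x),\\
	\subst{\rappl ρ{\ms{τ}}}x{σ} &= \rappl{\subst ρx{σ}}{\subst{\ms{τ}}x{σ}}.
\end{align*}
Each of these holds on resource \emph{expressions} by Definition \ref{definition:resourceVectors:substitution}; since $ε\mapsto\subst εx{σ}$ is defined by linear-continuity in the expression being substituted and the constructors are multilinear-continuous, both sides are linear-continuous in $ρ$ (resp.\ $\ms{τ}$) and hence agree on all vectors.

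The base cases ($M$ a variable or $0$) follow directly from the definitions of Taylor expansion and of substitution. The cases $M=M_1+M_2$ and $M=a\sm M'$ combine the induction hypothesis with the linearity of $\TaylorExp{\cdot}$ on sums and scalar multiples and the linear-continuity of $ε\mapsto\subst εx{\TaylorExp N}$. For $M=\labs yM'$, choosing $y\notin\fv N\union\set x$ (hence $y\notin\fv{\TaylorExp N}$, as $\fv{\TaylorExp N}=\fv N$), I would use $\TaylorExp{\subst{\labs yM'}xN}=\labs y{\TaylorExp{\subst{M'}xN}}$, apply the induction hypothesis, and conclude by the abstraction commutation recorded above.

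The application case is where the real work happens, and where I expect the main obstacle to sit. Writing $M=\appl{M_1}{M_2}$, the definition of Taylor expansion gives $\TaylorExp{\subst MxN}=\rappl{\TaylorExp{\subst{M_1}xN}}{\prom{\TaylorExp{\subst{M_2}xN}}}$; the induction hypothesis rewrites both components as substitutions, and the crux is then to move the substitution out through the promotion in argument position. This is exactly Lemma \ref{lemma:promotion:substitution}, which yields $\prom{\subst{\TaylorExp{M_2}}x{\TaylorExp N}}=\subst{\prom{\TaylorExp{M_2}}}x{\TaylorExp N}$; combined with the application commutation above, this gives $\subst{\rappl{\TaylorExp{M_1}}{\prom{\TaylorExp{M_2}}}}x{\TaylorExp N}=\subst{\TaylorExp M}x{\TaylorExp N}$, closing the induction. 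The delicate point is precisely that substituting into $\TaylorExp{\appl{M_1}{M_2}}$ meets a promotion, so the argument does not go through without the substitution/promotion commutation; everything else is routine bookkeeping justified by linear-continuity.
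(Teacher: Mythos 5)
Your proof is correct and follows essentially the same route as the paper's: an induction on $M$ whose only non-trivial step is the application case, resolved by commuting substitution with promotion. The paper runs the induction on the form $\lsubst{\TaylorExp M}x{\prom{\TaylorExp N}}$ via Lemma \ref{lemma:vlsubst:promotion:constructors}, while you run it on $\subst{\TaylorExp M}x{\TaylorExp N}$ via Lemma \ref{lemma:promotion:substitution}; since Lemma \ref{lemma:vlsubst:promotion:substitution} identifies the two forms, the difference is purely presentational.
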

\begin{proof}
	By induction on $M$, 
	using Lemmas \ref{lemma:vlsubst:promotion:constructors}
	and \ref{lemma:vlsubst:promotion:substitution}.
\end{proof}

Let us insist on the fact that, despite its very simple and unsurprising
statement, the previous lemma relies on the entire technical development
of the previous subsections. Again, to our knowledge, it
is the first proof that Taylor expansion commutes with substitution, in an
untyped and non-uniform setting, without any additional assumption.

By contrast, one can forget everything about the semiring of coefficients
and consider only the support of Taylor expansion.
Recall that $\booleans$ denotes the semiring of booleans.
Then we can consider that
$\resourceVectors[\booleans]=\powerset{\resourceExpressions}$ 
and write, \eg, $\labs x{\calS}=\set{\labs xs\st s\in\calS}$ for all set $\calS$ of
resource terms.

\begin{definition}
The \emph{Taylor support} $\TaylorSup M\subseteq\resourceTerms$
	of $M\in\rawTerms$ is defined inductively as follows:\footnote{
		One might be tempted to make an exception in
		case $a=0$ and set $\TaylorSup{0\sm M}=\emptyset$
		but this would only complicate the definition and further developments
		for little benefit: what about $\TaylorSup{a\sm M+b\sm M}$
		(resp.\ $\TaylorSup{a\sm b\sm M}$) in a semiring where
		$a\not=0$, $b\not=0$ and $a+b=0$ (resp.\ $ab=0$)?
		If we try and cope with those too, 
		we are led to make $\TaylorSupSym$ invariant under the equations 
		of $\rigS$-module, which is precisely what we want to avoid here:
		see the case of $\TaylorExpSym$ in the remaining of the present section.
	}
	\begin{align*}
		\TaylorSup{x}        &\eqdef \set{x}&
		\TaylorSup{0}        &\eqdef \emptyset\\
		\TaylorSup{\labs xM} &\eqdef \labs x{\TaylorSup M}&
		\TaylorSup{a\sm M}   &\eqdef \TaylorSup M\\
		\TaylorSup{\appl MN} &\eqdef \rappl{\TaylorSup M}{\prom{\TaylorSup N}}&
		\TaylorSup{M+N}      &\eqdef \TaylorSup M\union\TaylorSup N.
	\end{align*}
\end{definition}

It should be clear that $\support{\TaylorExp{M}}\subseteq\TaylorSup M$,
but the inclusion might be strict,
if only because $\TaylorSup{0\sm M}=\TaylorSup M$.
By contrast with the technicality of the previous subsection,
the following \emph{qualitative} analogue of Lemma \ref{lemma:taylor:subst}
is easily established:
\begin{lemma}
	\label{lemma:taylorsup:subst}
	For all $M,N\in\rawTerms$, and all variable $x$,
	\[\TaylorSup{\subst MxN}=\lsubst{\TaylorSup M}x{\prom{\TaylorSup N}}=\subst{\TaylorSup M}x{\TaylorSup N}.\]
\end{lemma}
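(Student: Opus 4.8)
The plan is to reduce everything to the quantitative argument of Lemma~\ref{lemma:taylor:subst}, instantiated at the semiring $\booleans$ of booleans. The first observation I would record is that $\booleans$ has fractions (trivially, $\frac 1n = 1$ for every $n\geq 1$), so every result of the preceding subsections—stated generically for a semiring with fractions—applies verbatim to boolean resource vectors, that is, to subsets of $\resourceExpressions$, once $+$ and $\sm$ are read as set union and the trivial scalar action. In particular Lemmas~\ref{lemma:vlsubst:promotion:constructors} and~\ref{lemma:vlsubst:promotion:substitution}, and the functoriality identity $\lsubst{\prom σ}x{\prom τ}=\prom[]{\lsubst σx{\prom τ}}$ derived from them, all hold for Taylor supports. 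I would flag this semiring-genericity explicitly, as it is exactly what makes the lemma \emph{easily} established: over $\booleans$ there are no coefficients to track, hence none of the summability or cancellation subtleties behind Lemma~\ref{lemma:taylor:subst} reappear.

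Granting this, the second equality $\lsubst{\TaylorSup M}x{\prom{\TaylorSup N}}=\subst{\TaylorSup M}x{\TaylorSup N}$ is an immediate instance of Lemma~\ref{lemma:vlsubst:promotion:substitution} with $ε=\TaylorSup M$ and $σ=\TaylorSup N$, and needs no further argument. For the first equality $\TaylorSup{\subst MxN}=\lsubst{\TaylorSup M}x{\prom{\TaylorSup N}}$ I would proceed by induction on $M$, reusing the case analysis of Lemma~\ref{lemma:taylor:subst}. The variable and null cases unfold directly from the base clauses of $\TaylorSupSym$ and the corresponding identities of Lemma~\ref{lemma:vlsubst:promotion:constructors}; the abstraction case applies the induction hypothesis under the binder and then the abstraction clause of the same lemma. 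The two clauses specific to $\TaylorSupSym$ are harmless: for $M=a\sm P$ both sides discard the scalar since $\TaylorSup{a\sm P}=\TaylorSup P$, so the goal collapses to the induction hypothesis for $P$; for $M=P+Q$ the additive clause turns both sides into unions, and the equality follows from the induction hypotheses together with the linearity—hence, over $\booleans$, the commutation with union—of $ε\mapsto\lsubst εx{\prom{\TaylorSup N}}$.

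The only case carrying real content, and the step I expect to be the main obstacle, is the application case $M=\appl PQ$, which is precisely where functoriality of promotion intervenes, exactly as in the quantitative proof. There $\TaylorSup{\subst{\pars{\appl PQ}}xN}=\rappl{\TaylorSup{\subst PxN}}{\prom{\TaylorSup{\subst QxN}}}$; the induction hypotheses rewrite the function part to $\lsubst{\TaylorSup P}x{\prom{\TaylorSup N}}$ and the promoted argument to $\prom[]{\lsubst{\TaylorSup Q}x{\prom{\TaylorSup N}}}$, whereupon functoriality converts the latter into $\lsubst{\prom{\TaylorSup Q}}x{\prom{\TaylorSup N}}$ and the application clause of Lemma~\ref{lemma:vlsubst:promotion:constructors} collapses the whole expression to $\lsubst{\rappl{\TaylorSup P}{\prom{\TaylorSup Q}}}x{\prom{\TaylorSup N}}=\lsubst{\TaylorSup{\appl PQ}}x{\prom{\TaylorSup N}}$. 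Every other case is purely formal, so once the genericity over $\booleans$ is in place the whole argument is a routine specialization of Lemma~\ref{lemma:taylor:subst}.
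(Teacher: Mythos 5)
Your proof is correct and follows essentially the same route as the paper, which likewise reduces the statement to the qualitative (boolean) version of Lemma~\ref{lemma:vlsubst:promotion:constructors} and then performs an induction on $M$; the only cosmetic difference is that you obtain the qualitative identities by instantiating the generic lemmas at $\booleans$ (which indeed has fractions), where the paper simply observes they are straightforward to check directly. Your explicit treatment of the $a\sm M$ clause is the right precaution, since $\TaylorSup{a\sm M}=\TaylorSup M$ means the Taylor support is not literally the Taylor expansion over $\booleans$, but as you note both sides of the identity discard the scalar, so the induction goes through.
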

\begin{proof}
	The qualitative version of Lemma
	\ref{lemma:vlsubst:promotion:constructors}
	is straightforward.
	The result follows by induction on $M$.
\end{proof}

The restriction of $\TaylorSupSym$ to the set $\lambdaTerms$ of pure λ-terms
was used by Ehrhard and Regnier \cite{er:resource}
in their study of Taylor expansion. They showed 
that if $M\in\lambdaTerms$ then $\TaylorSup M$ is uniform:
all the resource terms in $\TaylorSup M$ have 
the same outermost syntactic construct and this property is preserved 
inductively on subterms. They moreover proved that $\TaylorExp M$, and in fact $M$ itself, is 
entirely characterized by $\TaylorSup M$: in this case,
$\TaylorExp M=\sum_{s\in\TaylorSup M} \frac 1{m(s)} s$
where $m(s)$ is an integer coefficient depending only on $s$.
Of course this property fails in the non uniform setting of $\rawTerms$.

Now, let us consider the equivalence induced on terms by Taylor expansion:
write $M\teq N$ if $\TaylorExp{M}=\TaylorExp N$.
\begin{lemma}
The following equations hold:
\begin{align*}
	0+M&\teq M&
	M+N&\teq N+M&
	(M+N)+P&\teq M+(N+P)\\
	0\sm M&\teq 0&
	1\sm M&\teq M&
	a\sm M+b\sm M&\teq (a+b)\sm M\\
	a\sm 0 &\teq 0&
	a\sm (b\sm M)&\teq (ab)\sm M&
	a\sm (M+N) &\teq a\sm M+a\sm N\\
	\labs x0&\teq 0&
	\labs x[]{a\sm M}&\teq a\sm\labs xM&
	\labs x[]{M+N}&\teq \labs xM+\labs xN\\
	\appl 0P&\teq 0&
	\appl {a\sm M}P&\teq a\sm\appl MP&
	\appl {M+N}P&\teq \appl MP+\appl NP
\end{align*}
Moreover, $\teq$ is compatible with syntactic constructs:
if $M\teq M'$ then $\labs x M\teq \labs xM'$, $\appl MN\teq \appl{M'}N$, $\appl
NM\teq \appl N{M'}$, $a\sm M\teq a\sm M'$, $M+N\teq M'+N$ and $N+M\teq N+M'$.
\end{lemma}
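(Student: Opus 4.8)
The plan is to unfold the definition of $\TaylorExpSym$ on both sides of each equation and to reduce the claim to an identity between resource vectors; these identities then follow either from the semimodule structure of $\resourceVectors$ or from the (multi)linear-continuity of the syntactic constructors established in Subsection \ref{subsection:resourceVectors}.

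For the equations in the first three rows, I would unfold $\TaylorExpSym$ through the clauses for $0$, $a\sm M$ and $M+N$, turning each side into an expression built from $\TaylorExp M$, $\TaylorExp N$, $\TaylorExp P$ using only $+$ and $\sm$. Every such equation is then an instance of a semimodule axiom of $\resourceVectors$ (including the commutative-monoid laws for $+$): for example $a\sm M+b\sm M\teq(a+b)\sm M$ reduces to $a\sm\TaylorExp M+b\sm\TaylorExp M=(a+b)\sm\TaylorExp M$, and $a\sm(b\sm M)\teq(ab)\sm M$ to $a\sm(b\sm\TaylorExp M)=(ab)\sm\TaylorExp M$, while $0\sm M\teq0$, $a\sm0\teq0$ and $0+M\teq M$ reduce to $0\sm\TaylorExp M=0$, $a\sm0=0$ and $0+\TaylorExp M=\TaylorExp M$ respectively.

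For the fourth and fifth rows, I would use that abstraction $σ\mapsto\labs x{σ}$ is linear-continuous and that application $(σ,\ms{τ})\mapsto\rappl{σ}{\ms{τ}}$ is bilinear-continuous. After unfolding $\TaylorExp{\labs xM}=\labs x{\TaylorExp M}$ and $\TaylorExp{\appl MN}=\rappl{\TaylorExp M}{\prom{\TaylorExp N}}$, each equation becomes the statement that the relevant constructor commutes with $+$ and $\sm$ in the varied argument and, being linear, sends the zero vector to the zero vector; thus $\labs x[]{M+N}\teq\labs xM+\labs xN$ comes from $\labs x[]{\TaylorExp M+\TaylorExp N}=\labs x{\TaylorExp M}+\labs x{\TaylorExp N}$, $\appl{a\sm M}P\teq a\sm\appl MP$ from the linearity of application in its first argument, and $\labs x0\teq0$, $\appl0P\teq0$ are the instances at the zero vector.

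The compatibility properties I expect to be immediate: if $\TaylorExp M=\TaylorExp{M'}$, then applying any fixed well-defined operation to both sides preserves the equality; for instance $\appl NM\teq\appl N{M'}$ follows by applying $σ\mapsto\rappl{\TaylorExp N}{\prom{σ}}$ to $\TaylorExp M=\TaylorExp{M'}$, using that promotion is a well-defined map on term vectors, and the remaining cases are obtained by applying the corresponding one-hole operation. I do not anticipate any genuine obstacle here: the whole content is the remark that each equation is either a semimodule axiom of $\resourceVectors$ or an instance of the linearity of a constructor, both secured in the preceding subsections; the only mild point of care is to note explicitly that a linear map preserves $0$, which accounts for the vanishing equations.
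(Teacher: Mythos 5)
Your proposal is correct and follows essentially the same route as the paper: the first three rows are instances of the semimodule axioms of $\resourceVectors$, the last two rows follow from the (multi)linear-continuity of abstraction and of application in its function position (the argument position, which passes through promotion, is never the one being varied), and compatibility follows from the inductive definition of $\TaylorExpSym$. Your explicit remark that linear maps send $0$ to $0$, accounting for the vanishing equations, is a fair elaboration of what the paper leaves implicit.
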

\begin{proof}
	Up to Taylor expansion, these equations reflect the 
	fact that $\resourceVectors$ forms a semimodule (first three lines), and 
	that all the constructions used in the definition
	of $\TaylorExpSym$ are multilinear-continuous, except for promotion 
	(last two lines). Compatibility
	follows from the inductive definition of $\TaylorExpSym$.
\end{proof}

Let us write $\algEq$ for the least compatible equivalence relation containing
the equations of the previous lemma, and call \emph{vector λ-terms} the elements 
of the quotient $\vectorTerms$: these are the terms of the previously studied
algebraic λ-calculus \cite{vaux:alglam,alberti:phd}.\footnote{
	In those previous works, the elements of $\vectorTerms$ were called
	algebraic λ-terms, but here we reserve this name for another, simpler,
	notion.}

It is clear that $\vectorTerms$ forms a $\rigS$-semimodule.
In fact, one can show \cite{vaux:alglam} that $\vectorTerms$
is freely generated by the $\algEq$-equivalence classes of base terms, \ie
those described by the following grammar:
\[\begin{array}{rclcl}
	\baseTerms   &\ni& B & \recdef & x \mid \labs xB \mid \appl BM.
\end{array}\]
Hence we could write $\vectorTerms=\finiteVectors{\quotient{\baseTerms}{\algEq}}$.

Notice however that Taylor expansion is not injective on vector λ-terms in general.
\begin{example}
	\label{example:TaylorExp:not:injective}
	We can consider that
	$\vectorTerms[\booleans]=\finiteSubsets{\quotient{\baseTerms[\booleans]}{\algEq}}$
	and $\TaylorExp M\subseteq\resourceTerms$ for all $M\in\rawTerms[\booleans]$.
	It is then easy to check that,
	\eg, $\TaylorExp{\appl x\emptyset}\subseteq\TaylorExp{\appl xx}$,
	hence $\appl x\emptyset +_\booleans \appl xx\teq \appl xx$.\footnote{
		This discrepancy is also present in the non-deterministic Böhm trees of
		de'Liguoro and Piperno \cite{deliguoro-piperno:ndlc}: in that qualitative
		setting, they can solve it by introducing a preorder on trees based on set
		inclusion. They moreover show that this preorder coincides with that
		induced by a well chosen domain theoretic model, as well as with the
		observational preorder associated with must-solvability. This preorder
		should be related with that induced by 
		the inclusion of normal forms of Taylor expansions (which are always defined 
		since we then work with support sets rather than general vectors).
	}
\end{example}
This contrasts with the case of pure λ-terms, for which $\TaylorExpSym$ is always injective:
in this case, it is in fact sufficient to look at the linear resource terms
in supports of Taylor expansions.
\begin{fact}
	\label{fact:TaylorExp:lambda}
	For all $M,N\in\lambdaTerms$, 
	$\lapprox M\in\support{\TaylorExp N}$ iff $M=N$,
	where $\lapproxSym$ is defined inductively 
	as follows:
	\begin{align*}
		\lapprox x&\eqdef x
		&\lapprox{\labs xM}&\eqdef\labs x{\lapprox M}
		&\lapprox{\appl MN}&\eqdef\rappl{\lapprox M}{\mset{\lapprox N}}.
	\end{align*}
\end{fact}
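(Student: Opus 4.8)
The plan is to prove the equivalence $\lapprox M\in\support{\TaylorExp N}\iff M=N$ by induction on the pure $\lambda$-term $M$, with the right-to-left direction being the trivial base of the argument and the left-to-right direction carrying all the content. For the reverse implication, I would first check directly that $\lapprox M\in\support{\TaylorExp M}$ for every $M\in\lambdaTerms$: by induction on $M$, the variable case is immediate, the abstraction case follows since $\lapprox{\labs xM}=\labs x{\lapprox M}$ and $\support{\labs x{\TaylorExp M}}=\labs x{\support{\TaylorExp M}}$, and the application case reduces to verifying that $\rappl{\lapprox M}{\mset{\lapprox N}}\in\support{\rappl{\TaylorExp M}{\prom{\TaylorExp N}}}$. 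For the latter, note that $\mset{\lapprox N}=\pars{\TaylorExp N}^1$ appears in $\prom{\TaylorExp N}$ with coefficient $\frac 1{1!}=1$, so the linear monomial $\mset{\lapprox N}$ is in the support of $\prom{\TaylorExp N}$, and the pairing of supports in the application constructor yields the claim.

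The substantive direction is to show that if $\lapprox M\in\support{\TaylorExp N}$ then $M=N$, again by induction on $M$. The key observation driving this is that $\lapprox M$ is always a \emph{linear} resource term: by an easy induction, every resource term in its syntactic tree carries monomials of cardinality at most $1$, and in fact the application case produces exactly singleton monomials $\mset{\lapprox N}$. I would therefore set up a case analysis on the outermost constructor of $N$, using the uniformity of $\TaylorSup N$ for pure terms (recalled just before the statement): all resource terms in $\support{\TaylorExp N}\subseteq\TaylorSup N$ share the outermost constructor of $N$, so the shape of $\lapprox M$ forces the shape of $N$. Concretely, if $\lapprox M=x$ then $N$ must be a variable and $\support{\TaylorExp N}=\set N$, giving $M=x=N$; if $\lapprox M=\labs x{\lapprox{M'}}$ then $N=\labs xN'$ with $\lapprox{M'}\in\support{\TaylorExp{N'}}$, and the induction hypothesis gives $M'=N'$; if $\lapprox M=\rappl{\lapprox{M'}}{\mset{\lapprox{M''}}}$ then $N=\appl{N'}{N''}$ and we must extract information from membership in $\support{\rappl{\TaylorExp{N'}}{\prom{\TaylorExp{N''}}}}$.

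The main obstacle is the application case, where one must show that $\rappl{\lapprox{M'}}{\mset{\lapprox{M''}}}$ being in the support of the expansion forces both $\lapprox{M'}\in\support{\TaylorExp{N'}}$ and $\lapprox{M''}\in\support{\TaylorExp{N''}}$. The difficulty is the interaction with promotion: a priori a resource term $\rappl s{\ms t}\in\support{\rappl{\TaylorExp{N'}}{\prom{\TaylorExp{N''}}}}$ only guarantees $s\in\support{\TaylorExp{N'}}$ and $\ms t\in\support{\prom{\TaylorExp{N''}}}$, and the latter means $\ms t$ is a monomial of some degree $n$ drawn from $\support{\TaylorExp{N''}}$, not a singleton. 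The crucial point is that our monomial is the \emph{linear} one $\mset{\lapprox{M''}}$, of cardinality exactly $1$; since $\support{\prom{\TaylorExp{N''}}}=\Union_{n}\support{\pars{\TaylorExp{N''}}^n}$ and only the $n=1$ summand contains degree-one monomials, we deduce $\mset{\lapprox{M''}}\in\support{\pars{\TaylorExp{N''}}^1}=\mset{\support{\TaylorExp{N''}}}$, whence $\lapprox{M''}\in\support{\TaylorExp{N''}}$. Combined with $\lapprox{M'}\in\support{\TaylorExp{N'}}$, two applications of the induction hypothesis give $M'=N'$ and $M''=N''$, and therefore $M=\appl{M'}{M''}=\appl{N'}{N''}=N$. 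I expect the verification that degree-one monomials isolate the linear layer of promotion to be the only place requiring genuine care; everything else is bookkeeping on supports of multilinear-continuous constructors.
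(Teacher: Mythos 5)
The paper states this as a \emph{Fact} and offers no proof of its own, so there is no official argument to compare against; your proof is the natural one and it is essentially correct. The substantive direction (membership implies $M=N$) is sound as written: the outermost constructor of $\lapprox M$ forces that of $N$, the coefficient of $\rappl s{\ms t}$ in $\rappl{\TaylorExp{N'}}{\prom{\TaylorExp{N''}}}$ is the \emph{product} of the two corresponding coefficients (the pairing $(s,\ms t)\mapsto\rappl s{\ms t}$ being injective), so a nonzero product forces both factors nonzero, and your cardinality argument correctly isolates the degree-one layer of the promotion, since the supports of the powers $\pars{\TaylorExp{N''}}^n$ are separated by monomial cardinality. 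The only place that deserves more care is the converse direction, which you argue purely at the level of supports: a semiring with fractions may still have zero divisors (take $\rationals^+\times\rationals^+$ with componentwise operations), and then $s\in\support{\TaylorExp{N'}}$ together with $\ms t\in\support[]{\prom{\TaylorExp{N''}}}$ does not by itself guarantee that $\rappl s{\ms t}$ survives in the support of the application, because the product of two nonzero coefficients may vanish. The repair is immediate and worth making explicit: a trivial induction shows that the coefficient of $\lapprox M$ in $\TaylorExp M$ is exactly $1$ (each application contributes a factor $\frac 1{1!}=1$ from the promotion and a factor $1$ from each inductive call), so the relevant products are $1\cdot 1=1\not=0$; equivalently, one may invoke the Ehrhard--Regnier formula $\TaylorExp M=\sum_{s\in\TaylorSup M}\frac 1{m(s)}\sm s$ recalled just before the Fact, observing that $m(\lapprox M)=1$. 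With that one-line observation supplied (and the implicit, standard assumption that $1\not=0$ in $\rigS$), your proof is complete.
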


To our knowledge, finding sufficient conditions on $\rigS$ ensuring that
$\TaylorExpSym$ becomes injective on $\vectorTerms$ is still an open question.

Observe moreover that the $\rigS$-semimodule structure of $\vectorTerms$ gets in the way when 
we want to study β-reduction and normalization: it is well known
\cite{vaux:alglam2,ad:lineal,vaux:alglam} that β-reduction in a semimodule of terms is
inconsistent in presence of negative coefficients.

\begin{example}
	\label{example:inconsistency}
	Consider $\delta_M\eqdef\labs x[]{M+\appl xx}$ and $\infty_M\eqdef\appl{\delta_M}{\delta_M}$.
	Observe that $\infty_M$ β-reduces to $M+\infty_M$. Suppose $\rigS$ is a ring.
	Then any congruence $\simeq$ on $\rawTerms[\rigS]$ containing β-reduction
	and the equations of $\rigS$-module is inconsistent:
	$0\simeq\infty_M+(-1)\sm\infty_M\simeq (M+\infty_M)+(-1)\sm\infty_M\simeq M$.
\end{example}

The problem is of course the identity $0\simeq\infty_M+(-1)\sm\infty_M$. Another difficulty is that,
if $\rigS$ has fractions then, up to $\rigS$-semimodule equations, one can
split a single β-reduction step into infinitely many fractional steps: if
$M\betaRed M'$ then 
\[
	\textstyle
	M
	\simeq\frac 12\sm M+\frac 12\sm M
	\betaRed\frac 12\sm M+\frac 12\sm M'
	\simeq\pars{\frac 14\sm M+\frac 14\sm M}+\frac 12\sm M'
	\betaRed\pars{\frac 14\sm M+\frac 14\sm M'}+\frac 12\sm M'
	\simeq\cdots
\]

It is not our purpose here to explore the various possible fixes to the
rewriting theory of β-reduction on vector λ-terms.
We rather refer the reader to the literature on algebraic λ-calculi
\cite{vaux:alglam,ad:lineal,alberti:phd,diazcaro:phd} for various proposals.
Our focus being on Taylor expansion, we propose to consider vector λ-terms as
intermediate objects: the reduction relation induced on resource vectors by β-reduction 
through Taylor expansion contains β-reduction on vector terms --- which is mainly 
useful to understand what may go wrong.

We still need to introduce some form of quotient in the syntax, though, if only
to allow formal sums to retain a computational meaning: otherwise, for
instance, no β-redex can be fired in $\appl{\labs x M+\labs xN}P$; and more 
generally there are β-normal terms whose Taylor expansion is not normal,
and conversely (consider, \eg, $\appl{\labs x 0}P$).

Write $\algebraicTerms$ for the quotient of $\rawTerms$ by the least compatible
equivalence $\canEq$ containing the following six equations:
\begin{align*}
	\labs x0         &\canEq 0&
	\labs x[]{a\sm M}&\canEq a\sm\labs xM&
	\labs x[]{M+N}   &\canEq \labs xM+\labs xN
	\\
	\appl 0P       &\canEq 0&
	\appl {a\sm M}P&\canEq a\sm\appl MP&
	\appl {M+N}P   &\canEq \appl MP+\appl NP
\end{align*}
We call \emph{algebraic λ-terms} the elements of $\algebraicTerms$.
We will abuse notation and denote an algebraic λ-term by 
any of its representatives.

Observe that $\TaylorSup M$ is preserved under $\canEq$ so 
it is well defined on algebraic terms, although not on vector terms.
\begin{fact}
	An algebraic λ-term $M$ is β-normal (\ie each of its representatives is
	β-normal) iff $\TaylorSup M$ contains only normal resource terms.
\end{fact}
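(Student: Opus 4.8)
The plan is to reduce to base terms and then induct on a $\canEq$-invariant measure. Orienting the six equations defining $\canEq$ so as to push $0$, scalar multiples and sums out of abstractions and out of the function side of applications, every algebraic λ-term $M$ admits a \emph{distributed} representative $\sum_{i\in I}a_i\sm B_i$ with $B_i\in\baseTerms$. Since $\TaylorSupSym$ is $\canEq$-invariant, with $\TaylorSup{a\sm B}=\TaylorSup B$, $\TaylorSup{M+N}=\TaylorSup M\union\TaylorSup N$ and $\TaylorSup 0=\emptyset$, we get $\TaylorSup M=\Union_{i\in I}\TaylorSup{B_i}$, so $\TaylorSup M$ contains only normal resource terms iff each $\TaylorSup{B_i}$ does. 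It thus suffices to establish the equivalence for base terms, inducting on the height of terms --- which none of the six equations increases, so that the argument $N$ of an application $\appl{B'}N$ counts as a strictly smaller algebraic λ-term to which the induction hypothesis applies.

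The inductive core rests on the observation that, for a base term $B$, the set $\TaylorSup B$ contains an abstraction $\labs x s$ if and only if $B$ is itself an abstraction: a variable contributes only $x$, an abstraction $\labs x{B'}$ only abstractions (with $\TaylorSup{B'}\neq\emptyset$), and an application $\appl{B'}N$ only terms of shape $\rappl s{\ms t}$. Using that every base term has nonempty Taylor support and that $\prom{\TaylorSup N}$ always contains the empty monomial $\mset{}$, it follows that $\TaylorSup{\appl{B'}N}=\rappl{\TaylorSup{B'}}{\prom{\TaylorSup N}}$ consists of normal resource terms exactly when $\TaylorSup{B'}$ and $\TaylorSup N$ do \emph{and} $B'$ is not an abstraction. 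This matches β-normality case by case: variables and abstractions are immediate, while $\appl{B'}N$ is β-normal precisely when $B'$ is not an abstraction and both $B'$ and $N$ are β-normal, which the induction hypotheses on $B'$ and on $N$ deliver.

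The delicate step, and the reason the statement quantifies over \emph{every} representative, is to see that an algebraic λ-term is β-normal iff its distributed representative is, equivalently iff each $B_i$ is a β-normal base term. Here I would argue that reading the six equations backwards --- from $\sum_i a_i\sm B_i$ to any other representative --- only ever places a sum, a scalar multiple or $0$ in the function side of an application, never an abstraction, provided one starts from base terms whose application subterms already have non-abstraction function parts; hence un-distributing creates no β-redex, and β-normality of the distributed form propagates to all representatives, while the converse is trivial since the distributed form is itself a representative. This is exactly where the observation above does its work: the syntactic presence of an abstraction in function position --- the latent β-redex revealed once sums are distributed --- is what produces a resource redex $\rappl{\labs x s}{\ms t}$ in the support, so the two notions of normality coincide.
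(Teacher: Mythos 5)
The paper records this statement as a \emph{Fact}, with no proof given, so there is no official argument to compare yours against step by step. Your reduction to canonical representatives and your inductive core are sound and are surely what the authors had in mind: for a base term $B$, the support $\TaylorSup B$ is nonempty and contains an abstraction exactly when $B$ is one; every promotion contains $\mset{}$; hence $\rappl{\TaylorSup{B'}}{\prom{\TaylorSup N}}$ contains a resource redex precisely when $B'$ is an abstraction, and combined with the induction hypotheses on $B'$ and on $N$ this yields the equivalence for the canonical representative of $M$.

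The gap is in your third paragraph. It is not true that un-distributing never creates a β-redex: the equation $\appl 0P\canEq 0$, read backwards, introduces an \emph{arbitrary} term $P$ in argument position, and $\labs x0\canEq 0$ then lets you turn the $0$ sitting in function position into an abstraction. Concretely, $0\canEq\appl 0{\pars{\appl{\labs xx}y}}$ and $0\canEq\appl{\labs x0}P$, so the algebraic term $0$ --- whose Taylor support is empty, hence vacuously made of normal resource terms --- has representatives containing β-redexes, and even a representative that \emph{is} a β-redex. (The paper itself exhibits $\appl{\labs x0}P$ just before introducing $\canEq$, as a raw term that is not β-normal yet has a normal Taylor expansion.) So the parenthetical ``each of its representatives is β-normal'' cannot be read literally with the naive notion of β-redex on raw terms, and your propagation claim fails as stated. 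Your argument does prove the statement once ``β-normal'' is taken to mean that the canonical representative is β-normal --- which is the convention the paper adopts immediately afterwards, identifying algebraic terms with their canonical representatives --- or, equivalently, that no representative admits a β-step leaving its $\canEq$-class. Either repair is easy, but the third paragraph should be replaced by one of them rather than by the assertion that un-distributing is harmless.
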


We do not claim that $\canEq$ is minimal with the above property
(for this, the bottom three equations are sufficient) but it 
is quite natural for anyone familiar with the decomposition of λ-calculus 
in linear logic, as it reflects the linearity of λ-abstraction 
and the function position in an application. Moreover it retains the two-level
structure of vector λ-terms, seen as sums of base terms.

It is indeed a routine exercise to show that orienting the defining equations of $\canEq$
from left to right defines a confluent and terminating rewriting system. We
call canonical terms the normal forms of this system, which we can describe as
follows. The sets $\canonicalTerms$ of canonical terms and $\simpleRawTerms$ of simple
canonical terms are mutually generated by the following grammars:
\[\begin{array}{rclcl}
	\simpleRawTerms &\ni& S,T   & \recdef & x \mid \labs xS \mid \appl SM\\
	\canonicalTerms &\ni& M,N,P & \recdef & S \mid 0 \mid a\sm M \mid M+N
\end{array}\]
so that each algebraic term $M$ admits a unique canonical $\canEq$-representative.

In the remaining of this paper we will systematically identify algebraic
terms with their canonical representatives and keep $\canEq$ implicit.
Moreover, we write $\simpleTerms$ for the set of simple algebraic λ-terms, \ie
those that admit a simple canonical representative.

\begin{fact}
	\label{fact:simple:head}
	Every simple term $S\in\simpleTerms$ is of one of the following two forms:
	\begin{itemize}
		\item $S=\labs {x_1}{\cdots\labs {x_n}{\appl{x}{M_1\cdots M_k}}}$: $S$ is a head normal form;
		\item $S=\labs {x_1}{\cdots\labs {x_n}{\appl{\labs x T}{M_0\cdots M_k}}}$: 
			$\appl{\labs x T}{M_0}$ is the head redex of $S$.
	\end{itemize}
\end{fact}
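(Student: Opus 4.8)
The plan is to proceed by structural induction on the simple canonical representative of $S$, following the grammar $S\recdef x \mid \labs xS \mid \appl SM$ generating $\simpleRawTerms$ (with $M$ ranging over $\canonicalTerms$). Rather than prove the disjunction directly, I would establish the slightly sharpened invariant that every simple term decomposes as $\labs{x_1}{\cdots\labs{x_n}{\appl h{M_1\cdots M_k}}}$ with $n,k\ge 0$, where the head $h$ is either a variable or an abstraction $\labs xT$ with $T\in\simpleRawTerms$, and the leading abstraction block is \emph{maximal}: if $h$ is itself an abstraction then $k\ge 1$. The two alternatives of the statement then correspond exactly to $h$ being a variable (so $S$ is a head normal form) and $h$ being an abstraction (so the head redex is the application of $\labs xT$ to its first argument, the nonemptiness of the argument list being recorded by the indexing $M_0\cdots M_k$ in the statement's second form).

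The base and abstraction cases are immediate. For $S=x$ the decomposition has $n=k=0$ and head $x$. For $S=\labs y{S'}$ I would apply the induction hypothesis to $S'$ and prepend $y$ to its leading abstraction block; since neither the head nor the argument list changes, the maximality invariant is inherited from $S'$.

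The only case needing care is the application $S=\appl{S'}N$, and here I would split on whether the decomposition of $S'$ given by the induction hypothesis has an empty abstraction block. If it does, say $S'=\appl h{M_1\cdots M_k}$, then $S=\appl h{M_1\cdots M_kN}$ simply extends the argument spine by $N$, leaving the head (and, when $h$ is an abstraction, the head redex) unchanged; the invariant is preserved because the argument list only grows. If instead $S'=\labs{x_1}{S''}$ has at least one leading abstraction, then setting $T\eqdef S''$ --- which lies in $\simpleRawTerms$, hence in $\canonicalTerms$ --- we obtain $S=\appl{\labs{x_1}{T}}{N}$, a decomposition with empty abstraction block, head $\labs{x_1}{T}$, single argument $N$, and head redex $\appl{\labs{x_1}{T}}{N}$.

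This last sub-case is the one genuinely delicate point: applying a term whose head decomposition begins with abstractions creates a fresh outermost redex, which becomes the head redex of $S$ and collapses the entire leading abstraction block of $S'$ to length zero in the decomposition of $S$. This is precisely where the maximality invariant is both used (to detect that $S'$ does begin with an abstraction) and re-established, and the one bookkeeping check worth making explicit is that the residual body $T$ still belongs to $\simpleRawTerms$, so that the decomposition matches the second form of the statement. Everything else follows directly from the shapes permitted by the grammars of $\simpleRawTerms$ and $\canonicalTerms$.
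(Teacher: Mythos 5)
Your induction is correct and complete: the strengthened invariant (maximal leading abstraction block, head either a variable or an abstraction applied to at least one argument) handles the only delicate point, namely that applying a simple term beginning with an abstraction creates a fresh outermost head redex and collapses the abstraction block. The paper states this as a \emph{fact} without proof, treating it as immediate from the grammars of $\simpleRawTerms$ and $\canonicalTerms$; your argument is exactly the structural induction that justifies that claim, so there is nothing to reconcile.
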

So each algebraic λ-term can be considered as a formal linear combination 
of head normal forms and head reducible simple terms, which will structure
the notions of weak solvability and hereditarily determinable terms in section
\ref{section:determinable}.

\section{On the reduction of resource vectors}

\label{section:reduction:vectors}

Observe that 
\[
	\TaylorExp{\appl {\labs xM}N}
	=\rappl{\labs x{\TaylorExp{M}}}{\prom{\TaylorExp N}}
	=\sum_{\substack{s\in\resourceTerms\\\ms t\in\resourceMonomials}}
	\TaylorExp{M}_s\prom{\TaylorExp N}_{\ms t}\sm \rappl{\labs xs}{\ms t}
\]
and
\[
	\TaylorExp{\subst MxN}
	=\lsubst{\TaylorExp{M}}x{\prom{\TaylorExp N}}
	=\sum_{\substack{s\in\resourceTerms\\\ms t\in\resourceMonomials}}
	\TaylorExp{M}_s\prom{\TaylorExp N}_{\ms t}\sm \lsubst {s}x{\ms t}
\]

In order to simulate β-reduction through Taylor expansion we might be
tempted to consider the reduction given by
$ε\to ε'$ as soon as $ε=\sum_{i\in I} a_i\sm e_i$
and $ε'=\sum_{i\in I} a_i\sm ε'_i$ with $e_i\resRed ε'_i$ for all $i\in I$.\footnote{
	We must of course require that $\Union_{i\in I}\fv{e_i}$ is finite
	but, again, we will keep such requirements implicit in the following.
}

Observe indeed that, as soon as $\pars{a_i\sm e_i}_{i\in I}\in\resourceExpressions$
is summable (\ie for all $e\in\resourceExpressions$, there are finitely many
$i\in I$ such that $a_i\not=0$ and $e_i=e$),
the family $\pars{a_i\sm ε'_i}_{i\in I}$ is summable too:
if $e'\in \support{a_i\sm ε'_i}$ then $a_i\not=0$ and  $e'\in \support{ε'_i}$
hence by Lemma \ref{lemma:reduction:size},
$\fv {e_i}=\fv{e'}$ and $\size{e_i}\le 2\size{e'}+2$;
$e'$ being fixed, there are thus finitely many possible values for $e_i$ hence for $i$.
So we do not need any additional condition for this reduction step to be well defined.

This reduction, however, is not suitable for simulating β-reduction
because whenever the reduced β-redex is 
not in linear position, we need to reduce arbitrarily many 
resource redexes.

\begin{example}
Observe that
\[\TaylorExp{\appl{y}{\appl{\labs xx}z}}=\sum_{n,k_1,\dotsc,k_n\in\naturals}\frac 1{n!k_1!\cdots k_n!}
\rappl{y}{\mset{\rappl{\labs x{x}}{z^{k_1}},\dotsc,\rappl{\labs x{x}}{z^{k_n}}}}\]
and 
\[\TaylorExp{\appl{y}{z}}=\sum_{n\in\naturals}\frac 1{n!} \rappl{y}{z^n}.\]
Then the reduction from $\mset{\rappl{\labs x{x}}{z^{k_1}},\dotsc,\rappl{\labs x{x}}{z^{k_n}}}$
to $z^n$ if each $k_i=1$ (resp.\ to $0$ if one $k_i\not=1$) requires firing $n$ 
independent redexes (resp.\ one of those $n$ redexes).
\end{example}

\subsection{Parallel resource reduction}

\label{subsection:parallel}

One possible fix would be to replace $\resRed$ with $\resRedRT$ in the above definition,
\ie set $ε\to ε'$ as soon as $ε=\sum_{i\in I} a_i\sm e_i$
and $ε'=\sum_{i\in I} a_i\sm ε'_i$ with $e_i\resRedRT ε'_i$ for all $i\in I$,
but then the study of the reduction subsumes that of 
normalization, which we treat in Section \ref{section:normalization}, and this relies 
on the possibility to simulate β-reduction steps.

A reasonable middle ground is to consider a parallel variant $\presRed$ of
$\resRed$, where any number of redexes can be reduced simultaneously in one
step. The parallelism involved in the translation of a β-reduction step is
actually quite constrained: like in the previous example, the redexes that need
to be reduced in the Taylor expansion are always pairwise independent and no
nesting is involved. However, in order to prove the confluence of the reduction
on resource vectors, or its conservativity w.r.t. β-reduction, it is much more
convenient to work with a fully parallel reduction relation, both on algebraic
λ-terms and on resource vectors. Indeed, parallel reduction relations generally
allow, e.g., to close confluence diagrams in one step or to define a maximal
parallel reduction step: the relevance of this technical choice will be made
clear all through Section \ref{section:parallel}.

\begin{definition}
	\label{definition:presRed}
	We define \emph{parallel resource reduction} $\mathord{\presRed}\subseteq\resourceExpressions\times\finiteResourceSums$
	inductively as follows:
	\begin{itemize}
		\item $x\presRed x$;
		\item $\rappl{\labs x s}{\ms t}\presRed \lsubst {σ'}x{\ms{τ'}}$
			as soon as $s\presRed σ'$ and $\ms t\presRed \ms{τ'}$;
		\item $\labs xs\presRed \labs x {σ'}$ as soon as $s\presRed σ'$;
		\item $\rappl s{\ms t}\presRed \rappl{σ'}{\ms{τ'}}$ as soon as $s\presRed σ'$ and $\ms t\presRed \ms{τ}'$;
		\item $\mset{s_1,\dotsc,s_n}\presRed\mset{σ'_1,\dotsc,σ'_n}$ as soon as $s_i\presRed σ'_i$ for each $i\in\set{1,\dotsc,n}$.
	\end{itemize}
	We extend this reduction to sums of resource expressions by linearity:
	$ε\presRed ε'$ if $ε=\sum_{i=1}^n e_i$ and $ε'=\sum_{i=1}^n ε'_i$ 
	with $e_i\presRed ε'_i$ for all $i\in\set{1,\dotsc,n}$.
\end{definition}
It should be clear that $\mathord{\resRed}\subseteq\mathord{\presRed}\subset\mathord{\resRedRT}$,
Moreover observe that, because all term constructors are linear, the reduction
rules extend naturally to finite sums of resource expressions:
for instance, $\labs x{σ}\presRed \labs x{σ'}$ as 
soon as $σ\presRed σ'$.

We will prove in Sections \ref{section:parallel} and \ref{section:simulation:beta}
that this solution is indeed a good one:
parallel resource reduction is strongly confluent, and there is a way 
to extend it to resource vectors so that not only the resulting 
reduction is strongly confluent and allows to simulate β-reduction, but
any reduction step from the Taylor expansion of an algebraic term 
can be completed into a parallel β-reduction step.
There are two pitfalls with this approach, though.

\subsection{Size collapse}

\label{subsection:collapse}

First, parallel reduction $\presRed$ (like iterated reduction $\resRedRT$) lacks the
combinatorial regularity properties of $\resRed$ given by Lemma \ref{lemma:reduction:size}: 
write $e\pOneStepGenerates e'$ if $e\presRed ε'$ with $e'\in\support{ε'}$;
$e'\in\resourceExpressions$ being fixed, there is no bound on the size of the
$\presRed$-antecedents of $e'$, \ie those $e\in\resourceExpressions$ such that
$e\pOneStepGenerates e'$.
\begin{example}
	\label{example:reduction:collapse}
	Fix $s\in\resourceTerms$.
	Consider the sequences $\vec u(s)$ and $\vec v(s)$ of resource terms given by:
	\[
		\left\{
		\begin{aligned}
			u_0(s)&\eqdef s\\
			u_{n+1}(s)&\eqdef \rappl{\labs yy}{\mset{u_n(s)}}
		\end{aligned}
		\right.
		\quad
		\text{and}
		\quad
		\left\{
		\begin{aligned}
			v_0(s)&\eqdef s\\
			v_{n+1}(s)&\eqdef \rappl{\labs y{v_n(s)}}{\mset{}}
		\end{aligned}
		\right.
		.
	\]
	Observe that for all $n\in\naturals$, $u_{n+1}(s)\resRed u_n(s)$ and  $v_{n+1}(s)\resRed v_n(s)$,
	and more generally, for all $n'\le n$, $u_{n}(s)\presRed u_{n'}(s)$ and $v_{n}(s)\presRed v_{n'}(s)$.
	In particular $u_n(s)\presRed s$ and $v_n(s)\presRed s$ for all $n\in\naturals$.
\end{example}
Reducing all resource expressions in a resource vector simultaneously is thus no longer
possible in general: consider, \eg, $\sum_{n\in\naturals} u_n(x)$.
As a consequence, when we introduce a reduction relation on resource vectors 
by extending a reduction relation on resource expressions
as above, we must in general impose the summability of the family of reducts 
as a side condition:
\begin{definition}
	Fix an arbitrary relation
	$\mathord{\reduce}\subseteq\resourceExpressions\times\finiteResourceSums$.
	For all $ε,ε'\in\resourceVectors$,
	we write $ε\splitVariant\reduce ε'$
	whenever there exist families $(a_i)_{i\in I}\in \rigS^I$, 
	$\pars{e_i}_{i\in I}\in\resourceExpressions^I$ 
	and $\pars{ε'_i}_{i\in I}\in \finiteResourceSums^I$ 
	such that:
	\begin{itemize}
		\item $\pars{e_i}_{i\in I}$ is summable and $ε=\sum_{i\in I}a_i\sm e_i$;
		\item $\pars{ε'_i}_{i\in I}$ is summable and $ε'=\sum_{i\in I}a_i\sm ε'_i$;
		\item for all $i\in I$, $e_i\R\reduce ε'_i$.
	\end{itemize}
\end{definition}
The necessity of such a side condition forbids confluence. Indeed:
\begin{example}
	\label{example:presRed:not:confluent}
	Let $σ=\sum_n u_n(v_n(x))$. Then $σ\vpresRed\sum u_n(x)$ and $σ\vpresRed\sum v_n(x)$,
	but since the only common reduct of $u_p(x)$ and $v_q(x)$ is $x$, there is no
	way\footnote{
		In fact, this argument is only valid if $\rigS$ is zerosumfree
		(\ie\ if $a+b=0\in\rigS$ entails $a=b=0$; see below, in particular Lemma \ref{lemma:zerosumfree:splitVariant}),
		for instance if $\rigS=\naturals$: 
	we rely on the fact that if $\sum_{i\in I} a_i\sm s_i=\sum_{n\in\naturals} u_n(x)$ then 
	for all $i\in I$ such that $a_i\not=0$, there is $n\in\naturals$ such that $s_i=u_{n}(x)$.
}
	to close this pair of reductions: $\pars{x}_{n\in\naturals}$ is not summable.
\end{example}
These considerations lead us to study the combinatorics of parallel resource reduction 
more closely: in Section \ref{section:parallel}, we introduce successive variants
of parallel reduction, based on restrictions on the nesting of fired redexes,
and provide bounds for the size of antecedents of a resource expression.
We moreover consider sufficient conditions for these restrictions
to be preserved under reduction.

We then observe in Section \ref{section:simulation:beta} that, when applied to
Taylor expansions, parallel reduction is automatically of the most restricted
form, which allows us to provide uniform bounds and obtain the desired
confluence and simulation properties.

\subsection{Reduction structures}

\label{subsection:structures}

The other, \emph{a priori} unrelated pitfall is the fact that the reduction can interact
badly with the semimodule structure of $\resourceVectors$: we can reproduce
Example \ref{example:inconsistency} in $\resourceVectors$ through Taylor
expansion (see the discussion in Section \ref{vpresRed:degenerate},
p.\pageref{vpresRed:degenerate}).
Even more simply, we can use the terms of Example \ref{example:reduction:collapse}:
\begin{example}
	Let $s\in\resourceTerms$ and $σ=\sum_{n\in\naturals} u_{n+1}(s)\in\termVectors$.
	Assuming $\rigS$ is a ring:
	$
		0=σ+(-1).σ
		\vpresRed
		\sum_{n\in\naturals} u_{n}(s)+(-1).σ=s
	$.
	
\end{example}
Of course, this kind of issue does not arise when the semiring of coefficients is \emph{zerosumfree}:
recall that $\rigS$ is zerosumfree if $a+b=0$ implies $a=b=0$, which holds for
all semirings of non-negative numbers, as well as for booleans.
This prevents interferences between reductions and the semimodule structure:
\begin{lemma}
	\label{lemma:zerosumfree:splitVariant}
	Assume $\rigS$ is zerosumfree and fix a relation
	$\mathord{\reduce}\subseteq\resourceExpressions\times\finiteResourceSums$.
	If $ε\splitVariant\reduce ε'$ then, for all $e'\in\support{ε'}$ there exists
	$e\in\support{ε}$ and $ε_0\in \finiteResourceSums$
	such that $e\R\reduce ε_0$ and $e'\in\support{ε_0}$.
\end{lemma}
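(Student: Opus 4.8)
The plan is to unfold the definition of $\splitVariant\reduce$ and to argue pointwise at the fixed reduct $e'$, isolating a single index that witnesses everything at once. So I would fix families $(a_i)_{i\in I}$, $(e_i)_{i\in I}$ and $(ε'_i)_{i\in I}$ as in the definition, with $ε=\sum_{i\in I}a_i\sm e_i$, $ε'=\sum_{i\in I}a_i\sm ε'_i$ and $e_i\R\reduce ε'_i$ for every $i$. By summability of $(ε'_i)_{i\in I}$, the coefficient of $e'$ in $ε'$ is the effectively finite sum $\sum_{i\in I}a_i\pars{ε'_i}_{e'}$, where only the indices $i$ with $e'\in\support{ε'_i}$ contribute. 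Since $e'\in\support{ε'}$ this sum is nonzero, so at least one of its summands is nonzero: there is $i_0\in I$ with $a_{i_0}\pars{ε'_{i_0}}_{e'}\not=0$, whence $a_{i_0}\not=0$ and $e'\in\support{ε'_{i_0}}$. I stress that this first step needs no hypothesis on $\rigS$: it is merely the contrapositive of the fact that a sum of zeros is zero.

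I would then set $e\eqdef e_{i_0}$ and $ε_0\eqdef ε'_{i_0}\in\finiteResourceSums$. By construction $e\R\reduce ε_0$ and $e'\in\support{ε_0}$, so two of the three required properties hold immediately. The only remaining point, and the genuine content of the lemma, is to check $e=e_{i_0}\in\support{ε}$. By summability of $(e_i)_{i\in I}$ the set $\set{i\in I\st e_i=e_{i_0}}$ is finite, so the coefficient of $e_{i_0}$ in $ε$ is the finite sum $\sum_{i\in I,\,e_i=e_{i_0}}a_i$, to which the index $i_0$ itself contributes the nonzero term $a_{i_0}$.

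This is exactly where zerosumfreeness is essential, and it is the only place it enters. In general the other summands $a_i$ (for $i\not=i_0$ with $e_i=e_{i_0}$) could cancel $a_{i_0}$, so that $e_{i_0}$ disappears from $\support{ε}$ — precisely the cancellation behind the inconsistency of Example \ref{example:inconsistency}. Assuming $\rigS$ zerosumfree rules this out: iterating $a+b=0\Rightarrow a=b=0$ across the finite sum shows that if $\sum_{i\in I,\,e_i=e_{i_0}}a_i=0$ then every summand vanishes, in particular $a_{i_0}=0$, a contradiction. Hence the coefficient is nonzero, i.e.\ $e\in\support{ε}$, which completes the argument. I expect no genuine obstacle beyond keeping the coefficient bookkeeping straight and being explicit that both sums at play are finite, thanks respectively to the summability of $(ε'_i)_{i\in I}$ and of $(e_i)_{i\in I}$.
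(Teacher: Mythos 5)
Your proof is correct and follows essentially the same route as the paper's: pick an index $i_0$ contributing a nonzero term to the coefficient of $e'$ in $ε'$ (which, as you rightly note, needs no hypothesis on $\rigS$), and then invoke zerosumfreeness only to conclude that $e_{i_0}$ survives in $\support{ε}$. The paper's version is just terser; your extra bookkeeping about finiteness of the two sums is exactly what it leaves implicit.
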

\begin{proof}
	Assume $ε=\sum_{i\in I} a_i\sm e_i$
	and  $ε'=\sum_{i\in I} a_i\sm ε'_i$ with $e_i\R\reduce ε'_i$ for all $i\in I$.
	If $e'\in\support{ε'}$ then there is $i\in I$ such that 
	$e'\in\support{a_i\sm ε'_i}$ hence $a_i\not=0$ and $e'\in\support{ε'_i}$.
	Then, since $\rigS$ is zerosumfree, $e_i\in\support{ε}$.
\end{proof}

Various alternative approaches to get rid of this
restriction in the setting of the algebraic λ-calculus
can be adapted to the reduction of resource vectors:
we refer the reader to the literature on algebraic λ-calculi
\cite{vaux:alglam,ad:lineal,alberti:phd,diazcaro:phd} for several proposals.
The linear-continuity of the resource λ-calculus allows us to propose a novel
approach: consider possible restrictions on 
the families of resource expressions simultaneously reduced in a
$\splitVariant\reduce$-step.

\begin{definition}
	\label{definition:resource:structure}
	We call \emph{resource support} any set $\calE\subseteq\resourceExpressions$ of
	resource expressions such that $\fv{\calE}=\Union_{e\in\calE}\fv e$ is finite.
	Then a \emph{resource structure} is any set
	$\fE\subseteq\powerset{\resourceExpressions}$ of resource supports such that:
	\begin{itemize}
		\item $\fE$ contains all finite resource supports;
		\item $\fE$ is closed under finite unions;
		\item $\fE$ is downwards closed for inclusion.
	\end{itemize}
\end{definition}
The maximal resource structure is $\resourceStructure\eqdef \set
{\calE\subseteq\resourceExpressions\st\fv{\calE}\text{ is finite}}$, which is also
a finiteness structure \cite{ehrhard:finres}. Observe that any 
finiteness structure $\fF\subseteq\resourceStructure$ is a resource structure:
all three additional conditions are automatically satisfied.

\begin{definition}
	\label{definition:splitVariant}
	Fix a relation 
	$\mathord{\reduce}\subseteq\resourceExpressions\times\finiteResourceSums$.
	For all resource support $\calE$, we write $\splitVariant[\calE]\reduce$ for
	$\splitVariant\reduce[\rightharpoonup_{\calE}]$ where $\rightharpoonup_{\calE}$
	denotes ${\mathord\to\inter(\calE\times\finiteResourceSums)}$.
	For all resource structure $\fE$, we then write $\splitVariant[\fE]\reduce$
	for $\Union_{\calE\in\fE} \mathord{\splitVariant[\calE]\reduce}$.
\end{definition}

We have 
$\mathord{\splitVariant[\calE]\reduce}\subseteq\mathord{\splitVariant\reduce}\inter(\vectors{\calE}\times\resourceVectors)$,
but in general the reverse inclusion holds only if $\rigS$ is zerosumfree:
in this latter case $ε\vpresRed ε'$ iff $ε\vpresRed[\support{ε}]ε'$.
	
\begin{definition}
	We call \emph{$\reduce$-reduction structure} any resource structure $\fE$ such that
	if $\calE\in\fE$ then $\Union\set{\support{ε'}\st e\in\calE\text{ and }e\R\reduce ε'}\in\fE$.
\end{definition}

We will consider some particular choices of reduction structure in the following,
but the point is that our approach is completely generic.
The results of Section \ref{section:simulation:beta} will imply
that if $\fS\subseteq\powerset{\termVectors}$ is a $\presRed$-reduction structure 
containing $\support{\TaylorExp M}$ then one can translate
any $\pbetaRed$-reduction sequence from $M$ 
into a $\vpresRed[\fS]$-reduction sequence from $\TaylorExp M$.
Additional properties such as the confluence of $\vpresRed[\fS]$,
its conservativity over $\pbetaRed$, or its compatibility with 
normalization will depend on additional conditions on $\fS$.

\section{Taming the size collapse of parallel resource reduction}

\label{section:parallel}

In this section, we study successive families of restrictions of the parallel
resource reduction $\presRed$. Our purpose is to enforce some control on the
size collapse induced by $\presRed$, so as to obtain a confluent restriction of
$\vpresRed$, all the while retaining enough parallelism to simulate parallel
β-reduction on algebraic λ-terms, ideally in a conservative way.

First observe that parallel resource reduction itself is strongly confluent as expected:
following a classic argument, we define $\fullReduct e$ as the 
result of firing all redexes in $e$ and then, whenever $e\presRed ε'$,
we have $ε'\presRed \fullReduct e$. Formally:
\begin{definition}
	For all $e\in\resourceExpressions$ we define 
	the \emph{full parallel reduct} $\fullReduct e$ of $e$ by induction on $e$ as follows:
	\begin{align*}
		\fullReduct{x}                      &\eqdef x\\
		\fullReduct{\labs xs}               &\eqdef \labs x{\fullReduct{s}}\\
		\fullReduct{\rappl{\labs xs}{\ms t}}&\eqdef \lsubst{\fullReduct{s}}x{\fullReduct{\ms t}}\\
		\fullReduct{\rappl{s}{\ms t}}       &\eqdef \rappl{\fullReduct{s}}{\fullReduct{\ms t}}
			&&\text{(if $s$ is not an abstraction)}\\
		\fullReduct{\mset{s_1,\dotsc,s_n}}  &\eqdef \mset{\fullReduct{s_1},\dotsc,\fullReduct{s_n}}.
	\end{align*}
	Then if $ε=\sum_{i=1}^n e_i\in\finiteResourceSums$,
	we set $\fullReduct{ε}=\sum_{i=1}^n\fullReduct{e_i}$.
\end{definition}

\begin{lemma}
	\label{lemma:presRed:fullReduct}
	For all $ε,ε'\in\finiteResourceSums$, 
	if $ε\presRed ε'$ then $ε'\presRed \fullReduct{ε}$.
\end{lemma}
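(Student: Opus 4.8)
The plan is to run the classical Takahashi argument: the full parallel reduct $\fullReduct e$ fires \emph{all} the redexes of $e$ at once, so it is the most advanced one-step reduct and any parallel step $e\presRed σ'$ should still be able to reach it. Since $\presRed$ acts on finite sums summand by summand and $\fullReduct{\sum_i e_i}=\sum_i\fullReduct{e_i}$, it suffices to treat a single resource expression: once we know $σ'\presRed\fullReduct e$ whenever $e\presRed σ'$ with $e\in\resourceExpressions$, the general statement follows by summing over the summands of $ε$. I would therefore prove, by induction on the derivation of $e\presRed σ'$ (equivalently, on the structure of $e$), that $σ'\presRed\fullReduct e$.

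The crux, and the step I expect to be the main obstacle, is a compatibility lemma between $\presRed$ and multilinear substitution: if $σ\presRed σ'$ and $\ms{τ}\presRed\ms{τ'}$ with $x\notin\fv{\ms{τ}}$, then $\lsubst σx{\ms{τ}}\presRed\lsubst{σ'}x{\ms{τ'}}$, together with its monomial counterpart. By bilinearity of multilinear substitution and the summand-wise definition of $\presRed$, this reduces to the case where $σ=s$ and $\ms{τ}=\ms t$ are single expressions, which I would prove by mutual induction on $s$ (terms and monomials simultaneously), expanding both sides with Lemma~\ref{lemma:lsubst:decomposition}. The variable and abstraction cases are immediate. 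For an application $s=\rappl{s_0}{\ms{s_1}}$ reduced in context to $σ'=\rappl{σ'_0}{\ms{τ'_1}}$, the substitution splits $\ms t$ over a partition $(I,J)$ of its index set; each block reduces by the induction hypotheses applied to $s_0\presRed σ'_0$, $\ms{s_1}\presRed\ms{τ'_1}$ and to the sub-monomials $\ms{t}_I\presRed\ms{τ'}_I$, $\ms{t}_J\presRed\ms{τ'}_J$, and the application context rule of $\presRed$ recombines them. Summing over \emph{all} partitions (the cardinality side-conditions of Lemma~\ref{lemma:lsubst:decomposition} may be omitted, the extra summands vanishing on both sides) turns the reduct back into the decomposition of $\lsubst{σ'}x{\ms{τ'}}$. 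The monomial case is analogous.

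The genuinely delicate case is a fired redex $s=\rappl{\labs y{s_0}}{\ms{s_1}}$ with $σ'=\lsubst{σ'_0}y{\ms{τ'_1}}$, $s_0\presRed σ'_0$ and $\ms{s_1}\presRed\ms{τ'_1}$ (taking $y$ fresh, so that $y\notin\fv{\ms{τ'}}$ by free-variable preservation, Lemma~\ref{lemma:reduction:size}). I would first use Lemma~\ref{lemma:lsubst:decomposition} to write $\lsubst sx{\ms t}$ as the sum over partitions $(I,J)$ of the redexes $\rappl{\labs y{\lsubst{s_0}x{\ms{t}_I}}}{\lsubst{\ms{s_1}}x{\ms{t}_J}}$; fire each of them with the redex rule of $\presRed$ and the induction hypotheses, reaching $\lsubst{\pars{\lsubst{σ'_0}x{\ms{τ'}_I}}}y{\pars{\lsubst{\ms{τ'_1}}x{\ms{τ'}_J}}}$; and finally recognise the resulting sum, via the commutation of multilinear substitutions (Lemma~\ref{lemma:lsubst:commute}, again with cardinality conditions omitted), as exactly $\lsubst{\pars{\lsubst{σ'_0}y{\ms{τ'_1}}}}x{\ms{τ'}}=\lsubst{σ'}x{\ms{τ'}}$. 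Matching these two partition sums --- the one produced by firing against the one given by the commutation lemma --- is the technical heart of the argument; it works cleanly precisely because, once the cardinality constraints are dropped, both are literally the \emph{same} sum over all partitions.

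With the substitution lemma in hand, the main induction is routine. The variable, abstraction and monomial cases, as well as an application with a non-abstraction head, follow from the induction hypotheses and the matching context rule of $\presRed$, deliberately not firing any residual redex so as to land on $\fullReduct e$ as defined. When $e=\rappl{\labs x{s}}{\ms t}$ is a redex, a parallel step either fires it, giving $σ'=\lsubst{σ'_s}x{\ms{τ'}}$, in which case the substitution lemma together with the induction hypotheses $σ'_s\presRed\fullReduct s$ and $\ms{τ'}\presRed\fullReduct{\ms t}$ yields $σ'\presRed\lsubst{\fullReduct s}x{\fullReduct{\ms t}}=\fullReduct e$; or leaves it in place, giving $σ'=\rappl{\labs x{σ'_s}}{\ms{τ'}}$, and then firing this residual redex with the redex rule of $\presRed$ and the same induction hypotheses again reaches $\lsubst{\fullReduct s}x{\fullReduct{\ms t}}=\fullReduct e$.
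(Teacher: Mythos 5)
Your proof is correct, and it is exactly the standard Takahashi-style argument that the paper leaves implicit (its proof of Lemma~\ref{lemma:presRed:fullReduct} is just ``follows directly from the definitions''). The substitution-compatibility lemma you identify as the crux is indeed the needed ingredient, and your proof of it via Lemmas~\ref{lemma:lsubst:decomposition} and~\ref{lemma:lsubst:commute} mirrors the argument the paper itself gives later for the refined relation $\pbdresRed b$ (of which $\presRed$ is the union), so nothing is missing.
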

\begin{proof}
	Follows directly from the definitions.
\end{proof}

In general, however, if we fix $e'\in\resourceExpressions$ then there is no
bound on those $e\in\resourceExpressions$ such that $e'\in\support{\fullReduct e}$,
so we cannot extend $\fullReductSym$ on $\resourceVectors$, nor generalize Lemma
\ref{lemma:presRed:fullReduct} to $\vpresRed$.
Indeed, we have shown that $\vpresRed$ is not even confluent.

In order to understand what restrictions are necessary to recover confluence,
we first provide a close inspection of the combinatorial effect of
$\presRed$ on the size of resource expressions: we show in subsection
\ref{subsection:pbresRed} that bounding the length of chains 
of immediately nested fired redexes is enough to bound the size of
$\presRed$-antecedents of a fixed resource expression.

In order to close a pair of reductions $e\presRed ε'$ and $e\presRed ε''$,
we have to reduce at least the residuals in $ε'$ of the redexes fired 
in the reduction $e\presRed ε''$ (and vice versa). So we want the above
bounds to be stable under taking the unions of sets of redexes in a term:
it is not the case if we consider chains of immediately nested redexes.
In Subsection \ref{subsection:pbdresRed}, we extend the boundedness condition
to all chains of nested fired redexes and introduce the family
$\pars{\pbdresRed b}_{b\in\naturals}$ of boundedly nested parallel reductions.
We then show that this family enjoys a kind of diamond property (Lemma
\ref{lemma:pbdresRed:confluent}), which can then be extended to
$\mathord{\vpbdresRed}=\Union_{b\in\naturals}\mathord{\splitVariant\pbdresRed b}$.
We must require that $\rigS$ enjoys an additional \emph{additive splitting} property (see
Definition \ref{definition:splitting}), in order to 
``align'' the $\pbdresRed b$-reductions involved in both sides of a pair
of $\splitVariant\pbdresRed b$-reductions from the same resource vector
(see the proof of Lemma \ref{lemma:vpbdresRed:diamond}).

To get rid of the additive splitting hypothesis we must further restrict 
resource reduction so as to recover a notion of full reduct \emph{at bounded depth}.
It is not sufficient to bound the depth of fired redexes because this is not
stable under reduction.
In Subsection \ref{subsection:pbsresRed}, we rather introduce 
the parallel reduction $\pbsresRed d$ where 
substituted variables occur at depth at most $d$.
We then show that
$\mathord{\vpbsresRed}=\Union_{d\in\naturals}\mathord{\splitVariant\pbsresRed d}$
is strongly confluent by proving that any $\pbsresRed d$-step from $ε$
can be followed by a $\pbsresRed {d'}$-reduction to $\fpbsReduct {d}{ε}$,
where $\fpbsReduct{d}{ε}$ is obtained by firing all redexes in $ε$ for which
the bound variables occur at depth at most $d$, and $d'$ depends only on $d$.

Finally, we consider resource vectors of bounded height: these contain the
Taylor expansions of algebraic λ-terms. We show that all the above restrictions 
actually coincide with $\vpresRed$  on bounded resource vectors. In this particular
case, we can actually extend $\fullReductSym$ by linear-continuity and 
obtain a proof of the diamond property for $\vpresRed$.\footnote{
	Note that, although they involve increasing constraints on parallel
	reduction, Subsections \ref{subsection:pbresRed} to \ref{subsection:pbsresRed}
	are essentially pairwise independent.
	Moreover, we obtain the diamond property for $\vpresRed$ on bounded resource
	vectors as a consequence of the results of Subsection \ref{subsection:pbsresRed},
	but it could as well be proved directly, using similar techniques
	(see Footnote \ref{footnote:direct}, p.\pageref{footnote:direct}).
	So, the reader who only wants the proofs necessary for the main results of
	the paper can skip Subsections \ref{subsection:pbresRed} and \ref{subsection:pbdresRed};
	the reader who is not interested in checking proofs can also skip subsection
	\ref{subsection:pbsresRed}.

	We chose to present the successive families of restrictions anyway, because
	their construction provides a precise understanding of the combinatorics of
	parallel resource reduction, and of the various ingredients involved in
	designing a strongly confluent version of $\vpresRed$: we start by avoiding
	the size collapse by putting a restriction on families of redexes that can be
	fired in parallel; then we ensure that this restriction is stable under
	reduction.

	This understanding plays a key rôle in enabling the generalization of our
	approach to linear logic proof nets or infinitary λ-calculus:
	with Chouquet, we have recently established that our restrictions on
	the nesting of redexes, as well as their preservation under reduction, can be
	adapted to the setting of proof nets \cite{cv:antireduits-csl};
	and preliminary work on infinitary λ-calculus indicates that it could be amenable
	to the technique of Subsection \ref{subsection:pbdresRed}, whereas 
	it does not make sense to restrict the depth of substituted variables in this setting.
}

At this point of the discussion, it is worth noting that, if we extend 
a relation $\mathord{\reduce}\subseteq{\resourceExpressions\times\finiteResourceSums}$
to a binary relation on finite sums of resource expressions so that 
$ε\reduce ε'$ iff $ε=\sum_{i=1}^n e_i$ and $ε'=\sum_{i=1}^n ε'_i$
with $e_i\reduce ε'_i$ for all $i\in\set{1,\dotsc,n}$, then 
for all $\reduce$-reduction structure $\fE$ and all 
resource vectors $ε,ε'\in\resourceVectors$, we have 
$ε\splitVariant[\fE]\reduce ε'$ iff there exist a set $I$ of indices,
a resource support $\calE\in\fE$,
a family $(a_i)_{i\in I}\in \rigS^I$ of scalars
and families $\pars{ε_i}_{i\in I}\in\finiteSums{\calE}^I$ 
and $\pars{ε'_i}_{i\in I}\in \finiteResourceSums^I$ 
such that:
\begin{itemize}
	\item $\pars{ε_i}_{i\in I}$ is summable and $ε=\sum_{i\in I}a_i\sm ε_i$;
	\item $\pars{ε'_i}_{i\in I}$ is summable and $ε'=\sum_{i\in I}a_i\sm ε'_i$;
	\item for all $i\in I$, $ε_i\reduce ε'_i$.
\end{itemize}
We will use this fact for confluence proofs: $\presRed$ and its variants are
all of this form.

\subsection{Bounded chains of redexes}

\label{subsection:pbresRed}

\begin{definition}
	We define a family of relations $\mathord{\pbresRed mk}
	\subseteq\resourceExpressions\times\finiteResourceSums$
	for $m\le k\in\naturals$ inductively as follows:
	\begin{itemize}
		\item $x\pbresRed mk x$ for all $m\le k\in\naturals$;
		\item $\labs x s\pbresRed mk \labs x{σ'}$ if $m\le k$ and $s\pbresRed {m_1}{k} σ'$ for some $m_1\le k$;
		\item $\rappl s{\ms t}\pbresRed mk\rappl{σ'}{\ms {τ'}}$ if $m\le k$,
			$s\pbresRed {m_1}{k} σ'$ and $\ms t\pbresRed {m_2}{k} \ms{τ'}$ 
			for some $m_1\le k$ and $m_2\le k$;
		\item $\mset{s_1,\dotsc,s_r}\pbresRed mk\mset{σ'_1,\dotsc,σ'_r}$
			if $s_i\pbresRed {m}{k} σ'_i$ for all $i\in\set{1,\dotsc,r}$;
		\item $\rappl{\labs xs}{\ms t}\pbresRed mk \lsubst{σ'}x{\ms{τ'}}$ if 
			 $0<m\le k$, $s\pbresRed {m-1}{k} σ'$ and $\ms t\pbresRed {m-1}{k} \ms{τ'}$.
	\end{itemize}
\end{definition}
Intuitively, we have $e\pbresRed mk ε'$ iff $e\presRed ε'$ and 
that reduction fires chains of redexes of length at most $k$, 
those starting at top level being of length at most $m$.
In particular, it should be clear that if $e\pbresRed mk ε'$
then $e\presRed ε'$, and $e\pbresRed {m'}{k'} ε'$ as soon as $m\le m'\le k'$ and $k\le k'$.
Moreover, $e\presRed ε'$ iff $e\pbresRed{\height e}{\height e}ε'$.

\begin{definition}
	We define $\growthBound klm\in\naturals$ for all $k,l,m\in\naturals$,
	by induction on the lexicographically ordered pair $(l,m)$:
	\begin{eqnarray*}
		\growthBound k00 &\eqdef& 0\\
		\growthBound k{l+1}0 &\eqdef& \growthBound klk+1\\
		\growthBound kl{m+1} &\eqdef& 4\growthBound klm.
	\end{eqnarray*}
	We then write $\growthBoundToplevel kl\eqdef \growthBound klk$.
\end{definition}

For all $k,l,m\in\naturals$, the following identities follow straightforwardly
from the definition and will be used throughout this subsection:
\begin{eqnarray*}
	\growthBound klm     &=&4^m\growthBound kl0\\
	\growthBound k0m     &=&0\\
	\growthBound k1m     &=&4^m.
\end{eqnarray*}

\begin{lemma}
	\label{lemma:growthBound:sum}
	For all $k,l,l',m\in\naturals$, $\growthBound k{l+l'}m\ge\growthBound klm+\growthBound k{l'}m$.
\end{lemma}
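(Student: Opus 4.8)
The plan is to first factor out the dependence on the third argument, reducing to the case $m=0$, and then to prove superadditivity in the first argument by a direct induction. The identity $\growthBound klm=4^m\growthBound kl0$, recorded just above the statement, is exactly what makes this reduction work: since
\[
	\growthBound k{l+l'}m = 4^m\growthBound k{l+l'}0
	\quad\text{and}\quad
	\growthBound klm+\growthBound k{l'}m = 4^m\pars{\growthBound kl0+\growthBound k{l'}0},
\]
and $4^m\ge 0$, multiplying an inequality by $4^m$ preserves it, so the general statement follows from the special case $m=0$. Thus it suffices to establish $\growthBound k{l+l'}0\ge\growthBound kl0+\growthBound k{l'}0$ for all $k,l,l'$.

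For this I would fix $k$ and $l'$ and argue by induction on $l$. The base case $l=0$ reads $\growthBound k{l'}0\ge\growthBound k00+\growthBound k{l'}0$, which holds with equality since $\growthBound k00=0$. For the inductive step, first note that the defining equations together with the above identity give the uniform recurrence $\growthBound k{n+1}0=\growthBound knk+1=4^k\growthBound kn0+1$ for every $n$. Applying this at $n=l+l'$ and invoking the induction hypothesis,
\[
	\growthBound k{(l+1)+l'}0 = 4^k\growthBound k{l+l'}0+1
	\ge 4^k\pars{\growthBound kl0+\growthBound k{l'}0}+1,
\]
and comparing with $\growthBound k{l+1}0+\growthBound k{l'}0=4^k\growthBound kl0+1+\growthBound k{l'}0$, the step reduces to $4^k\growthBound k{l'}0\ge\growthBound k{l'}0$, which holds because $4^k\ge 1$ and $\growthBound k{l'}0\ge 0$.

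There is no genuine obstacle here: the only two points needing care are the reduction to $m=0$ (so that the exponential growth in the third argument cannot interfere) and setting up the recurrence $\growthBound k{n+1}0=4^k\growthBound kn0+1$, after which the inductive step collapses to the trivial fact $4^k\ge 1$. As an alternative one could avoid the induction entirely by using the closed form $\growthBound kn0=\frac{4^{nk}-1}{4^k-1}$ (and $\growthBound kn0=n$ when $k=0$) and checking superadditivity through $(4^{lk}-1)(4^{l'k}-1)\ge 0$; but the induction is cleaner and spares us the case split on whether $k=0$.
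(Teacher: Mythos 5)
Your proof is correct and follows essentially the same route as the paper's: the paper also combines the identity $\growthBound klm=4^m\growthBound kl0$ with the recurrence $\growthBound k{n+1}0=\growthBound knk+1$ and an induction on one of the two additive arguments (it inducts on $l'$ carrying $m$ along, whereas you factor out $4^m$ first and induct on $l$ at $m=0$, which is an immaterial reshuffling). Both arguments bottom out in the same observations that $\growthBound k00=0$ and $4^k\growthBound k{l'}0\ge\growthBound k{l'}0$.
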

\begin{proof}
	By induction on $l'$.
	The case $l'=0$ is direct.
	Assume the result holds for $l'$, we prove it for $l'+1$:
	\begin{eqnarray*}
		\growthBound k{l+l'+1}m
		&=&
		4^m\pars{\growthBound k{l+l'+1}{0}}
		\\
		&=&
		4^m\pars{\growthBound k{l+l'}k +1}
		\\
		&\ge&
		4^m\pars{\growthBound klk+\growthBound k{l'}k+1}
		\\
		&=&
		4^m\pars{4^k\growthBound kl0+\growthBound k{l'+1}0}
		\\
		&\ge&
		4^m\growthBound kl0+4^m\growthBound k{l'+1}0
		\\
		&=& 
		\growthBound klm+\growthBound k{l'+1}m.
	\end{eqnarray*}
\end{proof}

The following generalization follows directly:

\begin{corollary}
	\label{corollary:growthBound:sum}
	For all $l_1\dotsc,l_n\in\naturals$
	\[\growthBound k{\sum_{i=1}^n l_i}m\ge\sum_{i=1}^n \growthBound k{l_i}m.\]
\end{corollary}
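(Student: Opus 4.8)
The plan is to proceed by a straightforward induction on $n$, taking the binary inequality of Lemma \ref{lemma:growthBound:sum} as the only nontrivial ingredient. Since that lemma already packages all the arithmetic content that refers to the recursive definition of $\growthBound klm$, the $n$-ary statement reduces to pure bookkeeping: I would never need to unfold the definition again, but only to iterate the binary bound along the sum.

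For the base case I would take $n=0$, so that both sides are empty: the left-hand side is $\growthBound k0m$, which equals $0$ by the identity $\growthBound k0m=0$ recorded just before the lemma, and the right-hand side is the empty sum, also $0$, so the inequality (in fact an equality) holds. One may equivalently start the induction at $n=1$, where the statement is the trivial reflexive inequality $\growthBound k{l_1}m\ge\growthBound k{l_1}m$. For the inductive step, assuming the inequality for $n$, I would split the sum as $\sum_{i=1}^{n+1}l_i=\pars{\sum_{i=1}^n l_i}+l_{n+1}$ and apply Lemma \ref{lemma:growthBound:sum} with $l=\sum_{i=1}^n l_i$ and $l'=l_{n+1}$ to obtain
\[
	\growthBound k{\sum_{i=1}^{n+1}l_i}m
	\ge
	\growthBound k{\sum_{i=1}^{n}l_i}m+\growthBound k{l_{n+1}}m.
\]
The induction hypothesis bounds the first summand below by $\sum_{i=1}^n\growthBound k{l_i}m$, and adding $\growthBound k{l_{n+1}}m$ yields $\sum_{i=1}^{n+1}\growthBound k{l_i}m$, which is exactly the required bound.

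There is no genuine obstacle here, which is why the author simply writes that the result follows directly. The only point deserving care is the conventional one of fixing the starting index of the induction and handling the empty sum — hence my explicit treatment of the $n=0$ case, relying on $\growthBound k0m=0$. Everything else is a mechanical transcription of Lemma \ref{lemma:growthBound:sum} through the induction, with no further appeal to the monotonicity or the recursive structure of $\growthBound klm$ beyond what that lemma already supplies.
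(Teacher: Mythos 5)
Your proof is correct and is exactly the argument the paper leaves implicit when it says the generalization ``follows directly'' from Lemma \ref{lemma:growthBound:sum}: an induction on $n$ iterating the binary inequality, with the empty sum handled by $\growthBound k0m=0$. No discrepancy with the paper's approach.
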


\begin{lemma}
	\label{lemma:growthBound:l}
	For all $k,l,m\in\naturals$, 
	$\growthBound klm\ge l$.
\end{lemma}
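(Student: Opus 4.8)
The plan is to reduce everything to the case $m = 0$ and then run a short induction on $l$, though an even quicker route goes through Corollary~\ref{corollary:growthBound:sum}.

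First I would record that $\growthBound klm = 4^m\growthBound kl0 \ge \growthBound kl0$, using the identity $\growthBound klm = 4^m\growthBound kl0$ noted just after the definition together with $4^m \ge 1$ in $\naturals$. Hence it suffices to establish $\growthBound kl0 \ge l$, and I would do this by induction on $l$. The base case $\growthBound k00 = 0 \ge 0$ is immediate from the definition. For the step, I would unfold $\growthBound k{l+1}0 = \growthBound klk + 1$; since $\growthBound klk = 4^k\growthBound kl0 \ge \growthBound kl0$ and the induction hypothesis gives $\growthBound kl0 \ge l$, I obtain $\growthBound k{l+1}0 \ge l + 1$, closing the induction.

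Alternatively, and more slickly, I would invoke Corollary~\ref{corollary:growthBound:sum} with the decomposition $l = \sum_{i=1}^l 1$: this yields $\growthBound klm \ge \sum_{i=1}^l \growthBound k1m = l\cdot 4^m$, and since $4^m \ge 1$ I conclude $\growthBound klm \ge l$ directly (the case $l = 0$ being trivial, as the bound reads $\growthBound k0m \ge 0$).

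There is essentially no genuine obstacle here: the statement is a routine growth/monotonicity bound, and the only mild point of care is to avoid circularity, \ie to check that the ingredients used (the closed form $\growthBound klm = 4^m\growthBound kl0$ and, in the second route, Corollary~\ref{corollary:growthBound:sum}) are already available and do not themselves depend on this lemma. Both are: the closed-form identities are derived immediately after the definition, and the corollary follows from Lemma~\ref{lemma:growthBound:sum}, which is proved independently. I would therefore present the one-line argument \emph{via} the corollary as the main proof, mentioning the direct induction as a self-contained alternative.
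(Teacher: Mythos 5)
Your proposal is correct, and the route you single out as the main proof (applying Corollary~\ref{corollary:growthBound:sum} to the decomposition $l=\sum_{i=1}^l 1$ to get $\growthBound klm\ge l\times\growthBound k1m=l\times 4^m\ge l$) is exactly the argument the paper gives. The direct induction you sketch as an alternative is also sound, but unnecessary.
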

\begin{proof}
	By Corollary \ref{corollary:growthBound:sum},
	$\growthBound klm\ge l\times\growthBound k1m=l\times 4^m$.
\end{proof}

\begin{lemma}
	\label{lemma:growthBound:monotonicity}
	For all $k,k',l,l',m,m'\in\naturals$ 
	if $k\le k'$, $l\le l'$ and $m\le m'$, then:
	\[
		\growthBound k{l}{m} \le \growthBound {k'}{l'}{m'}.
	\]
\end{lemma}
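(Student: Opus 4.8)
The plan is to prove the three monotonicities separately, one argument at a time, and then chain them by transitivity: writing
\[
	\growthBound klm \le \growthBound k{l'}m \le \growthBound k{l'}{m'} \le \growthBound {k'}{l'}{m'},
\]
it suffices to establish that $\growthBoundSym$ is non-decreasing in $l$ (with $k,m$ fixed), in $m$ (with $k,l'$ fixed), and in $k$ (with $l',m'$ fixed).

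Monotonicity in $m$ is immediate from the identity $\growthBound klm=4^m\growthBound kl0$ recalled just after the definition: since $4^m\le 4^{m'}$ whenever $m\le m'$, we obtain $\growthBound klm\le\growthBound kl{m'}$ at once. Monotonicity in $l$ follows from Lemma \ref{lemma:growthBound:sum}: for $l\le l'$ I would write $l'=l+(l'-l)$ and deduce $\growthBound k{l'}m\ge\growthBound klm+\growthBound k{l'-l}m\ge\growthBound klm$, the last step using only $\growthBound k{l'-l}m\ge 0$.

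The sole point needing a genuine argument is monotonicity in $k$, namely $\growthBound klm\le\growthBound {k'}lm$ for $k\le k'$. Again by the identity $\growthBound klm=4^m\growthBound kl0$ it is enough to treat $m=0$, i.e.\ to prove $\growthBound kl0\le\growthBound {k'}l0$, and I would do this by induction on $l$. The base case $l=0$ is trivial, both sides being $0$. For the inductive step, I would combine the defining recurrence with the identity $\growthBound klk=4^k\growthBound kl0$ to rewrite it as $\growthBound k{l+1}0=4^k\growthBound kl0+1$, and symmetrically $\growthBound {k'}{l+1}0=4^{k'}\growthBound {k'}l0+1$; then the induction hypothesis $\growthBound kl0\le\growthBound {k'}l0$ together with $4^k\le 4^{k'}$ gives $\growthBound k{l+1}0\le\growthBound {k'}{l+1}0$.

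The main (and only mild) obstacle is this last induction, where $k$ plays two rôles at once—as the exponential growth rate $4^k$ and as the secondary recursion bound appearing in $\growthBound klk$—so the argument hinges on cleanly folding both occurrences into the single recurrence $\growthBound k{l+1}0=4^k\growthBound kl0+1$ via the identity $\growthBound klk=4^k\growthBound kl0$. Once that reformulation is in place the estimate is routine.
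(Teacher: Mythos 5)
Your proof is correct and follows essentially the same route as the paper: split into the three monotonicities, get the $m$-case from $\growthBound klm=4^m\growthBound kl0$, the $l$-case from Lemma \ref{lemma:growthBound:sum}, and the $k$-case by induction. The only (harmless) difference is that the paper runs the $k$-induction on the lexicographic pair $(l,m)$ and closes the $(l+1,0)$ case via $\growthBound klk+1\le\growthBound{k'}lk+1\le\growthBound{k'}l{k'}+1$, whereas you first reduce to $m=0$ and fold both occurrences of $k$ into the recurrence $\growthBound k{l+1}0=4^k\growthBound kl0+1$ — both work.
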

\begin{proof}
	We prove the monotonicity of $\growthBound klm$ in $m$, $l$ and then $k$, separately.

	First, if $m\le m'$ then
	$\growthBound klm=4^m\growthBound kl0\le 4^{m'}\growthBound kl0 =\growthBound kl{m'}$.

  By Lemma \ref{lemma:growthBound:sum}, if $l\le l'$, 
	$\growthBound k{l'}m\ge\growthBound klm+\growthBound k{l'-l}m\ge\growthBound klm$.

	Finally, we prove that if $k\le k'$ then $\growthBound klm\le\growthBound{k'}lm$
	by induction on the lexicographically ordered pair $(l,m)$:
  \begin{gather}
	\begin{align*}
		\growthBound k00
		&~=~
		0
		\\
		&~=~
		\growthBound{k'}00
		\\
		\growthBound k{l+1}0
		&~=~
		\growthBound klk+1
		\\
		&~\le~
		\growthBound {k'}lk+1
		\\
		&~\le~
		\growthBound {k'}l{k'}+1
		\\
		&~=~
		\growthBound {k'}{l+1}{0}
		\\
		\growthBound kl{m+1}
		&~=~
		4\growthBound klm
		\\
		&~\le~
		4\growthBound {k'}lm
		\\
		&~=~
		\growthBound {k'}l{m+1}\tag*{\qedhere}
	\end{align*}
  \end{gather}
\end{proof}

Write $e\pbOneStepGenerates mke'$ if $e\pbresRed mk ε'$ with $e'\in\support{ε'}$.
\begin{lemma}
	If $e\pbOneStepGenerates mk e'$ then
	$\size{e}\le\growthBound{k}{\size{e'}}{m}$.
\end{lemma}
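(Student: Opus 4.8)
The plan is to argue by induction on the derivation of $e \pbresRed mk ε'$, following the five clauses of the definition of $\pbresRed mk$, carrying the inequality $\size e \le \growthBound k{\size{e'}}m$ through each of them for an arbitrary $e' \in \support{ε'}$. All the arithmetic of the bound is already packaged in the three auxiliary results: superadditivity in the middle index (Corollary \ref{corollary:growthBound:sum}), the lower bound $\growthBound klm \ge l$ (Lemma \ref{lemma:growthBound:l}), and monotonicity in all three indices (Lemma \ref{lemma:growthBound:monotonicity}), together with the defining identities, crucially $\growthBound k{l+1}0 = \growthBoundToplevel kl + 1$ and $\growthBound kl{m+1} = 4\growthBound klm$.

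First the clauses that fire no redex. For $e=x$ we have $e'=x$ and $\size e = 1 \le 4^m = \growthBound k1m$. For a monomial $\mset{s_1,\dotsc,s_r}\pbresRed mk\mset{σ'_1,\dotsc,σ'_r}$, a generic reduct is $\mset{s'_1,\dotsc,s'_r}$ with each $s_i \pbOneStepGenerates mk s'_i$; applying the induction hypothesis componentwise and summing via Corollary \ref{corollary:growthBound:sum} gives the claim at once, since the monomial clause keeps the same top-level bound $m$ for every component. For an abstraction $\labs x s$ and for a non-firing application $\rappl s{\ms t}$, the reduct increases the size by exactly $1$; here I would use $\growthBound k{l+1}0 = \growthBoundToplevel kl + 1$ to absorb that increment while resetting the secondary bound to its maximum $k$, first raising the induction-hypothesis index $m_i\le k$ up to $k$ by monotonicity, and then observing $\growthBound k{\size{e'}}0 \le \growthBound k{\size{e'}}m$; in the application case one extra appeal to superadditivity combines the bounds for $s$ and for $\ms t$.

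The main obstacle is the redex clause $\rappl{\labs x s}{\ms t}\pbresRed mk \lsubst{σ'}x{\ms{τ'}}$ with $m\ge 1$, $s\pbresRed{m-1}k σ'$ and $\ms t\pbresRed{m-1}k \ms{τ'}$, where two effects combine: the nesting allowance drops to $m-1$, so the induction hypotheses are only available at that lower bound, and multilinear substitution may shrink the term. A generic reduct is $e'\in\support{\lsubst{s'}x{\ms{t'}}}$ for some $s'\in\support{σ'}$ and $\ms{t'}\in\support{\ms{τ'}}$, with $\size s \le \growthBound k{\size{s'}}{m-1}$ and $\size{\ms t}\le\growthBound k{\size{\ms{t'}}}{m-1}$. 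The key quantitative input is Lemma \ref{lemma:lsubst:size} (applicable after choosing $x\notin\fv{\ms{t'}}$, legitimate since the bound variable of the fired redex is fresh for its argument and reduction creates no free variables), which yields $\max\set{\size{s'},\size{\ms{t'}}}\le\size{e'}$. The factor $4$ in $\growthBound klm = 4\growthBound kl{m-1}$ is then precisely what is needed: writing $4\growthBound k{\size{e'}}{m-1}$ as four copies, monotonicity in the middle index bounds one copy below by $\growthBound k{\size{s'}}{m-1}\ge\size s$ and a second by $\growthBound k{\size{\ms{t'}}}{m-1}\ge\size{\ms t}$, while the remaining two copies are each at least $\size{e'}\ge 1$ by Lemma \ref{lemma:growthBound:l}, covering the $+2$ from $\size{\rappl{\labs x s}{\ms t}} = 2+\size s+\size{\ms t}$. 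Summing, $\growthBound k{\size{e'}}m = 4\growthBound k{\size{e'}}{m-1} \ge \size s + \size{\ms t} + 2 = \size e$.

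I expect the redex clause to be the only delicate point: one must check that the decrease caused by substitution never outpaces the growth allowance. This is exactly why the bound quadruples (rather than doubles) at each increment of the nesting index — two copies are consumed matching $s$ and $\ms t$ separately, because superadditivity cannot be used when $\occnum x{s'}=\card{\ms{t'}}>0$ makes $\size{e'}<\size{s'}+\size{\ms{t'}}$ — and why $\growthBound k{l+1}0$ resets the secondary bound to $k$. The remaining clauses are routine bookkeeping with the three auxiliary lemmas.
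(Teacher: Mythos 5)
Your proof is correct and follows essentially the same route as the paper's: induction on the derivation of $e\pbresRed mk ε'$, with the non-redex clauses handled by monotonicity, superadditivity, and the identity $\growthBound k{l+1}0=\growthBound klk+1$, and the redex clause handled by splitting $\growthBound k{\size{e'}}m=4\growthBound k{\size{e'}}{m-1}$ into two copies absorbing $\size s$ and $\size{\ms t}$ via $\max\set{\size{s'},\size{\ms{t'}}}\le\size{e'}$ (Lemma \ref{lemma:lsubst:size}) and two copies covering the $+2$ via $\growthBound k{\size{e'}}{m-1}\ge\size{e'}\ge 1$. This matches the paper's argument step for step.
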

\begin{proof}
	By induction on the reduction $e\pbresRed mk ε'$ such that $e'\in ε'$.
	
	If $e=x=ε'$ then $e'=x$ and $\size{e}=1=\growthBound{0}{1}{0}\le\growthBound{k}{\size{e'}}{m}$.

	If $e=\labs xs$, $ε'=\labs x{σ'}$, $m\le k$ and $s\pbresRed {m_1}{k} σ'$ with $m_1\le k$,
	then $e'=\labs x{s'}$ with $s\pbOneStepGenerates {m_1}k s'$. We obtain:
	\begin{align*}
	\size{e}
	& = \size{s}+1 
	\\
	& \le \growthBound{k}{\size{s'}}{m_1}+1
	&& \text{(by induction hypothesis)}
	\\
	& \le \growthBound{k}{\size{s'}}{k}+1
	\\
	& = \growthBound{k}{\size{s'}+1}{0}
	\\
	& \le \growthBound{k}{\size{e'}}{m}.
	\end{align*}

	If $e=\rappl s{\ms t}$,
	$ε'=\rappl{σ'}{\ms{τ'}}$,
	$m\le k$,
	$s\pbresRed {m_1}{k} σ'$ and
	$\ms t\pbresRed {m_2}{k} \ms{τ'}$
	with $m_i\le k$ for all $i\in\set{1,2}$,
	then $e'=\rappl{s'}{\ms{t'}}$ with 
	$s\pbOneStepGenerates {m_1}k s'$
	and $\ms t\pbOneStepGenerates {m_2}k\ms{t'}$.
	We obtain:
	\begin{align*}
		\size{e} &= \size{s}+\size{\ms t}+1 
		\\
		&\le \growthBound{k}{\size{s'}}{m_1}+\growthBound{k}{\size{\ms {t'}}}{m_2}+1
		&&\text{(by induction hypothesis)}
		\\
		&\le \growthBound{k}{\size{s'}}{k}+\growthBound{k}{\size{\ms {t'}}}{k}+1
		\\
		&\le \growthBound{k}{\size{s'}+\size{\ms{t'}}}{k}+1
		\\
		&= \growthBound{k}{\size{s'}+\size{\ms{t'}}+1}{0}
		\\
		&\le \growthBound{k}{\size{e'}}{m}.
	\end{align*}

	If $e=\mset{s_1,\dotsc,s_r}$,
	$ε'=\mset{σ'_1,\dotsc,σ'_r}$ and
	$s_i\pbresRed{m}{k} σ'_i$ 
	for all $i\in\set{1,\dotsc,r}$,
	then $e'=\mset{s'_1,\dotsc,s'_r}$ 
	with $s_i\pbOneStepGenerates mk s'_i$
	for all $i\in\set{1,\dotsc,r}$.
	We obtain:
	\begin{align*}
		\size{e}
		&= \sum_{i=1}^r \size{s_i}
		\\
		&\le \sum_{i=1}^r\growthBound{k}{\size{s'_i}}{m}
		&&\text{(by induction hypothesis)}
		\\
		& \le \growthBound{k}{\sum_{i=1}^r\size{s'_i}}{m}
		\\
		& = \growthBound{k}{\size{e'}}{m}.
	\end{align*}

	If $e=\rappl {\labs xs}{\ms t}$,
	$ε'=\lsubst{σ'}x{\ms{τ'}}$,
	$0<m\le k$, $s\pbresRed {m-1}{k} σ'$ and
	$\ms t\pbresRed {m-1}{k} \ms{τ'}$,
	then there are $s'\in\support{σ'}$ and 
	$\ms{t'}\in\support{\ms{τ'}}$ such that 
	$e'\in\support{\lsubst {s'}x{\ms t'}}$.
	In particular, $s\pbOneStepGenerates {m-1}{k}s'$
	and $\ms t\pbOneStepGenerates {m-1}{k}\ms{t'}$ 
	and we obtain:
	\begin{align*}
		\size{e}
		& = \size{s}+\size{\ms t}+2
		\\
		& \le \growthBound{k}{\size{s'}}{m-1}+\growthBound{k}{\size{\ms {t'}}}{m-1}+2
		&&\text{(by induction hypothesis)}
		\\
		& \le 2\growthBound{k}{\size{e'}}{m-1}+2
		&&\text{($\size{e'}\ge\max\set{\size{s'},\size{\ms {t'}}}$)}
		\\
		& \le 4\growthBound{k}{\size{e'}}{m-1}
		&&\text{($\size{e'}\ge\size{s'}\ge 1$)}
		\\
		&= \growthBound{k}{\size{e'}}{m}.&&\hfill\qedhere
	\end{align*}
\end{proof}

As a direct consequence, for all $m\le k\in\naturals$, for all summable family
$\pars{e_i}_{i\in I}$ and all family $\pars{ε'_i}_{i\in I}$ such that
$e_i\pbresRed mk ε'_i$ for all $i\in I$, $\pars{ε'_i}_{i\in I}$
is also summable: we can thus drop the side condition in the definition of 
$\splitVariant\pbresRed mk$.

Observe however that those reduction relations are not stable under taking the
unions of fired redexes in families of reduction steps: using, \eg,
the terms $u_n(s)$ from Example \ref{example:reduction:collapse},
for all $n\in\naturals$, we have $u_{2n}(s)\pbresRed 11 u_n(s)$ by 
firing all redexes at even depth,
$u_{2n}(s)\pbresRed 01 u_n(s)$ by 
firing all redexes at odd depth,
and $u_{2n}(s)\pbresRed {2n}{2n} s$ by firing both families,
but there is obviously no $k\in\naturals$ such that 
$u_{2n}(s)\pbresRed kk s$ uniformly for all $n\in\naturals$.
Although we can close the induced critical pair 
\begin{center}
$\sum_{n\in\naturals} u_{2n}(s)\splitVariant\pbresRed 01\sum_{n\in\naturals} u_n(s)$ 
and
$\sum_{n\in\naturals} u_{2n}(s)\splitVariant\pbresRed 11 \sum_{n\in\naturals} u_n(s)$
\end{center}
trivially in this case, this phenomenon is an obstacle to confluence:
\begin{example}
	\label{example:pbresRed:not:confluent}
	Fix $s\in\resourceTerms$ and
	consider the sequence $\vec w(s)$ of resource terms given by 
	$w_0(s)=s$ and:
	\begin{eqnarray*}
		w_{2n+1}(s) &=& \rappl{\labs yy}{\mset{w_{2n}(s)}}\\
		w_{2n+2}(s) &=& \rappl{\labs y{w_{2n+1}(s)}}{\mset{}}
	\end{eqnarray*}
	Then for all $n\in\naturals$, $w_{2n}(s)\pbresRed 11 u_n(s)$,
	$w_{2n+1}(s)\pbresRed 01 u_n(s)$, $w_{2n}(s)\pbresRed 01 v_n(s)$,
	and $w_{2n+1}(s)\pbresRed 11 v_n(s)$.
	Then for instance
	\[
	\sum_{n\in\naturals} w_{2n}(s)\splitVariant\pbresRed 11\sum_{n\in\naturals} u_{n}(s)
	\quad\text{and}\quad
	\sum_{n\in\naturals} w_{2n}(s)\splitVariant\pbresRed 01\sum_{n\in\naturals} v_{n}(s)\]
	but we know from Example \ref{example:presRed:not:confluent} that this pair 
	of reductions cannot be closed in general.
\end{example}

\subsection{Boundedly nested redexes}

\label{subsection:pbdresRed}

From the previous subsection, it follows that
bounding the length of chains of immediately nested redexes allows to tame
the size collapse of resource expressions under reduction,
but we need to further restrict this notion 
in order to keep it stable under unions of fired redex sets.
A natural answer is to require a bound on the depth of the nesting of 
fired redexes, regardless of the distance between them:
\begin{definition}
	We define a family of relations $\pars{\pbdresRed b}_{b\in{\naturals}}$
	inductively as follows:
	\begin{itemize}
		\item $x\pbdresRed b x$ for all $b\in\naturals$;
		\item $\labs x s\pbdresRed b \labs x{σ'}$ if $s\pbdresRed b σ'$;
		\item $\rappl s{\ms t}\pbdresRed {b} \rappl{σ'}{\ms {τ'}}$
			if $s\pbdresRed b σ'$ and $\ms t\pbdresRed b \ms{τ'}$;
		\item $\mset{s_1,\dotsc,s_r}\pbdresRed b \mset{σ'_1,\dotsc,σ'_r}$
			if $s_i\pbdresRed b σ'_i$ for all $i\in\set{1,\dotsc,r}$;
		\item $\rappl{\labs xs}{\ms t}\pbdresRed b \lsubst{σ'}x{\ms{τ'}}$ 
			if $b\ge 1$, $s\pbdresRed {b-1} σ'$ and $\ms t\pbdresRed {b-1} \ms{τ'}$.
	\end{itemize}
\end{definition}

Intuitively, we have $e\pbdresRed b ε'$ iff $e\presRed ε'$ and 
every branch of $e$ (seen as a rooted tree) crosses at most $b$ fired redexes.
In particular it should be clear that if $e\pbdresRed b ε'$
then $e\pbresRed b b ε'$, and moreover $e\pbdresRed {b'}ε'$
for all $b'\ge b$.
Moreover observe that $e\pbdresRed{\height e}ε'$ whenever $e\presRed ε'$, hence
$\mathord{\presRed} = \Union_{b\in\naturals} \mathord{\pbdresRed b}$.

Write $e\pbdOneStepGenerates{b} e'$ if $e\pbdresRed b ε'$ with $e'\in \support{ε'}$.
If $e\pbdOneStepGenerates{b} e'$, then $e\pbOneStepGenerates bb e'$ and we 
thus know that $\size{e}\le\growthBoundToplevel b{\size{e'}}$.
In this special case, we can in fact provide a much better bound:
\begin{lemma}
	\label{lemma:pbdresRed:size}
	If $e\pbdOneStepGenerates{b}e'$ then $\size{e}\le 4^b\size{e'}$.
\end{lemma}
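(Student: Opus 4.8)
The plan is to argue by induction on the derivation of $e\pbdresRed b ε'$ with $e'\in\support{ε'}$, in the same manner as the preceding size estimate for $\pbresRed mk$, but now exploiting that the bound $b$ drops by \emph{exactly} one whenever the reduction passes under a fired redex and is otherwise unchanged. In the variable, abstraction, non-redex application and monomial cases every premise carries the same bound $b$, so the induction hypothesis immediately supplies the factor $4^b$ and it only remains to check that the additive constant contributed by the outermost constructor is absorbed. This follows from $4^b\ge 1$ and $\size{e'}\ge 1$; for instance, in the case $e=\rappl s{\ms t}$ with $s$ not an abstraction, one computes $\size e=1+\size s+\size{\ms t}\le 1+4^b\pars{\size{s'}+\size{\ms{t'}}}=1+4^b\pars{\size{e'}-1}\le 4^b\size{e'}$, and the remaining easy cases are analogous.

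The crux is the redex case $e=\rappl{\labs xs}{\ms t}$, where $ε'=\lsubst{σ'}x{\ms{τ'}}$ with $b\ge 1$, $s\pbdresRed{b-1}σ'$ and $\ms t\pbdresRed{b-1}\ms{τ'}$; here $e'\in\support{\lsubst{s'}x{\ms{t'}}}$ for some $s'\in\support{σ'}$ and $\ms{t'}\in\support{\ms{τ'}}$, so $s\pbdOneStepGenerates{b-1}s'$ and $\ms t\pbdOneStepGenerates{b-1}\ms{t'}$, and the induction hypothesis gives $\size s\le 4^{b-1}\size{s'}$ and $\size{\ms t}\le 4^{b-1}\size{\ms{t'}}$. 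The decisive extra input is Lemma~\ref{lemma:lsubst:size}, which yields $\size{e'}\ge\max\set{\size{s'},\size{\ms{t'}}}$ (and $\size{e'}\ge 1$ since $e'$ is a resource term). The computation then runs
\begin{align*}
	\size e &= 2+\size s+\size{\ms t}\\
	&\le 2+4^{b-1}\size{s'}+4^{b-1}\size{\ms{t'}}\\
	&\le 2+2\cdot 4^{b-1}\size{e'}\\
	&\le 4\cdot 4^{b-1}\size{e'}=4^b\size{e'},
\end{align*}
the last step using $2\le 2\cdot 4^{b-1}\size{e'}$.

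I expect the only delicate point to be this final case: one has to confirm that the constant $2$ coming from the two constructors of $\rappl{\labs xs}{\ms t}$ is dominated by the gap between $4^{b-1}$ and $4^b$, which is precisely why the base $4$ (and not a smaller constant) occurs in the statement. Lemma~\ref{lemma:lsubst:size} is exactly what allows both $\size{s'}$ and $\size{\ms{t'}}$ to be replaced by the single quantity $\size{e'}$, and the bound $\size{e'}\ge 1$ is what absorbs the additive constant; note that, in contrast with the diamond property to come, no hypothesis on the semiring $\rigS$ is required here.
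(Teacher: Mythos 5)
Your proof is correct and follows essentially the same route as the paper's: induction on the derivation, absorbing the additive constant from the outer constructor using $4^b\ge 1$ and $\size{e'}\ge 1$ in the non-redex cases, and in the redex case invoking Lemma~\ref{lemma:lsubst:size} to get $\size{e'}\ge\max\set{\size{s'},\size{\ms{t'}}}$ so that $2+2\cdot 4^{b-1}\size{e'}\le 4^b\size{e'}$. The paper merely phrases the redex case in the reverse direction (expanding $4^b\size{e'}$ and bounding it below by $\size e$, tracking $n=\card{\ms{t'}}$ explicitly), which is the same estimate.
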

\begin{proof}
	By induction on the reduction $e\pbdresRed b{ε'}$
	such that $e'\in\support{ε'}$.

	If $e=x=ε'$ then $e'=x$ and $\size{e}=1\le 4^b =4^b\size{e'}$.

	If $e=\labs xs$ and $ε'=\labs x{σ'}$ with $s\pbdresRed {b} σ'$,
	then $e'=\labs x{s'}$ with $s\pbdOneStepGenerates bs'$.
	By induction hypothesis, $\size{s}\le 4^{b} \size{s'}$.
	Then
	$\size{e}
	=\size s +1
	\le 4^{b}\size{s'}+1
	\le 4^b\pars{\size{s'}+1}
	=4^b\size{e'}$.

	If $e=\rappl s{\ms t}$, $ε'=\rappl{σ'}{\ms{τ'}}$,
	$s\pbdresRed {b} σ'$ and $\ms t\pbdresRed {b} \ms{τ'}$,
	then $e'=\rappl{s'}{\ms{t'}}$ with 
	$s\pbdOneStepGenerates{b}s'$ and $\ms{t}\pbdOneStepGenerates{b}\ms{t'}$.
	By induction hypothesis, $\size{s}\le 4^{b}\size{s'}$ and
	$\size{\ms t}\le 4^{b}\size{\ms {t'}}$.
	Then 
	$ \size{e}
	= \size{s}+\size{\ms t}+1
	\le 4^{b}\size{s'}+4^{b}\size{\ms{t'}}+1
	\le 4^{b}\pars{\size{s'}+\size{\ms{t'}}+1}
	= 4^b\size{e'}
	$.

	If $e=\mset{s_1,\dotsc,s_r}$,
	$ε'=\mset{σ'_1,\dotsc,σ'_r}$ and
	$s_i\pbdresRed{b} σ'_i$
	for all $i\in\set{1,\dotsc,r}$,
	then $e'=\mset{s'_1,\dotsc,s'_r}$ 
	with $s_i\pbdOneStepGenerates{b}s'_i$
	for all $i\in\set{1,\dotsc,r}$.
	By induction hypothesis,
	$\size{s_i}\le 4^{b}\size{s'_i}$
	for all $i\in\set{1,\dotsc,r}$ and then
	$ \size{e}
	= \sum_{i=1}^r \size{s_i}
	\le \sum_{i=1}^r 4^b{\size{s'_i}}
	= 4^b{\size{e'}}
	$.

	If $e=\rappl {\labs xs}{\ms t}$,
	$ε'=\lsubst{σ'}x{\ms{τ'}}$, $b>0$,
	$s\pbdresRed{b-1}σ'$ and
	$\ms t\pbdresRed {b-1} \ms{τ'}$,
	then there are $s'\in\support{σ'}$ and 
	$\ms{t'}\in\support{\ms{τ'}}$ such that 
	$e'\in\support{\lsubst {s'}x{\ms{t'}}}$.
	In particular, $s\pbdOneStepGenerates{b-1}s'$
	and $\ms t\pbdOneStepGenerates{b-1}\ms{t'}$ and,
	by induction hypothesis,
	$\size{s}\le 4^{b-1}\size{s'}$ and
	$\size{\ms t}\le 4^{b-1}\size{\ms {t'}}$.
	Writing $n=\occnum x{s'}=\card{\ms{t'}}$,
	we have:
	\begin{align*}
	4^b\size{e'}
	&= 4^b\pars{\size{s'}+\size{\ms{t'}}-n}
	\\
	&= 4^{b-1}\pars{\size{s'}+\size{\ms{t'}}+3\size{s'}+3\size{\ms{t'}}-4n}
	&&\text{($n\le\size{s'}$ and $n\le\size{\ms{t'}}$)}
	\\
	&\ge 4^{b-1}\pars{\size{s'}+\size{\ms{t'}}+2\size{s'}}
	&&\text{($\size{s'}\ge 1$)}
	\\
	&\ge 4^{b-1}\pars{\size{s'}+\size{\ms{t'}}}+2
	\\
	&\ge \size{s}+\size{\ms t}+2
	\\
	&=\size{e}.&&\hfill\qedhere
	\end{align*}
\end{proof}

Like for parallel reduction (Definition \ref{definition:presRed}),
we extend each $\pbdresRed b$ to sums of resource expressions by linearity:
$ε\pbdresRed bε'$ if $ε=\sum_{i=1}^n e_i$ and $ε'=\sum_{i=1}^n ε'_i$ 
with $e_i\pbdresRed bε'_i$ for all $i\in\set{1,\dotsc,n}$.
Again, because all term constructors are linear, the reduction rules extend 
naturally to finite sums of resource expressions:
for instance,$\rappl{\labs x{σ}}{\ms {τ}}\pbdresRed b \lsubst{σ'}x{\ms{τ'}}$ 
as soon as $b\ge 1$, $σ\pbdresRed {b-1} σ'$ and $\ms {τ}\pbdresRed {b-1}\ms{τ'}$.

The relations $\pbdresRed b$ are then stable under unions of
families of fired redexes, avoiding pitfalls such as that of Example
\ref{example:pbresRed:not:confluent}.

\begin{lemma}
	If $e\pbdresRed {b_0} ε'$ and $\ms u\pbdresRed {b_1} \ms{υ'}$ then
	$\lsubst ex{\ms u}\pbdresRed{b_0+b_1} \lsubst {ε'}x{\ms{υ'}}$.
\end{lemma}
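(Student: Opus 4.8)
The plan is to proceed by induction on the derivation of $e\pbdresRed{b_0}ε'$, keeping the second reduction $\ms u\pbdresRed{b_1}\ms{υ'}$ universally quantified, so that the induction hypothesis may be applied to the sub-monomials obtained when $\ms u$ is split by a substitution. Throughout I rely on the vector extension of Lemma \ref{lemma:lsubst:decomposition} to push $\lsubst{\cdot}x{\ms u}$ through the syntactic constructs, using the fact recorded there that the cardinality conditions on partitions may be omitted: every sum I manipulate thus ranges over \emph{all} partitions of $\set{1,\dotsc,\card{\ms u}}$, the summands with mismatched cardinalities being $0$. Since parallel reduction preserves the cardinality of monomials and creates no free variables (Lemma \ref{lemma:reduction:size}), every monomial in $\support{\ms{υ'}}$ has cardinality $\card{\ms u}$ and $\fv{\ms{υ'}}\subseteq\fv{\ms u}$; this makes the corresponding sums on both sides range over the same set of partitions, legitimises the freshness choices for bound variables, and ensures $x\notin\fv{\ms{υ'}}$ so that $\lsubst{ε'}x{\ms{υ'}}$ is defined.

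The base and congruence cases are routine. If $e$ is a variable, then $ε'=e$ and both $\lsubst ex{\ms u}$ and $\lsubst{ε'}x{\ms{υ'}}$ are either $0$ or equal to a single reduct $υ'$ of $u$ (when $e=x$ and $\ms u=\mset u$); in that situation $u\pbdresRed{b_1}υ'$ gives $u\pbdresRed{b_0+b_1}υ'$ by monotonicity of the bound (recalled after the definition of $\pbdresRed b$). For an abstraction $\labs zs$, with $z$ chosen fresh for $x$ and $\ms u$, I push the substitution inside, apply the induction hypothesis to $s$, and close with the abstraction rule. For an application $\rappl s{\ms t}$ reduced by the congruence rule, and for a monomial $\mset{s_1,\dotsc,s_r}$, I split $\ms u$ along a partition; each sub-monomial still reduces, $\ms u_I\pbdresRed{b_1}\ms{υ'}_I$, by the monomial congruence rule, so the induction hypothesis yields $\lsubst sx{\ms u_I}\pbdresRed{b_0+b_1}\lsubst{σ'}x{\ms{υ'}_I}$ and likewise for the remaining components; reassembling by the relevant congruence rule and summing over partitions reconstructs $\lsubst{ε'}x{\ms{υ'}}$ through Lemma \ref{lemma:lsubst:decomposition}.

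The main case is the fired redex $e=\rappl{\labs zs}{\ms t}\pbdresRed{b_0}\lsubst{σ'}z{\ms{τ'}}=ε'$, where $b_0\ge1$, $s\pbdresRed{b_0-1}σ'$ and $\ms t\pbdresRed{b_0-1}\ms{τ'}$ (here $z$ is the redex variable, taken distinct from $x$ and fresh for $\ms u$). Decomposing with Lemma \ref{lemma:lsubst:decomposition} I obtain $\lsubst ex{\ms u}=\sum_{(I,J)}\rappl{\labs z{\lsubst sx{\ms u_I}}}{\lsubst{\ms t}x{\ms u_J}}$, a sum of redexes. For each partition the induction hypothesis, applied to the sub-derivations with bound $b_0-1$, gives $\lsubst sx{\ms u_I}\pbdresRed{b_0-1+b_1}\lsubst{σ'}x{\ms{υ'}_I}$ and $\lsubst{\ms t}x{\ms u_J}\pbdresRed{b_0-1+b_1}\lsubst{\ms{τ'}}x{\ms{υ'}_J}$, so firing the redex (the redex rule is applicable since $b_0+b_1\ge1$) yields $\rappl{\labs z{\lsubst sx{\ms u_I}}}{\lsubst{\ms t}x{\ms u_J}}\pbdresRed{b_0+b_1}\lsubst{\pars{\lsubst{σ'}x{\ms{υ'}_I}}}z{\pars{\lsubst{\ms{τ'}}x{\ms{υ'}_J}}}$.

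The remaining, and only delicate, obstacle is to recognise that summing these reducts over all partitions reproduces $\lsubst{ε'}x{\ms{υ'}}=\lsubst{\pars{\lsubst{σ'}z{\ms{τ'}}}}x{\ms{υ'}}$. This is exactly the commutation of multilinear substitutions, Lemma \ref{lemma:lsubst:commute}, instantiated with inner variable $z$, inner monomial $\ms{τ'}$, outer variable $x$ and outer monomial $\ms{υ'}$; its hypothesis $z\notin\fv{\ms{υ'}}$ holds by the freshness of $z$. Read with its cardinality conditions omitted (which is licit for the same reason as in Lemma \ref{lemma:lsubst:decomposition}, the off-cardinality terms vanishing on both sides), that lemma and the decomposition of $\lsubst ex{\ms u}$ are indexed by the very same partitions of $\set{1,\dotsc,\card{\ms{υ'}}}$ and match termwise. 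Extending $\pbdresRed{b_0+b_1}$ to sums by linearity then gives $\lsubst ex{\ms u}\pbdresRed{b_0+b_1}\lsubst{ε'}x{\ms{υ'}}$, which closes this case and the induction.
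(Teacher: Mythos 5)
Your proof is correct and follows essentially the same route as the paper's: induction on the derivation of $e\pbdresRed{b_0}ε'$ with the second reduction kept general, pushing the substitution through the constructs via Lemma \ref{lemma:lsubst:decomposition} (summing over partitions of $\ms u$), and closing the fired-redex case by applying the induction hypothesis at bound $b_0-1$ and invoking the commutation of multilinear substitutions, Lemma \ref{lemma:lsubst:commute}. The only cosmetic difference is that you make explicit the monotonicity step in the variable case and the bookkeeping about cardinalities and freshness, which the paper leaves implicit.
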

\begin{proof}
	Write $\ms u=\mset{u_1,\dotsc,u_n}$. Then we can 
	write $\ms {υ'}=\mset{υ'_1,\dotsc,υ'_n}$
	with $u_i\pbdresRed {b_1} {υ'}_i$
	for all $i\in\set{1,\dotsc,n}$.
	Recall that whenever
	$I=\set{i_1,\dotsc,i_k}\subseteq\set{1,\dotsc,n}$ with $\card I=k$,
	we write $\ms u_I=\mset{u_{i_1},\dotsc,u_{i_k}}$
	and $\ms {υ'}_I=\mset{υ'_{i_1},\dotsc,υ'_{i_k}}$.

	The proof is by induction on the reduction $e\pbdresRed {b_0} ε'$.
	If $e=y=ε'$ then:
	\begin{itemize}
		\item if $y=x$ and $n=1$ then 
			$\lsubst ex{\ms u}=u_1\pbdresRed{b_1}υ'_1=\lsubst{ε'}x{\ms{υ'}}'$;
		\item if $y\not=x$ and $\ms u=\mset {}$ then
			$\lsubst ex{\ms u}=y\pbdresRed {0} y=\lsubst{ε'}x{\ms{υ'}}$;
		\item otherwise, 
			$\lsubst ex{\ms u}=0\pbdresRed {0} 0=\lsubst{ε'}x{\ms{υ'}}$.
	\end{itemize} 

	If $e=\labs ys$ (choosing $y\not=x$ and $y\not\in\fv{\ms u}$),
	$ε'=\labs y{σ'}$ and $s\pbdresRed{b_0} σ'$ then, by induction hypothesis,
	$\lsubst sx{\ms u}\pbdresRed{b_0+b_1} \lsubst {σ'}x{\ms {υ'}}$.
	We obtain:
	$
	\lsubst ex{\ms u}
	=\labs y[]{\lsubst sx{\ms u}}
	\pbdresRed {b_0+b_1} \labs y[]{\lsubst {σ'}x{\ms {υ'}}}
	=\lsubst {ε'}x{\ms {υ'}}
	$.

	If $e=\rappl s{\ms t}$, $ε'=\rappl {σ'}{\ms {τ'}}$,
	$s\pbdresRed{b_0} σ'$ and
	$\ms t\pbdresRed{b_0} \ms {τ'}$ then,
	by induction hypothesis,
	$\lsubst sx{\ms u_I}\pbdresRed{b_0+b_1}\lsubst {σ'}x{\ms {υ'}_I}$ and
	$\lsubst {\ms t}x{\ms u_I}\pbdresRed{b_0+b_1}\lsubst {\ms{τ'}}x{\ms {υ'}_I}$,
	for all $I\subseteq\set{1,\dotsc,n}$.
	We obtain:
	\[
		\lsubst ex{\ms u}
		= \sum_{\substack{\text{$(I,J)$ partition}\\\text{of $\set{1,\dotsc,n}$}}}
			\rappl{\lsubst sx{\ms u_I}}{\lsubst {\ms t}x{\ms u_J}}
		\\
		\pbdresRed{b_0+b_1}\sum_{\substack{\text{$(I,J)$ partition}\\\text{of $\set{1,\dotsc,n}$}}}
			\rappl{\lsubst {σ'}x{\ms {υ'}_I}}{\lsubst {\ms{τ'}}x{\ms {υ'}_J}}
		=\lsubst {ε'}x{\ms {υ'}}.
	\]

	If $e=\mset{s_1,\dotsc,s_r}$, $ε'=\mset{σ'_1,\dotsc,σ'_r}$
	and $s_i\pbdresRed {b_0} σ'_i$ for all $i\in\set{1,\dotsc,r}$ then,
	by induction hypothesis,
	$\lsubst{s_i}x{\ms u_I}\pbdresRed{b_0+b_1}\lsubst{σ'_i}x{\ms{υ'}_I}$
	for all $i\in\set{1,\dotsc,r}$ and all $I\subseteq\set{1,\dotsc,n}$.
	We obtain:
	\begin{multline*}
		\lsubst ex{\ms u} =
		\sum_{\substack{\text{$(I_1,\dotsc,I_r)$ partition}\\\text{of $\set{1,\dotsc,n}$}}}
			\mset{\lsubst{s_1}x{\ms u_{I_1}},\dotsc,\lsubst{s_r}x{\ms u_{I_r}}}
		\\
		\pbdresRed{b_0+b_1}
		\sum_{\substack{\text{$(I_1,\dotsc,I_r)$ partition}\\\text{of $\set{1,\dotsc,n}$}}}
			\mset{\lsubst{σ'_1}x{\ms{υ'}_{I_1}},\dotsc,\lsubst{σ'_r}x{\ms{υ'}_{I_r}}}
		=\lsubst {ε'}x{\ms {υ'}}.
	\end{multline*}

	If $e=\rappl{\labs ys}{\ms t}$
	(choosing $y\not=x$ and $y\not\in\fv{\ms t}\union\fv{\ms u}$),
	$ε'=\lsubst{σ'}y{\ms{τ'}}$, $b_0\ge 1$,
	$s\pbdresRed {b_0-1} σ'$ and $\ms t\pbdresRed {b_0-1}\ms{τ'}$ then,
	by induction hypothesis, 
	$\lsubst sx{\ms u_I}\pbdresRed {b_0+b_1-1}\lsubst {σ'}x{\ms {υ'}_I}$ and
	$\lsubst {\ms t}x{\ms u_I}\pbdresRed {b_0+b_1-1}\lsubst {\ms{τ'}}x{\ms {υ'}_I}$,
	for all $I\subseteq\set{1,\dotsc,n}$.
	We obtain:
	\begin{multline*}
		\lsubst ex{\ms u}
		=\sum_{\substack{\text{$(I,J)$ partition}\\\text{of $\set{1,\dotsc,n}$}}}
			\rappl{\labs y{\lsubst sx{\ms u_I}}}{\lsubst {\ms t}x{\ms u_J}}
		\\
		\pbdresRed {b_0+b_1}
			\sum_{\substack{\text{$(I,J)$ partition}\\\text{of $\set{1,\dotsc,n}$}}}
			\lsubst{\pars{\lsubst {σ'}x{\ms {υ'}_I}}}y{\pars{\lsubst {\ms{τ'}}x{\ms {υ'}_J}}}
		=\lsubst {\pars{\lsubst{σ'}y{\ms{τ'}}}}x{\ms {υ'}}
		=\lsubst {ε'}x{\ms {υ'}}
	\end{multline*}
	using Lemma \ref{lemma:lsubst:commute}.
\end{proof}

\begin{lemma}
	\label{lemma:pbdresRed:confluent}
	Let $K$ be a finite set, and assume $ε\pbdresRed {b_k} ε'_k$ for all $k\in K$.
	Then, setting $b=\sum_{k\in K} b_k$, there is $ε''$ such that
	$ε'_k\pbdresRed {2^{\scriptstyle b_k}b} ε''$ for all $k\in K$.
\end{lemma}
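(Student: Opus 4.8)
The plan is to reduce to a single resource expression and then induct on it, the delicate case being the firing of a top-level redex. For the reduction step: since $ε\in\finiteResourceSums$ has fixed natural coefficients, it determines a canonical multiset of resource expressions along which every reduction $ε\pbdresRed{b_k}ε'_k$ must decompose, say $ε=\sum_{i=1}^N e_i$ and $ε'_k=\sum_{i=1}^N ε'_{k,i}$ with $e_i\pbdresRed{b_k}ε'_{k,i}$ for all $i$. It thus suffices to prove the single-expression version of the statement (for $e\in\resourceExpressions$), since applying it to each $e_i$ and summing the resulting common reducts $ε''_i$ yields a common reduct $ε''=\sum_{i=1}^N ε''_i$ of the $ε'_k$ with the required bounds $2^{b_k}b$. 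I would prove this version by induction on $e$, mutually on terms and monomials.

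The variable, abstraction, non-redex application and monomial cases are routine. In each of them every reduction proceeds by congruence, reducing the immediate subexpressions with the same per-branch bound $b_k$; applying the induction hypothesis to those subexpressions (with the same global bound $b$) produces common reducts, which I recombine through the relevant constructor using the congruence rules extended to finite sums by linearity. Since no redex is fired and these constructors leave the nesting depth unchanged, the resulting bounds do not exceed $2^{b_k}b$.

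The crux is the redex case $e=\rappl{\labs xs}{\ms t}$, where each reduction either fires the top redex or treats $e$ by application congruence; I partition $K=K_0\sqcup K_1$ accordingly. Writing $c_k=b_k$ for $k\in K_0$ and $c_k=b_k-1$ for $k\in K_1$, each reduction reduces $s$ with bound $c_k$ to some $σ'_k$ and $\ms t$ with bound $c_k$ to some $\ms{τ}'_k$, assembling $ε'_k$ as $\rappl{\labs x{σ'_k}}{\ms{τ}'_k}$ when $k\in K_0$ and as $\lsubst{σ'_k}x{\ms{τ}'_k}$ when $k\in K_1$. Setting $c=\sum_{k\in K}c_k$, the induction hypothesis gives common reducts $σ''$ of the $σ'_k$ and $\ms{τ}''$ of the $\ms{τ}'_k$, with $σ'_k\pbdresRed{2^{c_k}c}σ''$ and $\ms{τ}'_k\pbdresRed{2^{c_k}c}\ms{τ}''$. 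If $K_1\neq\emptyset$ I take $ε''=\lsubst{σ''}x{\ms{τ}''}$: for $k\in K_1$, the previous (substitution) lemma gives $ε'_k\pbdresRed{2^{c_k}c+2^{c_k}c}ε''$, and $2^{c_k+1}c=2^{b_k}c\le 2^{b_k}b$ since $c\le b$; for $k\in K_0$, firing the redex gives $ε'_k\pbdresRed{2^{c_k}c+1}ε''$, and the slack $b-c=\card{K_1}\ge 1$ yields $2^{b_k}b-(2^{b_k}c+1)=2^{b_k}(b-c)-1\ge 2^{b_k}-1\ge 0$, so the bound is again at most $2^{b_k}b$. If instead $K_1=\emptyset$, then $c=b$ and no reduction fires the redex, so I keep it unfired, taking $ε''=\rappl{\labs x{σ''}}{\ms{τ}''}$, which each $ε'_k$ reaches by congruence at bound $2^{b_k}c=2^{b_k}b$. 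In both subcases I finally raise all bounds to exactly $2^{b_k}b$ using that $e\pbdresRed{d}ε'$ entails $e\pbdresRed{d'}ε'$ whenever $d'\ge d$.

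The main obstacle is exactly this bound bookkeeping in the redex case. The target shape $2^{b_k}b$ is engineered so that the additive cost $2^{c_k}c+2^{c_k}c$ charged by the substitution lemma collapses to $2^{b_k}c$, while the observation $b-c=\card{K_1}$ supplies precisely the slack $2^{b_k}\card{K_1}\ge 1$ needed to absorb the extra $+1$ that firing the redex costs in the $K_0$ reductions — but only when $K_1\neq\emptyset$, which is exactly why the unfired common reduct must be chosen when no reduction fires the redex.
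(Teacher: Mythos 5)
Your proof is correct and follows essentially the same route as the paper's: reduce to a single resource expression by linearity, induct on the structure of the reductions, and in the redex case split $K$ into firing/non-firing parts, apply the induction hypothesis with the decremented bounds $c_k$ (the paper's $b'=b-\card{K_1}$ is your $c$), and use the substitution lemma to close the $K_1$ side. Your explicit slack computation $2^{b_k}(b-c)-1\ge 0$ is just a more detailed rendering of the paper's inequality $2^{b_k}b'+1\le 2^{b_k}b$, and the case distinction on $K_1=\emptyset$ is handled identically.
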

\begin{proof}
	By the linearity of the definition of reduction on finite sums,
	it is sufficient to address the case of $ε=e\in\resourceExpressions$.
	The proof is then by induction on the family of reductions 
	$e\pbdresRed {b_k} ε'_k$.

	If $e=x=ε'_k$ for all $k\in K$, then we set $ε''=x$.

	If $e=\labs xs$, and $ε'_k=\labs x{σ'_k}$ with
	$s\pbdresRed {b_k} σ'_k$ for all $k\in K$ then,
	by induction hypothesis, we have $σ''$ such that
	$σ'_k\pbdresRed {2^{\scriptstyle b_k}b} σ''$ for all $k\in K$,
	and then we set $ε''=\labs x{σ''}$.

	If $e=\mset{s_1,\dotsc,s_r}$ and $ε'_k=\mset{σ'_{1,k},\dotsc,σ'_{r,k}}$ 
	with $s_j\pbdresRed {b_k} σ'_{j,k}$ for all $j\in\set{1,\dotsc,r}$ and $k\in K$ then,
	by induction hypothesis, we have $σ''_j$ such that  
	$σ'_{j,k}\pbdresRed {2^{\scriptstyle b_k}b} σ''_j$
	for all $j\in\set{1,\dotsc,r}$ and $k\in K$, and then we set 
  $ε''=\mset{σ''_1,\dotsc,σ''_r}$.

	Finally, assume $K=K_0+K_1$, $e=\rappl {\labs xs}{\ms t}$ and:
	\begin{itemize}
		\item for all $k\in K_0$, $ε'_k=\rappl {\labs x{σ'_k}}{\ms {τ'}_k}$ 
			with $s\pbdresRed {b_k} σ'_k$ and $\ms t\pbdresRed {b_k} \ms {τ'}_k$;
		\item for all $k\in K_1$, $b_k\ge 1$ and $ε'_k=\lsubst {σ'_k} x{\ms {τ'}_k}$
			with $s\pbdresRed {b_k-1} σ'_k$ and $\ms t\pbdresRed {b_k-1} \ms {τ'}_k$.
	\end{itemize}
	Write $b'=b-\card{K_1}$.
	By induction hypothesis, there are $σ''$ and $\ms{τ''}$ such that,
	for all $k\in K_0$, $σ'_k\pbdresRed{2^{\scriptstyle b_k}b'}σ''$
	and $\ms{τ'}_k \pbdresRed{2^{\scriptstyle b_k}b'}\ms{τ''}$, and
	for all $k\in K_1$, $σ'_k\pbdresRed[\mbig]{2^{\scriptstyle (b_k-1)}b'}σ''$
	and $\ms{τ'}_k \pbdresRed[\mbig]{2^{\scriptstyle (b_k-1)}b'}\ms{τ''}$.

	If $K_1=\emptyset$ then $b=b'$ and we set $ε''=\rappl{\labs x{σ''}}{\ms{τ''}}$:
	we obtain $ε'_k\pbdresRed{2^{\scriptstyle b_k}b}ε''$, for all $k\in K=K_0$.

	Otherwise, $b>b'$ and we set $ε''=\lsubst{σ''}x{\ms{τ''}}$ so that:
	\begin{itemize}
		\item for all $k\in K_0$, 
			$ε'_k=\rappl {\labs x{σ'_k}}{\ms {τ'}_k}\pbdresRed {2^{\scriptstyle b_k}b'+1} σ'_k$
			with $2^{b_k}b'+1\le 2^{b_k}b$;
		\item for all $k\in K_1$, 
			by the previous lemma,
			$ε'_k=\lsubst {σ'_k}x{\ms {τ'}_k}\pbdresRed {2^{\scriptstyle b_k}b'} σ'_k$
			and $2^{b_k}b'<2^{b_k}b$.\qedhere
	\end{itemize}
\end{proof}

We already know $\vpresRed$ is not confluent,
and the counter examples we provided actually 
show that no single $\splitVariant\pbdresRed b$ is confluent either.
Setting\footnote{
	Our notation is somehow abusive as $\vpbdresRed$ 
	is not of the form described in 
	Definition \ref{definition:splitVariant}:
	there should not be any ambiguity as we have not 
	defined any relation $\pbdresRed{\partial}$.
	Similarly, we may also write 
	$\vpbdresRed[\fE]$ for 
	$\Union_{b\in\naturals}\mathord{\splitVariant[\fE]\pbdresRed b}$
	in the following.
}
\[\mathord{\vpbdresRed}\eqdef\pars{\Union_{b\in\naturals}\mathord{\splitVariant\pbdresRed b}}\subseteq\resourceVectors\times\resourceVectors\]
however, we will obtain a strongly confluent reduction relation, under the assumption
that $\rigS$ has the following additive splitting property:\footnote{
	The additive splitting property was previously used by Carraro, Ehrhard and
	Salibra \cite{ces:multiplicities,carraro:phd} in their study of linear logic
	exponentials with infinite multiplicities. There is no clear connection
	between that work and our present contributions, though.
}
\begin{definition}
\label{definition:splitting}
We say $\rigS$ has the \emph{additive splitting property} if:
whenever $a_1+a_2=b_1+b_2\in\rigS$, there exists 
$c_{1,1},c_{1,2},c_{2,1},c_{2,2}\in\rigS$ such that
$a_i=c_{i,1}+c_{i,2}$ and $b_j=c_{1,j}+c_{2,j}$ for $i,j\in\set{1,2}$.
\end{definition}
This property is satisfied by any ring, but also by 
the usual semirings of non-negative numbers ($\naturals$, 
$\rationals^+$, \emph{etc.}) as well as booleans.
We will in fact rely on the following generalization of the property to finite
families of finite sums of any size:
\begin{lemma}
	\label{lemma:additive:splitting}
	Assume $\rigS$ has the additive splitting property.
	Let $a\in\rigS$, $J_1,\dotsc,J_n$ be finite sets and,
	for all $i\in\set{1,\dotsc,n}$,
	let $\pars{b_{i,j}}_{j\in J_i}\in\rigS^{J_i}$ be a family 
	such that $a=\sum_{j\in J_i} b_{i,j}$.
	Write $J=J_1\times\cdots\times J_n$ and, 
	for all $i\in\set{1,\dotsc,n}$, write $J'_i=J_1\times\cdots\times J_{i-1}\times J_{i+1}\times\cdots\times J_n$.
	Whenever $\vec \jmath'=(j_1,\dotsc,j_{i-1},j_{i+1},\dotsc,j_n)\in J'_i$
	and $j_i\in J_i$, write $\vec \jmath' \cdot_i j_i = (j_1,\dotsc,j_n)\in J$.
	Then there exists a family $\pars{c_{\vec\jmath}}\in\rigS^{J}$ 
	such that, for all $i\in\set{1,\dotsc,n}$ and all $j\in J_i$,
	$b_{i,j}=\sum_{\vec\jmath '\in J'_i}c_{\vec\jmath'\cdot_i j}$.
\end{lemma}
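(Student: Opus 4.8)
The plan is to read the statement as a coupling problem: one seeks an $n$-dimensional ``joint distribution'' $\pars{c_{\vec\jmath}}_{\vec\jmath\in J}$ whose $i$-th marginal is the prescribed family $\pars{b_{i,j}}_{j\in J_i}$, all marginals sharing the same total mass $a$. My strategy is to reduce the general statement, by induction on $n$, to the case of coupling two arbitrary finite families with equal sums; and to reduce that, in turn, to the defining additive splitting property of Definition \ref{definition:splitting}, which is the only place the hypothesis on $\rigS$ gets used. Throughout, the sets $J_i$ are taken nonempty (so that $J$ is nonempty).

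The key building block is a \emph{two-family} coupling statement: given $\sum_{j\in J_1}b_{1,j}=a=\sum_{k\in J_2}b_{2,k}$, there is a family $\pars{c_{j,k}}_{(j,k)\in J_1\times J_2}$ with $\sum_{k\in J_2}c_{j,k}=b_{1,j}$ for all $j$ and $\sum_{j\in J_1}c_{j,k}=b_{2,k}$ for all $k$. I would first treat the special case where one side is a binary sum (call it Step A): if $a=a_1+a_2=\sum_{j\in J}b_j$, then each $b_j$ splits as $b_j=c_{j,1}+c_{j,2}$ with $\sum_{j\in J}c_{j,t}=a_t$ for $t\in\set{1,2}$. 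Step A is proved by induction on $\card J$; for $\card J\ge 2$ one picks $j_0\in J$, lumps $b'=\sum_{j\in J\setminus\set{j_0}}b_j$, applies the additive splitting property to the identity $a_1+a_2=b'+b_{j_0}$ to obtain $b'=p_1+p_2$ and $b_{j_0}=q_1+q_2$ with $a_t=p_t+q_t$, then recurses on the smaller family $\pars{b_j}_{j\in J\setminus\set{j_0}}$, which sums to $b'=p_1+p_2$. The general two-family statement (Step B) then follows by induction on $\card{J_1}$: peel off $j_0\in J_1$, write $a=b_{1,j_0}+a'$ with $a'=\sum_{j\in J_1\setminus\set{j_0}}b_{1,j}$, use Step A on the family $\pars{b_{2,k}}_{k\in J_2}$ with target sums $b_{1,j_0}$ and $a'$ to produce a part $\pars{c_{j_0,k}}_{k}$ of total $b_{1,j_0}$ and a residual $\pars{c'_k}_{k}$ of total $a'$, and finally recurse on $\pars{b_{1,j}}_{j\neq j_0}$ coupled with $\pars{c'_k}_{k}$.

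The full statement is then proved by induction on $n$, the case $n=1$ being simply $c_{j}=b_{1,j}$ (the empty product $J'_1$ is a singleton). For the inductive step I apply the induction hypothesis to the first $n-1$ families, obtaining $\pars{d_{\vec\jmath'}}_{\vec\jmath'\in J_1\times\cdots\times J_{n-1}}$ with the correct first $n-1$ marginals, whence in particular $\sum_{\vec\jmath'}d_{\vec\jmath'}=a=\sum_{k\in J_n}b_{n,k}$. Identifying $J$ with $\pars{J_1\times\cdots\times J_{n-1}}\times J_n$ and applying Step B to the two families $\pars{d_{\vec\jmath'}}_{\vec\jmath'}$ and $\pars{b_{n,k}}_{k\in J_n}$ yields the sought $\pars{c_{\vec\jmath'\cdot_n k}}$ with $\sum_{k\in J_n}c_{\vec\jmath'\cdot_n k}=d_{\vec\jmath'}$ and $\sum_{\vec\jmath'}c_{\vec\jmath'\cdot_n k}=b_{n,k}$. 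The second identity is exactly the $n$-th marginal condition; the first ensures that summing $c$ over $k\in J_n$ collapses it back to $d$, so that each of the first $n-1$ marginals of $c$ coincides with the corresponding marginal of $d$ and is therefore $b_{i,j}$ by the induction hypothesis.

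I expect the only real difficulty to be notational: keeping the product-index bookkeeping straight (the insertion operation $\vec\jmath'\cdot_i j$, the ``slice'' sums $\sum_{\vec\jmath\,:\,j_i=j}$, and the identification of $J$ with $\pars{J_1\times\cdots\times J_{n-1}}\times J_n$), and verifying that every partial marginal is preserved along the induction. The algebra itself is light, since all genuine use of the hypothesis on $\rigS$ is confined to the single application of additive splitting inside Step A.
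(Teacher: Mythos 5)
Your proof is correct and follows essentially the same route as the paper's (which is only sketched there): an outer induction on $n$ whose step is resolved by inductions on the cardinalities of the index sets, with the binary additive splitting property as the sole point where the hypothesis on $\rigS$ is used. Your explicit restriction to nonempty $J_i$ is a genuine necessity (with an empty factor the conclusion can fail, e.g.\ over a ring), a point the paper's one-line proof leaves implicit.
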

\begin{proof}
	By induction on $n$, and then on $\card{J_n}$ for $n>0$,
	using the binary additive splitting property to enable the induction.
\end{proof}

\begin{lemma}
	\label{lemma:vpbdresRed:diamond}
	Assume $\rigS$ has the additive splitting property
	and fix a $\presRed$-reduction structure $\fE$.
	For all finite set $K$ and all reductions $ε\vpbdresRed[\fE] ε'_k$ for $k\in K$,
	there is $ε''$ such that $ε'_k\vpbdresRed[\fE] ε''$ for all $k\in K$.
\end{lemma}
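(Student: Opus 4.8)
The plan is to unfold the definition of $\vpbdresRed[\fE]$ on both sides, realign the various decompositions of $ε$ coming from the different $k\in K$ into a single common refinement by means of the additive splitting property, and then close the resulting family of single-expression reductions termwise using Lemma~\ref{lemma:pbdresRed:confluent}. We may assume $K\neq\emptyset$, the empty case being trivial.

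First I would fix, for each $k\in K$, a bound $b_k\in\naturals$ with $ε\splitVariant[\fE]\pbdresRed{b_k}ε'_k$. Using the characterization of $\splitVariant[\fE]$ recalled just before Subsection~\ref{subsection:pbresRed}, and then breaking each finite sum into its individual summands, I may assume a resource support $\calE_k\in\fE$, a summable family $\pars{e_{k,i}}_{i\in I_k}$ of expressions in $\calE_k$, scalars $\pars{a_{k,i}}_{i\in I_k}$ and a summable family $\pars{ε'_{k,i}}_{i\in I_k}$ such that $ε=\sum_{i\in I_k}a_{k,i}\sm e_{k,i}$, $ε'_k=\sum_{i\in I_k}a_{k,i}\sm ε'_{k,i}$, and $e_{k,i}\pbdresRed{b_k}ε'_{k,i}$ for all $i\in I_k$.

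The crux, and what I expect to be the main conceptual obstacle, is that these $\card K$ decompositions of the \emph{same} vector $ε$ need not agree: a fixed expression may be split into different groups of coefficients for different $k$, so the reductions cannot be matched up directly. This is exactly what additive splitting repairs. For $e\in\resourceExpressions$ and $k\in K$ put $J_{k,e}\eqdef\set{i\in I_k\st e_{k,i}=e}$, which is finite by summability, so that $ε_e=\sum_{i\in J_{k,e}}a_{k,i}$ for every $k$. Applying Lemma~\ref{lemma:additive:splitting} to the value $a=ε_e$ and the $\card K$ families $\pars{a_{k,i}}_{i\in J_{k,e}}$ yields scalars $\pars{c^e_{\vec\jmath}}_{\vec\jmath\in\prod_{k}J_{k,e}}$ with $a_{k,i}=\sum_{\vec\jmath'}c^e_{\vec\jmath'\cdot_k i}$ for all $k$ and all $i\in J_{k,e}$; in particular $ε_e=\sum_{\vec\jmath}c^e_{\vec\jmath}$. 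Now each tuple $\vec\jmath=\pars{i_k}_{k\in K}\in\prod_k J_{k,e}$ selects, for every $k$, one reduction $e\pbdresRed{b_k}ε'_{k,i_k}$ out of the same expression $e$; setting $b=\sum_{k\in K}b_k$ and applying Lemma~\ref{lemma:pbdresRed:confluent} to this finite family, I obtain a common reduct $ε''_{e,\vec\jmath}$ with $ε'_{k,i_k}\pbdresRed{2^{b_k}b}ε''_{e,\vec\jmath}$ for every $k$. I then set $ε''\eqdef\sum_{e}\sum_{\vec\jmath}c^e_{\vec\jmath}\sm ε''_{e,\vec\jmath}$ and claim $ε'_k\splitVariant[\fE]\pbdresRed{2^{b_k}b}ε''$ for each $k$, whence $ε'_k\vpbdresRed[\fE]ε''$. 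The equality $ε'_k=\sum_{e,\vec\jmath}c^e_{\vec\jmath}\sm ε'_{k,i_k}$ follows by regrouping $ε'_k=\sum_e\sum_{i\in J_{k,e}}a_{k,i}\sm ε'_{k,i}$ through the splitting identity $a_{k,i}=\sum_{\vec\jmath'}c^e_{\vec\jmath'\cdot_k i}$, and $ε''$ has the matching shape by construction.

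What remains is the bookkeeping I expect to be the main technical burden: the summability side conditions and the fact that the step stays inside $\fE$. For summability of $\pars{ε''_{e,\vec\jmath}}$, fix $f$ and any $k_0\in K$: if $f\in\support{ε''_{e,\vec\jmath}}$ then $e\pbdOneStepGenerates{b_{k_0}}g\pbdOneStepGenerates{2^{b_{k_0}}b}f$ for some $g$, so by Lemma~\ref{lemma:pbdresRed:size} we get $\size e\le 4^{b_{k_0}}\,4^{2^{b_{k_0}}b}\size f$, while $\fv e\subseteq\fv{\calE_{k_0}}$ since $e\in\calE_{k_0}$; hence only finitely many $e$ occur, each with $\prod_k J_{k,e}$ finite, so the family is summable and $ε''$ is well defined. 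The same size bound applied to $e\pbdOneStepGenerates{b_k}g$ gives summability of the left-hand family $\pars{ε'_{k,i_k}}$. Finally, since $\fE$ is a $\presRed$-reduction structure and $\pbdresRed{b_k}\subseteq\presRed$, the support $\calE'_k\eqdef\Union\set{\support{ε'}\st e\in\calE_k\text{ and }e\R\presRed ε'}$ lies in $\fE$ and contains every $\support{ε'_{k,i}}$, so the decomposition of $ε'_k$ genuinely witnesses a $\splitVariant[\fE]\pbdresRed{2^{b_k}b}$-step. The delicate points are thus concentrated in the alignment \emph{via} additive splitting and in these uniform size bounds; everything else is routine.
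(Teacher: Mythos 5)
Your proof is correct and follows essentially the same route as the paper's: unfold the $\card K$ decompositions of $ε$, realign them on each fixed expression $e$ via the generalized additive splitting of Lemma~\ref{lemma:additive:splitting}, close each resulting finite family of reductions from $e$ with Lemma~\ref{lemma:pbdresRed:confluent}, and discharge the summability and reduction-structure side conditions using Lemma~\ref{lemma:pbdresRed:size}. The only cosmetic difference is that you use per-$k$ supports $\calE'_k$ where the paper takes a single union $\calE'$ over all $k$, which changes nothing.
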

\begin{proof}
	For all $k\in K$, there are $b_k\in\naturals$,
	a resource support $\calE_k\in\fE$,
	a set $I_k$ of indices,
	a family $\pars{a_{k,i}}_{i\in I_k}$ of scalars,
	and summable families $\pars{e_{k,i}}_{i\in I_k}\in\calE_k^{I_k}$ and 
	$\pars{ε'_{k,i}}_{i\in I_k}\in\finiteSums{\resourceExpressions}^{I_k}$ such that
	$ε=\sum_{i\in I_k} a_{k,i}\sm e_{k,i}$,
	$ε'_k=\sum_{i\in I_k} a_{k,i}\sm ε'_{k,i}$
	and $e_{k,i}\pbdresRed{b_k} ε'_{k,i}$ for all $i\in I_k$.

	Write $\calE=\set{e_{k,i}\st k\in K,\ i\in I_k}$: since 
	$\calE\subseteq \Union_{k\in K}\calE_k$ and $\fE$ is a resource structure,
	we have $\calE\in\fE$. Write $\calE'=\Union\set{\support{ε'_{k,i}}\st k\in K,\ i\in I_k}$:
	since $\fE$ is a reduction structure, we also have $\calE'\in\fE$.

	Now fix $e\in\resourceExpressions$ and write $a=ε_e$.
	For all $k\in K$, the set $I_{e,k}=\set{i_k\in I_k\st e_{k,i_k}=e}$
	is finite, and then $\sum_{i_k\in I_{e,k}} a_{k,i_k}=a$.
	Write $I_e=\prod_{k\in K} I_{e,k}$ and, for all $k\in K$,
	$K'_k= K\setminus \set k$ and
	$I'_{e,k}=\prod_{l\in K'_k} I_{e,l}$.
	If $\vec\imath=\pars{i_l}_{l\in K'_k}\in I'_{e,k}$
	and $i_k\in I_{e,k}$, write $\vec\imath\cdot_k i_k=\pars{i_k}_{k\in K}\in I_e$.
	By Lemma \ref{lemma:additive:splitting},
	we obtain a family of scalars $\pars{a'_{e,\vec\imath}}_{\vec\imath\in I_e}$
	such that, for all $k\in K$ and all $i_k\in I_{e,k}$,
	$a_{k,i_k}=\sum_{\vec\imath\in I'_{e,k}} a'_{e,\vec\imath\cdot_k i_k}$.
	Moreover, $a=\sum_{\vec\imath\in I_e} a'_{e,\vec\imath}$.

	Since each $I_e$ is finite, the family
	$\pars{e}_{e\in\resourceExpressions,\vec\imath\in I_e}$ is summable.
	Moreover, if we fix $k\in K$ and $i_k\in I_k$, 
	there are finitely many $e\in\resourceExpressions$ and $\vec\imath\in I'_{e,k}$
	such that $\vec\imath\cdot_k i_k\in I_e$:
	indeed in this case $e=e_{k,i_k}$.
	Since $\pars{ε'_{k,i_k}}_{i_k\in I_k}$ is summable too,
	it follows that $\pars{ε'_{k,i_k}}_{e\in\resourceExpressions,\vec\imath\in I_e}$ is summable.
	By associativity, we obtain
	\[
		\sum_{\substack{e\in\resourceExpressions\\\vec\imath \in I_e}} a'_{e,\vec\imath}\sm e
		= \sum_{e\in\resourceExpressions} \pars{\sum_{\vec\imath \in I_e} a'_{e,\vec\imath}} e
		= ε
	\]
	and
	\[
		\sum_{\substack{e\in\resourceExpressions\\\vec\imath \in I_e}} a'_{e,\vec\imath}\sm ε'_{k,i_k}
		= \sum_{i_k\in I_k} \pars{\sum_{\vec\imath\in I'_{e_{k,i_k},k}} a'_{e_{k,i_k},\vec\imath\cdot_ki_k}} ε'_{k,i_k}
		= ε'_k
	\]
	for all $k\in K$.

	Write $b=\sum_{k\in K} b_k$.
	For all $e\in\resourceExpressions$ and all $\vec\imath=\pars{i_k}_{k\in K} \in I_e$,
	we have $e\pbdresRed{b_k} ε'_{k,i_k}$ for all $k\in K$ hence
	Lemma \ref{lemma:pbdresRed:confluent} gives
	$ε''_{e,\vec\imath}\in\finiteSums{\resourceExpressions}$
	such that $ε'_{k,i_k}\pbdresRed {2^{\scriptstyle b_k}b} ε''_{e,\vec\imath}$ for all $k\in K$. 
	Moreover, for all $k\in K$ and $e''\in\resourceExpressions$,
	if $e''\in\support{ε''_{e,\vec\imath}}$ then there is $e'\in\support{ε'_{k,i_k}}$
	such that $e'\pbdOneStepGenerates {2^{\scriptstyle b_k}b} e''$, and then $e\pbdOneStepGenerates {b_k} e'$:
	it follows that $\size{e}\le 4^{b_k+2^{b_k}b}\size{e''}$ and $\fv{e}=\fv{e''}$.
	Since each $I_e$ is finite, there are finitely many pairs
	$\pars{e,\vec\imath}\in\sum_{e\in\resourceExpressions} I_e$
	such that $e''\in\support{ε''_{e,\vec\imath}}$.
	Hence the family $\pars{ε''_{e,\vec\imath}}_{e\in\resourceExpressions,\vec\imath\in I_e}$
	is summable. Recall moreover that $ε'_{k,i_k}\in\finiteSums{\calE'}$ for all $k\in K$ and $i_k\in I_k$:
	we obtain
	\[ ε'_k\splitVariant[\calE']\pbdresRed {2^{\scriptstyle b_k}b}
	\sum_{\substack{e\in\resourceExpressions\\\vec\imath\in I_e}}a'_{e,\vec\imath}\sm ε''_{e,\vec\imath}\]
	for all $k\in K$, which concludes the proof.
\end{proof}

\subsection{Bounded depth of substitution}

\label{subsection:pbsresRed}

In the previous subsection, we relied on the additive splitting property
to establish the confluence of $\vpbdresRed$: this is because there 
is no maximal way to $\splitVariant\pbdresRed b$-reduce a resource vector,
hence we must track precisely the different redexes that are fired 
in each reduction of a critical pair.

We can get rid of this hypothesis by considering a more uniform bound
on reductions. A first intuition would be to bound the depth at 
which redexes are fired, but as with $\pbresRed mk$ this boundedness 
condition is not preserved in residuals: rather, we have to bound 
the depth at which variables are substituted.
First recall from Definition \ref{definition:occ} that $\maxoccdepth
xs=\max\occdepth xs$ is the maximum depth of an occurrence of $x$ in
$s$. Then:

\begin{definition}
	We define a family of relations $\pars{\pbsresRed d}_{d\in{\naturals}}$
	inductively as follows:
	\begin{itemize}
		\item $e\pbsresRed 0 e$ for all $e\in\resourceExpressions$;
		\item $x\pbsresRed {d+1} x$ for all $x\in\variables$;
		\item $\labs x s\pbsresRed {d+1} \labs x{σ'}$ if $s\pbsresRed {d} σ'$;
		\item $\rappl s{\ms t}\pbsresRed {d+1} \rappl{σ'}{\ms {τ'}}$
			if $s\pbsresRed {d+1} σ'$ and $\ms t\pbsresRed {d} \ms{τ'}$;
		\item $\mset{s_1,\dotsc,s_r}\pbsresRed {d+1} \mset{σ'_1,\dotsc,σ'_r}$
			if $s_i\pbsresRed {d+1} σ'_i$ for all $i\in\set{1,\dotsc,r}$;
		\item $\rappl{\labs xs}{\ms t}\pbsresRed {d+1} \lsubst{σ'}x{\ms{τ'}}$ 
			if $\maxoccdepth xs\le d$,
			$s\pbsresRed {d} σ'$ and $\ms t\pbsresRed {d} \ms{τ'}$.
	\end{itemize}
\end{definition}

It should be clear that if $e\pbsresRed d ε'$
then $e\pbdresRed d ε'$, and moreover $e\pbsresRed {d'}ε'$
for all $d'\ge d$. We also have $e\pbsresRed {\height e} ε'$
as soon as $e\presRed ε'$.

\begin{definition}
	For all $e\in\resourceExpressions$ we define 
	the \emph{full parallel reduct $\fpbsReduct de$ at substitution depth $d$} of $e$ by induction on the pair $(d,e)$ as follows:
	\begin{align*}
		\fpbsReduct{0}{e}                        &\eqdef e\\
		\fpbsReduct{d+1}{x}                      &\eqdef x\\
		\fpbsReduct{d+1}{\labs xs}               &\eqdef \labs x{\fpbsReduct{d}{s}}\\
		\fpbsReduct{d+1}{\rappl{\labs xs}{\ms t}}&\eqdef
				\lsubst{\fpbsReduct{d}{s}}x{\fpbsReduct{d}{\ms t}}
			&&\text{(if $\maxoccdepth xs\le d$)}\\
		\fpbsReduct{d+1}{\rappl{s}{\ms t}}       &\eqdef \rappl{\fpbsReduct{d+1}{s}}{\fpbsReduct{d}{\ms t}}
			&&\text{(in the other cases)}\\
		\fpbsReduct{d+1}{\mset{s_1,\dotsc,s_n}}  &\eqdef \mset{\fpbsReduct{d+1}{s_1},\dotsc,\fpbsReduct{d+1}{s_n}}
	\end{align*}
	Then if $ε=\sum_{i=1}^n e_i\in\finiteResourceSums$,
	we set $\fpbsReduct{d}{ε}\eqdef\sum_{i=1}^n\fpbsReduct{d}{e_i}$.
\end{definition}

\begin{lemma}
	\label{lemma:fpbsReduct}
	For all $e\in\resourceExpressions$, $e\pbsresRed d \fpbsReduct d e$.
\end{lemma}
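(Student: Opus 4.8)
The plan is to prove the statement by a straightforward induction on the lexicographically ordered pair $(d,e)$ — the very measure that underlies the definition of $\fpbsReduct{d}{e}$ — checking in each case that the defining clause of $\fpbsReduct{d}{e}$ is realized by exactly one rule of $\pbsresRed{d}$, fed with the appropriate induction hypotheses. The base case $d=0$ is immediate: $\fpbsReduct{0}{e}=e$ and the rule $e\pbsresRed{0}e$ applies for every $e$. So the substance is the step $d=d'+1$, handled by case analysis on the shape of $e$.

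For $d'+1$ I would treat the cases as follows. If $e=x$, then $\fpbsReduct{d'+1}{x}=x$ and $x\pbsresRed{d'+1}x$. If $e=\labs xs$, the induction hypothesis at $(d',s)$ gives $s\pbsresRed{d'}\fpbsReduct{d'}{s}$, and the abstraction rule yields $\labs xs\pbsresRed{d'+1}\labs x{\fpbsReduct{d'}{s}}=\fpbsReduct{d'+1}{\labs xs}$. If $e=\rappl{\labs xs}{\ms t}$ with $\maxoccdepth xs\le d'$, the hypotheses at $(d',s)$ and $(d',\ms t)$ give $s\pbsresRed{d'}\fpbsReduct{d'}{s}$ and $\ms t\pbsresRed{d'}\fpbsReduct{d'}{\ms t}$; since the guard $\maxoccdepth xs\le d'$ is exactly the side condition of the firing rule, we obtain $\rappl{\labs xs}{\ms t}\pbsresRed{d'+1}\lsubst{\fpbsReduct{d'}{s}}x{\fpbsReduct{d'}{\ms t}}=\fpbsReduct{d'+1}{e}$. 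In the remaining application case $e=\rappl s{\ms t}$ (i.e.\ $s$ is not an abstraction, or $s=\labs x{s'}$ but $\maxoccdepth x{s'}>d'$), I apply the \emph{non-firing} application rule: the hypotheses at $(d'+1,s)$ and $(d',\ms t)$ give $s\pbsresRed{d'+1}\fpbsReduct{d'+1}{s}$ and $\ms t\pbsresRed{d'}\fpbsReduct{d'}{\ms t}$, whence $\rappl s{\ms t}\pbsresRed{d'+1}\rappl{\fpbsReduct{d'+1}{s}}{\fpbsReduct{d'}{\ms t}}=\fpbsReduct{d'+1}{e}$. Finally, for $e=\mset{s_1,\dotsc,s_n}$ the hypotheses at each $(d'+1,s_i)$ give $s_i\pbsresRed{d'+1}\fpbsReduct{d'+1}{s_i}$, and the monomial rule assembles these into $\mset{s_1,\dotsc,s_n}\pbsresRed{d'+1}\mset{\fpbsReduct{d'+1}{s_1},\dotsc,\fpbsReduct{d'+1}{s_n}}=\fpbsReduct{d'+1}{e}$.

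There is no deep obstacle here; the one point requiring care — and the reason a plain induction on $d$ alone does not suffice — is the well-founded measure. In the non-firing application and the monomial cases, the depth index stays at $d'+1$ while the recursion descends to the strict subexpressions $s$ and $s_i$. The induction must therefore be on the lexicographic order of $(d,e)$, with $e$ ordered by the subterm relation (equivalently by $\size e$), matching precisely the recursion scheme of $\fpbsReduct{}{}$. Two minor checks make the case analysis clean: the non-firing application rule of $\pbsresRed{d'+1}$ is available for \emph{any} function subterm $s$, regardless of whether $s$ is an abstraction, so the two sub-cases of the ``other cases'' clause need not be distinguished; and the guard $\maxoccdepth xs\le d'$ appearing in the firing clause of $\fpbsReduct{}{}$ coincides verbatim with the side condition of the firing rule of $\pbsresRed{d'+1}$, so that clause of the definition is always licensed by the reduction relation.
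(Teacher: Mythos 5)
Your proof is correct and follows exactly the paper's route: the paper disposes of this lemma with ``By a straightforward induction on $d$ then on $e$'', which is precisely the lexicographic induction you carry out, with each clause of $\fpbsReduct{d}{e}$ matched to the corresponding rule of $\pbsresRed{d}$. Your explicit remarks about the well-founded measure and the verbatim agreement of the guard $\maxoccdepth xs\le d'$ with the side condition of the firing rule are accurate and simply make the paper's one-line argument fully explicit.
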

\begin{proof}
	By a straightforward induction on $d$ then on $e$.
\end{proof}

It follows that $e\pbdresRed d\fpbsReduct de$, hence if $e'\in\support{\fpbsReduct de}$ then $\size{e}\le 4^d\size{e'}$.
In particular $\fpbsReductSym d$ defines a linear-continuous function on $\resourceVectors$.

\begin{lemma}
	\label{lemma:pbsresRed:lsubst}
	If $e\pbsresRed {d_0} ε'$, 
	$\ms u\pbsresRed {d_1}{\ms{υ}'}$
	and $d\ge\max\pars{\set{d_0}\union\set{d_x+d_1-1\st d_x\in\occdepth xe}}$ then 
	$\lsubst ex{\ms u}\pbsresRed{d}\lsubst{ε'}x{\ms{υ}'}$.
\end{lemma}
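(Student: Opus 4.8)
The plan is to prove the statement by induction on the resource expression $e$, reading the hypothesis $e\pbsresRed{d_0}ε'$ according to the last rule applied (with the reflexivity rule $e\pbsresRed0 e$, which forces $ε'=e$, as a base case). Since multilinear substitution is bilinear and $\pbsresRed d$ is defined summandwise, it suffices to treat generators: I write $\ms u=\mset{u_1,\dotsc,u_n}$ and $\ms{υ'}=\mset{υ'_1,\dotsc,υ'_n}$ with $u_i\pbsresRed{d_1}υ'_i$, carrying the finite sums $σ'$ and $\ms{τ'}$ along. Abbreviating the bound as $B\eqdef\max\pars{\set{d_0}\union\set{d_x+d_1-1\st d_x\in\occdepth xe}}$, the induction hypothesis is invoked on the immediate subexpressions at the depth prescribed by the relevant rule, and the bound it requires is checked against $B$ each time. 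The external ingredients are Lemma \ref{lemma:lsubst:decomposition} (to push a top-level substitution past the head constructor of $e$), Lemma \ref{lemma:lsubst:commute} (in the redex case), and Lemma \ref{lemma:lsubst:size} (to control occurrence depths).

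In the variable case $e=y$ one has $ε'=y$, and $\lsubst ex{\ms u}$ is either $u_1\pbsresRed{d_1}υ'_1$ (when $y=x$ and $n=1$, where $\occdepth xe=\set1$, so $B\ge d_1$), or else equal to $y$ or $0$, all handled by reflexivity. In the remaining structural cases I unfold $\lsubst ex{\ms u}$ by Lemma \ref{lemma:lsubst:decomposition} and apply the induction hypothesis to the immediate subexpressions, at the depth dictated by the corresponding rule of $\pbsresRed{}$: the body of an abstraction and the argument of an application are reduced at depth $d-1$, whereas the function of an application and the components of a monomial stay at depth $d$. Using the inductive description of $\occdepth x{\cdot}$ — which increments by one under an abstraction and in the argument of an application, and takes unions in monomials and in the function of an application — one checks by a routine arithmetic that the bound $B$ for $e$ dominates the bound required by each recursive call; reassembling with the matching rule yields the result. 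The degenerate case $d=0$ forces $x\notin\fv e$, so that $\lsubst ex{\ms u}$ is $0$ or $e$ and the claim is trivial.

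The only delicate case is the firing of a redex $e=\rappl{\labs ys}{\ms t}$, where $y\not=x$, $y\notin\fv{\ms u}$, $d_0\ge1$, $\maxoccdepth ys\le d_0-1$, $s\pbsresRed{d_0-1}σ'$, $\ms t\pbsresRed{d_0-1}\ms{τ'}$ and $ε'=\lsubst{σ'}y{\ms{τ'}}$. By Lemma \ref{lemma:lsubst:decomposition} the term $\lsubst ex{\ms u}$ is a sum, over partitions $(I,J)$ of $\set{1,\dotsc,n}$, of redexes $\rappl{\labs y{\lsubst sx{\ms u_I}}}{\lsubst{\ms t}x{\ms u_J}}$. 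The plan is to fire each of these with the redex rule at depth $d$, using the induction hypothesis on $s$ (with $\ms u_I$) and on $\ms t$ (with $\ms u_J$) at depth $d-1$, and then to recombine the reducts into $\lsubst{\pars{\lsubst{σ'}y{\ms{τ'}}}}x{\ms{υ'}}=\lsubst{ε'}x{\ms{υ'}}$ by Lemma \ref{lemma:lsubst:commute}, with the roles of $x$ and $y$ exchanged (its hypothesis $y\notin\fv{\ms{υ'}}$ follows from $y\notin\fv{\ms u}$, since reduction creates no free variables). The crux — and what I expect to be the \emph{main obstacle} — is verifying the side condition of the redex rule, namely $\maxoccdepth y{s''}\le d-1$ for every $s''\in\support{\lsubst sx{\ms u_I}}$: since $y\notin\fv{\ms u}$ we have $\occdepth y{\ms u_I}=\emptyset$, so Lemma \ref{lemma:lsubst:size} gives $\occdepth y{s''}\subseteq\occdepth ys$, whence $\maxoccdepth y{s''}\le\maxoccdepth ys\le d_0-1\le d-1$. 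The depth bounds needed for the two inductive calls then follow by unfolding $\occdepth x{\rappl{\labs ys}{\ms t}}$ into the one-step increments of $\occdepth xs$ and $\occdepth x{\ms t}$ and comparing with $B$; the remaining cases are pure bookkeeping.
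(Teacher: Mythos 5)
Your proof is correct and follows essentially the same route as the paper's: induction on the derivation of $e\pbsresRed{d_0}ε'$ (equivalently, on $e$ with a case split on the last rule), unfolding $\lsubst ex{\ms u}$ via Lemma \ref{lemma:lsubst:decomposition}, recombining the redex case with Lemma \ref{lemma:lsubst:commute}, and discharging the side condition of the redex rule exactly as the paper does, via Lemma \ref{lemma:lsubst:size} and $y\notin\fv{\ms u}$. One small inaccuracy: $d=0$ does \emph{not} force $x\notin\fv e$ (take $e=x$ with $d_1=0$); rather it forces $d_0=0$ and, whenever $x\in\fv e$, also $d_1=0$, so that $ε'=e$ and $\ms{υ}'=\ms u$ and the two sides of the conclusion coincide — the case remains trivial by reflexivity, which is how the paper dispatches it.
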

\begin{proof}
	Write $n=\card{\ms u}$, $\ms u=\mset{u_1,\dotsc,u_n}$
	and $\ms{υ}'=\mset{υ'_1,\dotsc,υ'_n}$ so that 
	$u_i\pbsresRed{d_1} υ'_i$ for $i\in\set{1,\dotsc, n}$.

	The proof is by induction on the reduction $e\pbsresRed {d_0}{ε'}$.
	We treat the cases $d_0=0$ and $d_0>0$ uniformly by a further induction on $e$,
	setting $d'_0=\max\set{0,d_0-1}$.

	If $d_0=d'_0+1$, $e=\rappl{\labs ys}{\ms t}$
	and $ε'=\lsubst{σ'}y{\ms{τ}'}$
	with $y\not\in\set x\union\fv{\ms t}\union\fv{\ms u}$,
	$\maxoccdepth ys\le {d'_0}$, 
	$s\pbsresRed {d'_0} σ'$ and 
	$\ms t\pbsresRed {d'_0}\ms{τ}'$,
	then we have 
	\[ 
		\lsubst ex{\ms u}=\sum_{(I,J)\text{ partition of }\set{1,\dotsc,n}}
		\rappl{\labs y[]{\lsubst{s}x{\ms{u}_I}}}{\lsubst{\ms t}x{\ms u_J}}
	\]
	and 
	\[
		\lsubst {ε'}x{\ms{υ}'}=\sum_{(I,J)\text{ partition of }\set{1,\dotsc,n}}
		\lsubst{\pars{\lsubst{s}x{\ms{υ}'_I}}}y{\pars{\lsubst{\ms t}x{\ms {υ}'_J}}}.
	\]
	Observe that $d>0$ and $d-1\ge\max\set{d'_0}\union\set{d'_x+d_1-1\st d'_x\in\occdepth x{s}\union\occdepth x{\ms t}}$.
	By induction hypothesis, 
	we obtain $\lsubst{s}x{\ms{u}_I}\pbsresRed{d-1}\lsubst{σ'}x{\ms{υ}'_I}$
	and $\lsubst{\ms t}x{\ms{u}_J}\pbsresRed{d-1}\lsubst{\ms{τ}'}x{\ms{υ}'_J}$,
	and we conclude since $\maxoccdepth y{\lsubst{s}x{\ms{u}_I}}=\maxoccdepth ys\le d'_0\le d-1$.

	If $e=y=ε'$, with $y\not=x$, then $\lsubst ex{\ms u}=\lsubst{ε'}x{\ms{υ}'}=y$
	and we conclude directly by the definition of $\pbsresRed d$.

	If $e=x=ε'$, then $\occdepth xe=\set 1$ hence $d\ge d_1$ and we conclude since 
	$\lsubst ex{\ms u}=\ms u$, $\lsubst{ε'}x{\ms{υ}'}=\ms{υ}'$
	and $\ms u\pbsresRed {d_1}{\ms{υ}'}$.

	If $e=\labs ys$ and $ε'=\labs y{σ'}$ with $y\not\in\set x\union\fv{\ms u}$
	and $s\pbsresRed {d'_0} σ'$, then
	write $d'=\max\set{d'_0}\union\set{d'_x+d_1-1\st d'_x\in\occdepth x{s}}$.
	By induction hypothesis, 
	we obtain $\lsubst{s}x{\ms{u}}\pbsresRed{d'}\lsubst{σ'}x{\ms{υ}'}$.
	Observe that either $d=d'+1$ or $d=d'=0$ (in that latter case,
	$\lsubst{s}x{\ms{u}}=\lsubst{σ'}x{\ms{υ}'}$),
	and then we conclude since
	$\lsubst ex{\ms u}=\labs y[]{\lsubst{s}x{\ms{u}}}$
	and
	$\lsubst {ε'}x{\ms{υ}'}=\labs y[]{\lsubst{σ'}x{\ms{υ}'}}$.

	If $e=\rappl{s}{\ms t}$
	and $ε'=\rappl{σ'}{\ms{τ}'}$, with
	$s\pbsresRed {d_0} σ'$ and
	$\ms t\pbsresRed {d'_0}\ms{τ}'$,
	then we have
	\[ 
		\lsubst ex{\ms u}=\sum_{(I,J)\text{ partition of }\set{1,\dotsc,n}}
		\rappl{\lsubst{s}x{\ms{u}_I}}{\lsubst{\ms t}x{\ms u_J}}
	\]
	and 
	\[
		\lsubst {ε'}x{\ms{υ}'}=\sum_{(I,J)\text{ partition of }\set{1,\dotsc,n}}
		\rappl{\lsubst{s}x{\ms{υ}'_I}}{\pars{\lsubst{\ms t}x{\ms {υ}'_J}}}.
	\]
	Write $d'=\max\set{d'_0}\union\set{d'_x+d_1-1\st d'_x\in\occdepth x{\ms t}}$.
	By induction hypothesis, 
	we obtain $\lsubst{s}x{\ms{u}_I}\pbsresRed{d}\lsubst{σ'}x{\ms{υ}'_I}$
	and $\lsubst{\ms t}x{\ms{u}_J}\pbsresRed{d'}\lsubst{\ms{τ}'}x{\ms{υ}'_J}$.
	Then we conclude observing that $d=d'+1$ or $d=d'=0$ (in that latter case,
	$\lsubst{s}x{\ms{u}_I}=\lsubst{σ'}x{\ms{υ}'_I}$ and
	$\lsubst{\ms t}x{\ms{u}_J}=\lsubst{\ms{τ}'}x{\ms{u}'_J}$).

	If $e=\mset{s_1,\dotsc,s_k}$
	and $ε'=\mset{σ'_1,\dotsc,σ'_k}$, with
	$s_i\pbsresRed {d_0} σ'_i$ for $i\in\set{1,\dotsc,k}$,
	then we have
	\[
		\lsubst{e}x{\ms{u}}=
		\sum_{
			(I_1,\dotsc,I_k)\text{ partition of }\set{1,\dotsc,n}
		}
		\mset{
			\lsubst {s_1}x{\ms u_{I_1}}
			\dotsc,
			\lsubst {s_k}x{\ms u_{I_k}}
		}
	\]
	and 
	\[
		\lsubst{ε'}x{\ms{υ}'}=
		\sum_{
			(I_1,\dotsc,I_k)\text{ partition of }\set{1,\dotsc,n}
		}
		\mset{
			\lsubst {σ'_1}x{\ms {υ'}_{I_1}}
			\dotsc,
			\lsubst {σ'_k}x{\ms {υ'}_{I_k}}
		}.
	\]
	By induction hypothesis, we obtain
	\[
	\mset{
			\lsubst {s_1}x{\ms u_{I_1}}
			\dotsc,
			\lsubst {s_k}x{\ms u_{I_k}}
		}
		\pbsresRed d
	\mset{
			\lsubst {σ'_1}x{\ms {υ'}_{I_1}}
			\dotsc,
			\lsubst {σ'_k}x{\ms {υ'}_{I_k}}
		}
	\] for all partition $(I_1,\dotsc,I_k)$ of $\set{1,\dotsc,n}$ and we conclude. 
\end{proof}

\begin{lemma}
	\label{lemma:bound:maxoccdepth:pbsresRed}
	If $e\pbsresRed d ε'$ and $e'\in\support{ε'}$ 
	then $\maxoccdepth x{e'}\le 2^{d}\max\set{d,\maxoccdepth x e}$.
\end{lemma}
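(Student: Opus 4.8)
The plan is to argue by induction on the derivation of $e\pbsresRed d ε'$ — equivalently, by lexicographic induction on the pair $(d,\size e)$ — treating terms and monomials by the usual simultaneous induction, with the tracked variable $x$ held fixed and arbitrary. The two base cases are immediate: for $e\pbsresRed 0 e$ we have $e'=e$ and the claim reduces to $\maxoccdepth xe\le\maxoccdepth xe$; for a variable $x\pbsresRed{d+1}x$ we have $\maxoccdepth x{e'}\le 1\le d+1\le 2^{d+1}\max\set{d+1,\maxoccdepth x{e'}}$.

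For the three structural cases (abstraction, application, monomial) I would push the induction hypothesis through the defining clauses of $\occdepth x{}$. The point is that each constructor raises the occurrence depths of its immediate subexpressions by at most one (under a $\labs{}{}$, or in argument position of an application), while a monomial raises them by zero; so writing $N\eqdef\max\set{d+1,\maxoccdepth xe}$ one always has $\maxoccdepth x{(\text{subexpression})}\le N$ and $d\le N-1$, and the increment by one is absorbed by the inequality $2^d N+1\le 2^{d+1}N$ (valid since $N\ge 1$). The passage from exponent $d$ to $d+1$ reflects the bound carried by the relevant sub-reduction — $d$ for the body of an abstraction and the argument of an application, $d+1$ for the function of an application and the entries of a monomial — which is in every case at most $d+1$. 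These cases involve no genuinely new idea.

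The crux is the redex clause $\rappl{\labs ys}{\ms t}\pbsresRed{d+1}\lsubst{σ'}y{\ms{τ'}}$, where $\maxoccdepth ys\le d$, $s\pbsresRed dσ'$ and $\ms t\pbsresRed d\ms{τ'}$ (I rename the bound variable to $y$ to avoid clashing with the tracked $x$). Fix $e'\in\support{\lsubst{s'}y{\ms t'}}$ for some $s'\in\support{σ'}$ and $\ms t'\in\support{\ms{τ'}}$, and note $y\notin\fv{\ms t'}$ by the variable convention. If $x=y$, then Lemma \ref{lemma:lsubst:size} gives $\occdepth x{e'}=\emptyset$, hence $\maxoccdepth x{e'}=0$ and the bound is trivial. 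If $x\ne y$, the same lemma yields $\maxoccdepth x{e'}\le\max\set{\maxoccdepth x{s'},\ \maxoccdepth y{s'}+\maxoccdepth x{\ms t'}-1}$, the additive composition coming from the occurrences of $x$ brought in by $\ms t'$ underneath the occurrences of $y$ in $s'$. I would then invoke the induction hypothesis three times: $\maxoccdepth x{s'}\le 2^d\max\set{d,\maxoccdepth xs}$, $\maxoccdepth x{\ms t'}\le 2^d\max\set{d,\maxoccdepth x{\ms t}}$, and — crucially — $\maxoccdepth y{s'}\le 2^d\max\set{d,\maxoccdepth ys}=2^dd$, using precisely the hypothesis $\maxoccdepth ys\le d$. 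With $N\eqdef\max\set{d+1,\maxoccdepth xe}$, and using $\maxoccdepth xs\le N$, $\maxoccdepth x{\ms t}\le N-1$ and $d\le N-1$, both entries of the max come out bounded by $2^{d+1}N$, as required.

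The main obstacle is exactly this arithmetic: the substitution composes two depths \emph{additively}, so a naive estimate would compound the exponential factors. What saves the argument is the defining feature of $\pbsresRed{d+1}$, namely $\maxoccdepth ys\le d$: it pins $\maxoccdepth y{s'}$ to $2^dd$ rather than to something growing with $\maxoccdepth xe$, keeping the additive composition below $2^{d+1}N$. The only residual care concerns the degenerate situations where $x$ fails to occur in some subexpression; these are handled uniformly by the convention $\max\emptyset=0$ together with the fact — read off from Lemma \ref{lemma:lsubst:size} and the structural clauses — that reduction never creates new free variables, so that a vanishing occurrence on the left forces a vanishing occurrence on the right.
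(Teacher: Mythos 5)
Your proof is correct and follows essentially the same route as the paper's: induction on the derivation of $e\pbsresRed d ε'$, with the structural cases absorbed by the inequality $2^dN+1\le 2^{d+1}N$, and the redex case handled via Lemma \ref{lemma:lsubst:size} together with the crucial observation that the side condition $\maxoccdepth ys\le d$ pins $\maxoccdepth y{s'}$ to $2^dd$, so that the additive composition of depths stays below $2^{d+1}\max\set{d+1,\maxoccdepth xe}$. Nothing to add.
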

\begin{proof}
	By induction on the reduction $e\pbsresRed d ε'$.

	If $d=0$, then $e'=ε'=e$ and the result is trivial.
	For the other inductive cases, write $d=d'+1$.

	If $e=\rappl{\labs ys}{\ms t}$
	and $ε'=\lsubst{σ'}y{\ms{τ}'}$ with 
	$\maxoccdepth ys\le d'$,
	$s\pbsresRed {d'} σ'$ and $\ms t\pbsresRed{d'} \ms{τ}'$,
	choosing $y\not\in\set x\union\fv {\ms t}$,
	then $e'\in\support[\mbig]{\lsubst{s'}y{\ms {t}'}}$ with 
	$s'\in\support{σ'}$ and $\ms{t}'\in\support{\ms{τ}'}$.
	By induction hypothesis, $\maxoccdepth z{s'}\le 2^{d'}\max\set{d',\maxoccdepth z{s}}$
	and $\maxoccdepth[\mnorm]z{\ms{t}'}\le 2^{d'}\max\set{d',\maxoccdepth z{\ms t}}$
	for any $z\in\variables$.
	By Lemma \ref{lemma:lsubst:size}, 
	\begin{align*}
	\maxoccdepth x{e'}
	&\le \max\pars{\occdepth x{s'}
	\union\set{d'_y+d'_x-1\st d'_y\in\occdepth y{s'},\ d'_x\in\occdepth[\mnorm] x{\ms t'}}}
	\\
	&\le \max\set{2^{d'}\max\set{d',\maxoccdepth x{s}},2^{d'}\max\set{d',\maxoccdepth y{s}}+2^{d'}\max\set{d',\maxoccdepth x{\ms t}}}
	\\
	&\le 2^{d'+1}\max\set{d',\maxoccdepth x{s},\maxoccdepth y{s},\maxoccdepth x{\ms t}}
	\\
	&\le 2^{d}\max\set{d,\maxoccdepth x{e}}.
	\end{align*}

	If $e=\labs ys$ and $ε'=\labs y{σ'}$ with 
	$s\pbsresRed {d'} σ'$,
	choosing $y\not=x$,
	then $e'=\labs x{s'}$ with 
	$s'\in\support{σ'}$.
	By induction hypothesis, $\maxoccdepth x{s'}\le 2^{d'}\max\set{d',\maxoccdepth x{s}}$.
	Then $\maxoccdepth x{e'}\le \maxoccdepth x{s'}+1\le 2^{d'}\max\set{d',\maxoccdepth x{s}}+1\le 
	2^d\max\set{d,\maxoccdepth xs}\le 2^d\max\set{d,\maxoccdepth xe}$.

	If $e=\rappl s{\ms t}$ and $ε'=\rappl{σ'}{\ms{τ}'}$ with
	$s\pbsresRed {d} σ'$ and $\ms t\pbsresRed{d'} \ms{τ}'$,
	then $e'=\rappl {s'}{\ms t'}$ with 
	$s'\in\support{σ'}$ and $\ms{t}'\in\support{\ms{τ}'}$.
	By induction hypothesis, $\maxoccdepth x{s'}\le 2^{d}\max\set{d,\maxoccdepth x{s}}$
	and $\maxoccdepth[\mnorm]x{\ms{t}'}\le 2^{d'}\max\set{d',\maxoccdepth x{\ms t}}$.
	Then:
	\begin{align*}
		\maxoccdepth x{e'}&\le \max\set{\maxoccdepth x{s'},\maxoccdepth[\mbig] x{\ms t'}+1}\\
		&\le \max\set{2^{d}\max\set{d,\maxoccdepth x{s}},2^{d'}\max\set{d',\maxoccdepth x{\ms t}}+1}\\
		&\le 2^d\max\set{d,\maxoccdepth xs,\maxoccdepth x{\ms t}}\\
		&\le 2^d\max\set{d,\maxoccdepth xe}.
	\end{align*}

	If $e=\mset{s_1,\dotsc,s_k}$
	and $ε'=\mset{σ'_1,\dotsc,σ'_k}$, with
	$s_i\pbsresRed {d} σ'_i$ for all $i\in\set{1,\dotsc,k}$,
	then $e'=\mset{s'_1,\dotsc,s'_k}$
	with $s'_i\in\support{σ'_i}$ for all $i\in\set{1,\dotsc,k}$.
	By induction hypothesis, for all $i\in\set{1,\dotsc,k}$, 
	$\maxoccdepth x{s'_i}\le 2^{d}\max\set{d,\maxoccdepth x{s_i}}$,
	hence
	\begin{align*}
		\maxoccdepth x{e'}&=\max\set{\maxoccdepth x{s'_1},\dotsc,\maxoccdepth x{s'_k}}\\
		&\le 2^{d}\max\set{d,\maxoccdepth x{s_1},\dotsc,\maxoccdepth x{s_k}}\\
		&= 2^{d}\max\set{d,\maxoccdepth x{e}}.&&\hfill\qedhere
	\end{align*}
\end{proof}

\begin{lemma}
	If $e\pbsresRed d ε'$ then 
	$ε'\pbsresRed {2^dd}\fpbsReduct de$.
\end{lemma}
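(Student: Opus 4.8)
The plan is to prove the statement by induction on the derivation of $e\pbsresRed d ε'$, following the classical triangle/full-reduct argument already used for $\presRed$ in Lemma \ref{lemma:presRed:fullReduct}: starting from the partial reduct $ε'$, we complete the firing of every redex whose bound variable sits at substitution depth at most $d$, landing on $\fpbsReduct d e$. The three tools I would keep at hand are Lemma \ref{lemma:fpbsReduct} (so that $\fpbsReduct d e$ is a legitimate $\pbsresRed d$-reduct of $e$), Lemma \ref{lemma:bound:maxoccdepth:pbsresRed} (to control how deep the substituted variable may have migrated inside $ε'$), and Lemma \ref{lemma:pbsresRed:lsubst} (to push the reductions produced by the induction hypothesis through a multilinear substitution). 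Since both $\pbsresRed{\cdot}$ and $\fpbsReductSym{\cdot}$ extend by linearity, it suffices to treat $e\in\resourceExpressions$, and I would freely use the monotonicity $\mathord{\pbsresRed b}\subseteq\mathord{\pbsresRed{b'}}$ for $b\le b'$ to relax every intermediate bound to the uniform value $2^d d$.

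The cases that do not fire a top-level redex are routine reassembly. For $e=x$ and for the base case $d=0$ there is nothing to do. For $e=\labs x s$, for $e=\rappl s{\ms t}$ with $s$ not a firable abstraction, and for $e=\mset{s_1,\dotsc,s_k}$, I would apply the induction hypothesis to each premise $s_i\pbsresRed{d_i}σ'_i$ (with $d_i\in\set{d,d-1}$), obtaining $σ'_i\pbsresRed{2^{d_i}d_i}\fpbsReduct{d_i}{s_i}$, and recombine with the matching constructor rule of $\pbsresRed{\cdot}$; the bookkeeping reduces to checking $2^{d_i}d_i\le 2^d d$, which is immediate. The only genuinely interesting situation is $e=\rappl{\labs x s}{\ms t}$ with $\maxoccdepth x s\le d-1$, so that the full reduct does fire the head redex, $\fpbsReduct d e=\lsubst{\fpbsReduct{d-1}{s}}x{\fpbsReduct{d-1}{\ms t}}$; here I would split on whether $ε'$ has itself already fired that redex.

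If $ε'=\lsubst{σ'}x{\ms{τ'}}$ fired the redex, with $s\pbsresRed{d-1}σ'$ and $\ms t\pbsresRed{d-1}\ms{τ'}$, then the induction hypothesis yields $σ'\pbsresRed{2^{d-1}(d-1)}\fpbsReduct{d-1}{s}$ and $\ms{τ'}\pbsresRed{2^{d-1}(d-1)}\fpbsReduct{d-1}{\ms t}$. To transport these through the substitution I first invoke Lemma \ref{lemma:bound:maxoccdepth:pbsresRed} on $s\pbsresRed{d-1}σ'$, bounding the migrated occurrence depths by $\maxoccdepth x{s'}\le 2^{d-1}\max\set{d-1,\maxoccdepth x s}=2^{d-1}(d-1)$ for every $s'\in\support{σ'}$; Lemma \ref{lemma:pbsresRed:lsubst}, applied summandwise, then gives $\lsubst{σ'}x{\ms{τ'}}\pbsresRed D\lsubst{\fpbsReduct{d-1}{s}}x{\fpbsReduct{d-1}{\ms t}}$ for any $D\ge 2^{d-1}(d-1)+2^{d-1}(d-1)-1=2^d(d-1)-1$, and $2^d(d-1)-1\le 2^d d$ closes this subcase. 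If instead $ε'=\rappl{\labs x{σ'_0}}{\ms{τ'}}$ left the redex unfired, the induction hypothesis gives $σ'_0\pbsresRed{2^{d-1}(d-1)}\fpbsReduct{d-1}{s}$ and $\ms{τ'}\pbsresRed{2^{d-1}(d-1)}\fpbsReduct{d-1}{\ms t}$, while Lemma \ref{lemma:bound:maxoccdepth:pbsresRed} bounds $\maxoccdepth x{σ'_0}\le 2^{d-1}(d-1)$, so the redex rule fires it at index $2^{d-1}(d-1)+1\le 2^d d$. The main obstacle is exactly this firing case, and it is where the factor $2^d$ in the final bound is forced: firing the head redex may stack the residual of the substituted variable underneath the residuals already created inside $σ'$, and the crux is to check that Lemma \ref{lemma:bound:maxoccdepth:pbsresRed} controls this stacking tightly enough to keep $D$ below $2^d d$.
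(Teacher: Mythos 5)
Your proof is correct and follows essentially the same route as the paper's: induction on the derivation, with Lemma \ref{lemma:bound:maxoccdepth:pbsresRed} bounding the migrated occurrence depths in $σ'$ and Lemma \ref{lemma:pbsresRed:lsubst} pushing the inductively obtained reductions through the multilinear substitution, and the same arithmetic ($2^d(d-1)-1\le 2^dd$ for the fired redex, $2^{d-1}(d-1)+1\le 2^dd$ for the redex fired only in the full reduct). The only cosmetic difference is that you fold the non-firable application case into the routine constructor cases, where the paper spells it out as a subcase; the bookkeeping there is the slightly stronger $2^{d-1}(d-1)\le 2^dd-1$, which holds.
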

\begin{proof}
	By induction on the reduction $e\pbsresRed d ε'$.

	If $d=0$, then $ε'=e$ and the result follows from Lemma \ref{lemma:fpbsReduct}.
	For the other inductive cases, set $d=d'+1$.

	If $e=\rappl{\labs xs}{\ms t}$
	and $ε'=\lsubst{σ'}x{\ms{τ}'}$
	with $\maxoccdepth xs\le {d'}$, 
	$s\pbsresRed {d'} σ'$ and
	$\ms t\pbsresRed {d'}\ms{τ}'$
	then by induction hypothesis, 
	we have $σ'\pbsresRed{2^{d'}d'} \fpbsReduct {d'}s$
	and  $\ms{τ}'\pbsresRed{2^{d'}d'} \fpbsReduct {d'}{\ms t}$.
	By the previous lemma,
	we moreover have $\maxoccdepth x{σ'}\le 2^{d'}\max\set{d',\maxoccdepth xs}=2^{d'}d'$.
	It follows that $2^{d}d\ge 2^{d'}d'$ and $2^dd\ge\maxoccdepth x{σ'}+2^{d'}d'-1$
	hence we can apply Lemma \ref{lemma:pbsresRed:lsubst}
	to obtain
	$ε'\pbsresRed{2^{d}{d}}\lsubst{\fpbsReduct{d'}s}x{\fpbsReduct{d'}{\ms t}}=\fpbsReduct{d}{e}$.

	If $e=\labs ys$ and $ε'=\labs y{σ'}$ with 
	$s\pbsresRed {d'} σ'$,
	then by induction hypothesis, 
	$σ'\pbsresRed{2^{d'}d'}\fpbsReduct{d'}s$,
	hence
	$ε'\pbsresRed{2^{d'}d'+1}\labs x{\fpbsReduct{d'}s}=\fpbsReduct de$
	and we conclude since $2^{d'}d'+1\le 2^dd$.

	If $e=\rappl {s}{\ms t}$ and $ε'=\rappl{σ'}{\ms{τ}'}$ with
	$s\pbsresRed {d} σ'$ and $\ms t\pbsresRed{d'} \ms{τ}'$,
	there are two subcases:
	\begin{itemize}
		\item 
			If moreover $s=\labs xu$ and $\maxoccdepth xu\le d'$ then 
			$σ'=\labs x{υ'}$ with 
			$u\pbsresRed {d'} υ'$.
			Then by induction hypothesis, 
			$υ'\pbsresRed{2^{d'}d'}\fpbsReduct{d'}u$,
			and 
			$\ms{τ}'\pbsresRed{2^{d'}d'}\fpbsReduct{d'}{\ms t}$.
			By the previous lemma,
			we moreover have $\maxoccdepth x{υ'}\le 2^{d'}\max\set{d',\maxoccdepth xu}=2^{d'}d'$,
			hence
			$ε'=\rappl{\labs x{υ'}}{\ms{τ}'}
			\pbsresRed{2^{d'}d'+1}\lsubst{\fpbsReduct{d'}u}x{\fpbsReduct{d'}{\ms t}}=\fpbsReduct de$,
			and we conclude since $2^{d'}d'+1\le 2^dd$.
		\item
			Otherwise $s$ is not an abstraction or $s=\labs xu$ with $\maxoccdepth xu>d'$.
			By induction hypothesis, 
			$σ'\pbsresRed{2^{d}d}\fpbsReduct{d}{σ'}$,
			and 
			$\ms{τ}'\pbsresRed{2^{d'}d'}\fpbsReduct{d'}{\ms t}$.
			Since $2^{d'}d'<2^dd$, we obtain
			$\ms{τ}'\pbsresRed{2^{d}d-1}\fpbsReduct{d'}{\ms t}$
			and then 
			$ε'\pbsresRed{2^{d}d}\rappl{\fpbsReduct{d}s}{\fpbsReduct{d'}{\ms t}}=\fpbsReduct de$.
	\end{itemize}

	If $e=\mset{s_1,\dotsc,s_k}$
	and $ε'=\mset{σ'_1,\dotsc,σ'_k}$, with
	$s_i\pbsresRed {d} σ'_i$ for all $i\in\set{1,\dotsc,k}$,
	then by induction hypothesis, for all $i\in\set{1,\dotsc,k}$, 
	$σ'_i\pbsresRed {2^dd}\fpbsReduct d{s_i}$ and we conclude directly.
\end{proof}

\begin{lemma}
	\label{lemma:vpbsresRed:confluent}
	For all $\presRed$-reduction structure $\fE$, 
	if $ε\splitVariant[\fE]\pbsresRed d ε'$
	then $ε'\splitVariant[\fE]\pbsresRed{2^dd}\fpbsReduct d{ε'}$.
\end{lemma}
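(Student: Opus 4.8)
The plan is to derive the statement directly from Lemma~\ref{lemma:fpbsReduct}, which already reduces each individual resource expression to its full reduct at depth $d$; the work is then entirely at the level of resource vectors, namely packaging a family of such expression-level reductions into a single $\splitVariant[\fE]\pbsresRed{2^dd}$-step. I would stress from the outset that the bound $2^dd$ is not tight for this particular statement: I will produce a reduction of depth $d$ and then invoke the monotonicity $\mathord{\pbsresRed d}\subseteq\mathord{\pbsresRed{2^dd}}$ in the depth parameter (recorded just after the definition of $\pbsresRed{}$), since $d\le 2^dd$.

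First I would unfold $ε\splitVariant[\fE]\pbsresRed d ε'$: by the characterization of $\splitVariant[\fE]$ recalled just before this subsection, there are a resource support $\calE\in\fE$, a summable family $\pars{e_i}_{i\in I}\in\calE^I$, scalars $\pars{a_i}_{i\in I}$ and a summable family $\pars{ε'_i}_{i\in I}$ with $ε=\sum_{i\in I}a_i\sm e_i$, $ε'=\sum_{i\in I}a_i\sm ε'_i$ and $e_i\pbsresRed d ε'_i$ for all $i$. The decisive bookkeeping step is to capture $\support{ε'}$ inside $\fE$. Since $\mathord{\pbsresRed d}\subseteq\mathord{\presRed}$, each $e_i\in\calE$ satisfies $e_i\R\presRed ε'_i$, so $\support{ε'_i}\subseteq\calE'$ where $\calE'\eqdef\Union\set{\support{τ}\st e\in\calE\text{ and }e\R\presRed τ}$; as $\fE$ is a $\presRed$-reduction structure and $\calE\in\fE$, we get $\calE'\in\fE$, and $\support{ε'}\subseteq\Union_{i\in I}\support{ε'_i}\subseteq\calE'$.

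Next I would exhibit $ε'\splitVariant[\calE']\pbsresRed{2^dd}\fpbsReduct d{ε'}$ by reducing $ε'$ componentwise along $\calE'$. For each $e\in\calE'$, Lemma~\ref{lemma:fpbsReduct} gives $e\pbsresRed d\fpbsReduct de$, hence $e\pbsresRed{2^dd}\fpbsReduct de$ by depth-monotonicity, and $e\in\calE'$ makes this a $\rightharpoonup_{\calE'}$-step. Writing $ε'=\sum_{e\in\calE'}\pars{ε'}_e\sm e$, it then suffices to check that $\pars{\fpbsReduct de}_{e\in\calE'}$ is summable: fixing $f\in\resourceExpressions$, any $e$ with $f\in\support{\fpbsReduct de}$ satisfies $e\pbdOneStepGenerates df$ (recall $e\pbdresRed d\fpbsReduct de$), so $\size e\le 4^d\size f$ by Lemma~\ref{lemma:pbdresRed:size}, while $\fv e\subseteq\fv{\calE'}$ is finite; thus only finitely many such $e$ exist. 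Finally, the linear-continuity of $\fpbsReductSym d$ on $\resourceVectors$ yields $\sum_{e\in\calE'}\pars{ε'}_e\sm\fpbsReduct de=\fpbsReduct d{ε'}$, so the three clauses of $\splitVariant[\calE']\pbsresRed{2^dd}$ are met; since $\calE'\in\fE$, this gives $ε'\splitVariant[\fE]\pbsresRed{2^dd}\fpbsReduct d{ε'}$.

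The only non-routine point is the support-tracking of the second paragraph: one must guarantee that the \emph{target} $ε'$ is again supported on a resource support lying in $\fE$, and this is precisely where the hypothesis that $\fE$ is a $\presRed$-reduction structure (rather than a mere resource structure) enters, together with $\mathord{\pbsresRed d}\subseteq\mathord{\presRed}$. Everything else is Lemma~\ref{lemma:fpbsReduct}, the depth-monotonicity of $\pbsresRed{}$, and the by-now-standard summability estimate coming from the size bound of Lemma~\ref{lemma:pbdresRed:size}; in particular, since the conclusion concerns the full reduct of the reduct $ε'$, this proof reduces $ε'$ afresh and does not need to route through the preceding expression-level lemma.
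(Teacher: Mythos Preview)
Your argument is correct for the statement as literally written, but that statement contains a typo: the conclusion should read $ε'\splitVariant[\fE]\pbsresRed{2^dd}\fpbsReduct d{ε}$, with the full reduct taken of the \emph{source} $ε$ rather than of $ε'$. This is what the paper's own proof actually establishes (it concludes with $\sum_{i\in I} a_i\sm\fpbsReduct d{e_i}=\fpbsReduct d{ε}$), and it is what Corollary~\ref{corollary:vpbsresRed:confluent} requires: all reducts $ε'_i$ of a common $ε$ must converge to the single target $\fpbsReduct d{ε}$; with $\fpbsReduct d{ε'}$ in its place the corollary simply does not follow. The two oddities you flagged yourself---that the bound $2^dd$ is slack, and that your proof does not need the preceding expression-level lemma---are precisely the symptoms of this typo.

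For the intended statement, the paper writes $ε=\sum_i a_i\sm e_i$ and $ε'=\sum_i a_i\sm ε'_i$ with $e_i\pbsresRed d ε'_i$, then applies the preceding lemma (if $e\pbsresRed d ε''$ then $ε''\pbsresRed{2^dd}\fpbsReduct d e$) to each $i$ to obtain $ε'_i\pbsresRed{2^dd}\fpbsReduct d{e_i}$; this is where the $2^dd$ depth genuinely arises. One then checks that $\pars{\fpbsReduct d{e_i}}_{i\in I}$ is summable and that $\Union_i\support{ε'_i}$ lies in some $\calE'\in\fE$ (the latter is where the reduction-structure hypothesis is used, exactly as in your second paragraph). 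So your support-tracking and summability bookkeeping are fine and match the paper; what is missing is the substantive step, namely reducing each $ε'_i$ to $\fpbsReduct d{e_i}$ rather than reducing $ε'$ afresh to $\fpbsReduct d{ε'}$.
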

\begin{proof}
	Assume there is $\calE\in\fE$, summable families
	$\pars{e_i}_{i\in I}\in\calE^I$ and 
	$\pars{ε'_i}_{i\in I}\finiteResourceSums^I$,
	and a family of scalars
	$\pars{a_i}_{i\in I}$
	such that  $ε=\sum_{i\in I} a_i\sm e_i$,
	$ε'=\sum_{i\in I} a_i\sm ε'_i$
	and $e_i\pbsresRed dε'_i$ for all $i\in I$.
	Write $\calE'=\Union_{i\in I}\support{ε'_i}$:
	since $\fE$ is a reduction structure,
	we obtain $\calE'\in\fE$.
	The family $\pars{\fpbsReduct d{e_i}}_{i\in I}$ is summable,
	and by the previous lemma, $ε'_i\pbsresRed {2^dd}\fpbsReduct d {e_i}$
	for all $i\in I$.
	We conclude 
	that $ε'\splitVariant\pbsresRed{2^dd}\sum_{i\in I} a_i\sm\fpbsReduct d{e_i}=\fpbsReduct d{ε}$.
\end{proof}

Similarly to $\vpbdresRed$, we set
\[\mathord{\vpbsresRed}\eqdef\Union_{d\in\naturals}\mathord{\splitVariant\pbsresRed d}\]
and we obtain:

\begin{corollary}
	\label{corollary:vpbsresRed:confluent}
	For all $\presRed$-reduction structure $\fE$
	and all $ε,ε'_1,\dotsc,ε'_n\in\resourceVectors$ such that 
	$ε\vpbsresRed[\fE] ε'_i$ for $i\in\set{1,\dotsc,n}$, 
	there exists $d\in\naturals$ such that
	$ε'_i\vpbsresRed[\fE]\fpbsReduct d{ε}$  for $i\in\set{1,\dotsc,n}$.
\end{corollary}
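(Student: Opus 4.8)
The plan is to uniformize the reduction bounds and then invoke Lemma~\ref{lemma:vpbsresRed:confluent}, which already carries the entire combinatorial weight; there is no genuine obstacle here beyond bookkeeping, so I expect this to be a short argument.

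Since $\mathord{\vpbsresRed[\fE]}=\Union_{d\in\naturals}\mathord{\splitVariant[\fE]\pbsresRed d}$, for each $i\in\set{1,\dotsc,n}$ I would first fix $d_i\in\naturals$ such that $ε\splitVariant[\fE]\pbsresRed{d_i} ε'_i$, and then set $d=\max\set{d_1,\dotsc,d_n}$. The key observation is that $\pbsresRed{\cdot}$ is monotone in its bound: recall that $e\pbsresRed{d}ε'$ entails $e\pbsresRed{d'}ε'$ for every $d'\ge d$. This monotonicity lifts to the split variant over $\fE$ without any extra hypothesis: if $ε\splitVariant[\fE]\pbsresRed{d_i}ε'_i$ is witnessed by a support $\calE\in\fE$ together with summable families $\pars{e_j}_{j\in I}$, $\pars{ε'_j}_{j\in I}$ and scalars $\pars{a_j}_{j\in I}$ with $ε=\sum_{j\in I}a_j\sm e_j$, $ε'_i=\sum_{j\in I}a_j\sm ε'_j$ and $e_j\pbsresRed{d_i}ε'_j$, then since $d_i\le d$ we have $e_j\pbsresRed{d}ε'_j$ for each $j$, so the very same data witnesses $ε\splitVariant[\fE]\pbsresRed{d}ε'_i$. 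Hence $ε\splitVariant[\fE]\pbsresRed{d}ε'_i$ holds simultaneously for every $i$, now with a single common bound.

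It then remains to apply Lemma~\ref{lemma:vpbsresRed:confluent} to each of these $n$ reductions, which yields $ε'_i\splitVariant[\fE]\pbsresRed{2^dd}\fpbsReduct d{ε}$ for every $i$. In particular $ε'_i\vpbsresRed[\fE]\fpbsReduct d{ε}$, and since $\fpbsReduct d{ε}$ is a single resource vector depending only on $ε$ and on the \emph{shared} bound $d$ --- not on the individual reduct $ε'_i$ --- it is the common reduct we seek, completing the proof. The only point worth stressing is exactly this independence of the target from $i$: it is precisely what forces us to collapse the bounds $d_1,\dotsc,d_n$ to a single $d$ \emph{before} invoking the lemma, rather than applying the lemma reduct-by-reduct with distinct bounds.
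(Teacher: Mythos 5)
Your argument is correct and is exactly the deduction the paper intends (the corollary is stated as an immediate consequence of Lemma~\ref{lemma:vpbsresRed:confluent}, with no separate proof): take the maximum $d$ of the individual bounds, use the monotonicity of $\pbsresRed{d}$ in $d$ (which transfers to $\splitVariant[\fE]\pbsresRed{d}$ since the same witnessing families work), and apply the lemma to each of the $n$ reductions, all of which then target the single vector $\fpbsReduct d{ε}$. Your closing remark that the bounds must be collapsed \emph{before} invoking the lemma is precisely the point of the corollary, and you correctly read the lemma's conclusion as $\fpbsReduct d{ε}$ (as its proof shows), not the $\fpbsReduct d{ε'}$ appearing in its statement, which is a typo.
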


\subsection{Parallel reduction of resource vectors of bounded height}

\label{subsection:bounded}

Recall that we have $e\presRed ε'$ iff $e\pbresRed {\height e}{\height e} ε'$
iff $e\pbdresRed {\height e} ε'$ iff $e\pbsresRed {\height e} ε'$.

\begin{definition}
	We say a resource vector $ε\in\resourceVectors$ is \emph{bounded} if $\set{\height e\st e\in\support{ε}}$ is finite.
	We then write $\height{ε}=\max\set{\height e\st e\in\support{ε}}$.
\end{definition}

If $\calE\subseteq\resourceExpressions$, we also write
$\height{\calE}\eqdef\set{\height e\st e\in\calE}$
and then \[\boundedStructure\eqdef
\set{\calE\subseteq\resourceExpressions\st\height{\calE}\text{ and }\fv{\calE}\text{ are finite}}\]
which is a resource structure (see Definition \ref{definition:resource:structure}).
Indeed, $\boundedStructure\subseteq\resourceStructure$,
and if we write
\[\resourceExpressions[h,V]\eqdef
\set{e\in\resourceExpressions\st \height{e}\le h\text{ and }\fv e\subseteq V}\]
for all $h\in\naturals$ and all $V\subseteq\variables$, 
we have
$\boundedStructure=\bidual{\set{\resourceExpressions[h,V]\st h\in\naturals\text{ and }V\in\finiteSubsets{\variables}}}$:
this is a consequence of a generic \emph{transport lemma} \cite{tv:transport}.
The semimodule of bounded resource vectors is then $\boundedResourceVectors$.

\begin{lemma}
	\label{lemma:fullReduct:summable}
	For all $h\in\naturals$ and $V\in\finiteSubsets{\variables}$,
	$\pars{\fullReduct e}_{e\in\resourceExpressions[h,V]}$ is summable.
	Moreover, for all $ε\in\boundedResourceVectors$, we have 
	$\support{ε}\subseteq\resourceExpressions[\height{ε},\fv{ε}]$ and then,
	setting $\fullReduct{ε}\eqdef\sum_{e\in\support{ε}} ε_e\sm\fullReduct e$,
	we obtain $ε\vpresRed[\support{ε}]\fullReduct{ε}$.
\end{lemma}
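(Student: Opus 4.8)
The statement splits into three claims: summability of $\pars{\fullReduct e}_{e\in\resourceExpressions[h,V]}$; the inclusion $\support{ε}\subseteq\resourceExpressions[\height{ε},\fv{ε}]$; and the reduction $ε\vpresRed[\support{ε}]\fullReduct{ε}$. The second is immediate from the definitions of $\height{ε}$ and $\fv{ε}$: every $e\in\support{ε}$ satisfies $\height e\le\height{ε}$ and $\fv e\subseteq\fv{ε}$, which is exactly the membership $e\in\resourceExpressions[\height{ε},\fv{ε}]$. The first claim is where the boundedness hypothesis does its work, and it is the crux of the argument; the third will then follow by simply unwinding the definition of $\splitVariant[\calE]\presRed$.

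For summability, I would fix a target $e'\in\resourceExpressions$ and show that the set $\calE_{e'}=\set{e\in\resourceExpressions[h,V]\st e'\in\support{\fullReduct e}}$ is finite, the key being a uniform bound on the size of its elements. Since $\fullReduct e$ is a parallel reduct of $e$, we have $e\presRed\fullReduct e$, so the observation that $e\pbdresRed{\height e}ε'$ holds whenever $e\presRed ε'$ gives $e\pbdresRed{\height e}\fullReduct e$. Thus, whenever $e'\in\support{\fullReduct e}$, we get $e\pbdOneStepGenerates{\height e}e'$, and Lemma \ref{lemma:pbdresRed:size} yields $\size e\le 4^{\height e}\size{e'}\le 4^{h}\size{e'}$, using $\height e\le h$. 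Hence the sizes of the elements of $\calE_{e'}$ are bounded by the single number $4^{h}\size{e'}$, while their free variables all lie in the finite set $V$; by the finiteness criterion (a set of resource expressions is finite iff the associated sets of sizes and of free variables are both finite), $\calE_{e'}$ is finite, which is precisely the summability of $\pars{\fullReduct e}_{e\in\resourceExpressions[h,V]}$. The essential point, and the only genuine obstacle, is that the height bound $h$ converts the a priori unbounded size collapse of $\presRed$ (Example \ref{example:reduction:collapse}) into the uniform multiplicative factor $4^{h}$, confining the antecedents of a fixed $e'$ to a finite set.

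For the last claim, I would first note that $\fullReduct{ε}$ is well defined: by the second claim $\support{ε}\subseteq\resourceExpressions[\height{ε},\fv{ε}]$, so $\pars{\fullReduct e}_{e\in\support{ε}}$ is a subfamily of a summable family (applying the first claim with $h=\height{ε}$ and $V=\fv{ε}$) and hence summable, whence so is $\pars{ε_e\sm\fullReduct e}_{e\in\support{ε}}$ and the sum $\fullReduct{ε}=\sum_{e\in\support{ε}}ε_e\sm\fullReduct e$ makes sense. It then remains to exhibit the data witnessing $ε\vpresRed[\support{ε}]\fullReduct{ε}$. I would take the index set $I=\support{ε}$, the scalars $a_e=ε_e$, the summable family $\pars{e}_{e\in\support{ε}}$ (each expression occurring once, hence trivially summable) with $ε=\sum_{e\in\support{ε}}ε_e\sm e$, and the reducts $\fullReduct e$, which form a summable family by the previous remark and whose weighting sums to $\fullReduct{ε}$. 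Since every index $e$ lies in the resource support $\support{ε}$ and $e\presRed\fullReduct e$, so in particular $e\R\presRed\fullReduct e$, all the requirements of $\splitVariant[\support{ε}]\presRed$ are met, giving $ε\vpresRed[\support{ε}]\fullReduct{ε}$.
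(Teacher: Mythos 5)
Your proof is correct and follows essentially the same route as the paper: the crux in both cases is that $\fullReduct e$ is a $\pbdresRed{h}$-reduct of $e$ when $\height e\le h$, so that Lemma \ref{lemma:pbdresRed:size} gives the uniform size bound $\size e\le 4^{h}\size{e'}$ and hence summability, after which the reduction $ε\vpresRed[\support{ε}]\fullReduct{ε}$ is just an unwinding of Definition \ref{definition:splitVariant}. The only cosmetic difference is that the paper reaches the bound via the identification $\fullReduct e=\fpbsReduct he$ and Lemma \ref{lemma:fpbsReduct}, whereas you use $e\presRed\fullReduct e$ together with the containment $\mathord{\presRed}\subseteq\mathord{\pbdresRed{\height e}}$ — both are valid and land on the same key lemma.
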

\begin{proof}
	Follows from Lemmas \ref{lemma:fpbsReduct} and \ref{lemma:pbdresRed:size}
	using the fact that, if $\height e\le h$ then $\fullReduct e=\fpbsReduct he$.
\end{proof}

If $\rigS$ is zerosumfree, we have:
$ε\vpresRed ε'$ iff $ε\splitVariant\pbsresRed {\height{ε}} ε'$ as soon as $ε$ is bounded.
More generally, without any assumption on $\rigS$,
we have 
$ε\vpresRed[{\resourceExpressions[h,V]}] ε'$
iff $ε\splitVariant[{\resourceExpressions[h,V]}]\pbsresRed h ε'$.
We can moreover show that bounded vectors are stable under $\vpresRed[\boundedStructure]$:

\begin{lemma}
	\label{lemma:pOneStepGenerates:height}
	If $e\pOneStepGenerates e'$ then $\height{e'}\le 2^{\height e}{\height e}$.
\end{lemma}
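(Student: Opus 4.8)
The plan is to reroute the statement through the bounded-substitution-depth reduction $\pbsresRed d$, for which we have at our disposal both the height bound for multilinear substitution (Lemma \ref{lemma:lsubst:size}) and the control on occurrence depths (Lemma \ref{lemma:bound:maxoccdepth:pbsresRed}). Recall from the beginning of this subsection that $e\presRed ε'$ iff $e\pbsresRed{\height e}ε'$. Hence it suffices to prove the strengthened claim that \emph{if $e\pbsresRed d ε'$ and $e'\in\support{ε'}$ then $\height{e'}\le 2^d\max\set{d,\height e}$}, and then specialize to $d=\height e$, where $\max\set{\height e,\height e}=\height e$ yields exactly $\height{e'}\le 2^{\height e}\height e$.

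First I would prove the claim by induction on the derivation of $e\pbsresRed d ε'$. The case $d=0$ is trivial since then $e'=e$, and the variable case is immediate. The contextual cases (abstraction, application, monomial) are routine: one applies the induction hypothesis to the immediate subterms, uses $\height s\le\height e$ (and $\height{\ms t}\le\height e-1$ under an application), and checks that the extra increments are absorbed when passing from $2^d$ to $2^{d+1}$, relying on $2^d\ge 1$ and on the monotonicity fact $\max\set{d,\height s}\le\max\set{d+1,\height e}$.

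The crux is the redex case $\rappl{\labs xs}{\ms t}\pbsresRed{d+1}\lsubst{σ'}x{\ms{τ'}}$, with $\maxoccdepth xs\le d$, $s\pbsresRed d σ'$ and $\ms t\pbsresRed d\ms{τ'}$. Here $\height e=1+\max\set{\height s,\height{\ms t}}$, and any $e'\in\support{\lsubst{σ'}x{\ms{τ'}}}$ lies in $\support{\lsubst{s'}x{\ms t'}}$ for some $s'\in\support{σ'}$ and $\ms t'\in\support{\ms{τ'}}$. By Lemma \ref{lemma:lsubst:size}, $\height{e'}\le\max\set{\height{s'},\maxoccdepth x{s'}+\height{\ms t'}-1}$. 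The induction hypothesis bounds $\height{s'}$ and $\height{\ms t'}$ by $2^d\max\set{d,\height s}$ and $2^d\max\set{d,\height{\ms t}}$ respectively, while Lemma \ref{lemma:bound:maxoccdepth:pbsresRed}, together with the hypothesis $\maxoccdepth xs\le d$, gives $\maxoccdepth x{s'}\le 2^d\max\set{d,\maxoccdepth xs}=2^d d$. Plugging these in and using $\height s,\height{\ms t}\le\height e-1$, so that both $\max\set{d,\height s}$ and $\max\set{d,\height{\ms t}}$ are at most $\max\set{d+1,\height e}$, each of the two arguments of the outer maximum is bounded by $2^{d+1}\max\set{d+1,\height e}$, which is the required bound for $\height{e'}$.

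The main obstacle I expect is precisely the bookkeeping in this redex case: one must track three separate quantitative bounds (the two heights and the occurrence depth $\maxoccdepth x{s'}$) and verify that the $-1$ in the substitution bound and the increments coming from the depth step are all absorbed when $2^d$ becomes $2^{d+1}$. Everything reduces to the elementary inequalities $\max\set{d,\height s}\le\max\set{d+1,\height e}$, $2^d d\le 2^d\max\set{d+1,\height e}$, and $1\le 2^d\max\set{d+1,\height e}$, but these have to be combined carefully so as not to overshoot the stated bound $2^{\height e}\height e$ once $d$ is set to $\height e$.
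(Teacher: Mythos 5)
Your proof is correct, but it takes a genuinely different route from the paper's. The paper proves the lemma by a direct induction on the reduction $e\presRed ε'$, keeping the target bound $2^{\height e}\height e$ itself as the invariant; in the redex case it uses only the crude consequence $\height{e'}\le\height{s'}+\height{\ms{t'}}$ of Lemma \ref{lemma:lsubst:size} (no occurrence depths needed), and closes with the arithmetic observation that
$2^{a}a+2^{b}b\le 2\cdot 2^{\max\set{a,b}}\max\set{a,b}<2^{\max\set{a,b}+1}\pars{\max\set{a,b}+1}$,
which is exactly $2^{\height e}\height e$ since $\height{\rappl{\labs xs}{\ms t}}=1+\max\set{\height s,\height{\ms t}}$. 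You instead prove the parametrized invariant $\height{e'}\le 2^{d}\max\set{d,\height e}$ along $\pbsresRed d$ — deliberately mirroring the statement and proof of Lemma \ref{lemma:bound:maxoccdepth:pbsresRed} — and specialize to $d=\height e$ using the equivalence $e\presRed ε'$ iff $e\pbsresRed{\height e}ε'$; in the redex case you need the finer bound $\height{e'}\le\max\set{\height{s'},\maxoccdepth x{s'}+\height{\ms{t'}}-1}$ together with the occurrence-depth bound $\maxoccdepth x{s'}\le 2^{d}d$. I checked the inequalities in all your cases and they go through. What each approach buys: yours yields a uniform height bound for every $\pbsresRed d$-step, not just for the instance $d=\height e$, and makes the analogy with the occurrence-depth lemma explicit (the paper itself remarks that its proof "is very similar to that of Lemma \ref{lemma:bound:maxoccdepth:pbsresRed}" without exploiting it); the paper's version is shorter, self-contained, and independent of the $\pbsresRed d$ machinery, which matters only in that Subsection \ref{subsection:bounded} is meant to be readable by someone who skipped Subsection \ref{subsection:pbsresRed}.
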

\begin{proof}
	The proof is by induction on the reduction $e\presRed ε'$ such that $e'\in\support{ε'}$,
	and is very similar to that of Lemma \ref{lemma:bound:maxoccdepth:pbsresRed}.
	We detail only the base case.

	If $e=\rappl{\labs xs}{\ms t}$
	and $ε'=\lsubst{σ'}x{\ms{τ}'}$ with 
	$s\presRed σ'$ and $\ms t\presRed \ms{τ}'$.
	Then $e'\in\support{\lsubst{s'}x{\ms {t'}}}$ with 
	$s'\in\support{σ'}$ and $\ms{t'}\in\support{\ms{τ}'}$.
	By induction, $\height{s'}\le 2^{\height{s}}\height{s}$
	and $\height{\ms{t'}}\le 2^{\height{\ms t}}\height{\ms t}$.
	By Lemma~\ref{lemma:npdiff:size}, 
	\begin{align*}
	     \height{e'}
	&\le \height{s'}+\height{\ms{t'}}
  \\
	&\le 2^{\height{s}}\height{s}+2^{\height{\ms t}}\height{\ms t}
  \\
	&\le 2\times 2^{\max\set{\height{s},\height{\ms t}}}\max\set{\height{s},\height{\ms t}}
  \\
	&<   2^{\max\set{\height{s},\height{\ms t}}+1}\pars{\max\set{\height{s},\height{\ms t}}+1}
  \\
	&=   2^{\height{e}}\height{e}.&&\hfill\qedhere
	\end{align*}
\end{proof}

It follows that $\boundedStructure$ is a $\presRed$-reduction structure:
since $\vpresRed[\boundedStructure]$ coincides with $\vpbsresRed[\boundedStructure]$,
Corollary \ref{corollary:vpbsresRed:confluent} entails that
$\vpresRed[\boundedStructure]$ is strongly confluent.
We can even refine this result following Lemma \ref{lemma:fullReduct:summable}.
First, let us call \emph{bounded reduction structure} any $\presRed$-reduction
structure $\fE$ such that $\fE\subseteq\boundedStructure$.
Then Lemma \ref{lemma:vpbsresRed:confluent} entails:
\begin{corollary}
	\label{corollary:splitVariant:presRed:fullReduct}
	For all bounded reduction structure $\fE$, 
	and all reduction $ε\vpresRed[\fE] ε'$, 
	$ε'\vpresRed[\fE]\fullReduct{ε}$.
\end{corollary}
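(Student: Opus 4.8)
The plan is to deduce the statement directly from Lemma \ref{lemma:vpbsresRed:confluent}, using that on a bounded reduction structure the unconstrained reduction $\vpresRed[\fE]$ is automatically of the bounded-substitution-depth form $\splitVariant[\fE]\pbsresRed h$ for a single, uniform $h$.

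First I would unfold the hypothesis $ε\vpresRed[\fE]ε'$: by definition there are a resource support $\calE\in\fE$, a summable family $\pars{e_i}_{i\in I}\in\calE^I$, scalars $\pars{a_i}_{i\in I}$ and a summable family $\pars{ε'_i}_{i\in I}$ such that $ε=\sum_{i\in I}a_i\sm e_i$, $ε'=\sum_{i\in I}a_i\sm ε'_i$ and $e_i\presRed ε'_i$ for all $i\in I$. Since $\fE$ is a bounded reduction structure, $\calE\in\fE\subseteq\boundedStructure$, so $\height{\calE}$ is finite; set $h=\max\height{\calE}$. Each $e_i\in\calE$ then satisfies $\height{e_i}\le h$, and from $e_i\presRed ε'_i$ we obtain $e_i\pbsresRed{\height{e_i}}ε'_i$, hence $e_i\pbsresRed h ε'_i$ since the relations $\pbsresRed d$ increase with $d$. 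The very same data thus witness $ε\splitVariant[\fE]\pbsresRed h ε'$.

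Next I would invoke Lemma \ref{lemma:vpbsresRed:confluent}: as $\fE$ is in particular a $\presRed$-reduction structure, it yields $ε'\splitVariant[\fE]\pbsresRed{2^h h}\fpbsReduct h{ε}$. Two things then remain. For the identification of the target, observe that $\support{ε}\subseteq\calE$, so every $e\in\support{ε}$ has $\height e\le h$; the computation in the proof of Lemma \ref{lemma:fullReduct:summable} gives $\fpbsReduct h e=\fullReduct e$ for such $e$, and since $\fpbsReductSym h$ is linear-continuous we get $\fpbsReduct h{ε}=\sum_{e\in\support{ε}}ε_e\sm\fpbsReduct h e=\sum_{e\in\support{ε}}ε_e\sm\fullReduct e=\fullReduct{ε}$. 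For the relaxation of the reduction, since $\pbsresRed{2^h h}$ is contained in $\presRed$ and the $\fE$-restricted splitting is monotone in its argument relation, we have $\mathord{\splitVariant[\fE]\pbsresRed{2^h h}}\subseteq\mathord{\vpresRed[\fE]}$. Combining the two, $ε'\vpresRed[\fE]\fullReduct{ε}$, as required.

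I do not anticipate a real obstacle, since the combinatorial work is already carried by Lemmas \ref{lemma:vpbsresRed:confluent} and \ref{lemma:fullReduct:summable}. The two points demanding a little care are the extraction of a single uniform depth bound $h$ — which is exactly where the hypothesis $\fE\subseteq\boundedStructure$ is used — and the verification that the linear-continuous value $\fpbsReduct h{ε}$ coincides with $\fullReduct{ε}$, which relies on $\support{ε}\subseteq\calE$ so that no expression of height above $h$ contributes to the sum.
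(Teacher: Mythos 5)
Your proof is correct and follows exactly the route the paper intends: the paper derives this corollary directly from Lemma \ref{lemma:vpbsresRed:confluent}, relying on the observations (made just before, in Subsection \ref{subsection:bounded}) that $\vpresRed$ coincides with $\splitVariant\pbsresRed h$ on supports of height at most $h$ and that $\fpbsReduct he=\fullReduct e$ when $\height e\le h$. You have merely made explicit the uniform bound $h=\max\height{\calE}$ and the identification $\fpbsReduct h{ε}=\fullReduct{ε}$, which the paper leaves implicit.
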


It should moreover be clear that $\TaylorExp{M}$ is bounded for all $M\in\algebraicTerms$.
In the next section, we show that $\vpresRed[\boundedStructure]$ allows to simulate 
parallel β-reduction via Taylor expansion.\footnote{
	\label{footnote:direct}
	Observe that it is possible to establish Corollary
	\ref{corollary:splitVariant:presRed:fullReduct} quite directly, following the
	proof of Lemma \ref{lemma:vpbsresRed:confluent}, and using only Lemma
	\ref{lemma:pOneStepGenerates:height} and a variant of Lemma
	\ref{lemma:pbdresRed:size} (replacing $b$ with $\height e$).
	This is the path adopted in the extended abstract
	\cite{vaux:taylor-beta} presented at \emph{CSL 2017}.
}

\section{Simulating β-reduction under Taylor expansion}

\label{section:simulation:beta}

From now on, for all $M,N\in\algebraicTerms$, we write 
$M\vpresRed N$ if $\TaylorExp M\vpresRed\TaylorExp N$.
More generally, for all $M\in\algebraicTerms$ and all $σ\in\resourceVectors$,
we write $M\vpresRed σ$ (resp.\ $σ\vpresRed M$) if $\TaylorExp M\vpresRed σ$
(resp.\ $σ\vpresRed\TaylorExp M$).
We will show in Subsection \ref{subsection:simulation} that
$M\vpresRed N$ as soon as $M\pbetaRed N$
where $\pbetaRed$ is the parallel β-reduction defined as follows:

\begin{definition}
	\label{definition:pbetaRed}
	We define \emph{parallel β-reduction} on algebraic terms
	$\mathord{\pbetaRed}\subseteq\algebraicTerms\times\algebraicTerms$
	by the following inductive rules:
	\begin{itemize}
		\item $x\pbetaRed x$;
		\item if $S\pbetaRed M'$ then $\labs xS\pbetaRed\labs x{M'}$;
		\item if $S\pbetaRed M'$ and $N\pbetaRed N'$ then
			$\appl SN\pbetaRed\appl{M'}{N'}$;
		\item if $S\pbetaRed M'$ and $N\pbetaRed N'$ then
			$\appl {\labs xS}N\pbetaRed\subst{M'}x{N'}$;
		\item $0\pbetaRed 0$;
		\item if $M\pbetaRed M'$ then $a\sm M\pbetaRed a\sm M'$;
		\item if $M\pbetaRed M'$ and $N\pbetaRed N'$ then $M+N\pbetaRed M'+N'$.
	\end{itemize}
\end{definition}

\label{vpresRed:degenerate}
In particular, if $1\in\rigS$ admits an opposite element $-1\in\rigS$
then $\vpresRed[\boundedStructure]$ is degenerate. Indeed, we can consider $\pbetaRed$ up to 
the equality of vector λ-terms by setting $M\vpbetaRed N$ if
there are $M'\algEq M$ and $N'\algEq N$ such that $M'\pbetaRed N'$.
Since $\teq$ subsumes $\algEq$, the results of Subsection \ref{subsection:simulation}
will imply that $M\vpresRed[\boundedStructure] N$ as soon as $M\vpbetaRed N$.
If $-1\in\rigS$, we have  $M\RT\vpbetaRed N$ 
for all $M,N\in\algebraicTerms$ by Example \ref{example:inconsistency},
hence $M\RT\vpresRed[\boundedStructure] N$.

Using reduction structures, we will nonetheless be able to define a consistent
reduction relation containing β-reduction, but restricted to
those algebraic λ-terms that have a normalizable Taylor expansion, in the 
sense to be defined in Section \ref{section:normalization}.

On the other hand, even assuming $\rigS$ is zerosumfree,
Taylor expansions are not stable under $\vpresRed$:
if $M\vpresRed[\boundedTermStructure] σ'$, we know from the previous 
section that $σ'$ is bounded and $M\vpbsresRed σ'$,
but there is no reason why $σ'$ would be the Taylor expansion of 
an algebraic λ-term.

We do know, however, that $σ'\vpresRed[\boundedTermStructure]\fullReduct{\TaylorExp M}$, which 
will allow us to obtain a weak conservativity result w.r.t. parallel β-reduction:
for all reduction $M\RT\vpresRed[\boundedTermStructure] σ'$ 
there is a reduction $M\RT\pbetaRed M'$
such that $σ'\RT\vpresRed[\boundedTermStructure] M'$, \ie 
any $\vpresRed[\boundedTermStructure]$-reduction sequence from a Taylor expansion can be completed into a
parallel β-reduction sequence (Subsection \ref{subsection:vpresRed:conservative}).
Restricted to normalizable pure λ-terms, this will enable us to obtain an actual
conservativity result.

\subsection{Simulation of parallel β-reduction}

\label{subsection:simulation}

We show that $\vpresRed[\boundedStructure]$ allows to simulate $\pbetaRed$
on $\resourceVectors$, without any particular assumption
on $\rigS$.

\begin{lemma}
	\label{lemma:presRed:beta}
	If $σ\vpresRed[\calS] σ'$ and
	$\ms {τ}\vpresRed[\ms{\calT}] \ms{τ'}$ then
	$\rappl{\labs x{σ}}{\ms{τ}}\vpresRed[\rappl{\labs x{\calS}}{\ms{\calT}}]\lsubst{σ'}x{\ms{τ'}}$.
\end{lemma}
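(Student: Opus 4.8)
The plan is to unfold the two hypotheses into witnessing families indexed by sets $I$ and $J$, and then take the product family indexed by $I\times J$ to witness the conclusion. Concretely, Definition \ref{definition:splitVariant} applied to $σ\vpresRed[\calS]σ'$ gives scalars $(a_i)_{i\in I}$, a summable family $(s_i)_{i\in I}\in\calS^I$ and a summable family $(σ'_i)_{i\in I}$ of finite sums such that $σ=\sum_{i\in I} a_i\sm s_i$, $σ'=\sum_{i\in I} a_i\sm σ'_i$ and $s_i\R\presRed σ'_i$ for every $i$; symmetrically, from $\ms{τ}\vpresRed[\ms{\calT}]\ms{τ'}$ I obtain $(b_j)_{j\in J}$, $(\ms t_j)_{j\in J}\in\ms{\calT}^J$ and $(\ms{τ}'_j)_{j\in J}$. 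I would then set, for $(i,j)\in I\times J$, the coefficient $a_ib_j$, the resource expression $\rappl{\labs x{s_i}}{\ms t_j}$, and the reduct $\lsubst{σ'_i}x{\ms{τ}'_j}$.

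The per-element reduction step is immediate: since $\presRed$ is reflexive we have $\R\presRed=\presRed$, so $s_i\presRed σ'_i$ and $\ms t_j\presRed\ms{τ}'_j$, whence the redex rule of Definition \ref{definition:presRed} yields $\rappl{\labs x{s_i}}{\ms t_j}\presRed\lsubst{σ'_i}x{\ms{τ}'_j}$. Moreover $\rappl{\labs x{s_i}}{\ms t_j}\in\rappl{\labs x{\calS}}{\ms{\calT}}$ by construction, and the latter is a resource support since $\fv{\calS}$ and $\fv{\ms{\calT}}$ are finite. It thus remains to check that the two product families are summable and that they sum to the required vectors.

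Both sum identities and both summability conditions I would derive from bilinear-continuity. For the antecedents, the constructor $(σ,\ms{τ})\mapsto\rappl{\labs x{σ}}{\ms{τ}}$ is bilinear-continuous (a composition of the linear-continuous abstraction map and the bilinear-continuous application map), so evaluating it on the summable families $(s_i)_i$ and $(\ms t_j)_j$ yields at once that $\pars{\rappl{\labs x{s_i}}{\ms t_j}}_{(i,j)}$ is summable and that $\sum_{(i,j)}a_ib_j\sm\rappl{\labs x{s_i}}{\ms t_j}=\rappl{\labs x{σ}}{\ms{τ}}$. For the reducts, multilinear substitution $\lsubst{\cdot}x{\cdot}$ is likewise bilinear-continuous, being the extension of a summable binary function, so evaluating it on the summable families $(σ'_i)_i$ and $(\ms{τ}'_j)_j$ gives simultaneously (via Definition \ref{definition:continuous}) the summability of $\pars{\lsubst{σ'_i}x{\ms{τ}'_j}}_{(i,j)}$ and the identity $\sum_{(i,j)}a_ib_j\sm\lsubst{σ'_i}x{\ms{τ}'_j}=\lsubst{σ'}x{\ms{τ'}}$. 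Assembling these families exhibits exactly $\rappl{\labs x{σ}}{\ms{τ}}\vpresRed[\rappl{\labs x{\calS}}{\ms{\calT}}]\lsubst{σ'}x{\ms{τ'}}$.

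The delicate point, and the one I would treat most carefully, is the summability of the reduct family, because it is tempting but wrong to try to bound the antecedents $s_i,\ms t_j$ in terms of a fixed target term: the size collapse of $\presRed$ (Example \ref{example:reduction:collapse}) makes such a bound impossible. The argument must instead route through the supports of the \emph{reducts} $σ'_i,\ms{τ}'_j$. This is precisely what bilinear-continuity packages up for us, but if one prefers an explicit verification, one fixes a target $u$, uses Lemma \ref{lemma:lsubst:size} to bound the size and free variables of any $s'\in\support{σ'_i}$ and $\ms t'\in\support{\ms{τ}'_j}$ admitting $u\in\support{\lsubst{s'}x{\ms t'}}$ — leaving finitely many pairs $(s',\ms t')$ — and then invokes the summability of $(σ'_i)_i$ and $(\ms{τ}'_j)_j$ to bound the number of indices $i$ with $s'\in\support{σ'_i}$ and $j$ with $\ms t'\in\support{\ms{τ}'_j}$. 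Either way it is the summability of the reducts, not of the antecedents, that carries the proof.
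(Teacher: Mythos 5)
Your proof is correct and follows essentially the same route as the paper's: unfold both hypotheses into witnessing families, form the product family indexed by $I\times J$, obtain the per-element redex reduction $\rappl{\labs x{s_i}}{\ms t_j}\presRed\lsubst{σ'_i}x{\ms{τ}'_j}$, and derive both summability conditions and both sum identities from the multilinear-continuity of the constructors and of multilinear substitution. The closing discussion of why summability must be argued on the reducts rather than the antecedents is a sound elaboration of what the paper leaves implicit in the phrase ``by multilinear-continuity.''
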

\begin{proof}
	Assume there are summable families
	$\pars{s_i}_{i\in I}$,
	$\pars{σ'_i}_{i\in I}$,
	$\pars{\ms t_j}_{j\in J}$ and
	$\pars{\ms{τ}'_j}_{j\in J}$, 
	and families of scalars
	$\pars{a_i}_{i\in I}\in\calS^I$ and $\pars{b_j}_{j\in J}\in\ms\calT^J$ 
	such that:
	\begin{itemize}
		\item  $σ=\sum_{i\in I} a_i\sm s_i$,
			$σ'=\sum_{i\in I} a_i\sm σ'_i$
			and $s_i\presRed σ'_i$ for all $i\in I$;
		\item  $\ms{τ}=\sum_{j\in J} b_j\sm \ms t_j$,
			$\ms{τ}'=\sum_{j\in J} b_j\sm \ms{τ}'_j$,
			and $\ms t_j\in\ms{\calT}$ and $\ms t_j\presRed \ms{τ}'_j$ for all $j\in J$.
	\end{itemize}
	By multilinear-continuity, the families 
	$\pars{\rappl{\labs x{s_i}}{\ms{t}_j}}_{i\in I,j\in J}$
	and 
	$\pars{\lsubst{σ'_i}x{\ms{τ}'_j}}_{i\in I,j\in J}$
	are sum\-ma\-ble,
	$\rappl{\labs x{σ}}{\ms{τ}}
		=\sum_{i\in I,j\in J}
		a_ib_j\sm \rappl{\labs x{s_i}}{\ms{t}_j}$
	and 
	$\lsubst {σ'}x{\ms{τ}'}
		=\sum_{i\in I,j\in J}
		a_ib_j\sm \lsubst{σ'_i}x{\ms{τ}'_j}$.
	It is then sufficient to observe that 
	$\rappl{\labs x{s_i}}{\ms{t}_j}\presRed \lsubst{σ'_i}x{\ms{τ}'_j}$
	for all $(i,j)\in I\times J$.

	The additional requirement on resource supports is straightforwardly satisfied, since
	$\rappl{\labs x{s_i}}{\ms t_j}\in\rappl{\labs x{\calS}}{\ms\calT}$ for all $(i,j)\in I\times J$.
\end{proof}

\begin{lemma}
	\label{lemma:presRed:contextual:abs-and-app}
	If $σ\vpresRed[\calS] σ'$ then 
	$\labs x{σ}\vpresRed[\labs x{\calS}] \labs x{σ'}$.
	If moreover 
	$\ms {τ}\vpresRed[\ms{\calT}] \ms{τ'}$ then
	$\rappl{σ}{\ms{τ}}\vpresRed[\rappl{\calS}{\ms{\calT}}] \rappl{σ'}{\ms{τ'}}$.
\end{lemma}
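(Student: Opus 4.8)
The plan is to follow the proof of Lemma~\ref{lemma:presRed:beta} almost verbatim, replacing the redex rule of Definition~\ref{definition:presRed} by its contextual counterparts and relying on the (multi)linear-continuity of the term constructors.

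For the abstraction case, I would first unfold the hypothesis $σ\vpresRed[\calS]σ'$: by definition of $\splitVariant[\calS]\presRed$ there are a set $I$, a family of scalars $(a_i)_{i\in I}$, and summable families $(s_i)_{i\in I}\in\calS^I$ and $(σ'_i)_{i\in I}\in\finiteResourceSums^I$ such that $σ=\sum_{i\in I}a_i\sm s_i$, $σ'=\sum_{i\in I}a_i\sm σ'_i$ and $s_i\presRed σ'_i$ for every $i\in I$ (the reflexive closure in the definition is harmless since $\presRed$ is reflexive). Since the abstraction constructor $σ\mapsto\labs x{σ}$ is linear-continuous, the families $(\labs x{s_i})_{i\in I}$ and $(\labs x{σ'_i})_{i\in I}$ are summable, and $\labs x{σ}=\sum_{i\in I}a_i\sm\labs x{s_i}$, $\labs x{σ'}=\sum_{i\in I}a_i\sm\labs x{σ'_i}$. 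The abstraction rule of Definition~\ref{definition:presRed} gives $\labs x{s_i}\presRed\labs x{σ'_i}$, and $\labs x{s_i}\in\labs x{\calS}$ by definition of the action of $σ\mapsto\labs x{σ}$ on a set of resource terms; hence $\labs x{σ}\vpresRed[\labs x{\calS}]\labs x{σ'}$.

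For the application case I would additionally unfold $\ms{τ}\vpresRed[\ms{\calT}]\ms{τ'}$ into summable families $(\ms t_j)_{j\in J}\in\ms{\calT}^J$ and $(\ms{τ}'_j)_{j\in J}$, scalars $(b_j)_{j\in J}$, with $\ms t_j\presRed\ms{τ}'_j$ for all $j\in J$. By the bilinear-continuity of the application constructor $(σ,\ms{τ})\mapsto\rappl{σ}{\ms{τ}}$, the families $(\rappl{s_i}{\ms t_j})_{(i,j)\in I\times J}$ and $(\rappl{σ'_i}{\ms{τ}'_j})_{(i,j)\in I\times J}$ are summable, $\rappl{σ}{\ms{τ}}=\sum_{(i,j)\in I\times J}a_ib_j\sm\rappl{s_i}{\ms t_j}$ and $\rappl{σ'}{\ms{τ'}}=\sum_{(i,j)\in I\times J}a_ib_j\sm\rappl{σ'_i}{\ms{τ}'_j}$. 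Applying the application rule of Definition~\ref{definition:presRed} to $s_i\presRed σ'_i$ and $\ms t_j\presRed\ms{τ}'_j$ yields $\rappl{s_i}{\ms t_j}\presRed\rappl{σ'_i}{\ms{τ}'_j}$, while $\rappl{s_i}{\ms t_j}\in\rappl{\calS}{\ms{\calT}}$; hence $\rappl{σ}{\ms{τ}}\vpresRed[\rappl{\calS}{\ms{\calT}}]\rappl{σ'}{\ms{τ'}}$.

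I do not expect a genuine obstacle: every ingredient---summability of the constructor families, their (multi)linear-continuity, and the reflexivity of $\presRed$---has already been established, and the argument is the exact contextual analogue of Lemma~\ref{lemma:presRed:beta}, with the contextual rules in place of the redex rule. The only point deserving an explicit line, just as in that lemma, is the resource-support bookkeeping, namely that $\labs x{\calS}$ (respectively $\rappl{\calS}{\ms{\calT}}$) really contains each reduced term $\labs x{s_i}$ (respectively $\rappl{s_i}{\ms t_j}$), which is immediate from the definitions of these constructors on sets of resource terms.
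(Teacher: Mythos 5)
Your proposal is correct and matches the paper's own argument, which simply notes that each result "follows from the multilinear-continuity of syntactic operators, and the contextuality of $\presRed$", exactly as in the proof of Lemma \ref{lemma:presRed:beta}. Your explicit unfolding of the summable families, the use of the contextual rules of Definition \ref{definition:presRed}, and the resource-support bookkeeping are precisely the details the paper leaves implicit.
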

\begin{proof}
	Similarly to the previous lemma, each result follows from the
	multilinear-continuity of syntactic operators, and the contextuality of
	$\presRed$.
\end{proof}

\begin{lemma}
	\label{lemma:presRed:contextual:prom}
	If $σ\vpresRed[\calS]σ'$ then 
	$\prom{σ}\vpresRed[\prom{\calS}]\prom{{σ'}}$.
\end{lemma}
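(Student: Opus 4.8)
The plan is to mirror the proofs of Lemmas~\ref{lemma:presRed:beta} and~\ref{lemma:presRed:contextual:abs-and-app}: the result should follow from the multilinear-continuity of the monomial constructor together with the monomial rule of $\presRed$, the only novelty being that $\prom{\cdot}$ packages an infinite sum over the degree $n$, so the $\splitVariant$-decomposition will range over tuples of all lengths. Concretely, I would first unfold $σ\vpresRed[\calS]σ'$ according to Definition~\ref{definition:splitVariant}: there are an index set $I$, a family of scalars $\pars{a_i}_{i\in I}$, a summable family $\pars{e_i}_{i\in I}\in\calS^I$ and a summable family $\pars{ε'_i}_{i\in I}\in\finiteResourceSums^I$ with $σ=\sum_{i\in I}a_i\sm e_i$, $σ'=\sum_{i\in I}a_i\sm ε'_i$ and $e_i\presRed ε'_i$ for every $i$ (note $\presRed$ is reflexive by a trivial induction, so the reflexive closure in the definition adds nothing). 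I would then index the decomposition of $\prom{σ}$ by the set $K=\set{(n,\vec\imath)\st n\in\naturals,\ \vec\imath=(i_1,\dotsc,i_n)\in I^n}$, assigning to $k=(n,\vec\imath)\in K$ the scalar $\frac 1{n!}a_{i_1}\cdots a_{i_n}$, the antecedent $\ms{e}_k\eqdef\mset{e_{i_1},\dotsc,e_{i_n}}$ and the reduct $\ms{ε'}_k\eqdef\mset{ε'_{i_1},\dotsc,ε'_{i_n}}$.

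The monomial rule of Definition~\ref{definition:presRed} gives $\ms{e}_k\presRed\ms{ε'}_k$ for each $k$, and since every $e_{i_j}$ lies in $\calS$ we have $\ms{e}_k\in\prom{\calS}$, reading $\prom{\calS}\in\powerset{\resourceMonomials}$ as the set of finite multisets over $\calS$ (whose set of free variables is $\fv{\calS}$, hence finite, so that $\prom{\calS}$ is a genuine resource support). For the two sum identities I would invoke the $n$-linear-continuity of $\pars{ρ_1,\dotsc,ρ_n}\mapsto\mset{ρ_1,\dotsc,ρ_n}$: grouping $K$ by degree, for fixed $n$ the subsum $\sum_{\vec\imath\in I^n}a_{i_1}\cdots a_{i_n}\sm\mset{e_{i_1},\dotsc,e_{i_n}}$ collapses to $σ^n$, and likewise $\sum_{\vec\imath\in I^n}a_{i_1}\cdots a_{i_n}\sm\mset{ε'_{i_1},\dotsc,ε'_{i_n}}$ collapses to $\pars{σ'}^n$; summing these against $\frac 1{n!}$ over $n$ recovers $\prom{σ}=\sum_{k\in K}\frac 1{n!}a_{i_1}\cdots a_{i_n}\sm\ms{e}_k$ and $\prom{σ'}=\sum_{k\in K}\frac 1{n!}a_{i_1}\cdots a_{i_n}\sm\ms{ε'}_k$ by the very definition of promotion. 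Together with the termwise reductions this exhibits the required $\vpresRed[\prom{\calS}]$-step.

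The main obstacle, though a mild one, is to check the two summability side conditions that license both the reindexing above and the final step, namely that $\pars{\ms{e}_k}_{k\in K}$ and $\pars{\ms{ε'}_k}_{k\in K}$ are summable. Here I would fix a target monomial $\ms u\in\resourceMonomials$: its cardinality determines $n=\card{\ms u}$, so only the degree-$n$ slice of $K$ can place mass on $\ms u$. For $\pars{\ms{e}_k}_k$, the equation $\ms u=\mset{e_{i_1},\dotsc,e_{i_n}}$ forces each $e_{i_j}$ to equal one of the finitely many elements of $\ms u$, and summability of $\pars{e_i}_{i\in I}$ bounds the $i_j$ realising each; for $\pars{\ms{ε'}_k}_k$, having $\ms u\in\support{\ms{ε'}_k}$ forces each $i_j$ into the set $\Union_{t\in\ms u}\set{i\in I\st t\in\support{ε'_i}}$, which is finite by summability of $\pars{ε'_i}_{i\in I}$. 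In either case only finitely many tuples $\vec\imath$ survive, which yields the summability and completes the proof.
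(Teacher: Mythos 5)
Your proposal is correct and follows essentially the same route as the paper's proof: decompose $σ$ and $σ'$ as in Definition \ref{definition:splitVariant}, index the decomposition of $\prom{σ}$ by degree-$n$ tuples over $I$ with scalars $\frac{a_{i_1}\cdots a_{i_n}}{n!}$, use the multilinear-continuity of the monomial constructor to recover $σ^n$ and $\pars{σ'}^n$ slice by slice, and conclude from the monomial rule of $\presRed$. Your explicit cardinality argument for summability is just a spelled-out version of the paper's observation that the supports of the $σ^n$ (resp.\ $\pars{σ'}^n$) are pairwise disjoint across degrees.
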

\begin{proof}
	Assume there are summable families
	$\pars{s_i}_{i\in I}$ and 
	$\pars{σ'_i}_{i\in I}$,
	and a family of scalars
	$\pars{a_i}_{i\in I}$
	such that  $σ=\sum_{i\in I} a_i\sm s_i$,
			$σ'=\sum_{i\in I} a_i\sm σ'_i$
			and $s_i\presRed σ'_i$ for all $i\in I$.

	Then by multilinear-continuity of the monomial construction,
	for all $n\in\naturals$,
	the families
	$\pars{\mset{s_{i_1},\dotsc,s_{i_n}}}_{i_1,\dotsc,i_n\in I}$
	and
	$\pars{\mset{σ'_{i_1},\dotsc,σ'_{i_n}}}_{i_1,\dotsc,i_n\in I}$
	are summable, and
	\[{σ}^n=\sum_{i_1,\dotsc,i_n\in I} a_{i_1}\cdots a_{i_n}\mset{s_{i_1},\dotsc,s_{i_n}}\]
	and \[{σ'}^n =\sum_{i_1,\dotsc,i_n\in I}
	a_{i_1}\cdots a_{i_n}\sm\mset{σ'_{i_1},\dotsc,σ'_{i_n}}.\]
	Since the supports of the monomial vectors $σ^n$ (resp.\ ${σ'}^n$)
	for $n\in\naturals$ are pairwise disjoint, 
	we obtain that the families
	$\pars{\mset{s_{i_1},\dotsc,s_{i_n}}}_{\substack{n\in\naturals\\i_1,\dotsc,i_n\in I}}$
	and
	$\pars{\mset{σ'_{i_1},\dotsc,σ'_{i_n}}}_{\substack{n\in\naturals\\i_1,\dotsc,i_n\in I}}$
	are summable, 
	and
	\[
		\prom{σ}
		=\sum_{n\in\naturals}\frac 1{\factorial n}\sm {σ}^n
		=\sum_{\substack{n\in\naturals\\i_1,\dotsc,i_n\in I}}
		\frac {a_{i_1}\cdots a_{i_n}}{\factorial n}\sm\mset{s_{i_1},\dotsc,s_{i_n}}
	\]
	and 
	\[
		\prom{{σ'}}
		=\sum_{n\in\naturals}\frac 1{\factorial n}\sm {σ'}^n
		=\sum_{\substack{n\in\naturals\\i_1,\dotsc,i_n\in I}}
		\frac {a_{i_1}\cdots a_{i_n}}{\factorial n}\sm\mset{σ'_{i_1},\dotsc,σ'_{i_n}}
	\]
	which concludes the proof since each
	$\mset{s_{i_1},\dotsc,s_{i_n}}\presRed\mset{σ'_{i_1},\dotsc,σ'_{i_n}}$.
\end{proof}

\begin{lemma}
	\label{lemma:presRed:contextual:semimodule}
	If $ε\vpresRed[\calE] ε'$ and $φ\vpresRed[\calF] φ'$ then 
	$a\sm ε\vpresRed[\calE] a\sm ε'$ and $ε+φ\vpresRed[\calE\union\calF] ε'+φ'$.
\end{lemma}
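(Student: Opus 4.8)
The plan is to transport the witnessing decompositions of the hypotheses through the two semimodule operations, relying only on the closure properties of summable families recorded in Subsection~\ref{subsection:summable}. Unfolding Definition~\ref{definition:splitVariant}, the hypothesis $ε\vpresRed[\calE] ε'$ supplies an index set $I$, scalars $(a_i)_{i\in I}$, and summable families $(e_i)_{i\in I}\in\calE^I$ and $(ε'_i)_{i\in I}\in\finiteResourceSums^I$ with $ε=\sum_{i\in I}a_i\sm e_i$, $ε'=\sum_{i\in I}a_i\sm ε'_i$ and $e_i\R\presRed ε'_i$ for every $i\in I$; symmetrically $φ\vpresRed[\calF] φ'$ yields such data over an index set $J$, with scalars $(b_j)_{j\in J}$, expressions $(f_j)_{j\in J}\in\calF^J$ and reducts $(φ'_j)_{j\in J}$.

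For the scalar case I would keep the very same families $(e_i)$ and $(ε'_i)$ and merely replace the scalars by $(a\,a_i)_{i\in I}$. Summability of both families is untouched, and since scaling preserves summability and commutes with pointwise sums (Subsection~\ref{subsection:summable}) we get $a\sm ε=\sum_{i\in I}(a\,a_i)\sm e_i$ and $a\sm ε'=\sum_{i\in I}(a\,a_i)\sm ε'_i$; as the $e_i$ still lie in $\calE$ and each step $e_i\R\presRed ε'_i$ is unchanged, this is exactly a witness for $a\sm ε\vpresRed[\calE] a\sm ε'$.

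For the sum case I would index everything by the disjoint union $K=I+J$, attaching to a left index the datum $(a_i,e_i,ε'_i)$ and to a right index the datum $(b_j,f_j,φ'_j)$. The two points to check are that the disjoint union of summable families is again summable — for a fixed expression, the set of indices whose member equals it splits as the union of the two finite sets coming from $I$ and $J$ — and that summation over a coproduct is computed componentwise as the sum of the two sums, so that the combined family recovers $ε+φ$ and its reducts recover $ε'+φ'$. Every expression involved lies in $\calE\union\calF$, which is a legitimate resource support since $\fv{\calE\union\calF}=\fv{\calE}\union\fv{\calF}$ is finite, and each relevant $\R\presRed$-step is inherited from one of the hypotheses; hence $ε+φ\vpresRed[\calE\union\calF] ε'+φ'$.

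There is no genuine obstacle here: the statement is precisely the assertion that $\vpresRed[\calE]$ is compatible with the semimodule operations, completing the family of contextuality lemmas, and the only care required is the bookkeeping ensuring that the rearranged index families remain summable — which the closure properties of Subsection~\ref{subsection:summable} guarantee. The single point worth making explicit is that enlarging the support from $\calE$ (resp.\ $\calF$) to $\calE\union\calF$ is exactly what lets the disjoint-union construction typecheck against the support restriction.
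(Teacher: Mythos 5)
Your proof is correct and is just the explicit unfolding of what the paper dispatches in one line ("follows directly from the definitions, using the fact that summable families form a $\rigS$-semimodule"): rescaling the witnessing scalars for the first claim and taking the disjoint union of the witnessing index families for the second. The points you single out — summability being preserved under scaling and disjoint union, and $\calE\union\calF$ being a legitimate resource support — are exactly the content of that appeal to the semimodule structure of summable families.
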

\begin{proof}
	Follows directly from the definitions, using the fact that 
	summable families form a $\rigS$-semimodule.
\end{proof}

\begin{lemma}
	\label{lemma:vpresRed:correct}
	If $M\pbetaRed M'$
	then $M\vpresRed[\TaylorSup M] M'$.
\end{lemma}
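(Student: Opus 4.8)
The plan is to proceed by induction on the derivation of $M\pbetaRed M'$, translating each rule of Definition \ref{definition:pbetaRed} into the corresponding contextual property of $\vpresRed$ established in Lemmas \ref{lemma:presRed:beta}, \ref{lemma:presRed:contextual:abs-and-app}, \ref{lemma:presRed:contextual:prom} and \ref{lemma:presRed:contextual:semimodule}. The observation that makes the induction go through \emph{at the level of resource supports} is that $\TaylorSupSym$ is defined by exactly the syntactic operations on supports that appear in the support annotations of those contextual lemmas: in each clause, the annotation produced by the relevant lemma on the source side coincides with $\TaylorSup{M}$, and the one on the target side is absorbed by the identities for $\TaylorExpSym$.

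The base cases $x\pbetaRed x$ and $0\pbetaRed 0$ are immediate, since $\TaylorExp x=x\presRed x$ with resource support $\set x=\TaylorSup x$, and $\TaylorExp 0=0$ reduces trivially with the empty support $\TaylorSup 0=\emptyset$. The semimodule rules follow from Lemma \ref{lemma:presRed:contextual:semimodule} together with $\TaylorExp{a\sm M}=a\sm\TaylorExp M$, $\TaylorSup{a\sm M}=\TaylorSup M$, $\TaylorExp{M+N}=\TaylorExp M+\TaylorExp N$ and $\TaylorSup{M+N}=\TaylorSup M\union\TaylorSup N$. For $\labs xS\pbetaRed\labs x{M'}$ I would apply the induction hypothesis and the abstraction part of Lemma \ref{lemma:presRed:contextual:abs-and-app}. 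For the non-redex application $\appl SN\pbetaRed\appl{M'}{N'}$, recalling $\TaylorExp{\appl SN}=\rappl{\TaylorExp S}{\prom{\TaylorExp N}}$, I would first promote the argument reduction via Lemma \ref{lemma:presRed:contextual:prom} and then apply the application part of Lemma \ref{lemma:presRed:contextual:abs-and-app}; in both cases the resulting annotation is precisely $\TaylorSup{\labs xS}$, resp.\ $\TaylorSup{\appl SN}$.

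The one genuinely substantial case is the β-redex rule $\appl{\labs xS}N\pbetaRed\subst{M'}x{N'}$. From the induction hypotheses $\TaylorExp S\vpresRed[\TaylorSup S]\TaylorExp{M'}$ and $\TaylorExp N\vpresRed[\TaylorSup N]\TaylorExp{N'}$, I would first obtain $\prom{\TaylorExp N}\vpresRed[\prom{\TaylorSup N}]\prom{\TaylorExp{N'}}$ by Lemma \ref{lemma:presRed:contextual:prom}, and then fire the redex at the level of resource vectors by Lemma \ref{lemma:presRed:beta}, yielding
\[
	\rappl{\labs x{\TaylorExp S}}{\prom{\TaylorExp N}}
	\vpresRed[\rappl{\labs x{\TaylorSup S}}{\prom{\TaylorSup N}}]
	\lsubst{\TaylorExp{M'}}x{\prom{\TaylorExp{N'}}}.
\]
The left-hand side is $\TaylorExp{\appl{\labs xS}N}$ and the annotation is $\TaylorSup{\appl{\labs xS}N}$, while the right-hand side is identified with $\TaylorExp{\subst{M'}x{N'}}$ by the commutation of Taylor expansion with substitution (Lemma \ref{lemma:taylor:subst}). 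This last step is exactly where the earlier technical development pays off.

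I expect the only real difficulty to be bookkeeping rather than conceptual: checking clause by clause that the support annotations produced by the contextual lemmas match $\TaylorSupSym$, and that each $\TaylorSup{}$ of a subterm is a legitimate resource support, i.e.\ has finitely many free variables — which is immediate since every term involved does. No additional hypothesis on $\rigS$ is needed, since all the ingredients (multilinear-continuity of the constructors, summability of the promotion families, and the substitution lemma) are already available unconditionally.
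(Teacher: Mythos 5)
Your proposal is correct and follows essentially the same route as the paper: induction on the derivation of $M\pbetaRed M'$, dispatching each rule to Lemmas \ref{lemma:presRed:beta}--\ref{lemma:presRed:contextual:prom} for the simple-term cases and Lemma \ref{lemma:presRed:contextual:semimodule} for the algebraic cases, with Lemma \ref{lemma:taylor:subst} identifying $\lsubst{\TaylorExp{M'}}x{\prom{\TaylorExp{N'}}}$ with $\TaylorExp{\subst{M'}x{N'}}$ in the redex case. You have merely spelled out the support bookkeeping that the paper leaves implicit.
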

\begin{proof}
	By induction on the reduction $M\pbetaRed M'$
	using Lemmas \ref{lemma:presRed:beta} to \ref{lemma:presRed:contextual:prom}
	in the cases of reduction from a simple term,
	and Lemma \ref{lemma:presRed:contextual:semimodule}
	in the case of reduction from an algebraic term.
\end{proof}

Recalling that $\TaylorSup M\in\boundedTermStructure$ we obtain:

\begin{corollary}
	\label{corollary:vpresRed:correct}
	If $M\pbetaRed M'$
	then $M\vpresRed[\boundedTermStructure] M'$.
\end{corollary}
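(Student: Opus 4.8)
The plan is to derive the statement directly from Lemma~\ref{lemma:vpresRed:correct}, which already gives $M\vpresRed[\TaylorSup M] M'$ whenever $M\pbetaRed M'$. What remains is purely to pass from the single resource support $\TaylorSup M$ to the bounded reduction structure $\boundedTermStructure$. By Definition~\ref{definition:splitVariant} the relation $\vpresRed[\fE]$ is nothing but $\Union_{\calE\in\fE}\vpresRed[\calE]$, so the inclusion $\vpresRed[\TaylorSup M]\subseteq\vpresRed[\boundedTermStructure]$ holds as soon as $\TaylorSup M\in\boundedTermStructure$. The whole argument thus reduces to checking membership of a single support in the structure and invoking monotonicity of $\splitVariant[\cdot]\presRed$ in its support parameter.

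Accordingly, the only thing I would actually verify is that $\TaylorSup M$ is a bounded term support, i.e.\ that it belongs to $\boundedTermStructure$. For this I would prove by a straightforward induction on $M$ that every resource term in $\TaylorSup M$ has height at most $\height M$ and free variables among those of $M$: the abstraction, application and sum cases follow immediately from the corresponding clauses of $\TaylorSupSym$ together with the definition of $\height{\cdot}$, the only point to observe being that promotion $\prom{\TaylorSup N}$ keeps the argument inside a monomial at depth one, so it contributes height $1+\height N$ in the application case, exactly matching the recurrence for $\rappl{\cdot}{\cdot}$. Since $\fv M$ is finite and the heights are uniformly bounded, both $\set{\height e\st e\in\TaylorSup M}$ and $\fv{\TaylorSup M}$ are finite, whence $\TaylorSup M\in\boundedStructure$; and as $\TaylorSup M\subseteq\resourceTerms$, it lies in the term-level restriction $\boundedTermStructure$.

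There is essentially no obstacle here: the corollary is a direct specialisation of Lemma~\ref{lemma:vpresRed:correct}, and the boundedness of $\TaylorSup M$ is precisely the fact already flagged as clear in Subsection~\ref{subsection:bounded}, namely that $\TaylorExp M$, and a fortiori $\TaylorSup M$, is bounded for every algebraic $λ$-term $M$. The one mildly delicate point worth stating explicitly is the height bound in the application case, which relies on the convention fixed in Subsection~\ref{subsection:resourceExpressions} — that the function position of an application does not increment the height — so that the bound by $\height M$ is tight rather than merely finite.
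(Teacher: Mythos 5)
Your proof is correct and matches the paper's argument exactly: the paper likewise obtains the corollary by specialising Lemma~\ref{lemma:vpresRed:correct} and simply recalling that $\TaylorSup M\in\boundedTermStructure$, a fact it treats as clear (your induction bounding heights by $\height M$ just makes that explicit).
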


Observe that these results hold on Taylor supports as well, 
which will be useful in the treatment of Taylor normalizable
terms in Section \ref{section:normalization}:
\begin{lemma}
	\label{lemma:vpresRed:correct:TaylorSup}
	If $M\pbetaRed M'$
	then $\TaylorSup M\vpresRed[\TaylorSup M] \TaylorSup{M'}$
	in $\termVectors[\booleans]$.
\end{lemma}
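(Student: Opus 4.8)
The plan is to mirror the proof of Lemma~\ref{lemma:vpresRed:correct} almost verbatim, instantiating the semiring of coefficients at $\rigS=\booleans$. All the contextual lemmas \ref{lemma:presRed:beta} through \ref{lemma:presRed:contextual:semimodule} are stated for arbitrary resource vectors in $\resourceVectors$ over a fixed semiring, and their proofs rely only on the multilinear-continuity of the syntactic constructors; since $\booleans$ is a semiring with fractions ($1$ being its own inverse), these lemmas apply unchanged in $\resourceVectors[\booleans]=\powerset{\resourceExpressions}$, where addition is union and promotion is $\prom{\calS}=\Union_{n}\calS^n$. The structural fact we exploit is that $\TaylorSupSym$ obeys the same defining recursion as $\TaylorExpSym$ over $\booleans$, and in particular that $\TaylorSup{\subst{M'}x{N'}}=\lsubst{\TaylorSup{M'}}x{\prom{\TaylorSup{N'}}}$ by Lemma~\ref{lemma:taylorsup:subst}.

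Concretely, I would argue by induction on the derivation of $M\pbetaRed M'$. The variable and null cases ($x\pbetaRed x$ and $0\pbetaRed 0$) are settled by reflexivity of $\vpresRed[\calE]$ on the respective singleton and empty supports, using $x\presRed x$ and the empty reduction. For $\labs xS\pbetaRed\labs x{M'}$, the induction hypothesis gives $\TaylorSup S\vpresRed[\TaylorSup S]\TaylorSup{M'}$ and Lemma~\ref{lemma:presRed:contextual:abs-and-app} lifts it through abstraction, noting $\labs x{\TaylorSup S}=\TaylorSup{\labs xS}$. For the application case $\appl SN\pbetaRed\appl{M'}{N'}$, I would combine the two induction hypotheses, promote the argument reduction via Lemma~\ref{lemma:presRed:contextual:prom}, and conclude with the application clause of Lemma~\ref{lemma:presRed:contextual:abs-and-app}. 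The scalar case $a\sm M\pbetaRed a\sm M'$ collapses directly onto the induction hypothesis, since $\TaylorSup{a\sm M}=\TaylorSup M$ regardless of $a$ (so, unlike the quantitative setting, Lemma~\ref{lemma:presRed:contextual:semimodule} is not even needed here); and the sum case $M+N\pbetaRed M'+N'$ follows from Lemma~\ref{lemma:presRed:contextual:semimodule}, with $\TaylorSup{M+N}=\TaylorSup M\union\TaylorSup N$ supplying the required support.

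The only case carrying genuine content is the β-redex $\appl{\labs xS}N\pbetaRed\subst{M'}x{N'}$. There $\TaylorSup M=\rappl{\labs x{\TaylorSup S}}{\prom{\TaylorSup N}}$; after applying the induction hypotheses and Lemma~\ref{lemma:presRed:contextual:prom} to obtain $\prom{\TaylorSup N}\vpresRed[\prom{\TaylorSup N}]\prom{\TaylorSup{N'}}$, Lemma~\ref{lemma:presRed:beta} yields $\TaylorSup M\vpresRed[\TaylorSup M]\lsubst{\TaylorSup{M'}}x{\prom{\TaylorSup{N'}}}$, and Lemma~\ref{lemma:taylorsup:subst} rewrites the target as $\TaylorSup{\subst{M'}x{N'}}$. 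I do not expect any real obstacle: the one point to verify is that in each case the support annotation composes to exactly $\TaylorSup M$ (rather than to $\TaylorSup{M'}$ or a strictly larger set), which is immediate since every contextual lemma preserves the support as the image of its left-hand side under the corresponding constructor, and these images are precisely the defining clauses of $\TaylorSupSym$.
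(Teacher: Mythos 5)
Your proposal is correct and follows essentially the same route as the paper, whose proof is exactly the one-line remark that the argument of Lemma~\ref{lemma:vpresRed:correct} goes through by induction on $M\pbetaRed M'$ using Lemmas~\ref{lemma:presRed:beta} to~\ref{lemma:presRed:contextual:semimodule} instantiated in $\termVectors[\booleans]$, with Lemma~\ref{lemma:taylorsup:subst} playing the rôle of Lemma~\ref{lemma:taylor:subst} in the redex case. Your additional observations (that $\booleans$ has fractions so promotion is defined, that the scalar case degenerates since $\TaylorSup{a\sm M}=\TaylorSup M$, and that the support annotations compose to exactly $\TaylorSup M$) are all accurate elaborations of details the paper leaves implicit.
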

\begin{proof}
	The proof is again by induction on the reduction $M\pbetaRed M'$ using Lemmas
	\ref{lemma:presRed:beta} to Lemma \ref{lemma:presRed:contextual:semimodule}
	in $\termVectors[\booleans]$.
\end{proof}

\subsection{Conservativity}

\label{subsection:vpresRed:conservative}

\begin{definition}
We define the \emph{full parallel reduct} of simple terms and algebraic terms
	inductively as follows:
	\begin{align*}
		\fullReduct{x}                      &\eqdef x&
		\fullReduct{0}                      &\eqdef 0\\
		\fullReduct{\labs xS}               &\eqdef \labs x{\fullReduct{S}}&
		\fullReduct{a\sm M}                 &\eqdef a\sm\fullReduct{M}\\
		\fullReduct{\appl{\labs xS}{N}}     &\eqdef \subst{\fullReduct{S}}x{\fullReduct{N}}&
		\fullReduct{M+N}                    &\eqdef \fullReduct{M}+\fullReduct{N}\\
		\fullReduct{\appl{S}{N}}&\eqdef \appl{\fullReduct{S}}{\fullReduct{N}}
		\lefteqn{\quad\text{(if $S$ is not an abstraction).}}
	\end{align*}
\end{definition}

As can be expected, we have 
	$M'\pbetaRed \fullReduct{M}$
as soon as 
	$M\pbetaRed M'$.
In this subsection, we will show that a similar property holds 
for $\vpresRed[\boundedStructure]$.

Recall that, by Lemma \ref{lemma:fullReduct:summable},
the full reduction operator $\fullReductSym$ on resource expressions
extends to bounded resource vectors.
We obtain:
\begin{lemma}
	\label{lemma:fullReduct:contextual}
	For all bounded $σ_0\in\termVectors$, $\ms{τ}\in\monomialVectors$, $ε,φ\in\resourceVectors$,
	\begin{align*}
		\fullReduct{x}                         &= x&
		\fullReduct{\prom{σ}}                  &= \prom{\fullReduct{σ}}\\
		\fullReduct{\labs x{σ}}                &= \labs x{\fullReduct{σ}}&
		\fullReduct{a\sm ε}                    &= a\sm {\fullReduct{ε}}\\
		\fullReduct{\rappl{\labs x{σ}}{\ms{τ}}}&= \lsubst{\fullReduct{σ}}x{\fullReduct{\ms{τ}}}&
		\fullReduct{ε+φ}                       &= \fullReduct{ε}+\fullReduct{φ}\\
		\fullReduct{\rappl{σ_0}{\ms{τ}}}       &= \rappl{\fullReduct{σ_0}}{\fullReduct{\ms{τ}}}
		\lefteqn{\quad\text{(if there is no abstraction term in $\support{σ_0}$).}}
	\end{align*}
\end{lemma}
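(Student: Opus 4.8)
The plan is to deduce each identity from its single-expression instance---which holds by the very definition of $\fullReductSym$---by transporting it to vectors via the linear-continuity of $\fullReductSym$ together with the multilinear-continuity of the syntactic constructs. Recall from Lemma \ref{lemma:fullReduct:summable} that, when restricted to the vectors supported in a fixed $\resourceExpressions[h,V]$, the operator $\fullReductSym$ is precisely the linear-continuous extension of the summable function $e\mapsto\fullReduct{e}$; recall also that $\labs x{-}$, $\rappl{-}{-}$ and the monomial constructor $\mset{-,\dotsc,-}$ are multilinear-continuous, while $\lsubst{-}x{-}$ extends to a summable binary function. I would first record the elementary height bounds guaranteeing that $\fullReductSym$ is defined at every occurrence: $\height{\labs x{σ}}=1+\height{σ}$, $\height{\rappl{\labs x{σ}}{\ms{τ}}}=1+\max\set{\height{σ},\height{\ms{τ}}}$, $\height{\rappl{σ_0}{\ms{τ}}}=\max\set{\height{σ_0},1+\height{\ms{τ}}}$ and $\height{σ^n}\le\height{σ}$, so that all the vectors appearing in the statement are bounded whenever $σ$, $σ_0$ and $\ms{τ}$ are.

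The linearity identities $\fullReduct{a\sm ε}=a\sm\fullReduct{ε}$ and $\fullReduct{ε+φ}=\fullReduct{ε}+\fullReduct{φ}$ are then just the linearity half of the linear-continuity of $\fullReductSym$, and $\fullReduct{x}=x$ holds by definition. For $\fullReduct{\labs x{σ}}=\labs x{\fullReduct{σ}}$ I would write $σ=\sum_{s} σ_s\sm s$ and push $\fullReductSym$ through the sum, using linear-continuity of both $\labs x{-}$ and $\fullReductSym$ and the term-level equation $\fullReduct{\labs x{s}}=\labs x{\fullReduct{s}}$. The two application identities are bilinear-continuity statements handled the same way: in the abstraction case one invokes $\fullReduct{\rappl{\labs x{s}}{\ms t}}=\lsubst{\fullReduct{s}}x{\fullReduct{\ms t}}$ and the bilinear-continuity of $(σ,\ms{τ})\mapsto\lsubst{σ}x{\ms{τ}}$, while in the last case the hypothesis that $\support{σ_0}$ contains no abstraction ensures that only the non-redex clause $\fullReduct{\rappl{s}{\ms t}}=\rappl{\fullReduct{s}}{\fullReduct{\ms t}}$ is ever used.

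The promotion identity $\fullReduct{\prom{σ}}=\prom{\fullReduct{σ}}$ is the one deserving genuine care, and I expect it to be the main obstacle, being the only case that mixes an infinite sum with multilinearity; the argument parallels that of Lemma \ref{lemma:presRed:contextual:prom}. First I would establish $\fullReduct{σ^n}=\pars{\fullReduct{σ}}^n$ for each $n$, which follows from $\fullReduct{\mset{s_1,\dotsc,s_n}}=\mset{\fullReduct{s_1},\dotsc,\fullReduct{s_n}}$ and the $n$-linear-continuity of $(σ_1,\dotsc,σ_n)\mapsto\mset{σ_1,\dotsc,σ_n}$ composed with linear-continuity of $\fullReductSym$. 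Then, since the supports $\support{σ^n}$ (respectively $\support{\pars{\fullReduct{σ}}^n}$) are pairwise disjoint over $n\in\naturals$, the family $\pars{\frac 1{n!}\sm σ^n}_{n\in\naturals}$ is summable and lies within the single bounded space $\resourceExpressions[\height{σ},\fv{σ}]$, so one last application of linear-continuity of $\fullReductSym$ yields $\fullReduct{\prom{σ}}=\sum_{n}\frac 1{n!}\sm\fullReduct{σ^n}=\sum_{n}\frac 1{n!}\sm\pars{\fullReduct{σ}}^n=\prom{\fullReduct{σ}}$. The only delicate points throughout are thus the boundedness bookkeeping and, in the promotion case, checking that summability is preserved so that linear-continuity genuinely applies.
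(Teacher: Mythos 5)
Your proposal is correct and follows essentially the same route as the paper, which proves these identities by mimicking the arguments of Lemmas \ref{lemma:presRed:beta} to \ref{lemma:presRed:contextual:semimodule} (in particular the disjoint-supports argument of Lemma \ref{lemma:presRed:contextual:prom} for the promotion case), with summability supplied by Lemma \ref{lemma:fullReduct:summable}. Your explicit height bookkeeping and the intermediate identity $\fullReduct{σ^n}=\pars{\fullReduct{σ}}^n$ are exactly the details the paper leaves implicit.
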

\begin{proof} 
	The proofs of those identities are basically the same as those of Lemmas 
	\ref{lemma:presRed:beta} to \ref{lemma:presRed:contextual:semimodule},
	the necessary summability conditions following from Lemma
	\ref{lemma:fullReduct:summable}.
\end{proof}

\begin{lemma}
	For all $M\in\algebraicTerms$, 
	$\fullReduct{\TaylorExp M}=\TaylorExp{\fullReduct M}$.
\end{lemma}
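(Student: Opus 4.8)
The plan is to proceed by induction on the canonical representative of $M$, that is, by mutual induction on simple terms $S \in \simpleRawTerms$ and algebraic terms $M \in \canonicalTerms$, matching the inductive definition of $\fullReductSym$ on algebraic terms against its counterpart on bounded resource vectors. In every case I would unfold $\TaylorExpSym$ on the outermost constructor, push $\fullReductSym$ inside using the corresponding identity of Lemma \ref{lemma:fullReduct:contextual}, apply the induction hypothesis, and then fold the result back through the definition of $\TaylorExpSym$. The only genuinely new ingredient arises in the redex case, where the commutation of $\fullReductSym$ with multilinear substitution has to meet the commutation of Taylor expansion with substitution (Lemma \ref{lemma:taylor:subst}).

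Concretely, for $M = x$ all four expressions reduce to $x$; for $M = \labs x S$ I would compute $\fullReduct{\TaylorExp{\labs xS}} = \fullReduct{\labs x{\TaylorExp S}} = \labs x{\fullReduct{\TaylorExp S}}$ by Lemma \ref{lemma:fullReduct:contextual}, then $\labs x{\TaylorExp{\fullReduct S}} = \TaylorExp{\fullReduct{\labs x S}}$ by the induction hypothesis and the definition of $\fullReductSym$ on abstractions. The algebraic constructors go the same way: $\fullReductSym$ commutes with $a\sm(-)$ and with $(-)+(-)$ on vectors (Lemma \ref{lemma:fullReduct:contextual}) exactly as on terms, and $\TaylorExp 0 = 0 = \fullReduct 0$. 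For an application $M = \appl S N$ with $S$ not an abstraction, I would invoke the last identity of Lemma \ref{lemma:fullReduct:contextual}, whose side condition demands that $\support{\TaylorExp S}$ contain no abstraction term; this holds because every resource term in $\support{\TaylorExp S} \subseteq \TaylorSup S$ has the same head constructor as $S$ (a variable if $S = x$, an application if $S = \appl{S'}{M'}$), so none is an abstraction. Together with $\fullReduct{\prom{\TaylorExp N}} = \prom{\fullReduct{\TaylorExp N}}$ and the induction hypothesis this yields $\fullReduct{\TaylorExp{\appl S N}} = \rappl{\TaylorExp{\fullReduct S}}{\prom{\TaylorExp{\fullReduct N}}} = \TaylorExp{\appl{\fullReduct S}{\fullReduct N}}$, matching the definition of $\fullReductSym$ in the non-abstraction case.

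The redex case $M = \appl{\labs x S}N$ is where the two nontrivial lemmas combine. I would first rewrite $\TaylorExp M = \rappl{\labs x{\TaylorExp S}}{\prom{\TaylorExp N}}$ and apply the redex identity of Lemma \ref{lemma:fullReduct:contextual} to obtain $\fullReduct{\TaylorExp M} = \lsubst{\fullReduct{\TaylorExp S}}x{\prom{\fullReduct{\TaylorExp N}}}$. After the induction hypothesis on $S$ and $N$ this becomes $\lsubst{\TaylorExp{\fullReduct S}}x{\prom{\TaylorExp{\fullReduct N}}}$, which by Lemma \ref{lemma:taylor:subst} equals $\TaylorExp{\subst{\fullReduct S}x{\fullReduct N}} = \TaylorExp{\fullReduct{\appl{\labs x S}N}}$.

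I expect the only real subtlety to be bookkeeping rather than depth: checking that the boundedness hypotheses of Lemma \ref{lemma:fullReduct:contextual} are met (they are, since $\TaylorExp N$, and hence $\prom{\TaylorExp N}$, are bounded for every algebraic $N$, promotion not increasing height), and above all verifying the head-shape side condition that licenses the non-abstraction application case. This last point is the heart of the matter: the definitions of $\fullReductSym$ on algebraic terms and on resource vectors were deliberately aligned so that the case split ``$S$ is / is not an abstraction'' on the term side corresponds exactly to ``$\support{\TaylorExp S}$ does / does not contain an abstraction'' on the vector side, and it is this correspondence that makes the induction close cleanly.
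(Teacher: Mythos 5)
Your proof is correct and follows the same route as the paper: induction on simple and algebraic terms (the canonical representatives), pushing $\fullReductSym$ through each constructor via Lemma \ref{lemma:fullReduct:contextual} and closing the redex case with Lemma \ref{lemma:taylor:subst}. The paper's own proof is just a terse statement of this plan; your version supplies the details (including the head-shape side condition for non-abstraction applications) correctly.
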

\begin{proof}
	We know that $\TaylorExp{M}$ is bounded.
	The identity is then proved by induction on simple terms and algebraic terms,
	using the previous lemma in each case.
\end{proof}

\begin{lemma}
	For all bounded term reduction structure $\fS$
	and all $M\in\algebraicTerms$, 
	if $M\vpresRed[\fS] σ'$ then $σ'\vpresRed[\fS]\fullReduct{M}$.
\end{lemma}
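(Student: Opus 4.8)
The plan is to read this statement off directly from Corollary~\ref{corollary:splitVariant:presRed:fullReduct} together with the preceding (unlabelled) lemma, since all the genuine work has already been carried out upstream. First I would unfold the notational conventions of Section~\ref{section:simulation:beta}: the hypothesis $M\vpresRed[\fS]σ'$ means $\TaylorExp M\vpresRed[\fS]σ'$, and the conclusion $σ'\vpresRed[\fS]\fullReduct M$ is to be read, via the convention that $σ\vpresRed N$ abbreviates $σ\vpresRed\TaylorExp N$, as $σ'\vpresRed[\fS]\TaylorExp{\fullReduct M}$.

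Next I would recall that $\TaylorExp M$ is bounded for every $M\in\algebraicTerms$, so that $\fullReduct{\TaylorExp M}$ is well defined as the linear-continuous extension of $\fullReductSym$ provided by Lemma~\ref{lemma:fullReduct:summable}. Since $\fS$ is by hypothesis a bounded reduction structure, I can instantiate Corollary~\ref{corollary:splitVariant:presRed:fullReduct} with $\fE=\fS$, $ε=\TaylorExp M$ and $ε'=σ'$, applied to the given reduction $\TaylorExp M\vpresRed[\fS]σ'$; this yields $σ'\vpresRed[\fS]\fullReduct{\TaylorExp M}$.

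Finally I would appeal to the previous lemma, $\fullReduct{\TaylorExp M}=\TaylorExp{\fullReduct M}$, to rewrite the target of this reduction as $\TaylorExp{\fullReduct M}$, which is exactly $\fullReduct M$ in the convention notation; this closes the argument. I do not expect any real obstacle here. The hard combinatorics — the taming of the size collapse that secures the existence of a full reduct at bounded height, together with its use in strong confluence — is entirely encapsulated in Corollary~\ref{corollary:splitVariant:presRed:fullReduct}, while the compatibility of $\fullReductSym$ with Taylor expansion is the content of the previous lemma. The only points requiring care are purely a matter of bookkeeping: verifying that $\TaylorExp M$ lies in the domain of $\fullReductSym$ (its boundedness) and that the hypothesis of a \emph{bounded} term reduction structure is precisely what the corollary demands.
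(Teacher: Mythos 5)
Your proof is correct and follows exactly the paper's own argument: apply Corollary~\ref{corollary:splitVariant:presRed:fullReduct} to the reduction $\TaylorExp M\vpresRed[\fS]σ'$ to obtain $σ'\vpresRed[\fS]\fullReduct{\TaylorExp M}$, then rewrite the target via the preceding lemma $\fullReduct{\TaylorExp M}=\TaylorExp{\fullReduct M}$. The bookkeeping you mention (boundedness of $\TaylorExp M$, the bounded-structure hypothesis) is exactly what the paper relies on implicitly.
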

\begin{proof}
	By Corollary \ref{corollary:splitVariant:presRed:fullReduct},
	$σ'\vpresRed[\fS]\fullReduct{\TaylorExp{M}}$
	and we conclude by the previous lemma.
\end{proof}

This result can then be generalized to sequences of $\vpresRed$-reductions.
\begin{lemma}
	\label{lemma:vpresRed:fullReduct}
	For all bounded term reduction structure $\fS$
	and all $M\in\algebraicTerms$, 
	if $M\redIterate n\vpresRed[\fS] σ'$
	then $σ'\redIterate n\vpresRed[\fS]\nthFullReduct n{M}$.
\end{lemma}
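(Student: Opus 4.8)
The plan is to prove a slightly more general, vector-level statement and then specialise it. Concretely, I would show by induction on $n$ that for every bounded reduction structure $\fE$ and every $ε\in\resourceVectors$ with $\support{ε}\in\fE$, if $ε\redIterate n\vpresRed[\fE] ε'$ then $ε'\redIterate n\vpresRed[\fE]\nthFullReduct n{ε}$. The single engine of the argument is Corollary \ref{corollary:splitVariant:presRed:fullReduct}, which says that any one step $δ\vpresRed[\fE]δ'$ can be closed by $δ'\vpresRed[\fE]\fullReduct δ$; note that its target $\fullReduct δ$ depends only on $δ$.

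The crucial observation I would isolate first is that $\fullReductSym$ is \emph{monotone} for $\vpresRed[\fE]$: whenever $δ\vpresRed[\fE]δ'$, we also have $\fullReduct δ\vpresRed[\fE]\fullReduct{δ'}$. This is immediate by applying Corollary \ref{corollary:splitVariant:presRed:fullReduct} twice. A first application to $δ\vpresRed[\fE]δ'$ yields $δ'\vpresRed[\fE]\fullReduct δ$; a second application, now to the reduction $δ'\vpresRed[\fE]\fullReduct δ$, yields $\fullReduct δ\vpresRed[\fE]\fullReduct{δ'}$. Iterating this, from a single step $δ\vpresRed[\fE]δ'$ one gets $\nthFullReduct k{δ}\vpresRed[\fE]\nthFullReduct k{δ'}$ for every $k$ (each such reduction being again a single step).

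For the induction step I would peel off the first step, writing $ε\vpresRed[\fE]ε_1\redIterate{n-1}\vpresRed[\fE]ε'$, where $\support{ε_1}\in\fE$ since $ε_1$ is a $\vpresRed[\fE]$-reduct of $ε$. The induction hypothesis applied to the tail gives $ε'\redIterate{n-1}\vpresRed[\fE]\nthFullReduct{n-1}{ε_1}$; Corollary \ref{corollary:splitVariant:presRed:fullReduct} applied to the head gives $ε_1\vpresRed[\fE]\fullReduct ε=\nthFullReduct 1{ε}$; and monotonicity applied $n-1$ times to this last step gives $\nthFullReduct{n-1}{ε_1}\vpresRed[\fE]\nthFullReduct{n-1}{\nthFullReduct 1{ε}}=\nthFullReduct n{ε}$. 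Concatenating the first $n-1$ steps with this final one yields the required $n$-step reduction $ε'\redIterate n\vpresRed[\fE]\nthFullReduct n{ε}$, the case $n=0$ being trivial. To recover the stated Lemma I would take $ε=\TaylorExp M$ and $\fE=\fS$, and use the commutation $\fullReduct{\TaylorExp M}=\TaylorExp{\fullReduct M}$ established above, hence $\nthFullReduct n{\TaylorExp M}=\TaylorExp{\nthFullReduct n{M}}$.

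The one point that demands care, and which I expect to be the main (if modest) obstacle, is the reduction-structure bookkeeping: each re-application of Corollary \ref{corollary:splitVariant:presRed:fullReduct} is legitimate only because the vector it acts on still has its support in $\fE$. This holds throughout, since $\fE$ is a reduction structure (so $\vpresRed[\fE]$ keeps supports inside $\fE$) and since $\fullReduct δ$ is itself a $\vpresRed[\fE]$-reduct of $δ$ by Lemma \ref{lemma:fullReduct:summable}, whence $\support{\fullReduct δ}\in\fE$ as well. The only genuinely non-routine idea is recognising that monotonicity of $\fullReductSym$ is available \emph{for free} from two uses of the triangle property; without it, the naive peeling induction stalls, because $\nthFullReduct{n-1}{ε_1}$ and $\nthFullReduct n{ε}$ would not obviously be comparable.
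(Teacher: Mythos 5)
Your proof is correct, and it takes a genuinely different route from the paper's. The paper peels off the \emph{last} step of the sequence: from $M\redIterate n\vpresRed[\fS]σ'\vpresRed[\fS]τ$ it applies the induction hypothesis to get $σ'\redIterate n\vpresRed[\fS]\nthFullReduct nM$, closes the resulting $(1,n)$-diagram by invoking the strong confluence of $\vpresRed[\fS]$, and then uses one application of the triangle property ($\nthFullReduct nM\vpresRed[\fS]τ'$ implies $τ'\vpresRed[\fS]\nthFullReduct{n+1}M$) to land on the target. You instead peel off the \emph{first} step and replace the appeal to strong confluence by the monotonicity of $\fullReductSym$, itself obtained by applying Corollary \ref{corollary:splitVariant:presRed:fullReduct} twice — first to $δ\vpresRed[\fE]δ'$ to get $δ'\vpresRed[\fE]\fullReduct δ$, then to that very reduction to get $\fullReduct δ\vpresRed[\fE]\fullReduct{δ'}$. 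This is the standard Takahashi-style observation that a triangle function is automatically monotone for the reduction it closes, and it lets you avoid tiling a $(1,n)$-confluence diagram altogether; your support bookkeeping (reducts of vectors supported in $\fE$ stay supported in $\fE$ because $\fE$ is a bounded reduction structure, and $\fullReduct δ$ is itself a reduct by Lemma \ref{lemma:fullReduct:summable}) is exactly what is needed to iterate the corollary. In the end both arguments rest on the same single engine, Corollary \ref{corollary:splitVariant:presRed:fullReduct} (the paper's strong confluence is also a consequence of it), so the difference is one of packaging rather than of mathematical content; your version is marginally more self-contained, the paper's reuses an already-stated confluence result.
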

\begin{proof}
	By induction on $n$. The case $n=0$ is trivial, and the inductive case follows
	from the previous lemma and strong confluence of $\vpresRed[\fS]$:
	if $M\redIterate n\vpresRed[\fS] σ'\vpresRed[\fS] τ$
	then by induction hypothesis $σ'\redIterate n\vpresRed[\fS]\nthFullReduct n{M}$,
	hence by strong confluence, there exists $τ'$ such that 
	$τ\redIterate n\vpresRed[\fS] τ'$ and 
	$\nthFullReduct n{M}\vpresRed[\fS] τ'$;
	by the previous lemma, 
	$τ'\vpresRed \nthFullReduct {n+1}{M}$.
\end{proof}

We have thus obtained some weak kind of conservativity of $\vpresRed[\boundedTermStructure]$ w.r.t.
β-reduction, but it is not very satisfactory: the same result would hold 
for the tautological relation
$\finitaryVectors{\boundedTermStructure}\times\finitaryVectors{\boundedTermStructure}$,
which is indeed the same as $\vpresRed[\boundedTermStructure]$ if $1$ has an opposite 
element in $\rigS$.
Even when $\rigS$ is zerosumfree, the converse to Lemma
\ref{lemma:vpresRed:correct} cannot hold in general if only because there can
be distinct $β$-normal forms $M\not\algEq N$ such that $M\teq N$ (see Example
\ref{example:TaylorExp:not:injective}).
Under this hypothesis, we can nonetheless obtain an actual conservativity
result on normalizable pure λ-terms as follows.

We write $\betaEq$ for the symmetric, reflexive and transitive closure of
$\pbetaRed$.
Similarly, if $\fE$ is a reduction structure, we write $\resEq[\fE]$ for the
equivalence on $\finitaryVectors{\fE}$ induced by $\vpresRed[\fE]$.
\begin{lemma}
	\label{lemma:resEq:conservative:zerosumfree}
	Assume $\rigS$ is zerosumfree.
	Let $M,N\in\lambdaTerms$ be such that $M$ is normalizable.
	Then $M\resEq[\boundedTermStructure] N$ iff $M\betaEq N$.
\end{lemma}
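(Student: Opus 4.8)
The plan is to prove the two implications separately, with the forward one carrying essentially all the weight.

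The easy implication, $M\betaEq N\Rightarrow M\resEq[\boundedTermStructure] N$, follows directly from simulation. Recall that $\betaEq$ is the symmetric, reflexive and transitive closure of $\pbetaRed$. By Corollary \ref{corollary:vpresRed:correct}, each step $M\pbetaRed M'$ gives $\TaylorExp M\vpresRed[\boundedTermStructure]\TaylorExp{M'}$, hence $M\resEq[\boundedTermStructure] M'$; since $\resEq[\boundedTermStructure]$ is by definition an equivalence relation containing $\vpresRed[\boundedTermStructure]$, chaining such steps (and using symmetry for the reverse ones) yields $M\resEq[\boundedTermStructure] N$ whenever $M\betaEq N$. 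This direction uses neither normalizability nor zerosumfreeness.

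For the converse, $M\resEq[\boundedTermStructure] N\Rightarrow M\betaEq N$, I would argue as follows. Let $P$ be the $\beta$-normal form of $M$, which exists since $M$ is normalizable. As $M\RT\pbetaRed P$, simulation gives $\TaylorExp M\RT\vpresRed[\boundedTermStructure]\TaylorExp P$. The key point is that $\TaylorExp P$ is $\vpresRed[\boundedTermStructure]$-normal: because $P$ is $\beta$-normal, every resource term in $\support{\TaylorExp P}$ is normal, and since $\rigS$ is zerosumfree we have $\epsilon\vpresRed\epsilon'$ iff $\epsilon\vpresRed[\support{\epsilon}]\epsilon'$ (Subsection \ref{subsection:structures}, \emph{via} Lemma \ref{lemma:zerosumfree:splitVariant}); as $\vpresRed[\boundedTermStructure]\subseteq\vpresRed$, any reduct of $\TaylorExp P$ is obtained by reducing only its own, normal, support, so it equals $\TaylorExp P$. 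Now from $\TaylorExp M\resEq[\boundedTermStructure]\TaylorExp N$ we get $\TaylorExp N\resEq[\boundedTermStructure]\TaylorExp P$; the strong confluence of $\vpresRed[\boundedTermStructure]$ (Corollary \ref{corollary:vpbsresRed:confluent}, $\boundedTermStructure$ being a $\presRed$-reduction structure) yields the Church--Rosser property, so $\TaylorExp N$ and $\TaylorExp P$ have a common reduct, which is forced to be $\TaylorExp P$ by normality. Hence $\TaylorExp N\redIterate n\vpresRed[\boundedTermStructure]\TaylorExp P$ for some $n$.

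It then remains to transfer this back to a $\beta$-relation. By the weak conservativity Lemma \ref{lemma:vpresRed:fullReduct}, $\TaylorExp P\redIterate n\vpresRed[\boundedTermStructure]\TaylorExp{\nthFullReduct n N}$, and normality of $\TaylorExp P$ forces $\TaylorExp{\nthFullReduct n N}=\TaylorExp P$. Since $N$ and $P$ are pure $\lambda$-terms, so is $\nthFullReduct n N$, and the injectivity of Taylor expansion on $\lambdaTerms$ (Fact \ref{fact:TaylorExp:lambda}, using $\lapprox P\in\support{\TaylorExp P}$) gives $\nthFullReduct n N=P$. Finally $M\RT\pbetaRed P$, and $N\RT\pbetaRed\nthFullReduct n N=P$ because $N\pbetaRed\fullReduct N$ and $\fullReductSym$ iterates; therefore $M\betaEq P\betaEq N$, as desired. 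The main obstacle is exactly the normality of $\TaylorExp P$: without zerosumfreeness it fails, since phantom cancelling summands could be reduced nontrivially — this is precisely the degeneracy illustrated by Example \ref{example:inconsistency} — and it is there that the hypothesis on $\rigS$ is genuinely used. The asymmetry of the statement, requiring only $M$ to be normalizable, reflects the fact that normalizability serves only to produce the normal common reduct $\TaylorExp P$, after which $N$ is handled purely through the equivalence.
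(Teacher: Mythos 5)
Your proof is correct and follows essentially the same route as the paper's: simulate the reduction to the normal form $P$ of $M$, use zerosumfreeness to see that $\TaylorExp P$ (whose support consists of normal resource terms) cannot reduce nontrivially, use the full-reduct/confluence machinery of Corollary \ref{corollary:splitVariant:presRed:fullReduct} and Lemma \ref{lemma:vpresRed:fullReduct} to force $\TaylorExp{\nthFullReduct nN}=\TaylorExp P$, and conclude by injectivity of $\TaylorExpSym$ on pure $\lambda$-terms. The only difference is organizational — you pass explicitly through Church--Rosser and the $\vpresRed$-normality of $\TaylorExp P$, where the paper directly uses the characterization $\varepsilon\resEq[\fE]\varepsilon'$ iff $\varepsilon\RT\vpresRed[\fE]\nthFullReduct n{\varepsilon'}$ — and you additionally spell out the easy converse direction that the paper leaves implicit.
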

\begin{proof}
  Corollary \ref{corollary:splitVariant:presRed:fullReduct} entails that,
  if $\fE$ is a bounded reduction structure, then 
  $ε\resEq[\fE]ε'$ iff $ε\RT\vpresRed[\fE]\nthFullReduct n{ε'}$ for some $n\in\naturals$.
  Now assume $M\in\algebraicTerms$ is normalizable and 
  write $\NormalForm M$ for its normal form:
  in particular $M\RT\vpresRed[\boundedTermStructure]\NormalForm M$, by 
  Corollary \ref{corollary:vpresRed:correct}.
  If $M\resEq[\boundedTermStructure]N$, we thus have
  $\NormalForm M\resEq[\boundedTermStructure]N$, hence
  $\NormalForm M\RT\vpresRed[\boundedTermStructure]\nthFullReduct n{N}$
  for some $n\in\naturals$.
  In particular, if $\rigS$ is zerosumfree, we obtain
  $\NormalForm M\teq\nthFullReduct n{N}$.
	If moreover $M,N\in\lambdaTerms$, we deduce $M\betaEq N$ by the injectivity
	of $\TaylorExpSym$ on $\lambdaTerms$.
\end{proof}

The next section will allow us to establish a similar conservativity result,
without any assumption on $\rigS$, at the cost of restricting the reduction
relation to normalizable resource vectors.

\section{Normalizing Taylor expansions}

\label{section:normalization}

Previous works on the normalization of Taylor expansions were restricted
\emph{a priori}, to a strict subsystem of the algebraic λ-calculus:
\begin{itemize}
	\item the uniform setting of pure λ-terms \cite{er:resource,er:bkt};
	\item the typed setting of an extension 
		of system $F$ to the algebraic λ-calculus \cite{ehrhard:finres};
	\item a λ-calculus extended with formal finite sums,
		rather than linear combinations \cite{ptv:taylorsn,tao:species}.
\end{itemize}
In all these, pathological terms were avoided, e.g. those involved
in the inconsistency Example \ref{example:inconsistency}.
Moreover observe that the very notion of normalizability is not compatible with
$\algEq$, and in particular the identity $0\algEq 0\sm M$: those previous works 
circumvented this incompatibility, either by imposing normalizability
via typing, or by excluding the formation of the term $0\sm M$.

Our approach is substantially different.
We introduce a notion of normalizability on resource vectors such that:
\begin{itemize}
	\item both pure λ-terms and normalizable algebraic λ-terms
		(in particular typed algebraic λ-terms and normalizable λ-terms with sums)
		have a normalizable Taylor expansion;
	\item the restriction of $\vpresRed$ to normalizable resource vectors
		is a consistent extension of both β-reduction on pure λ-terms and 
		normalization on algebraic λ-terms, without any assumption on the 
		underlying semiring of scalars.
\end{itemize}

\subsection{Normalizable resource vectors}

\label{subsection:normalizable}

We say $ε\in\resourceVectors$ is \emph{normalizable} whenever the family 
$\pars{\NormalForm{e}}_{e\in\support{ε}}$
is summable. In this case, we write
$\NormalForm{ε}\eqdef\sum_{e\in\resourceExpressions} ε_e\sm\NormalForm{e}$.

Normalizable vectors form a finiteness space.
Recall indeed from Subsection \ref{subsection:resourceExpressions}
that $e\generates e'$ iff $e\resRedRT ε'$ with $e'\in\support{ε'}$.
If $e\in\resourceExpressions$, we write
$\cone e\eqdef\set{e'\in\resourceExpressions\st e'\generates e}$.
Then $ε$ is normalizable iff
for each normal resource expression $e$, $\support{ε}\inter\cone e$ is finite:
writing $\normExpressions=\set{e\in\resourceExpressions\st\text{$e$ is normal}}$
and $\normStructure=\dual{\set{\cone e\st e\in\normExpressions}}\inter\resourceStructure$,
we obtain that $\normResourceVectors$ is the set of normalizable resource vectors.
Observe that $\NormalFormSym$ is defined on all $\normResourceVectors$
but is guaranteed to be linear-continuous only when restricted to
subsemimodules of the form $\vectors {\calE}$ with $\calE\in\normStructure$.

For our study of hereditarily determinable terms in Section \ref{section:determinable}, 
it will be useful to decompose $\normStructure$ into a decreasing sequence 
of finiteness structures.

\begin{definition}
	\label{definition:monomial:depth}
	We define the \emph{monomial depth} $\depth e\in\naturals$ of a
	resource expression $e\in\resourceExpressions$ as follows:
	\begin{align*}
		\depth x&\eqdef0&
		\depth{\rappl s{\ms t}}&\eqdef\max(\depth s,\depth{\ms t})\\
		\depth{\labs xs}&\eqdef\depth s&
		\depth{\mset{t_1,\dotsc,t_n}}&\eqdef1+\max\set{\depth{t_i}\st 1\le i\le n}
	\end{align*}
\end{definition}
We write $\kNormExpressions d=\set{e\in\normExpressions\st\depth e\le d}$
so that $\normExpressions=\Union_{d\in\naturals}\kNormExpressions d$.
We then write $\kNormStructure d=\dual{\set{\cone e\st e\in\kNormExpressions d}}\inter\resourceStructure$
so that $\normStructure=\Inter_{d\in\naturals}\kNormStructure d$.
Each finiteness structure $\kNormStructure d$ is moreover a reduction structure
for any reduction relation contained in $\RT\resRed$
(and so is $\normStructure$). Indeed,
writing $\genCone e=\set{e'\in\resourceExpressions\st e\generates e'}$
and $\genCone\calE=\Union_{e\in\calE}\genCone e$, we obtain:
\begin{lemma}
	\label{lemma:normStructure:reduction}
	If $\calE\in\kNormStructure d$ then 
	$\genCone\calE\in\kNormStructure d$.
\end{lemma}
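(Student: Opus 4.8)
The plan is to verify directly that $\genCone\calE$ satisfies the two defining conditions of membership in $\kNormStructure d$. Unfolding the definitions, $\calE\in\kNormStructure d$ asserts exactly that $\fv\calE$ is finite (so $\calE\in\resourceStructure$) and that $\cone e\inter\calE$ is finite for every $e\in\kNormExpressions d$ (so $\calE\in\dual{\set{\cone e\st e\in\kNormExpressions d}}$). I must establish the same two properties with $\genCone\calE$ in place of $\calE$.

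The free-variable condition is immediate. By Lemma~\ref{lemma:reduction:size}, one generation step preserves free variables, and since $\generates=\oneStepGenerates^*$ this propagates to the whole relation: $e\generates e'$ entails $\fv{e'}=\fv e$. Hence every element of $\genCone\calE$ shares its free variables with some element of $\calE$, giving $\fv{\genCone\calE}\subseteq\fv\calE$, which is finite. Thus $\genCone\calE\in\resourceStructure$.

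The substance of the argument is the finiteness of $\cone e\inter\genCone\calE$ for a fixed $e\in\kNormExpressions d$. The key lever is the fact recalled in Subsection~\ref{subsection:resRed} that $\genCone f$ is finite for every resource expression $f$, combined with the transitivity of $\generates$. Concretely, if $e'\in\cone e\inter\genCone\calE$, then $e'\generates e$ and there is $f\in\calE$ with $f\generates e'$; by transitivity $f\generates e$, so $f\in\cone e\inter\calE$, while $e'\in\genCone f$. This establishes the inclusion
\[
	\cone e\inter\genCone\calE\subseteq\Union_{f\in\cone e\inter\calE}\genCone f.
\]
Since $\calE\in\kNormStructure d$ and $e\in\kNormExpressions d$, the index set $\cone e\inter\calE$ is finite, and each $\genCone f$ is finite by the recalled fact; a finite union of finite sets being finite, the left-hand side is finite, as required.

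I do not anticipate a genuine obstacle: once the finiteness of generated cones is in hand, everything reduces to chasing the two finiteness conditions through transitivity. The only point meriting care is that the bound $d$ enters the argument solely by restricting the normal expressions $e$ over which we quantify; the same reasoning, read with $\normExpressions$ in place of $\kNormExpressions d$, shows that $\normStructure$ is likewise stable, matching the parenthetical claim in the surrounding discussion.
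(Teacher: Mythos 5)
Your proof is correct and follows essentially the same route as the paper's: fix $e\in\kNormExpressions d$, trace each $e'\in\cone e\inter\genCone\calE$ back through transitivity of $\generates$ to some $f\in\cone e\inter\calE$, and conclude by finiteness of that set together with finiteness of each $\genCone f$ (the paper invokes Lemma~\ref{lemma:resRed:SN} for the latter). You are in fact slightly more complete than the paper's proof, since you also explicitly verify the free-variable condition for membership in $\resourceStructure$, which the paper leaves implicit.
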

\begin{proof}
	Let $e''\in\kNormExpressions d$ and 
	$e'\in\genCone \calE\inter\cone{e''}$.
	Necessarily, there is $e\in\calE$ such that $e\generates e'$.
	Then $e\in\calE\inter\cone{e''}$:
	since $\calE\in\kNormStructure d$, there are finitely many values for $e$ 
	hence for $e'$ by Lemma~\ref{lemma:resRed:SN}.
\end{proof}

It follows that normalizable vectors are stable under reduction:
\begin{lemma}
	\label{lemma:vpresRed:NormalForm}
	If $ε\vpresRed[\normStructure] ε'$ 
	then $ε'\in\normResourceVectors$
	and $\NormalForm{ε}=\NormalForm{ε'}$.
\end{lemma}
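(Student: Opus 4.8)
The plan is to reduce the statement to an application of the linear-continuity of $\NormalFormSym$, once the reduction has been put into the form provided by the characterization of split reductions over a reduction structure. First I would record that $\normStructure$ is a $\presRed$-reduction structure: it is a finiteness structure, and since $\presRed\subseteq\resRedRT$, the required closure property follows from Lemma \ref{lemma:normStructure:reduction} taken at every monomial depth, because $\normStructure=\Inter_{d\in\naturals}\kNormStructure d$. Then $ε\vpresRed[\normStructure]ε'$ unfolds, by the characterization of such reductions over a reduction structure recalled at the end of the introduction to Section \ref{section:parallel}, into the data of a resource support $\calE\in\normStructure$, scalars $\pars{a_i}_{i\in I}$, and summable families $\pars{ε_i}_{i\in I}\in\finiteSums{\calE}^I$ and $\pars{ε'_i}_{i\in I}\in\finiteResourceSums^I$ with $ε=\sum_{i\in I}a_i\sm ε_i$, $ε'=\sum_{i\in I}a_i\sm ε'_i$ and $ε_i\presRed ε'_i$ for all $i$ (the reduction on summands being the extension of $\presRed$ to finite sums).

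Next I would prove that $ε'$ is normalizable. Since $\presRed\subseteq\resRedRT$, each summand reduction $ε_i\presRed ε'_i$ gives $ε_i\resRedRT ε'_i$, so every element of $\support{ε'_i}$ is generated from an element of $\support{ε_i}\subseteq\calE$; hence $\support{ε'}\subseteq\Union_{i\in I}\support{ε'_i}\subseteq\genCone\calE$. Lemma \ref{lemma:normStructure:reduction}, transferred to $\normStructure$ through all depths, yields $\genCone\calE\in\normStructure$, and downwards closure of $\normStructure$ then gives $\support{ε'}\in\normStructure$, i.e.\ $ε'\in\normResourceVectors$.

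For the equality of normal forms I would exploit that $\NormalFormSym$ is linear-continuous on both $\vectors\calE$ and $\vectors{\genCone\calE}$, these being semimodules of the form $\vectors\calF$ with $\calF\in\normStructure$ — the defining summability of $\pars{\NormalForm e}_{e\in\calF}$ holds precisely because $\calF\inter\cone{e''}$ is finite for every normal $e''$. Applying linear-continuity to the summable family $\pars{ε_i}_{i\in I}$ in $\vectors\calE$ gives $\NormalForm ε=\sum_{i\in I}a_i\sm\NormalForm{ε_i}$, and applying it to $\pars{ε'_i}_{i\in I}$ in $\vectors{\genCone\calE}$ gives $\NormalForm{ε'}=\sum_{i\in I}a_i\sm\NormalForm{ε'_i}$. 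Finally, from $ε_i\resRedRT ε'_i$ and the uniqueness of normal forms (Lemma \ref{lemma:resRed:SN}) I get $\NormalForm{ε_i}=\NormalForm{ε'_i}$ for every $i$, so the two expansions coincide and $\NormalForm ε=\NormalForm{ε'}$.

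I expect the only genuine difficulty to lie in the commutation of $\NormalFormSym$ with the infinite sums $\sum_{i\in I}a_i\sm(-)$, which is exactly the content of the third paragraph. This step is legitimate solely because all the vectors in play — $ε$, the $ε_i$, $ε'$ and the $ε'_i$ — have supports confined to a single $\calF\in\normStructure$ (namely $\calE$, resp.\ $\genCone\calE$); it is the reduction-structure property of $\normStructure$, via Lemma \ref{lemma:normStructure:reduction}, that guarantees this confinement on the reduced side. Without the restriction of the reduction to $\normStructure$ the families $\pars{\NormalForm{ε_i}}_{i\in I}$ need not even be summable and the manipulation would be meaningless, which is precisely why the hypothesis $ε\vpresRed[\normStructure]ε'$ rather than $ε\vpresRed ε'$ is essential.
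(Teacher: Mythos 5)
Your proposal is correct and follows essentially the same route as the paper's own proof: unfold the $\splitVariant[\normStructure]\presRed$-step into a support $\calE\in\normStructure$ with summable families of summands and reducts, invoke Lemma \ref{lemma:normStructure:reduction} to place the union of the reducts' supports in $\normStructure$ (giving normalizability of $ε'$), and conclude by linear-continuity of $\NormalFormSym$ on the relevant $\vectors\calF$ together with preservation of normal forms along $\resRedRT$. You are merely slightly more explicit than the paper about which semimodule each application of linear-continuity takes place in and about why $\normStructure$ is a reduction structure; the substance is identical.
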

\begin{proof}
	Assume there exists $\calE\in\normStructure$ and  families $(a_i)_{i\in I}\in \rigS^I$, 
	$\pars{e_i}_{i\in I}\in\resourceExpressions^I$ 
	and $\pars{ε'_i}_{i\in I}\in \finiteResourceSums^I$ 
	such that:
	\begin{itemize}
		\item $\pars{e_i}_{i\in I}$ is summable and $ε=\sum_{i\in I}a_i\sm e_i$;
		\item $\pars{ε'_i}_{i\in I}$ is summable and $ε'=\sum_{i\in I}a_i\sm ε'_i$;
		\item for all $i\in I$, $e_i\in\calE$ and $e_i\vpresRed ε'_i$.
	\end{itemize}
	We obtain that $\calE'\eqdef\Union_{i\in I}\support{ε'_i}\in\normStructure$
	by Lemma \ref{lemma:normStructure:reduction},
	hence $ε'\in\normResourceVectors$ since $\support{ε'}\subseteq\calE'$.
	Then, by the linear-continuity of 
	$\NormalFormSym$ on $\vectors{\calE'}$,
	\[\NormalForm{ε}
	=\sum_{i\in I}a_i\sm\NormalForm{e_i}
	=\sum_{i\in I}a_i\sm\NormalForm{ε'_i}
	=\NormalForm{\sum_{i\in I}a_i\sm ε'_i}
	=\NormalForm{ε'}.\qedhere\]
\end{proof}

As a direct consequence, we obtain that $\resEq[\normStructure]$ is consistent,
without any additional condition on the semiring $\rigS$:
\begin{corollary}
	\label{corollary:resEq:normalStructure:NormalForm}
	If $ε\resEq[\normStructure] ε'$
	(in particular $ε,ε'\in\normResourceVectors$)
	then $\NormalForm{ε}=\NormalForm{ε'}$.
\end{corollary}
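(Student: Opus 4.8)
The plan is to deduce this immediately from Lemma \ref{lemma:vpresRed:NormalForm}, which already does all the substantial work; the passage to the generated equivalence is purely formal. First I would unfold the definition of $\resEq[\normStructure]$: being the equivalence relation on $\normResourceVectors=\finitaryVectors{\normStructure}$ generated by $\vpresRed[\normStructure]$, it admits an explicit zigzag presentation. That is, $ε\resEq[\normStructure] ε'$ holds iff there is a finite sequence $ε=ε_0,ε_1,\dotsc,ε_n=ε'$ of resource vectors such that, for each $i\in\set{0,\dotsc,n-1}$, either $ε_i\vpresRed[\normStructure] ε_{i+1}$ or $ε_{i+1}\vpresRed[\normStructure] ε_i$. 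Note that every intermediate $ε_i$ is automatically normalizable: it is the source or target of some $\vpresRed[\normStructure]$-step, and for such a step the target lies in $\normResourceVectors$ by Lemma \ref{lemma:vpresRed:NormalForm}, while the source does too since the restriction to the reduction structure $\normStructure$ forces its support into some $\calE\in\normStructure$, hence into $\normStructure$ by downward closure.

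Then I would argue by induction on the length $n$ of such a chain that all the $ε_i$ share a common normal form. The base case $n=0$ is trivial, as $ε=ε'$. For the inductive step, I would consider the final link between $ε_{n-1}$ and $ε_n$: whichever way it is oriented, Lemma \ref{lemma:vpresRed:NormalForm} applies and yields $\NormalForm{ε_{n-1}}=\NormalForm{ε_n}$, the orientation being irrelevant because the conclusion is an equality, which is symmetric. The induction hypothesis gives $\NormalForm{ε_0}=\NormalForm{ε_{n-1}}$, and transitivity of equality closes the step; taking the endpoints yields $\NormalForm{ε}=\NormalForm{ε'}$.

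The point worth stressing, rather than any genuine obstacle, is that the whole argument rests on the single-step invariance established in the previous lemma: $\vpresRed[\normStructure]$ both preserves normalizability and leaves the normal form unchanged in one step. Once this is granted, the corollary follows with no further hypothesis. In particular, no assumption on the semiring $\rigS$ is needed here, since none was needed in Lemma \ref{lemma:vpresRed:NormalForm}; this is precisely what makes $\resEq[\normStructure]$ a consistent equational theory in full generality, in contrast with the situation for $\resEq[\boundedTermStructure]$ discussed in the previous section.
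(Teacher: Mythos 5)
Your proof is correct and matches the paper's intent exactly: the corollary is stated there as a direct consequence of Lemma \ref{lemma:vpresRed:NormalForm}, and your zigzag induction is precisely the routine argument being left implicit. The observation that sources of $\vpresRed[\normStructure]$-steps are automatically normalizable (by downward closure of the reduction structure) is a correct and worthwhile detail to make explicit.
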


We can moreover show that the normal form of a Taylor normalizable
term is obtained as the limit of the parallel left reduction strategy.
Let us first precise the kind of convergence we consider.
With the notations of Subsection \ref{subsection:summable}, we say a sequence
$\vec{ξ}=\pars{ξ_n}_{n\in\naturals}\in\pars{\vectors{X}}^\naturals$
of vectors converges to $ξ'$ if, for all $x\in X$ there exists
$n_x\in\naturals$ such that, for all $n\ge n_x$, $ξ_{n,x}=ξ'_x$.
In other words we consider the product topology on $\vectors X$,
$\rigS$ being endowed with the discrete topology.
Similarly to the notion of summability, 
this notion of convergence coincides with that induced 
by the linear topology on $\vectors X$ associated 
with the maximal finiteness structure $\powerset X$ on $X$:
in this particular case, a base of neighbourhoods of $0$ is given by the sets
$\set{ξ\in\vectors X\st \support{ξ}\inter\calX'=\emptyset}$
for $\calX'\in\dual{\powerset X}=\finiteSubsets X$,
or equivalently by the the sets
$\set{ξ\in\vectors X\st x\not\in\support{ξ}}$
for $x\in X$.

The parallel left reduction strategy on resource vectors is defined as follows.
\begin{definition}
	We define the \emph{left reduct of a resource expression} inductively as follows:
	\begin{align*}
		\leftReduct {\labs xs} &\eqdef \labs x{\leftReduct s}\\
		\leftReduct {\mset{t_1,\dotsc,t_n}} &\eqdef \mset{\leftReduct{t_1},\dotsc,\leftReduct{t_n}}\\
		\leftReduct {\rappl {x}{\ms t_1\cdots\ms t_n}} &\eqdef 
			\rappl {x}{\leftReduct{\ms t_1}\cdots\leftReduct{\ms t_n}}\\
		\leftReduct {\rappl {\labs xs}{\ms t_0\,\ms t_1\cdots\ms t_n}} &\eqdef 
			\rappl {\lsubst sx{\ms t_0}}{\ms t_1\cdots\ms t_n}.
	\end{align*}
	This is extended to finite sums of resource expressions by linearity:
	$\leftReduct{\sum_{i=1}^n e_i}=\sum_{i=1}^n\leftReduct{e_i}$.
\end{definition}

\begin{lemma}
	\label{lemma:leftReduct:pbdresRed}
	For all resource expression $e\in\resourceExpressions$,
	$e\pbdresRed 1 \leftReduct e$.
\end{lemma}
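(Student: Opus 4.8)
The plan is to proceed by structural induction on the resource expression $e$ (equivalently, a mutual induction on terms and monomials), following exactly the case analysis used to define $\leftReductSym$. The single observation that makes everything work is that $\pbdresRed 0$ is the identity relation on $\resourceExpressions$: with bound $0$ the redex rule is unavailable (it requires $b\ge 1$), so only the structural rules apply and a trivial induction gives that $e\pbdresRed 0 ε'$ holds iff $ε'=e$. Consequently, firing a redex through the redex rule at bound $1$ leaves its immediate constituents \emph{untouched}, since that rule reduces them with $\pbdresRed 0$; and the structural rules (abstraction, application, monomial) at bound $1$ fire no redex at their own node, so they compose freely without raising the nesting bound. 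These two facts match precisely the behaviour of $\leftReductSym$, which fires at most one redex on each branch and never recurses below a fired redex.

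The three structural cases are then immediate. For $e=\labs x{s}$ we have $\leftReduct{e}=\labs x{\leftReduct{s}}$, and the induction hypothesis $s\pbdresRed 1\leftReduct{s}$ together with the abstraction rule yields the claim; the monomial case $e=\mset{t_1,\dotsc,t_n}$ is identical, using the monomial rule on the componentwise hypotheses $t_i\pbdresRed 1\leftReduct{t_i}$. For a variable-headed application $e=\rappl x{\ms t_1\cdots\ms t_n}$ (which subsumes the bare variable $e=x$ when $n=0$), no redex sits at the head; starting from $x\pbdresRed 1 x$ and the hypotheses $\ms t_i\pbdresRed 1\leftReduct{\ms t_i}$, I would iterate the structural application rule along the spine to obtain $e\pbdresRed 1\rappl x{\leftReduct{\ms t_1}\cdots\leftReduct{\ms t_n}}=\leftReduct{e}$, the bound staying at $1$ because each application node here merely re-attaches an argument.

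The only case with any content is the redex-headed application $e=\rappl{\labs x{s}}{\ms t_0\,\ms t_1\cdots\ms t_n}$, where $\leftReduct{e}=\rappl{\lsubst sx{\ms t_0}}{\ms t_1\cdots\ms t_n}$. Here I would first fire the head redex: the redex rule at $b=1$, instantiated with $s\pbdresRed 0 s$ and $\ms t_0\pbdresRed 0\ms t_0$, gives $\rappl{\labs x{s}}{\ms t_0}\pbdresRed 1\lsubst sx{\ms t_0}$. I would then re-attach the trailing arguments one at a time by the structural application rule—using the identity reduction $\ms t_k\pbdresRed 1\ms t_k$ on each argument and the extension of the rules to finite sums noted after Lemma~\ref{lemma:pbdresRed:size} (since $\lsubst sx{\ms t_0}$ is a sum)—to climb from $\rappl{\labs x{s}}{\ms t_0}\pbdresRed 1\lsubst sx{\ms t_0}$ up to $\rappl{\lsubst sx{\ms t_0}}{\ms t_1\cdots\ms t_n}$. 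No application node on this spine fires a further redex, so the nesting bound remains $1$ throughout, and the argument degenerates gracefully when $n=0$ or when $\lsubst sx{\ms t_0}=0$. I do not expect a genuine obstacle: the whole difficulty is the bookkeeping of the application spine together with the explicit use of the fact that $\pbdresRed 0$ is the identity, which is exactly what guarantees that precisely one redex is crossed on each branch.
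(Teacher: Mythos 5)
Your proof is correct and follows the same route as the paper, whose proof is simply "easy by induction on $e$": a structural induction tracking the case analysis in the definition of $\leftReductSym$, with the key (and correctly justified) observation that $\pbdresRed 0$ is the identity, so the redex rule at bound $1$ and the structural rules along the spine compose exactly as $\leftReductSym$ requires.
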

\begin{proof}
	Easy by induction on $e$.
\end{proof}
In particular $\NormalForm{e} = \NormalForm{\leftReduct{e}}$ for all $e\in\resourceExpressions$.
By Lemma \ref{lemma:pbdresRed:size}, we moreover obtain
that if $e'\in\support{\leftReduct{e}}$ then
$\size{e}\le 4\size{e'}$ and $\fv e=\fv{e'}$.
As a consequence
$\pars{\leftReduct e}_{e\in\resourceExpressions}$ is summable.
For all $ε\in\resourceVectors$, we set 
\[ \leftReduct{ε}\eqdef\sum_{e\in\resourceExpressions} ε_e\sm\leftReduct e\]
and obtain a linear-continuous map on resource vectors.

For all $ε\in\resourceVectors$, we write $\normRestr{ε}$ 
for the \emph{projection of $ε$ on normal resource expressions}:
$\normRestr{ε}\eqdef\sum_{e\in\normExpressions} ε_e\sm e\in\vectors{\normExpressions}$.
We obtain:
\begin{theorem}
	\label{theorem:NormalForm:limit}
	For all normalizable resource vector $ε\in\normResourceVectors$,
	$\pars{\normRestr{\nthLeftReduct k{ε}}}_{k\in\naturals}$
	converges to $\NormalForm{ε}$ in $\vectors{\normExpressions}$.
\end{theorem}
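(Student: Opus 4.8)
The plan is to reduce the statement to the behaviour of left reduction on single resource expressions and then use normalizability to control each coefficient separately. First I would record the elementary per-expression fact: for every $e\in\resourceExpressions$ there is an integer $K_e$ such that $\nthLeftReduct k e=\NormalForm e$ for all $k\ge K_e$. Indeed, by Lemma~\ref{lemma:leftReduct:pbdresRed} we have $e\pbdresRed 1\leftReduct e$, so $\leftReduct{\cdot}\subseteq\mathord{\resRedRT}$; moreover $\leftReduct e=e$ exactly when $e$ is normal, since on a non-normal $e$ the clauses of the definition fire at least one redex, which strictly decreases size by Lemma~\ref{lemma:reduction:size}, whence $\leftReduct e\ne e$. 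Viewing the map $\leftReduct{\cdot}$ as a reduction discipline $\mathord\to\subseteq\finiteResourceSums\times\finiteResourceSums$, Corollary~\ref{corollary:resRed:strategies} together with the strong normalization of Lemma~\ref{lemma:resRed:SN} yields $e\to^*\NormalForm e$ in finitely many steps, i.e. $\nthLeftReduct m e=\NormalForm e$ for some $m$; and since a normal sum is fixed by $\leftReduct{\cdot}$, the iterate stabilizes for all $k\ge K_e\eqdef m$.

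Next I would globalize by linear-continuity. Since $\leftReduct{\cdot}$, hence each iterate $\nthLeftReduct k{\cdot}$, and the projection $\normRestr{\cdot}$ are linear-continuous on $\resourceVectors$, I can write $\normRestr{\nthLeftReduct k{ε}}=\sum_{e\in\resourceExpressions} ε_e\sm\normRestr{\nthLeftReduct k e}$ and compare coefficients. Fix a normal expression $e^*\in\normExpressions$. The crucial observation is that $e^*$ can occur in $\support{\nthLeftReduct k e}$ for some $k$, or in $\support{\NormalForm e}$, only if $e\generates e^*$, i.e. $e\in\cone{e^*}$, because $\nthLeftReduct k{\cdot}$ and $\NormalFormSym$ are both contained in $\mathord{\resRedRT}$. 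Because $ε$ is normalizable, the set $F\eqdef\support ε\inter\cone{e^*}$ is \emph{finite}, so both coefficients are finite sums over the same index set: $(\NormalForm ε)_{e^*}=\sum_{e\in F} ε_e\,(\NormalForm e)_{e^*}$ and $(\normRestr{\nthLeftReduct k{ε}})_{e^*}=\sum_{e\in F} ε_e\,(\nthLeftReduct k e)_{e^*}$, using that $e^*$ is normal so that $\normRestr{\cdot}$ preserves its coefficient.

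It then suffices to set $k_{e^*}\eqdef\max_{e\in F}K_e$, which is finite since $F$ is finite: for every $k\ge k_{e^*}$ and every $e\in F$ we have $\nthLeftReduct k e=\NormalForm e$, so the two displayed coefficients coincide. As $e^*$ ranges over $\normExpressions$, this is exactly convergence in the product topology on $\vectors{\normExpressions}$ as defined before the statement, which concludes the proof.

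The main obstacle is making the localization rigorous, and this is precisely where normalizability is indispensable: it is the defining property of $\normStructure$ that $\support ε\inter\cone{e^*}$ is finite for every normal $e^*$, which is what collapses the \emph{a priori} infinite sum over antecedents of $e^*$ to a finite one and lets me take a uniform bound $k_{e^*}$. The secondary delicate point is the per-expression stabilization: I must check that $\leftReduct{\cdot}$ genuinely decreases non-normal expressions—so that the progress hypothesis of Corollary~\ref{corollary:resRed:strategies} holds (here the zerosumfreeness of the natural-number coefficients of $\finiteResourceSums$ rules out accidental cancellation when passing from a single expression to its left reduct)—and that once an iterate is normal it is fixed, so that $\nthLeftReduct k e$ is eventually \emph{constant} rather than merely convergent.
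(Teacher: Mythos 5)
Your proof is correct and takes essentially the same route as the paper's: fix a normal $e^*$, use normalizability to cut the relevant sum down to the finite set $\support{ε}\inter\cone{e^*}$, stabilize each of its elements under iterated left reduction, and take the maximum stabilization index. The only difference is presentational — you make explicit, via Corollary~\ref{corollary:resRed:strategies} and the progress/fixed-point properties of $\leftReduct{\cdot}$, the existence of the per-expression stabilization index that the paper's proof leaves implicit when it writes ``let $k'$ be such that $\nthLeftReduct{k'}e$ is normal for all $e\in\calE$''.
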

\begin{proof}
	Fix ${e'}\in\normTerms$.
	Since $\support{ε}\in\normStructure$,
	$\calE\eqdef\support{ε}\inter\cone{e'}$ is finite.
	Let ${k'}$ be such that $\nthLeftReduct{k'}e$ is normal 
	for all $e\in\calE$.
	Then $\NormalForm{ε}_{e'}
	=\sum_{e\in\support{ε}} ε_e\sm{\NormalForm e}_{e'}
	=\sum_{e\in\calE} ε_e\sm{\NormalForm e}_{e'}
	=\sum_{e\in\calE} ε_e\sm{\nthLeftReduct {k'}e}_{e'}$.
	Moreover, by the linear-continuity of $\nthLeftReductSym k$
	on resource vectors,
	$\pars{\normRestr{\nthLeftReduct k{ε}}}_{e'}
	=\nthLeftReduct k{ε}_{e'}
	=\sum_{e\in\resourceTerms} ε_e\sm{\nthLeftReduct k{e}}_{e'}
	=\sum_{e\in\calE} ε_e\sm{\nthLeftReduct k{e}}_{e'}
	=\sum_{e\in\calE} ε_e\sm{\nthLeftReduct {k'}{e}}_{e'}$.
\end{proof}
Observe that the projection on normal expressions is essential:
\begin{example}
	Consider the looping term $Ω\eqdef\appl{\labs x{\appl xx}}{\labs x{\appl xx}}$:
	one can check that $\NormalForm{\TaylorExp{Ω}}=\normTermRestr{\TaylorExp{Ω}}=0$,
	but it will follow from the results of subsection
	\ref{subsection:Taylor:normalizable} that $\nthLeftReduct k{\TaylorExp{Ω}}
	=\TaylorExp{Ω}\not=0$ for all $k\in\naturals$.

	Analyzing this phenomenon was fundamental in the characterization of strongly
	normalizable λ-terms by a finiteness structure on resource terms,
	obtained by Pagani, Tasson and the author \cite{ptv:taylorsn}.
\end{example}

\subsection{Taylor normalizable terms}

\label{subsection:Taylor:normalizable}

It is possible to transfer some of the good properties of reduction on
normalizable vectors to those algebraic λ-terms that have a normalizable Taylor
expansion.
More precisely, we say $M\in\algebraicTerms$ is \emph{Taylor normalizable} if
$\TaylorSup M\in\normStructure$.
Then:
\begin{lemma}
	\label{lemma:betaRed:Taylor:normalizable}
	Assume $M,M'\in\algebraicTerms$ are such that $M\pbetaRed M'$.
	Then $M$ is Taylor normalizable iff $M'$ is Taylor normalizable.
\end{lemma}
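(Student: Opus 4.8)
The plan is to prove the two implications separately. The forward direction—if $M$ is Taylor normalizable then so is $M'$—is essentially a corollary of the machinery already in place. By Lemma \ref{lemma:vpresRed:correct:TaylorSup} the step $M \pbetaRed M'$ translates into a reduction $\TaylorSup M \vpresRed[\TaylorSup M] \TaylorSup{M'}$ in $\termVectors[\booleans]$. Since $M$ is Taylor normalizable we have $\TaylorSup M \in \normStructure$, and as $\normStructure$ is a $\presRed$-reduction structure (the remark following Lemma \ref{lemma:normStructure:reduction}, using $\mathord{\presRed}\subseteq\mathord{\resRedRT}$), this reduction is an instance of $\vpresRed[\normStructure]$. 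Lemma \ref{lemma:vpresRed:NormalForm}, read over the boolean semiring $\booleans$, then yields that $\TaylorSup{M'}$ is normalizable, i.e.\ $\TaylorSup{M'}\in\normStructure$, so $M'$ is Taylor normalizable.

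The converse is where the real work lies, since normalizability \emph{a priori} only propagates along reductions, not against them; this is the main obstacle. I would argue directly that $\TaylorSup M \inter \cone{e'}$ is finite for every normal resource expression $e'$. Fix such an $e'$, and recall that $\TaylorSup M$ is bounded (Subsection \ref{subsection:bounded}), so $h \eqdef \height{\TaylorSup M}$ and $\fv{\TaylorSup M}$ are finite. Unfolding $\TaylorSup M \vpresRed[\TaylorSup M] \TaylorSup{M'}$ in $\booleans$: since $\TaylorSup M = \set{e_i \st a_i = 1}$ for the witnessing family, every $s \in \TaylorSup M$ satisfies $s \R\presRed \calS_s$ for some $\calS_s \subseteq \TaylorSup{M'}$. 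Because $\mathord{\presRed}\subseteq\mathord{\resRedRT}$, uniqueness of normal forms (Lemma \ref{lemma:resRed:SN}) gives $\NormalForm s = \Union_{t \in \calS_s} \NormalForm t$. If moreover $s \in \cone{e'}$, then $e' \in \support{\NormalForm s}$ (as $e'$ is normal), so there is $t \in \calS_s \subseteq \TaylorSup{M'}$ with $e' \in \support{\NormalForm t}$, that is $t \in \TaylorSup{M'} \inter \cone{e'}$; note also that $s \pOneStepGenerates t$.

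This defines a map $g \colon \TaylorSup M \inter \cone{e'} \to \TaylorSup{M'} \inter \cone{e'}$ with $s \pOneStepGenerates g(s)$. Its codomain is finite, precisely because $M'$ is Taylor normalizable, i.e.\ $\TaylorSup{M'} \in \normStructure$. The crux of the argument is that each fibre of $g$ is finite, which is exactly where the size-collapse bound pays off: for $s \in g^{-1}(t)$ we have $\height s \le h$, so $s \pOneStepGenerates t$ refines to $s \pbdOneStepGenerates{h} t$, whence Lemma \ref{lemma:pbdresRed:size} gives $\size s \le 4^h \size t$; together with $\fv s \subseteq \fv{\TaylorSup M}$ this confines $g^{-1}(t)$ to the set of resource expressions of size at most $4^h \size t$ with free variables in the fixed finite set $\fv{\TaylorSup M}$, which is finite. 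A map into a finite set with finite fibres has finite domain, so $\TaylorSup M \inter \cone{e'}$ is finite. As $\fv{\TaylorSup M}$ is finite, this establishes $\TaylorSup M \in \normStructure$, i.e.\ $M$ is Taylor normalizable, completing the proof.
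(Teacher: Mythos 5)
Your proof is correct and follows essentially the same route as the paper's: the forward direction via Lemma \ref{lemma:vpresRed:correct:TaylorSup} and Lemma \ref{lemma:vpresRed:NormalForm} over $\booleans$, and the converse by choosing, for each $s\in\TaylorSup M\inter\cone{e'}$, a one-step reduct in $\TaylorSup{M'}\inter\cone{e'}$ and invoking the size bound of Lemma \ref{lemma:pbdresRed:size} to bound the antecedents. The paper phrases the converse as an enumeration rather than a map with finite codomain and finite fibres, but the substance is identical.
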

\begin{proof}
	First observe that by Lemma \ref{lemma:vpresRed:correct:TaylorSup},
	we have $\TaylorSup M\vpresRed[\TaylorSup{M}]\TaylorSup{M'}$ in
	$\termVectors[\booleans]$.
	Moreover observe that $\normTermVectors[\booleans]$ is nothing but
	$\normTermStructure$.

	Assume $M$ is Taylor normalizable, \ie $\TaylorSup M\in\normTermStructure$:
	by Lemma \ref{lemma:vpresRed:NormalForm}, $\TaylorSup{M'}\in\normTermVectors[\booleans]$,
	\ie $M'$ is Taylor normalizable.

	Conversely, assume $M'$ is Taylor normalizable and
	let $s''\in\normTerms$ and $\calS\eqdef\TaylorSup M\inter\cone{s''}$:
	we prove $\calS$ is finite.
	Fix an enumeration $\pars{s_k}_{k\in K}\in\calS^K$ of $\calS$:
	$\calS=\set{s_k\st k\in K}$.
	Since $\TaylorSup M\vpresRed[\TaylorSup{M}]\TaylorSup{M'}$, we have 
	$\TaylorSup M=\set{t_i\st i\in I}$ and
	$\TaylorSup{M'}=\Union_{i\in I}\support{τ'_i}$ with $t_i\presRed τ'_i$ for all $i\in I$.
	Now for all $k\in K$, there exists $i\in I$ such that 
	$s_k=t_i$. Since $s_k\generates s''$, $τ'_i\not=0$ 
	and we can fix $s'_k\in\support{τ'_i}\subseteq\TaylorSup{M'}$
	such that $s_k\pOneStepGenerates s'_k\generates s''$.
	Since $\TaylorSup{M'}\in\normTermStructure$, the set $\set{s'_k\st k\in K}$ is finite.
	Then $\calS\subseteq\set{s\in\resourceTerms\st k\in K\text{, }s\pbdOneStepGenerates{\height M}s'_k}$
	which is finite by Lemma \ref{lemma:pbdresRed:size}.
\end{proof}

The consistency of β-reduction on Taylor normalizable terms follows.
\begin{theorem}
	\label{theorem:betaEq:NormalForm}
	Assume $M,M'\in\algebraicTerms$ are such that $M\betaEq M'$.
	Then $M$ is Taylor normalizable iff $M'$ is Taylor normalizable,
	and in this case $\NormalForm{\TaylorExp M}=\NormalForm{\TaylorExp{M'}}$.\footnote{
		In the standard terminology of denotational semantics,
		Theorem \ref{theorem:betaEq:NormalForm} expresses 
		the soundness of $\NormalForm{\TaylorExp \cdot}$ on 
		Taylor normalizable terms.
	}
\end{theorem}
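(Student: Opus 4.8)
The plan is to reduce the statement to the results already established for $\vpresRed[\normStructure]$ on normalizable resource vectors, by peeling off the equivalence $\betaEq$ one $\pbetaRed$-step at a time. Since $\betaEq$ is the reflexive, symmetric and transitive closure of $\pbetaRed$, any $M\betaEq M'$ is witnessed by a finite chain $M=M_0,M_1,\dotsc,M_n=M'$ in which each consecutive pair is related by $\pbetaRed$ in one direction or the other. I will treat the two assertions of the theorem separately, both by induction on the length $n$ of such a chain.

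For the preservation of Taylor normalizability, the base case $n=0$ is trivial, and the inductive step is exactly Lemma \ref{lemma:betaRed:Taylor:normalizable}: whether $M_i\pbetaRed M_{i+1}$ or $M_{i+1}\pbetaRed M_i$, that lemma gives ``$M_i$ is Taylor normalizable iff $M_{i+1}$ is'', so Taylor normalizability is constant along the whole chain. In particular, assuming $M$ (equivalently $M'$) is Taylor normalizable, every intermediate term $M_i$ is Taylor normalizable, i.e.\ $\TaylorSup{M_i}\in\normStructure$; since $\support{\TaylorExp{M_i}}\subseteq\TaylorSup{M_i}$ and $\normStructure$ is downwards closed, each $\TaylorExp{M_i}$ lies in $\normResourceVectors=\finitaryVectors{\normStructure}$.

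For the equality of normal forms, I will show that $\TaylorExp M\resEq[\normStructure]\TaylorExp{M'}$ and then invoke Corollary \ref{corollary:resEq:normalStructure:NormalForm}. It suffices to handle a single step $M_i\pbetaRed M_{i+1}$, the reverse direction being free since $\resEq[\normStructure]$ is symmetric. Lemma \ref{lemma:vpresRed:correct} gives $\TaylorExp{M_i}\vpresRed[\TaylorSup{M_i}]\TaylorExp{M_{i+1}}$; as $\TaylorSup{M_i}\in\normStructure$ by the previous paragraph, and $\vpresRed[\calE]\subseteq\vpresRed[\normStructure]$ whenever $\calE\in\normStructure$, this upgrades to $\TaylorExp{M_i}\vpresRed[\normStructure]\TaylorExp{M_{i+1}}$, hence $\TaylorExp{M_i}\resEq[\normStructure]\TaylorExp{M_{i+1}}$. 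Chaining these equivalences along $M_0,\dotsc,M_n$ by transitivity yields $\TaylorExp M\resEq[\normStructure]\TaylorExp{M'}$, and Corollary \ref{corollary:resEq:normalStructure:NormalForm} concludes $\NormalForm{\TaylorExp M}=\NormalForm{\TaylorExp{M'}}$. Alternatively, one could bypass the equational relation and apply Lemma \ref{lemma:vpresRed:NormalForm} directly at each step, using that equality of normal forms is itself symmetric and transitive.

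Since every ingredient is already in place, there is no serious obstacle here; the proof is essentially bookkeeping. The one point requiring care — and the reason Part~1 must be carried out first — is that Lemma \ref{lemma:vpresRed:correct} only produces a reduction indexed by the \emph{source} support $\TaylorSup{M_i}$, so the lift to $\vpresRed[\normStructure]$, and with it the applicability of Corollary \ref{corollary:resEq:normalStructure:NormalForm}, hinges on knowing that \emph{each} intermediate term along the chain is Taylor normalizable, not merely the endpoints.
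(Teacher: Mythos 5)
Your proof is correct and follows exactly the paper's route: part one is Lemma \ref{lemma:betaRed:Taylor:normalizable} applied along the $\pbetaRed$-chain, and part two uses Lemma \ref{lemma:vpresRed:correct} to establish $M\resEq[\normStructure]M'$ before concluding with Corollary \ref{corollary:resEq:normalStructure:NormalForm}. The only difference is that you make explicit the (genuinely needed, but routine) observation that every \emph{intermediate} term of the chain is Taylor normalizable, so that each step's reduction, indexed by $\TaylorSup{M_i}$, lifts to $\vpresRed[\normStructure]$ — a point the paper leaves implicit.
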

\begin{proof}
	The first part is a direct corollary of Lemma \ref{lemma:betaRed:Taylor:normalizable}.
	By Lemma \ref{lemma:vpresRed:correct}, it follows that 
	$M\resEq[\normStructure] M'$, and then we conclude by Corollary
	\ref{corollary:resEq:normalStructure:NormalForm}.
\end{proof}
In other words, when restricted to Taylor normalizable terms,
the normal form of Taylor expansion is a valid notion of denotation.
Remark that, in general, it is not possible to generalize this result to those
terms $M$ such that $\TaylorExp M$ is normalizable because of the interaction 
with coefficients:
consider, \eg, $0\teq \appl{I}{\infty_x} +(-1)\sm \appl{I}{\infty_x}
\pbetaRed \infty_x +(-1)\sm\appl{I}{\infty_x}$, and observe that
$\TaylorExp{\infty_x +(-1)\sm\appl{I}{\infty_x}}\not\in\normTermVectors$.

\begin{definition}
	We define the \emph{left reduct of an algebraic λ-term} inductively as follows:
	\begin{align*}
		\leftReduct {\labs xS} &\eqdef \labs x{\leftReduct S}&
		\leftReduct 0 &\eqdef 0\\
		\leftReduct {\appl {x}{ M_1\cdots M_n}} &\eqdef 
		\appl {x}{\leftReduct{ M_1}\cdots\leftReduct{ M_n}}&
		\leftReduct {a\sm M} &\eqdef a\sm\leftReduct M\\
		\leftReduct {\appl {\labs xS}{ M_0\, M_1\cdots M_n}} &\eqdef 
		\appl {\subst Sx{M_0}}{ M_1\cdots M_n}&
		\leftReduct {M+N} &\eqdef \leftReduct M+\leftReduct N
	\end{align*}
\end{definition}
Observe that this definition is exhaustive by Fact
\ref{fact:simple:head}.
It should be clear that $M\pbetaRed\leftReduct M$ for all term $M$,
and that $\leftReduct M=M$ when $M$ is in normal form 
(although the converse may not hold).
Now we can establish that $\leftReductSym$ commutes with Taylor expansion.

\begin{lemma}
	\label{lemma:leftReduct:prom}
	For all $σ\in\termVectors$, $\leftReduct{\prom{σ}}=\prom{\leftReduct{σ}}$.
\end{lemma}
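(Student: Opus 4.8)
The plan is to rely on two facts already available at this point: first, that the left-reduct operator $\leftReductSym$ is linear-continuous on resource vectors (established immediately before the present lemma), and second, that the monomial construction $(σ_1,\dotsc,σ_n)\mapsto\mset{σ_1,\dotsc,σ_n}$ is $n$-linear-continuous (Subsection \ref{subsection:summable}). Recall that $\prom{σ}=\sum_{n\in\naturals}\frac 1{n!}\sm σ^n$, where $σ^n=\mset{σ,\dotsc,σ}$ with $n$ copies, and that the family $\pars{σ^n}_{n\in\naturals}$ is summable because the supports $\support{σ^n}$ are pairwise disjoint. Since $\leftReductSym$ is linear-continuous, I would first commute it with this sum to get
\[\leftReduct{\prom{σ}}=\sum_{n\in\naturals}\frac 1{n!}\sm\leftReduct{σ^n}.\]

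The crux is then to check that $\leftReduct{σ^n}=\pars{\leftReduct{σ}}^n$ for every $n\in\naturals$, which is the vectorial counterpart of the monomial clause $\leftReduct{\mset{t_1,\dotsc,t_n}}=\mset{\leftReduct{t_1},\dotsc,\leftReduct{t_n}}$ in the definition of $\leftReductSym$. To obtain it I would expand $σ^n$ by $n$-linear-continuity of the monomial construction, namely $σ^n=\sum_{s_1,\dotsc,s_n\in\resourceTerms} σ_{s_1}\cdots σ_{s_n}\sm\mset{s_1,\dotsc,s_n}$, then apply the linear-continuity of $\leftReductSym$ to reach
\[\leftReduct{σ^n}=\sum_{s_1,\dotsc,s_n\in\resourceTerms} σ_{s_1}\cdots σ_{s_n}\sm\mset{\leftReduct{s_1},\dotsc,\leftReduct{s_n}},\]
and finally recognize the right-hand side, again by $n$-linear-continuity of the monomial construction together with the identity $\leftReduct{σ}=\sum_{s}σ_s\sm\leftReduct{s}$, as $\mset{\leftReduct{σ},\dotsc,\leftReduct{σ}}=\pars{\leftReduct{σ}}^n$. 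This is precisely the computation carried out in the proof of Lemma \ref{lemma:presRed:contextual:prom}, specialized to the map $\leftReductSym$ in place of a reduction step. Combining the two displays then gives
\[\leftReduct{\prom{σ}}=\sum_{n\in\naturals}\frac 1{n!}\sm\pars{\leftReduct{σ}}^n=\prom{\leftReduct{σ}},\]
as desired.

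I do not expect any serious obstacle: every step is an instance of the linear-continuity machinery of Subsection \ref{subsection:summable}, already exercised repeatedly (notably in Lemmas \ref{lemma:promotion:substitution} and \ref{lemma:presRed:contextual:prom}). The only point requiring a little care is the summability bookkeeping when commuting $\leftReductSym$ past the infinite sum defining promotion and past the multiset expansion of $σ^n$; but the summability of $\pars{σ^n}_{n\in\naturals}$ and the fact that a linear-continuous map preserves summability make these manipulations legitimate, so the argument reduces to the two routine commutations sketched above.
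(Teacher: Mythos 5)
Your proof is correct and follows essentially the same route as the paper's: the paper likewise first establishes $\leftReduct{\mset{σ_1,\dotsc,σ_k}}=\mset{\leftReduct{σ_1},\dotsc,\leftReduct{σ_k}}$ (hence $\leftReduct{σ^k}=\leftReduct{σ}^k$) from the monomial clause of the definition together with the linear-continuity of $\leftReductSym$ and of the monomial construction, and then commutes $\leftReductSym$ past the sum $\sum_{k}\frac 1{\factorial k}\sm σ^k$ by linear-continuity. Your summability bookkeeping matches what the paper relies on implicitly, so there is nothing to add.
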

\begin{proof}
	First observe that by the definition of $\leftReductSym$ and 
	the linear-continuity of both $\leftReductSym$ and the monomial construction,
	for all $σ_1,\dotsc,σ_n\in\termVectors$, we have
	$\leftReduct{\mset{σ_1,\dotsc,σ_k}}=\mset{\leftReduct{σ_1},\dotsc,\leftReduct{σ_k}}$.
	In particular, $\leftReduct{σ^k}=\leftReduct{σ}^k$.
	We deduce that
	$\leftReduct{\prom{σ}}
	=\leftReduct{\sum_{k\in\naturals}\frac 1{\factorial k}\sm{σ^k}}
	=\sum_{k\in\naturals}\frac 1{\factorial k}\sm\leftReduct{σ}^k
	=\prom{\leftReduct{σ}}
	$, by the linear-continuity of $\leftReductSym$.
\end{proof}

\begin{lemma}
	\label{lemma:leftReduct:TaylorExp}
	For all $M\in\algebraicTerms$, $\leftReduct{\TaylorExp M}=\TaylorExp{\leftReduct M}$.
\end{lemma}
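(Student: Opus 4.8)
The plan is to prove the identity by induction on the algebraic term $M$, following the two-level grammar of canonical representatives ($\simpleRawTerms$ for simple terms, $\canonicalTerms$ for general ones). The cases of the $\rigS$-semimodule constructors are immediate: since both $\TaylorExpSym$ and $\leftReductSym$ commute with $0$, scalar multiplication and sums, and since $\leftReductSym$ is linear-continuous on $\resourceVectors$, one gets for instance $\leftReduct{\TaylorExp{a\sm M}}=a\sm\leftReduct{\TaylorExp M}=a\sm\TaylorExp{\leftReduct M}=\TaylorExp{\leftReduct{a\sm M}}$ straight from the induction hypothesis, and similarly for $0$ and $M+N$. So all the content lies in the simple-term cases.

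First I would record the contextual behaviour of $\leftReductSym$ on resource vectors, exactly analogous to Lemma \ref{lemma:fullReduct:contextual}: for all $σ\in\termVectors$ and $τ_0,\dotsc,τ_n\in\termVectors$,
\begin{align*}
	\leftReduct{x} &= x,\\
	\leftReduct{\labs x{σ}} &= \labs x{\leftReduct{σ}},\\
	\leftReduct{\rappl x{\prom{τ_1}\cdots\prom{τ_n}}} &= \rappl x{\prom{\leftReduct{τ_1}}\cdots\prom{\leftReduct{τ_n}}},\\
	\leftReduct{\rappl{\labs x{σ}}{\prom{τ_0}\cdots\prom{τ_n}}} &= \rappl{\lsubst{σ}x{\prom{τ_0}}}{\prom{τ_1}\cdots\prom{τ_n}}.
\end{align*}
Each identity is obtained by reading off the corresponding clause in the definition of $\leftReductSym$ on resource expressions and extending it by multilinear-continuity of the syntactic constructors and of multilinear substitution; the third one uses in addition Lemma \ref{lemma:leftReduct:prom} to rewrite each $\leftReduct{\prom{τ_i}}$ as $\prom{\leftReduct{τ_i}}$. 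What makes these clean is that, for a \emph{simple} term $S$, the resource terms in $\support{\TaylorExp S}$ all share the same outermost head shape (a variable spine, or an abstraction in head position), so a single clause of $\leftReductSym$ applies uniformly across the support.

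Then I would run the induction on simple terms, using Fact \ref{fact:simple:head} to enumerate the two possible shapes. The variable and abstraction cases close by the first two contextual identities and the induction hypothesis on the body. For a head-normal spine $S=\appl x{M_1\cdots M_n}$, I would compute $\TaylorExp{\leftReduct S}=\rappl x{\prom{\TaylorExp{\leftReduct{M_1}}}\cdots\prom{\TaylorExp{\leftReduct{M_n}}}}$, apply the induction hypothesis to each $M_i$, and match with $\leftReduct{\TaylorExp S}$ \emph{via} the third identity. For a head-redex spine $S=\appl{\labs xT}{M_0\cdots M_n}$, the fourth identity gives $\leftReduct{\TaylorExp S}=\rappl{\lsubst{\TaylorExp T}x{\prom{\TaylorExp{M_0}}}}{\prom{\TaylorExp{M_1}}\cdots\prom{\TaylorExp{M_n}}}$, and Lemma \ref{lemma:taylor:subst} then identifies $\lsubst{\TaylorExp T}x{\prom{\TaylorExp{M_0}}}=\TaylorExp{\subst Tx{M_0}}$, which is exactly $\TaylorExp{\leftReduct S}$; note that, matching the definition, the non-head arguments $M_1,\dotsc,M_n$ are left untouched, so no induction hypothesis is needed on them. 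The leading abstractions allowed by Fact \ref{fact:simple:head} add only routine repetitions of the abstraction case.

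The main obstacle I anticipate is the careful justification of the fourth contextual identity: one must check that the linear-continuous extension of the head-redex clause of $\leftReductSym$ genuinely produces $\rappl{\lsubst{σ}x{\prom{τ_0}}}{\prom{τ_1}\cdots\prom{τ_n}}$, rather than something that entangles the bags, which relies both on the uniformity of the head shape across $\support{\TaylorExp S}$ and on the multilinear-continuity of $\lsubst{\cdot}x{\cdot}$ together with the promotion bookkeeping already isolated in Lemma \ref{lemma:vlsubst:promotion:constructors}. Everything else is bookkeeping around these contextual identities and Lemma \ref{lemma:taylor:subst}.
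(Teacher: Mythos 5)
Your proposal is correct and follows essentially the same route as the paper: an induction on the structure of $M$ (equivalently, on the definition of $\leftReduct M$), dispatching the semimodule cases by linear-continuity of $\leftReductSym$, using Lemma \ref{lemma:leftReduct:prom} for the head-variable spine, and Lemmas \ref{lemma:vlsubst:promotion:constructors} and \ref{lemma:taylor:subst} for the head-redex spine. Your explicit isolation of the four contextual identities for $\leftReductSym$ on resource vectors, and the observation that the head-redex case needs no induction hypothesis since the definition leaves $S$, $M_0,\dotsc,M_n$ unreduced, are exactly the bookkeeping the paper leaves implicit.
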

\begin{proof}
	By induction on the definition of $\leftReduct M$:
	in addition to the inductive hypothesis
	and the linear-continuity of $\leftReductSym$, we use
	Lemma \ref{lemma:leftReduct:prom} in the case of a head variable, 
	and Lemmas \ref{lemma:vlsubst:promotion:constructors},
  \ref{lemma:taylor:subst} and 
	\ref{lemma:leftReduct:prom}
	in the case of a head β-redex.
\end{proof}

As a direct corollary of Theorem \ref{theorem:NormalForm:limit},
we obtain:
\begin{theorem}
	\label{theorem:Taylor:normalizable:limit}
	For all Taylor normalizable term $M$, 
	the sequence of normal resource vectors
	$\pars{\normTermRestr{\TaylorExp{\nthLeftReduct k{M}}}}_{k\in\naturals}$
	converges to $\NormalForm{\TaylorExp{M}}$ in $\vectors{\normTerms}$.
\end{theorem}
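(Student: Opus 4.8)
The plan is to deduce the statement directly from Theorem~\ref{theorem:NormalForm:limit}, applied to the resource vector $\TaylorExp M$, once two easy observations are in place: that $\TaylorExp M$ is itself a normalizable resource vector, and that left reduction on terms commutes with Taylor expansion at every iteration.

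First I would check that $\TaylorExp M\in\normResourceVectors$. Since $M$ is Taylor normalizable, $\TaylorSup M\in\normStructure$ by definition. Recalling that $\support{\TaylorExp M}\subseteq\TaylorSup M$, and that $\normStructure$ is downwards closed for inclusion (being a resource structure, in fact a finiteness structure), we obtain $\support{\TaylorExp M}\in\normStructure$, that is, $\TaylorExp M$ is normalizable. Next I would observe, by a trivial induction on $k$ using Lemma~\ref{lemma:leftReduct:TaylorExp}, that $\nthLeftReduct k{\TaylorExp M}=\TaylorExp{\nthLeftReduct k M}$ for all $k\in\naturals$: the base case is the identity, and the inductive step rewrites $\nthLeftReduct{k+1}{\TaylorExp M}=\leftReduct{\nthLeftReduct k{\TaylorExp M}}=\leftReduct{\TaylorExp{\nthLeftReduct k M}}=\TaylorExp{\leftReduct{\nthLeftReduct k M}}=\TaylorExp{\nthLeftReduct{k+1}M}$.

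Then I apply Theorem~\ref{theorem:NormalForm:limit} to $\varepsilon=\TaylorExp M$: the sequence $\pars{\normRestr{\nthLeftReduct k{\TaylorExp M}}}_{k\in\naturals}$ converges to $\NormalForm{\TaylorExp M}$ in $\vectors{\normExpressions}$. Substituting $\nthLeftReduct k{\TaylorExp M}=\TaylorExp{\nthLeftReduct k M}$ and specializing from expressions to terms, I recover exactly the claimed convergence. The specialization is harmless because $\TaylorExp M$ is a term vector, so $\support{\nthLeftReduct k{\TaylorExp M}}\subseteq\resourceTerms$ and $\support{\NormalForm{\TaylorExp M}}\subseteq\normTerms$; hence the projection $\normRestr{\cdot}$ onto normal expressions agrees with the projection $\normTermRestr{\cdot}$ onto normal terms on all the vectors involved, and the product topology of $\vectors{\normExpressions}$ restricts to that of $\vectors{\normTerms}$ on the relevant subspace. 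This yields that $\pars{\normTermRestr{\TaylorExp{\nthLeftReduct k M}}}_{k\in\naturals}$ converges to $\NormalForm{\TaylorExp M}$ in $\vectors{\normTerms}$.

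There is no genuinely hard step here: the only point requiring care is the passage from the hypothesis $\TaylorSup M\in\normStructure$ to the normalizability of $\TaylorExp M$, which rests on the inclusion $\support{\TaylorExp M}\subseteq\TaylorSup M$ together with downward closure; everything else is the commutation lemma and routine bookkeeping about the term-versus-expression distinction. Accordingly I expect the formal proof to be a one-line appeal to Theorem~\ref{theorem:NormalForm:limit} and Lemma~\ref{lemma:leftReduct:TaylorExp}, exactly as announced in the text (``As a direct corollary of Theorem~\ref{theorem:NormalForm:limit}'').
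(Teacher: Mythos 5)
Your proof is correct and matches the paper's intended argument exactly: the theorem is stated as a direct corollary of Theorem~\ref{theorem:NormalForm:limit}, obtained by noting that $\TaylorExp M$ is normalizable (via $\support{\TaylorExp M}\subseteq\TaylorSup M$ and downward closure of $\normStructure$) and iterating the commutation $\leftReduct{\TaylorExp M}=\TaylorExp{\leftReduct M}$ from Lemma~\ref{lemma:leftReduct:TaylorExp}. The bookkeeping about terms versus expressions is, as you say, routine.
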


This property is very much akin to the fact that the Böhm tree $\BohmTree M$ of
a pure λ-term $M$ is obtained as the limit (in an order theoretic sense) 
of normal form approximants of the left reducts of $M$.
This analogy will be made explicit in Section \ref{section:determinable}.
Before that, we apply our results to normalizable algebraic λ-terms.

\subsection{Taylor expansion and normalization commute on the nose}

\label{subsection:normalizable:terms}

By a general standardization argument, we can show that parallel reduction is a
normalization strategy:

\begin{lemma}
	\label{lemma:leftReduct:normalizes}
	An algebraic λ-term $M$ is normalizable iff there exists $k\in\naturals$,
	such that $\nthLeftReduct k M=\NormalForm M$.
\end{lemma}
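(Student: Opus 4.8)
The statement is an equivalence, and the right-to-left direction is immediate: since $M\pbetaRed\leftReduct M$ for every term (and $\mathord\pbetaRed\subseteq\mathord{\RT\pbetaRed}$), we have $M\RT\pbetaRed\nthLeftReduct k M$ for all $k$; hence if $\nthLeftReduct k M$ is a normal form for some $k$, then $M$ reduces to a normal form and is therefore normalizable, with $\NormalForm M=\nthLeftReduct k M$ by confluence of $\pbetaRed$. Confluence is itself available here, as it follows by the Tait--Martin-L\"of method from the diamond property $M'\pbetaRed\fullReduct M$ whenever $M\pbetaRed M'$ recorded in Subsection \ref{subsection:vpresRed:conservative}. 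So the whole content lies in the left-to-right implication: the parallel left reduction strategy is \emph{normalizing}.

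The plan for the forward direction is to run a standardization argument. First I would reduce to the level of simple summands: working on canonical representatives (Fact \ref{fact:simple:head}) and using that both $\pbetaRed$ and $\leftReductSym$ are defined by linearity over sums, it suffices to track each simple summand, keeping in mind that firing a head redex substitutes a possibly compound argument and may turn a simple term into a sum of simple terms. Next I would establish a standardization theorem for $\pbetaRed$: every reduction $M\RT\pbetaRed N$ can be reorganized into a \emph{standard} reduction, contracting single redexes in a leftmost-outermost-first discipline. I would then attach to a normalizable $M$ the length of a standard reduction from $M$ to $\NormalForm M$ as a well-founded measure.

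The key step is to show that one application of $\leftReduct$ realizes exactly the leading steps of such a standard reduction: when a simple summand still carries a head redex, $\leftReduct$ contracts precisely that redex, which in any standard reduction ending at the normal form must be contracted first; and once a summand has reached head normal form, $\leftReduct$ recurses in parallel into its arguments, each of which is again normalizable with its own standard reduction to a normal form. From this I would conclude that $\leftReduct$ sends $M$ to a term that is still normalizable, with the same normal form (by confluence), but with strictly smaller standard length unless $M$ is already normal; a well-founded induction on this measure then yields $k$ with $\nthLeftReduct k M=\NormalForm M$.

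The main obstacle is the standardization theorem itself in this non-uniform, quantitative setting, where two points require care. The first is the interaction of $\beta$-reduction with formal linear combinations, which I expect to dissolve by working summand-wise on canonical representatives, since the troublesome equations of $\algEq$ are not imposed on $\algebraicTerms$ and both $\pbetaRed$ and $\leftReductSym$ act by linearity. The second is the branching caused by substituting a sum for a variable, so that ``head reduction of a simple term'' is really a parallel process over a growing family of simple terms; ensuring this process terminates precisely when $M$ is normalizable is exactly where the head-normalization content of standardization enters. Should a direct proof of standardization prove cumbersome, I would instead derive leftmost-is-normalizing from the finiteness of developments together with the diamond property of $\fullReductSym$, the standard route to the Normalization Theorem, which avoids formalizing standard reductions explicitly.
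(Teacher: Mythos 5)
Your proposal is correct and takes essentially the same route as the paper: the paper's own proof of this lemma is a one-line deferral to the general standardization technique of Leventis (developed for a slightly different presentation of the calculus), which is precisely the standardization-implies-leftmost-normalization argument you outline, including the restriction to canonical representatives and the summand-wise treatment of $\canEq$-classes. The difficulties you identify --- the interaction with formal linear combinations and the branching caused by substituting sums --- are exactly the reasons the paper cites that external development rather than formalizing the argument itself.
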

\begin{proof}
	Recall that we consider algebraic λ-terms up to $\canEq$ only.
	Then one can for instance use the general standardization technique
	developed by Leventis for a slightly different presentation of the calculus
	\cite{leventis:phd}.
\end{proof}

A direct consequence is that $M$ normalizes iff the judgement $\converges M$
can be derived inductively by the following rules:\footnote{
		Moreover, it seems natural to conjecture that if $\converges M$ then $M$ (or,
		rather, its $\algEq$-class) is normalizable in the sense of Alberti
		\cite{alberti:phd}, and then the obtained normal forms are the same (up to
		$\algEq$).
	}
\begin{center}
	\begin{prooftree}
		\Hypo  {\converges S}
		\Infer1{\converges{\labs xS}}
	\end{prooftree}
	\quad
	\begin{prooftree}
		\Hypo  {\converges{M_1}}
		\Hypo  {\cdots}
		\Hypo  {\converges{M_n}}
		\Infer3{\converges{\appl{x}{M_1\cdots M_n}}}
	\end{prooftree}
	\quad
	\begin{prooftree}
		\Hypo  {\converges{\appl{\subst Sx{M_0}}{M_1\cdots M_n}}}
		\Infer1{\converges{\appl{\labs xS}{M_0\,M_1\cdots M_n}}}
	\end{prooftree}
	\quad
	\begin{prooftree}
		\Infer0{\converges 0}
	\end{prooftree}
	\quad
	\begin{prooftree}
		\Hypo  {\converges M}
		\Infer1{\converges{a\sm M}}
	\end{prooftree}
	\quad
	\begin{prooftree}
		\Hypo  {\converges M}
		\Hypo  {\converges N}
		\Infer2{\converges[]{M+N}}
	\end{prooftree}
\end{center}

In the remaining of this subsection, we prove that normalizable algebraic
λ-terms are Taylor normalizable, using a reducibility technique:
like in Ehrhard’s work for the typed case \cite{ehrhard:finres}, or
our previous work for the strongly normalizable case \cite{ptv:taylorsn},
$\normStructure$ is the analogue of a reducibility candidate.
We prove each key property (Lemmas \ref{lemma:normalizable:lambda}
to \ref{lemma:normalizable:head})
using the family of structures $\kNormStructure d$
rather than $\normStructure$ directly: this will be useful in section 
\ref{section:determinable}, while 
the corresponding results for $\normStructure$ are immediately derived from those.

\begin{lemma}
	\label{lemma:normalizable:lambda}
	If $\calS\in\kNormTermStructure d$ then $\labs x{\calS}\in\kNormTermStructure d$.
\end{lemma}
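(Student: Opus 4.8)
The plan is to check directly that $\labs x{\calS}$ meets the two defining requirements of membership in $\kNormTermStructure d$: that it is a resource support, and that it meets the cone $\cone t$ of every normal term $t$ of monomial depth at most $d$ in a finite set.

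First I would dispose of the resource-support condition. Since $\fv{\labs xs}=\fv s\setminus\set x$, we get $\fv{\labs x{\calS}}\subseteq\fv{\calS}$, which is finite because $\calS$ is itself a resource support by hypothesis.

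The heart of the argument is a head-preservation observation for $\generates$. An abstraction is never a redex — the only redexes are applications $\rappl{\labs ys}{\ms t}$ — so by contextuality every $\resRedRT$-reduct of $\labs xs$ has the form $\labs x{\sigma'}$ with $s\resRedRT\sigma'$. Unfolding the characterization of $\generates$ (namely $e\generates e'$ iff $e\resRedRT\varepsilon'$ for some $\varepsilon'$ with $e'\in\support{\varepsilon'}$), this yields the equivalence: $\labs xs\generates t$ iff $t=\labs x{t'}$ for some $t'$ with $s\generates t'$. I expect this to be essentially the only content of the proof: it is a routine induction on the reduction, but it is exactly what prevents reduction from altering the outermost binder.

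With this equivalence at hand, I would fix a normal term $t$ with $\depth t\le d$ and compute $\labs x{\calS}\inter\cone t$. If $t$ is not an abstraction, no $\labs xs$ generates it and the intersection is empty. Otherwise $t=\labs x{t'}$, and the observation gives $\labs x{\calS}\inter\cone t=\labs x{\pars{\calS\inter\cone{t'}}}$. Here $t'$ is normal, being a subterm of the normal form $t$, and $\depth{t'}=\depth{\labs x{t'}}=\depth t\le d$, so that $t'\in\kNormExpressions d$. Since $\calS\in\kNormTermStructure d$, the set $\calS\inter\cone{t'}$ is finite, and as $s\mapsto\labs xs$ is injective its image $\labs x{\pars{\calS\inter\cone{t'}}}$ is finite too. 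This establishes $\labs x{\calS}\in\kNormTermStructure d$.
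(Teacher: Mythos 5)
Your proof is correct and follows essentially the same route as the paper's: the paper likewise observes that any $t\in\pars{\labs x{\calS}}\inter\cone{t'}$ forces $t=\labs xs$ and $t'=\labs x{s'}$ with $s\in\calS\inter\cone{s'}$, which is finite by hypothesis. You merely spell out the head-preservation of abstraction under $\generates$ and the resource-support condition, which the paper leaves implicit.
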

\begin{proof}
	Let $t'\in\kNormTerms d$ and $t\in\pars{\labs x{\calS}}\inter\cone {t'}$.
	Necessarily, $t=\labs xs$ and $t'=\labs x{s'}$ with $s\in{\calS}\inter\cone{s'}$
	which is finite by assumption.
\end{proof}

\begin{lemma}
	If $\calS\in\kNormTermStructure d$ then 
	$\prom{\calS}\in\kNormMonomialStructure{d+1}$.
\end{lemma}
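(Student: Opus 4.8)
The plan is to unfold the definition of $\kNormMonomialStructure{d+1}$ and reduce the statement to a componentwise finiteness argument. Recall that, at the level of supports, $\prom{\calS}$ is nothing but the set $\oc{\calS}$ of all finite multisets of elements of $\calS$. So $\prom{\calS}\in\kNormMonomialStructure{d+1}$ amounts to two things: that $\fv{\prom{\calS}}$ is finite, and that $\prom{\calS}\inter\cone{\ms v}$ is finite for every normal monomial $\ms v$ with $\depth{\ms v}\le d+1$. The first is immediate, since $\fv{\prom{\calS}}=\fv{\calS}$, which is finite because $\calS\in\kNormTermStructure d\subseteq\resourceStructure$; so I would focus entirely on the second.

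The first real step is to record how $\generates$ acts on monomials. Since the resource reduction inside a monomial is the contextual closure of the rule $\mset s\cdot\ms t\resRed\mset{σ'}\cdot\ms t$, which reduces a single element at a time, and $\oneStepGenerates$ then selects one summand, the cardinality of a monomial is invariant under $\generates$, and $\mset{s_1,\dotsc,s_n}\generates\ms v$ holds iff $\ms v=\mset{v_1,\dotsc,v_n}$ with $s_i\generates v_i$ (up to a permutation matching the two multisets). I would establish this by a routine induction on the length of the $\oneStepGenerates$-chain. The second, and conceptually central, observation is the one that explains the shift from $d$ to $d+1$: by Definition \ref{definition:monomial:depth}, $\depth{\ms v}=1+\max_i\depth{v_i}$, so if $\depth{\ms v}\le d+1$ then each element $v_i$ is a normal term with $\depth{v_i}\le d$, i.e. $v_i\in\kNormTerms d$.

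With these in hand the conclusion is pure bookkeeping. Fix a normal monomial $\ms v=\mset{v_1,\dotsc,v_m}$ with $\depth{\ms v}\le d+1$. Since $\calS\in\kNormTermStructure d$, each set $\calS\inter\cone{v_i}$ is finite, hence so is the finite union $\calS_0\eqdef\Union_{i=1}^m\pars{\calS\inter\cone{v_i}}$. Any $\ms s\in\prom{\calS}\inter\cone{\ms v}$ is then a multiset of cardinality $m$ whose elements all lie in $\calS$ and generate some $v_i$, hence lie in $\calS_0$; thus $\ms s$ belongs to the finite set $\set{\ms w\in\oc{\calS_0}\st\card{\ms w}=m}$. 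This proves $\prom{\calS}\inter\cone{\ms v}$ finite, and since $\ms v$ was arbitrary, the lemma follows. I do not anticipate a genuine obstacle: the only point demanding care is the componentwise generation lemma for monomials, which must faithfully track that no monomial element is ever created or destroyed along $\generates$, so that a fixed target $\ms v$ pins down both the cardinality $m$ and the finitely many admissible elements.
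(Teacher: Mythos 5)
Your proof is correct and follows essentially the same route as the paper: decompose a monomial $\ms t\in\prom{\calS}\inter\cone{\ms v}$ componentwise, observe that $\depth{\ms v}\le d+1$ forces each component of $\ms v$ into $\kNormTerms d$, and conclude by finiteness of each $\calS\inter\cone{v_i}$. The paper leaves the componentwise-generation fact and the cardinality invariance implicit ("without loss of generality, we can write\dots"), whereas you spell them out; this is a difference of detail, not of method.
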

\begin{proof}
	Let $\ms {t}'\in\kNormMonomials {d+1}$ and $\ms t\in{\prom{\calS}}\inter\cone {\ms{t}'}$.
	Write $n=\card{\ms t'}$.
	Without loss of generality, we can write
	$\ms t=\mset{t_1,\dotsc,t_n}$ and 
  $\ms t'=\mset{t'_1,\dotsc,t'_n}$ so that 
	$t_i\generates t'_i$ and $t'_i\in\kNormTerms d$, for all $i\in\set{1,\dotsc,n}$.
	Since $\ms t\in{\prom{\calS}}$, each $t_i\in{\calS}$.
	Since $\calS\in\kNormTermStructure d$, $t'_i$ being fixed, there are finitely
  many possible values for each $t_i$.
\end{proof}

\begin{lemma}
	If $\ms{\calT}_1,\dotsc,\ms{\calT}_n\in\kNormMonomialStructure d$ then 
	$\rappl {x}{\ms{\calT}_1\cdots\ms{\calT}_n}\in\kNormTermStructure{d}$.
\end{lemma}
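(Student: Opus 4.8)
The plan is to mimic the proofs of the two preceding lemmas: fix a normal witness and bound the set of its antecedents lying in the given set. So I would fix $t'\in\kNormTerms d$ and an arbitrary $t\in\rappl x{\ms{\calT}_1\cdots\ms{\calT}_n}\inter\cone{t'}$, and show that there are only finitely many such $t$. By definition of the qualitative application, $t=\rappl x{\ms t_1\cdots\ms t_n}$ with $\ms t_i\in\ms{\calT}_i$ for each $i$.

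The key structural point is that $t$ is a head normal form whose head is the variable $x$: since the spine $\rappl x{\cdots}$ exposes no top-level redex, every redex fired along a reduction $t\generates t'$ lies inside one of the argument monomials, and these firings are independent. Hence $t'=\rappl x{\ms{t'}_1\cdots\ms{t'}_n}$ with $\ms t_i\generates\ms{t'}_i$ for all $i$ (and if $t'$ is not of this shape then $\cone{t'}\inter\rappl x{\ms{\calT}_1\cdots\ms{\calT}_n}=\emptyset$ and there is nothing to prove). This spine-preservation claim, which is the exact analogue of the step in Lemma \ref{lemma:normalizable:lambda} isolating the outer constructor, is where the only genuine work lies; I would dispatch it by a routine induction on $n$ and on the length of the reduction, noting that each single generation step acts in exactly one argument monomial.

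From there everything is bookkeeping. Since $t'$ is normal, each $\ms{t'}_i$ is normal; and using $\depth x=0$ together with the application clause of Definition \ref{definition:monomial:depth}, one computes $\depth{t'}=\max_{1\le i\le n}\depth{\ms{t'}_i}$, so $\depth{t'}\le d$ forces $\ms{t'}_i\in\kNormMonomials d$. The decomposition $t'=\rappl x{\ms{t'}_1\cdots\ms{t'}_n}$ being unique, each $\ms{t'}_i$ is determined by $t'$, and for each $i$ we have $\ms t_i\in\ms{\calT}_i\inter\cone{\ms{t'}_i}$, a finite set because $\ms{\calT}_i\in\kNormMonomialStructure d$. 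Thus $(\ms t_1,\dotsc,\ms t_n)$ ranges over a finite product and there are finitely many $t$. Finally $\fv{\rappl x{\ms{\calT}_1\cdots\ms{\calT}_n}}\subseteq\set x\union\Union_{i=1}^n\fv{\ms{\calT}_i}$ is finite since each $\ms{\calT}_i\in\resourceStructure$, so the set indeed belongs to $\kNormTermStructure d$.

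I expect no real difficulty beyond the spine-preservation observation: the depth identity and the finiteness-structure manipulations are direct, and the whole argument parallels Lemma \ref{lemma:normalizable:lambda}, the only novelty being the componentwise decomposition spread across the $n$ argument positions rather than through a single subterm.
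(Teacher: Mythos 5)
Your proof is correct and follows essentially the same route as the paper's: fix $t'\in\kNormTerms d$, decompose $t=\rappl x{\ms t_1\cdots\ms t_n}$ and $t'=\rappl x{\ms{t'}_1\cdots\ms{t'}_n}$ with $\ms t_i\generates\ms{t'}_i$ and $\ms{t'}_i\in\kNormMonomials d$, then conclude finiteness componentwise from $\ms{\calT}_i\in\kNormMonomialStructure d$. The paper simply states the spine decomposition as immediate ("necessarily"), whereas you spell out the spine-preservation and depth bookkeeping; the substance is identical.
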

\begin{proof}
	Let $t'\in\kNormTerms{d}$ and 
	$t\in\pars{\rappl {x}{\ms{\calT}_1\cdots\ms{\calT}_n}}\inter\cone {t'}$.
	Necessarily, $t = \rappl {x}{\ms t_1\cdots\ms t_n}$ and $t' = \rappl {x}{\ms t'_1\cdots\ms t'_n}$
	and, for each $i\in\set{1,\dotsc,n}$, $\ms t_i\in\ms{\calT}_i$, $\ms t_i\generates \ms t'_i$ and
	$\ms t'_i\in\kNormMonomials d$: since $\ms{\calT}_i\in\kNormMonomialStructure d$, 
	there are finitely many possible values for each $\ms t_i$.
\end{proof}

\begin{corollary}
	\label{corollary:normalizable:neutral}
	If $\calT_1,\dotsc,\calT_n\in\kNormTermStructure d$ then 
	$\rappl {x}{\prom{\calT_1}\cdots{\prom{\calT_n}}}\in\kNormTermStructure{d+1}$.
\end{corollary}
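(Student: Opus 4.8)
The plan is to obtain this corollary by simply chaining the two lemmas immediately preceding it, checking only that the depth indices align. First I would apply the promotion lemma (the unnamed one stating that $\calS\in\kNormTermStructure d$ implies $\prom{\calS}\in\kNormMonomialStructure{d+1}$) to each of the hypotheses $\calT_1,\dotsc,\calT_n\in\kNormTermStructure d$ separately. This yields $\prom{\calT_i}\in\kNormMonomialStructure{d+1}$ for every $i\in\set{1,\dotsc,n}$.

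Then I would invoke the application lemma (the one stating that if $\ms{\calT}_1,\dotsc,\ms{\calT}_n\in\kNormMonomialStructure d$ then $\rappl{x}{\ms{\calT}_1\cdots\ms{\calT}_n}\in\kNormTermStructure d$), instantiated at the bound $d+1$ in place of $d$, and taking $\ms{\calT}_i\eqdef\prom{\calT_i}$. Since each $\prom{\calT_i}\in\kNormMonomialStructure{d+1}$, this produces exactly $\rappl{x}{\prom{\calT_1}\cdots\prom{\calT_n}}\in\kNormTermStructure{d+1}$, which is the desired conclusion.

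I expect essentially no obstacle here: the statement is a genuine corollary, and the only thing to verify is the bookkeeping of the depth indices. The single unit increment in the target structure, from $d$ to $d+1$, is entirely accounted for by the promotion step, which wraps each $\calT_i$ into a monomial and thereby raises the monomial depth by one; the application lemma itself merely preserves whatever bound it is given. No fresh summability argument or inspection of $\generates$ beyond what is already packaged inside the two cited lemmas is needed, so the proof amounts to one line of composition.
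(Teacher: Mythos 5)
Your proposal is correct and is exactly the composition the paper intends: the corollary is stated immediately after the promotion lemma and the head-variable application lemma, with no separate proof, precisely because it follows by applying the former to each $\calT_i$ and then instantiating the latter at depth $d+1$. The index bookkeeping is right — the $+1$ comes entirely from the promotion step — so nothing further is needed.
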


\begin{lemma}
	\label{lemma:normalizable:head}
	If
	$\rappl {\lsubst {\calS}x{\ms{\calT}_0}}{{\ms{\calT}_1}\cdots{\ms{\calT}_n}}\in\kNormTermStructure d$
	then
	$\rappl{\labs x{\calS}}{{\ms{\calT}_0}\,{\ms{\calT}_1}\cdots{\ms{\calT}_n}}\in\kNormTermStructure d$.
\end{lemma}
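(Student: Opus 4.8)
The plan is to unfold the definition of $\kNormTermStructure d$ and reduce the claim to a finiteness statement controlled by a single head‑reduction step. Write $\calR\eqdef\rappl{\labs x{\calS}}{\ms{\calT}_0\,\ms{\calT}_1\cdots\ms{\calT}_n}$ and $\calL\eqdef\rappl{\lsubst{\calS}x{\ms{\calT}_0}}{\ms{\calT}_1\cdots\ms{\calT}_n}$, so the hypothesis reads $\calL\in\kNormTermStructure d$ and the goal is $\calR\in\kNormTermStructure d$. Since $\calS,\ms{\calT}_0,\dotsc,\ms{\calT}_n$ are resource supports, $\fv{\calR}$ is finite and $\calR\in\resourceStructure$; it thus remains to check that $\calR\inter\cone{t''}$ is finite for every normal $t''\in\kNormTerms d$.

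The key observation is that every element of $\calR$ contracts its head redex in a single $\resRed$ step into a sum supported in $\calL$. Indeed, any $r\in\calR$ has the form $r=\rappl{\labs x s}{\ms t_0\,\ms t_1\cdots\ms t_n}$ with $s\in\calS$ and $\ms t_i\in\ms{\calT}_i$, and firing $\rappl{\labs x s}{\ms t_0}$ yields $r\resRed\rappl{\lsubst sx{\ms t_0}}{\ms t_1\cdots\ms t_n}=:\ell$ with $\support{\ell}\subseteq\calL$. Now fix a normal $t''\in\kNormTerms d$ and take $r\in\calR\inter\cone{t''}$, i.e.\ $r\generates t''$. Since $t''$ is normal and $\resRed$ is confluent and strongly normalizing (Lemma \ref{lemma:resRed:SN}), we get $t''\in\support{\NormalForm{r}}$; and $\NormalForm{r}=\NormalForm{\ell}$ because $r\resRed\ell$. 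Reasoning at the level of supports, where $\NormalFormSym$ is linear‑continuous and no cancellation occurs, $\support{\NormalForm{\ell}}=\Union_{\ell'\in\support{\ell}}\support{\NormalForm{\ell'}}$, so there is $\ell'\in\support{\ell}\subseteq\calL$ with $\ell'\generates t''$, that is $\ell'\in\calL\inter\cone{t''}$; moreover $r\oneStepGenerates\ell'$ by construction.

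It follows that $\calR\inter\cone{t''}\subseteq\set{r\st\exists\,\ell'\in\calL\inter\cone{t''},\ r\oneStepGenerates\ell'}$. By hypothesis $\calL\inter\cone{t''}$ is finite, and for each fixed $\ell'$ the set of its $\oneStepGenerates$‑antecedents is finite: Lemma \ref{lemma:reduction:size} gives $\fv{r}=\fv{\ell'}$ and $\size{r}\le 2\size{\ell'}+2$, and a set of resource expressions of bounded size with finitely many free variables is finite. Hence $\calR\inter\cone{t''}$ is a finite union of finite sets, and we conclude $\calR\in\kNormTermStructure d$.

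The individual steps are routine once this skeleton is fixed; the one point requiring care — and the main conceptual move — is the passage from $r$ to a single normalizing summand $\ell'$ of its head reduct, which is exactly where confluence together with the absence of cancellation over $\booleans$ is used. This is also what makes the monomial‑depth bound $d$ entirely passive here: head contraction at the root relates $\cone{t''}$‑membership of $r$ and of $\ell'$ for the very same $t''$, so the argument applies verbatim to each $\kNormStructure d$ (and to $\normStructure$) without ever inspecting $\depth{t''}$.
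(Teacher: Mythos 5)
Your proof is correct and follows essentially the same route as the paper's: fire the head redex of each $r\in\calR\inter\cone{t''}$, use confluence/the normalization strategy result to find a summand of the reduct lying in $\calL\inter\cone{t''}$, and then bound the antecedents \emph{via} Lemma \ref{lemma:reduction:size}. Your explicit selection of a single summand $\ell'$ of the head reduct before applying the size bound is in fact a slightly cleaner rendering of the paper's final step, which applies the bound to the whole finite sum $v$ at once.
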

\begin{proof}
	Let $u'\in\kNormTerms d$,
	and let $u\in\pars{\rappl{\labs x{\calS}}{{\ms{\calT}_0}\,{\ms{\calT}_1}\cdots{\ms{\calT}_n}}}\inter\cone {u'}$.
	In other words, $u'\in\support{\NormalForm u}$ and we can write
	$u=\rappl{\labs x{s}}{{\ms{t}_0}\,{\ms{t}_1}\cdots{\ms{t}_n}}$
	with $s\in\support{\calS}$ and $\ms t_i\in\support{\ms{\calT}_i}$ for $i\in\set{0,\dotsc,n}$.
	Write $v=\rappl {\lsubst {s}x{\ms{t}_0}}{{\ms{t}_1}\cdots{\ms{t}_n}}$:
	Corollary \ref{corollary:resRed:strategies} entails 
	$v\generates u'$, hence
	we have $v\in\pars{\rappl {\lsubst {\calS}x{\ms{\calT}_0}}{{\ms{\calT}_1}\cdots{\ms{\calT}_n}}}\inter\cone {u'}$.
	By assumption, there are finitely many possible values for $v$.
	Then, $v$ being fixed, by Lemma \ref{lemma:reduction:size},
	we have $\fv{u}=\fv{v}$ and $\size{u}\le 2\size{v}+2$, hence
	there are finitely many possible values for $u$.
\end{proof}

\begin{theorem}
	\label{theorem:converges:normalizes}
	If $M$ is normalizable, then $\TaylorSup M\in\normTermStructure$, 
	and $\TaylorExp M\in\normTermVectors$.
\end{theorem}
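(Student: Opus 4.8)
The plan is to argue by induction on a derivation witnessing normalizability of $M$. By Lemma~\ref{lemma:leftReduct:normalizes} and the remark following it, $M$ is normalizable iff the judgement $\converges M$ is derivable by the six inductive rules displayed above; since these rules are tailored to head reduction, an induction on such a derivation lets the structure of the proof mirror the structure of $\TaylorSup M$. The family $\pars{\kNormTermStructure d}_{d\in\naturals}$ plays the rôle of a reducibility candidate, and the goal is $\TaylorSup M\in\normTermStructure=\Inter_{d\in\naturals}\kNormTermStructure d$, from which $\TaylorExp M\in\normTermVectors$ will follow at once.

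The statement I would actually establish by induction on the derivation of $\converges M$ is the uniform one, namely that $\TaylorSup M\in\kNormTermStructure d$ holds for \emph{every} $d\in\naturals$ simultaneously. The cases $M=0$, $M=a\sm M'$ and $M=M'+N'$ are immediate, since $\TaylorSup 0=\emptyset$ is finitary, $\TaylorSup{a\sm M'}=\TaylorSup{M'}$, and finiteness structures are closed under finite unions. The abstraction case $M=\labs xS$ is handled by Lemma~\ref{lemma:normalizable:lambda}, with no change of $d$. In the head-redex case $M=\appl{\labs xS}{M_0\,M_1\cdots M_n}$, the inductive hypothesis bears on $\appl{\subst Sx{M_0}}{M_1\cdots M_n}$; rewriting $\TaylorSup{\subst Sx{M_0}}=\lsubst{\TaylorSup S}x{\prom{\TaylorSup{M_0}}}$ by Lemma~\ref{lemma:taylorsup:subst} and unfolding $\TaylorSupSym$ on applications, the hypothesis reads $\rappl{\lsubst{\TaylorSup S}x{\prom{\TaylorSup{M_0}}}}{\prom{\TaylorSup{M_1}}\cdots\prom{\TaylorSup{M_n}}}\in\kNormTermStructure d$, so the head-expansion Lemma~\ref{lemma:normalizable:head} yields $\TaylorSup M\in\kNormTermStructure d$, again with no change of $d$.

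The delicate case, and the main obstacle, is the head-variable case $M=\appl x{M_1\cdots M_n}$, where $\TaylorSup M=\rappl x{\prom{\TaylorSup{M_1}}\cdots\prom{\TaylorSup{M_n}}}$. Here Corollary~\ref{corollary:normalizable:neutral} produces membership in $\kNormTermStructure d$ only from the hypotheses $\TaylorSup{M_i}\in\kNormTermStructure{d-1}$: the monomial depth increases by one, which reflects the fact that a normal form may be arbitrarily deep. This is exactly why I fix no single depth but prove the claim for all $d$ at once. For $d\ge 1$ the inductive hypothesis, being quantified over every depth, supplies $\TaylorSup{M_i}\in\kNormTermStructure{d-1}$, and the Corollary gives $\TaylorSup M\in\kNormTermStructure d$. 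For $d=0$ nothing more is needed, because the structures are decreasing, $\kNormTermStructure{d+1}\subseteq\kNormTermStructure d$ (since $\kNormExpressions d\subseteq\kNormExpressions{d+1}$ and dualisation reverses inclusions), so from the already-established $\TaylorSup M\in\kNormTermStructure 1$ we get $\TaylorSup M\in\kNormTermStructure 0$. It is precisely this decreasing-structure observation that allows the one-step depth shift to be absorbed by the quantification over $d$.

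Collecting the cases gives $\TaylorSup M\in\kNormTermStructure d$ for all $d$, that is $\TaylorSup M\in\normTermStructure$. Finally, since $\support{\TaylorExp M}\subseteq\TaylorSup M$ and $\normTermStructure$ is a finiteness structure, hence downwards closed for inclusion, we obtain $\support{\TaylorExp M}\in\normTermStructure$, so $\TaylorExp M$ is normalizable, i.e.\ $\TaylorExp M\in\normTermVectors$.
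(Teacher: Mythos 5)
Your proof is correct and follows essentially the same route as the paper: an induction on the derivation of $\converges M$, using Lemma~\ref{lemma:normalizable:lambda}, Corollary~\ref{corollary:normalizable:neutral} and Lemma~\ref{lemma:normalizable:head} for the three head rules and the resource-structure closure properties for the remaining ones. The only difference is that you spell out explicitly the depth bookkeeping (proving membership in $\kNormTermStructure d$ for all $d$ simultaneously and using $\kNormTermStructure{d+1}\subseteq\kNormTermStructure d$) that the paper leaves implicit when it says the $\normTermStructure$ versions of the key lemmas are ``immediately derived'' from the indexed ones.
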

\begin{proof}
	By induction on the derivation of $\converges M$:
	Lemma \ref{lemma:normalizable:lambda},
	Corollary \ref{corollary:normalizable:neutral} and
	Lemma \ref{lemma:normalizable:head}
	respectively entail the translation 
	of the first three inductive rules through Taylor expansion.
	The other three follow from the fact that $\normTermStructure$ 
	is a resource structure (because it is a finiteness structure).
\end{proof}

It remains to prove that in this case, $\TaylorExp{\NormalForm M}$ is indeed
the normal form of $\TaylorExp M$.

\begin{theorem}
	\label{theorem:NormalForm:TaylorExp}
	If $M$ is normalizable, then $\NormalForm{\TaylorExp M}=\TaylorExp{\NormalForm M}$.
\end{theorem}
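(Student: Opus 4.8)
The plan is to assemble the statement from the soundness result already proved in Theorem~\ref{theorem:betaEq:NormalForm} together with the observation that the Taylor expansion of a $β$-normal term is itself normal, so that $\NormalFormSym$ acts trivially on it.

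First I would record that $M\betaEq\NormalForm M$. Since $M$ is normalizable, Lemma~\ref{lemma:leftReduct:normalizes} yields $k\in\naturals$ with $\nthLeftReduct kM=\NormalForm M$; as $M\pbetaRed\leftReduct M$ holds for every algebraic $λ$-term, iterating gives $M\RT\pbetaRed\NormalForm M$, hence $M\betaEq\NormalForm M$. By Theorem~\ref{theorem:converges:normalizes}, $M$ is Taylor normalizable, and then so is $\NormalForm M$ (by the equivalence in Theorem~\ref{theorem:betaEq:NormalForm}, or directly since $\NormalForm M$ is its own normal form). Applying Theorem~\ref{theorem:betaEq:NormalForm} to $M\betaEq\NormalForm M$ therefore gives
\[\NormalForm{\TaylorExp M}=\NormalForm{\TaylorExp{\NormalForm M}}.\]

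It then remains to check that $\TaylorExp{\NormalForm M}$ is already in normal form, \ie $\NormalForm{\TaylorExp{\NormalForm M}}=\TaylorExp{\NormalForm M}$. Here I would invoke the characterization of $β$-normal algebraic $λ$-terms via their Taylor support (the Fact stating that $M$ is $β$-normal iff $\TaylorSup M$ contains only normal resource terms): since $\NormalForm M$ is $β$-normal, $\TaylorSup{\NormalForm M}$ consists of normal resource terms only. As $\support{\TaylorExp{\NormalForm M}}\subseteq\TaylorSup{\NormalForm M}$, every $e\in\support{\TaylorExp{\NormalForm M}}$ is normal, so $\NormalForm e=e$, and by the very definition of $\NormalFormSym$ on normalizable vectors,
\[\NormalForm{\TaylorExp{\NormalForm M}}=\sum_{e\in\resourceTerms}\TaylorExp{\NormalForm M}_e\sm\NormalForm e=\TaylorExp{\NormalForm M}.\]
Chaining the two displayed equalities yields $\NormalForm{\TaylorExp M}=\TaylorExp{\NormalForm M}$, as desired.

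I do not expect a genuine obstacle: all the substantive work is borrowed from earlier results, chiefly the soundness theorem, whose proof rests on the consistency of $\resEq[\normStructure]$ (Corollary~\ref{corollary:resEq:normalStructure:NormalForm}) and the stability of $\normTermVectors$ under $\vpresRed[\normStructure]$ (Lemma~\ref{lemma:vpresRed:NormalForm}). The only points requiring a moment's care are verifying that $\NormalForm M$ is itself Taylor normalizable, so that the ``in this case'' clause of Theorem~\ref{theorem:betaEq:NormalForm} applies, and that its expansion is supported on normal resource terms — both of which follow cleanly from the normality of $\NormalForm M$ and the qualitative analysis of Taylor supports.
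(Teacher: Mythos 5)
Your proposal is correct and follows essentially the same route as the paper: Theorem~\ref{theorem:converges:normalizes} gives Taylor normalizability, Theorem~\ref{theorem:betaEq:NormalForm} applied to $M\betaEq\NormalForm M$ gives $\NormalForm{\TaylorExp M}=\NormalForm{\TaylorExp{\NormalForm M}}$, and the last equality holds because $\TaylorExp{\NormalForm M}$ is already normal. You merely make explicit two steps the paper leaves implicit (deriving $M\betaEq\NormalForm M$ from Lemma~\ref{lemma:leftReduct:normalizes} and justifying normality of the support via the Fact on Taylor supports), which is fine.
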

\begin{proof}
	By Theorem \ref{theorem:converges:normalizes}, $M$ is Taylor normalizable.
	Then Theorem \ref{theorem:betaEq:NormalForm} entails
	$\NormalForm{\TaylorExp M}=\NormalForm{\TaylorExp{\NormalForm M}}
	=\TaylorExp{\NormalForm M}$.
\end{proof}

\subsection{Conservativity}

\label{subsection:conservativity}

The restriction to normalizable vectors allows us to prove 
an analogue of Lemma \ref{lemma:resEq:conservative:zerosumfree},
without any assumption on the semiring of scalars.

\begin{lemma}
	\label{lemma:resEq:conservative:normalizable}
	Let $M,N\in\lambdaTerms$ be normalizable.
	Then $M\resEq[\normTermStructure] N$ iff $M\betaEq N$.
\end{lemma}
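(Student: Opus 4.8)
The plan is to prove the two implications separately, reusing the full weight of the preceding development; both directions turn out to be essentially corollaries, the real work having been done in Theorems \ref{theorem:converges:normalizes} and \ref{theorem:NormalForm:TaylorExp} and in Corollary \ref{corollary:resEq:normalStructure:NormalForm}.

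For the implication $M\betaEq N\Rightarrow M\resEq[\normTermStructure] N$, I would first record that, since $M$ is normalizable, Theorem \ref{theorem:converges:normalizes} gives $\TaylorSup M\in\normTermStructure$ and $\TaylorExp M\in\normTermVectors$, so that $M$ is Taylor normalizable and the relation $\resEq[\normTermStructure]$ is meaningful on $\TaylorExp M$. Fix a $\betaEq$-chain $M=P_0,\dotsc,P_k=N$ made of forward or backward $\pbetaRed$-steps. By the $\betaEq$-invariance of Taylor normalizability (Lemma \ref{lemma:betaRed:Taylor:normalizable}) every $P_i$ is Taylor normalizable, hence $\TaylorSup{P_i}\in\normTermStructure$ and $\TaylorExp{P_i}\in\normTermVectors$. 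For each step $P_i\pbetaRed P_{i+1}$, Lemma \ref{lemma:vpresRed:correct} yields $\TaylorExp{P_i}\vpresRed[\TaylorSup{P_i}]\TaylorExp{P_{i+1}}$, and since $\TaylorSup{P_i}\in\normTermStructure$ this is a $\vpresRed[\normTermStructure]$-step by Definition \ref{definition:splitVariant}; thus $P_i\resEq[\normTermStructure] P_{i+1}$. Composing over the chain gives $M\resEq[\normTermStructure] N$. (This is exactly the intermediate fact already extracted in the proof of Theorem \ref{theorem:betaEq:NormalForm}, so one may alternatively cite it directly.)

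For the converse $M\resEq[\normTermStructure] N\Rightarrow M\betaEq N$, the plan mirrors Lemma \ref{lemma:resEq:conservative:zerosumfree}, but replaces the zerosumfreeness argument by Corollary \ref{corollary:resEq:normalStructure:NormalForm}. By Theorem \ref{theorem:converges:normalizes}, $\TaylorExp M$ and $\TaylorExp N$ lie in $\normTermVectors$, so the hypothesis $M\resEq[\normTermStructure] N$ is admissible and Corollary \ref{corollary:resEq:normalStructure:NormalForm} gives $\NormalForm{\TaylorExp M}=\NormalForm{\TaylorExp N}$. Since $M$ and $N$ are normalizable, Theorem \ref{theorem:NormalForm:TaylorExp} computes these normal forms on the nose: $\NormalForm{\TaylorExp M}=\TaylorExp{\NormalForm M}$ and $\NormalForm{\TaylorExp N}=\TaylorExp{\NormalForm N}$, whence $\TaylorExp{\NormalForm M}=\TaylorExp{\NormalForm N}$. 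Now $\NormalForm M$ and $\NormalForm N$ are again pure λ-terms (β-normalization does not leave $\lambdaTerms$), so the injectivity of $\TaylorExpSym$ on $\lambdaTerms$ — obtained by applying Fact \ref{fact:TaylorExp:lambda} to $\lapprox{\NormalForm M}\in\support{\TaylorExp{\NormalForm M}}=\support{\TaylorExp{\NormalForm N}}$ — forces $\NormalForm M=\NormalForm N$. Finally $M\RT\pbetaRed\NormalForm M$ and $N\RT\pbetaRed\NormalForm N$ give $M\betaEq\NormalForm M=\NormalForm N\betaEq N$, i.e.\ $M\betaEq N$.

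The conceptual crux — and the only place where this lemma improves on Lemma \ref{lemma:resEq:conservative:zerosumfree} — is that restricting to $\normTermStructure$ lets us invoke Corollary \ref{corollary:resEq:normalStructure:NormalForm} to extract equality of \emph{normal forms} directly, rather than relying on zerosumfreeness to read back $\teq$-equality; consequently no hypothesis on $\rigS$ is needed. I expect no genuine obstacle: the only points demanding care are the bookkeeping verifications that every intermediate term of the $\betaEq$-chain remains Taylor normalizable (so that each simulated step genuinely lies in $\vpresRed[\normTermStructure]$ and not merely in $\vpresRed$) and that the membership conditions $\TaylorExp M,\TaylorExp N\in\normTermVectors$ hold, making both $\resEq[\normTermStructure]$ and Corollary \ref{corollary:resEq:normalStructure:NormalForm} applicable.
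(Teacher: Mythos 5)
Your proof is correct and follows essentially the same route as the paper: the forward direction combines Theorem \ref{theorem:converges:normalizes}, Lemma \ref{lemma:betaRed:Taylor:normalizable} and Lemma \ref{lemma:vpresRed:correct} along the $\betaEq$-chain (as in Theorem \ref{theorem:betaEq:NormalForm}), and the converse chains Corollary \ref{corollary:resEq:normalStructure:NormalForm}, Theorem \ref{theorem:NormalForm:TaylorExp} and the injectivity of $\TaylorExpSym$ on $\lambdaTerms$, exactly as the paper does. The extra bookkeeping you spell out (each intermediate term staying Taylor normalizable, injectivity via Fact \ref{fact:TaylorExp:lambda}) is accurate and merely makes explicit what the paper leaves implicit.
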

\begin{proof}
	Assume $M\resEq[\normTermStructure] N$.
	By Corollary \ref{corollary:resEq:normalStructure:NormalForm},
	we have $\NormalForm{\TaylorExp{M}}=\NormalForm{\TaylorExp{M'}}$.
	By Theorem \ref{theorem:NormalForm:TaylorExp}, 
	we obtain $\NormalForm M\teq \NormalForm{M'}$.
	Since $M$ and $N$ are pure λ-terms, 
	we deduce $\NormalForm M=\NormalForm N$ 
	from the injectivity of $\TaylorExpSym$ on $\lambdaTerms$.

	The reverse direction is similar to Theorem
	\ref{theorem:betaEq:NormalForm}
	and does not depend on $M$ and $N$ being pure λ-terms:
	apply Lemmas \ref{lemma:vpresRed:correct},
	\ref{lemma:vpresRed:NormalForm} and
	\ref{lemma:betaRed:Taylor:normalizable}
	to the reduction path from $M$ to $N$
\end{proof}

We can adapt this result to non-normalizing pure λ-terms
thanks to previous work by Ehrhard and Regnier:\footnote{
	We could as well rely on Theorem \ref{theorem:taylor:determinable},
	to be proved in the next section.
}

\begin{thmC}[\cite{er:resource,er:bkt}]
	\label{theorem:taylor:pure}
	For all pure λ-term $M\in\lambdaTerms$, 
	$\TaylorSup M\in\normTermStructure$
	and $\NormalForm{\TaylorExp M}=\TaylorExp{\BohmTree M}$
	where $\BohmTree M$ denotes the Böhm tree of $M$.
\end{thmC}

Here \emph{Böhm tree} is to be understood as \emph{generalized normal form for left β-reduction}.
In particular it does not involve η-expansion. More formally,
the Böhm tree of a λ-term is the possibly infinite tree obtained 
coinductively as follows:
\begin{itemize}
	\item if $M$ is head normalizable and its head normal form is
		$\labs{x_1}{\cdots\labs{x_n}{\appl x{N_1\cdots N_k}}}$
		then $\BohmTree M\eqdef
		\labs{x_1}{\cdots\labs{x_n}{\appl x{\BohmTree{N_1}\cdots\BohmTree{N_k}}}}$
	\item otherwise $\BohmTree M\eqdef\bot$, where $\bot$ is a constant 
		representing unsolvability.
\end{itemize}
Taylor expansion can be generalized to Böhm trees \cite{er:bkt}, setting in
particular $\TaylorExp \bot =0$: this is still injective.

\begin{lemma}
	If $M,N\in\lambdaTerms$ and $M\resEq[\normTermStructure] N$
	then $\BohmTree M=\BohmTree N$.
\end{lemma}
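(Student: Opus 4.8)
The plan is to chain together the normalization-invariance of $\resEq[\normTermStructure]$ with Ehrhard and Regnier's identification of the normal form of Taylor expansion (Theorem \ref{theorem:taylor:pure}), and to close with the injectivity of Taylor expansion on Böhm trees. The argument is short: the real content has already been packaged into the two cited results, so what remains is essentially to glue them together.

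First I would unwind the notation. By the convention fixed at the beginning of Section \ref{section:simulation:beta}, the hypothesis $M\resEq[\normTermStructure]N$ means precisely $\TaylorExp M\resEq[\normTermStructure]\TaylorExp N$. This is well typed: since $M,N\in\lambdaTerms$, Theorem \ref{theorem:taylor:pure} gives $\TaylorSup M,\TaylorSup N\in\normTermStructure$, and as $\support{\TaylorExp M}\subseteq\TaylorSup M$ and $\normTermStructure$ is downwards closed (being a finiteness structure), both $\TaylorExp M$ and $\TaylorExp N$ indeed lie in $\finitaryVectors{\normTermStructure}$. Applying Corollary \ref{corollary:resEq:normalStructure:NormalForm} to $\TaylorExp M\resEq[\normTermStructure]\TaylorExp N$ then yields $\NormalForm{\TaylorExp M}=\NormalForm{\TaylorExp N}$ directly, with no additional hypothesis on $\rigS$.

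It then remains to convert the equality of normal forms into an equality of Böhm trees. Here I would apply Theorem \ref{theorem:taylor:pure} once more, in both instances, to rewrite $\NormalForm{\TaylorExp M}=\TaylorExp{\BohmTree M}$ and $\NormalForm{\TaylorExp N}=\TaylorExp{\BohmTree N}$, so that the previous equality becomes $\TaylorExp{\BohmTree M}=\TaylorExp{\BohmTree N}$. The conclusion $\BohmTree M=\BohmTree N$ follows from the injectivity of Taylor expansion on Böhm trees, recalled immediately after Theorem \ref{theorem:taylor:pure} (with the convention $\TaylorExp\bot=0$).

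The only genuinely substantive ingredient is this last injectivity: every other step is bookkeeping around already-established facts. It is worth stressing that the argument uses injectivity of $\TaylorExpSym$ on \emph{Böhm trees} — which holds in full generality — and not the (false on $\vectorTerms$) injectivity on arbitrary algebraic terms; this is exactly what restricts the statement to pure λ-terms $M,N\in\lambdaTerms$ and is the point where the uniform setting is essential.
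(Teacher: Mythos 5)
Your proof is correct and follows exactly the paper's own argument: apply Corollary \ref{corollary:resEq:normalStructure:NormalForm} to get $\NormalForm{\TaylorExp M}=\NormalForm{\TaylorExp N}$, rewrite both sides via Theorem \ref{theorem:taylor:pure} as $\TaylorExp{\BohmTree M}=\TaylorExp{\BohmTree N}$, and conclude by the injectivity of $\TaylorExpSym$ on Böhm trees. The additional well-typedness remarks are sound but not part of the paper's (more terse) proof.
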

\begin{proof}
	By Corollary \ref{corollary:resEq:normalStructure:NormalForm},
	we have $\NormalForm{\TaylorExp{M}}=\NormalForm{\TaylorExp{M'}}$.
	By Theorem \ref{theorem:taylor:pure}, 
	we obtain $\TaylorExp{\BohmTree M}=\TaylorExp{\BohmTree N}$.
	We conclude since $\TaylorExpSym$ is injective on Böhm trees.
\end{proof}

In the next and final section, we prove a generalization of
Theorem \ref{theorem:taylor:pure} to 
the non-uniform setting
which is made possible by the results
we have achieved so far.

\section{Normal form of Taylor expansion, \emph{façon} Böhm trees}

\label{section:determinable}

The Böhm tree construction is often introduced as the limit of an increasing sequence
$(\kBohmTree dM)_{d\in\naturals}$ of finite normal form approximants,
\emph{aka} finite Böhm trees, where $\kBohmTree dM$ is defined inductively as
follows:
\begin{itemize}
	\item $\kBohmTree 0M=\bot$;
	\item if $M$ is head normalizable and its head normal form is
		$\labs{x_1}{\cdots\labs{x_n}{\appl x{N_1\cdots N_k}}}$
		then $\kBohmTree {d+1}M\eqdef
		\labs{x_1}{\cdots\labs{x_n}{\appl x{\kBohmTree d{N_1}\cdots\kBohmTree d{N_k}}}}$
	\item otherwise $\kBohmTree {d+1}M\eqdef\bot$;
\end{itemize}
and the order on Böhm trees is the contextual closure of the inequality $\bot \le M$ for all $M$.

In this final section of our paper, we show that the \emph{normal form of
Taylor expansion} operator generalizes this construction to the class
of \emph{hereditarily determinable terms}: these encompass both all pure λ-terms and
all normalizable algebraic λ-terms, but exclude terms such as $\infty_x$, that
produce unbounded sums of head normal forms. More precisely, we show that any 
hereditarily determinable term $M$ is Taylor normalizable, and moreover admits
a sequence of approximants $\pars{\kApprox dM}_{d\in\naturals}$, such that 
each $\kApprox dM$ is an algebraic λ-term in normal form, and the
sequence of normal term vectors $\pars{\TaylorExp{\kApprox dM}}_{d\in\naturals}$
converges to $\NormalForm{\TaylorExp M}$.

The results in this section should not hide the fact that the more fundamental
notion is that of Taylor normalizable term, which arises naturally by combining
Taylor expansion with the normalization of resource terms, subject to a
summability condition. We believe this approach is quite robust, and may be
adapted modularly following both parameters: to other systems admitting Taylor
expansion; and to variants of summability, possibly associated with topological
conditions of the semiring of scalars.

By contrast, the definition of hereditarily determinable terms is essentially
\emph{ad-hoc}.
Its only purpose is to allow us to generalize Theorem
\ref{theorem:taylor:pure} and support our claim that:
\emph{the normal form of Taylor expansion extends the notion of Böhm tree
to the non-uniform setting}.

\subsection{Taylor unsolvability}

\label{subsection:unsolvable}

In the ordinary λ-calculus, head normalizable terms are exactly those with a
non trivial Böhm tree. This is reflected via Taylor expansion: it is easy to check that 
$\NormalForm{\TaylorExp M}=0$ iff $M$ has no head normal form. 
In the non uniform setting, a similar result holds, although
we need to be more careful about the interplay between reduction and coefficients.

\begin{definition}
	We say an algebraic λ-term $M$ (resp.\ simple term $S$) 
	is \emph{weakly solvable} if the judgement $\weakSolvable M$
	can be derived inductively by the following rules:
	\begin{center}
		\begin{prooftree}
			\Infer0{\weakSolvable {\appl{x}{M_1\cdots M_n}}}
		\end{prooftree}
		\hfill
		\begin{prooftree}
			\Hypo  {\weakSolvable S}
			\Infer1{\weakSolvable{\labs xS}}
		\end{prooftree}
		\hfill
		\begin{prooftree}
			\Hypo  {\weakSolvable{\appl{\subst Sx{M_0}}{M_1\cdots M_n}}}
			\Infer1{\weakSolvable{\appl{\labs xS}{M_0\,M_1\cdots M_n}}}
		\end{prooftree}
		\hfill
		\begin{prooftree}
			\Hypo  {\weakSolvable M}
			\Infer1{\weakSolvable {a\sm M}}
		\end{prooftree}
		\hfill
		\begin{prooftree}
			\Hypo  {\weakSolvable M}
			\Infer1{\weakSolvable {M+N}}
		\end{prooftree}
		\hfill
		\begin{prooftree}
			\Hypo  {\weakSolvable N}
			\Infer1{\weakSolvable {M+N}}
		\end{prooftree}
	\end{center}
\end{definition}

It should be clear that, if $M$ is a pure λ-term, $\weakSolvable M$ iff $M$ is
head normalizable. In the general case, we show that $\weakSolvable M$
iff normalizing the Taylor expansion of $M$ yields a non trivial result.
More formally:

\begin{definition}
We say an algebraic λ-term $M\in\algebraicTerms$ is \emph{Taylor unsolvable}
and write $\unsolvable M$ if $\NormalForm s=0$ for all $s\in\TaylorSup M$.
\end{definition}

In particular, if $\unsolvable M$ then $\TaylorExp M\in\normTermVectors$ 
and $\NormalForm{\TaylorExp M}=0$: indeed, 
$\support{\TaylorExp M}\subseteq\TaylorSup M$.
Beware that the reverse implication does not hold in general.
We can then show that $\weakSolvable M$ iff $M$ is Taylor solvable
(Lemmas \ref{lemma:Taylor:solvable:weakSolvable} and \ref{lemma:weakSolvable:Taylor:solvable}).

\begin{lemma}
	\label{lemma:Taylor:solvable:weakSolvable}
	If there exists $s\in\TaylorSup M$ such that $\NormalForm s\not=0$
	then $\weakSolvable M$.
\end{lemma}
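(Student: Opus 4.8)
The plan is to prove the statement directly by well-founded induction, choosing a single resource term $s\in\TaylorSup M$ with $\NormalForm s\neq0$ as a \emph{witness} and tracking its behaviour under head reduction. The induction is on the pair $(s,M)$, ordered lexicographically: the first component is compared using the well-founded strict partial order $\strictlyGenerates$ (whose well-foundedness was observed after Lemma \ref{lemma:reduction:size}), and the second using the number of syntactic constructors of $M$. At each step I case-split on the canonical form of $M$, keeping in mind that the primary component decreases exactly when a redex is fired, while the administrative constructors only shrink the secondary component.

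The administrative cases are immediate and keep the witness $s$ fixed. If $M=0$ then $\TaylorSup M=\emptyset$, so no witness exists and the case is vacuous. If $M=a\sm M'$ then $\TaylorSup M=\TaylorSup{M'}$, so $s\in\TaylorSup{M'}$ with the same normal form, and the induction hypothesis (secondary component strictly smaller) yields $\weakSolvable{M'}$, whence $\weakSolvable{a\sm M'}$. If $M=M'+M''$ then $s$ lies in $\TaylorSup{M'}$ or $\TaylorSup{M''}$, and the corresponding sum rule concludes. For $M=S$ simple I invoke Fact \ref{fact:simple:head}: when $S=\labs{x_1}{\cdots\labs{x_n}{\appl x{M_1\cdots M_k}}}$ is a head normal form, $\weakSolvable S$ holds unconditionally, by the head-variable rule followed by $n$ applications of the abstraction rule, and the witness is not even needed.

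The crux is the head-redex case $S=\labs{x_1}{\cdots\labs{x_n}{\appl{\labs xT}{M_0\cdots M_k}}}$; set $S'=\labs{x_1}{\cdots\labs{x_n}{\appl{\subst Tx{M_0}}{M_1\cdots M_k}}}$ for its head reduct. The witness then has the shape $s=\labs{x_1}{\cdots\labs{x_n}{\rappl{\labs x t}{\ms t_0\cdots\ms t_k}}}$ with $t\in\TaylorSup T$ and each $\ms t_i$ a finite multiset of elements of $\TaylorSup{M_i}$, so $\rappl{\labs x t}{\ms t_0}$ is a resource head redex. Firing exactly this redex is a single contextual $\resRed$-step $s\resRed\sigma_h$ with $\sigma_h=\labs{x_1}{\cdots\labs{x_n}{\rappl{(\lsubst t x{\ms t_0})}{\ms t_1\cdots\ms t_k}}}$. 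By Lemma \ref{lemma:resRed:SN}, $\NormalForm s=\NormalForm{\sigma_h}\neq0$; since $\NormalFormSym$ is linear on finite sums, some $s_1\in\support{\sigma_h}$ satisfies $\NormalForm{s_1}\neq0$. From $s\resRed\sigma_h$ and $s_1\in\support{\sigma_h}$ we get $s\oneStepGenerates s_1$, hence $s\strictlyGenerates s_1$, so the primary component of the measure strictly decreases. Moreover, Lemma \ref{lemma:taylorsup:subst} gives $\support{\lsubst t x{\ms t_0}}\subseteq\TaylorSup{\subst Tx{M_0}}$, whence $s_1\in\TaylorSup{S'}$. The induction hypothesis applied to $S'$ with witness $s_1$ yields $\weakSolvable{S'}$. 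Since the abstraction rule is the only rule producing a $\lambda$-headed term, it is invertible, so $\weakSolvable{S'}$ gives $\weakSolvable{\appl{\subst Tx{M_0}}{M_1\cdots M_k}}$; the head-reduction rule then yields $\weakSolvable{\appl{\labs xT}{M_0\cdots M_k}}$, and $n$ abstraction steps give $\weakSolvable S$.

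The hard part will be getting the measure to decrease in every case simultaneously. The subtlety is that the administrative constructors ($+$, scalar, and the leading abstractions) do not reduce the witness at all, whereas firing a head redex does; stripping all leading abstractions at once \emph{via} Fact \ref{fact:simple:head} is precisely what keeps the witness a genuine $\resRed$-reduct of the previous one (so that $\strictlyGenerates$ applies), rather than an incomparable strict subterm. A second point requiring care is that a nonzero normal form of a \emph{sum} guarantees only a nonzero normal form of \emph{some} summand, which is why I extract a specific $s_1\in\support{\sigma_h}$ rather than arguing about $\sigma_h$ globally; the linearity of $\NormalFormSym$ makes this rigorous without any assumption on $\rigS$.
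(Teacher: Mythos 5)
Your argument is sound in substance but is organized around a different well-founded measure than the paper's. The paper first converts the hypothesis $\NormalForm s\not=0$ into the existence of some $k$ with a normal term in $\support{\nthLeftReduct ks}$ (the left strategy being normalizing, by Corollary \ref{corollary:resRed:strategies}), and then inducts lexicographically on $(k,M)$; since $\nthLeftReduct k{\labs xt}=\labs x{\nthLeftReduct kt}$, that measure survives descending into the body of an abstraction, so the paper can peel abstractions one at a time and apply the induction hypothesis directly to the term $\appl{\subst Tx{M_0}}{M_1\cdots M_n}$ appearing in the premise of the head-redex rule. Your measure $(s,M)$, with $s$ ordered by $\strictlyGenerates$, does not survive passage to a subterm, which is exactly why you must keep the leading abstractions glued to the witness and recover $\weakSolvable{\appl{\subst Tx{M_0}}{M_1\cdots M_k}}$ afterwards by inverting the abstraction rule. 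Both measures are legitimate, and your head-redex step — fire only the head redex, extract $s_1\in\support{\sigma_h}$ with $\NormalForm{s_1}\not=0$ (valid because finite sums carry coefficients in $\naturals$, which is zerosumfree), and place $s_1$ in $\TaylorSup{S'}$ via Lemma \ref{lemma:taylorsup:subst} — is essentially the paper's, with $\leftReduct s$ replaced by your $\sigma_h$; the two coincide here anyway.

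The one step you state too glibly is the inversion. The term $S'=\labs{x_1}{\cdots\labs{x_n}{\appl{\subst Tx{M_0}}{M_1\cdots M_k}}}$ need not be simple: $\subst Tx{M_0}$ can canonicalize to a sum (or a scalar multiple, or $0$), so under $\canEq$ the abstractions and the application distribute over it and $S'$ is in general a linear combination of simple terms. The statement you actually need is: for an arbitrary algebraic term $N$, $\weakSolvable{\labs xN}$ implies $\weakSolvable N$. This is true, but it is not a one-line inversion of a single rule; it requires a small induction on the canonical form of $N$ (cases $0$, $a\sm N'$, $N'+N''$, and $N$ simple), tracking how a derivation of $\weakSolvable{\labs xN}$ passes through the sum and scalar rules before reaching an abstraction rule — and noting that the case $N=0$ is vacuous since $\weakSolvable 0$ is underivable. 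With that auxiliary lemma supplied, your proof goes through.
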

\begin{proof}
	We prove by induction on $k\in\naturals$ then on $M\in\algebraicTerms$
	that if $\support{\nthLeftReduct ks}$ contains a normal resource term
	and $s\in\TaylorSup M$ then $\weakSolvable M$.

	If $M=\appl{x}{M_1\cdots M_n}$ we conclude directly.

	If $M=\labs xT$ then $s=\labs xt$ with $t\in\TaylorSup T$:
	necessarily $\support{\nthLeftReduct kt}$ contains a normal resource term
	and by induction hypothesis we obtain $\weakSolvable T$ hence 
	$\weakSolvable M$.

	If $M=\appl{\labs xT}{M_0\,M_1\cdots M_n}$ 
	then $s=\rappl{\labs xt}{\ms s_0\,\ms s_1\cdots\ms s_n}$ 
	with $t\in\TaylorSup T$ and $\ms s_i\in\prom{\TaylorSup{M_i}}$ for 
	$i\in\set{0,\dotsc,n}$. Necessarily $k>0$ and there is 
	$s'\in\support{\leftReduct s}=\support{\rappl{\lsubst tx{\ms s_0}}{\ms s_1\cdots\ms s_n}}$
	such that $\support{\nthLeftReduct{k-1}{s'}}$
	contains a normal resource term. By Lemma \ref{lemma:taylorsup:subst},
	$s'\in\TaylorSup{\appl{\subst Tx{M_0}}{M_1\cdots M_n}}$:
	we obtain $\weakSolvable{\appl{\subst Tx{M_0}}{M_1\cdots M_n}}$
	by induction hypothesis, and then $\weakSolvable M$.

	If $M=a\sm N$, $M=N+P$ or $M=P+N$ with $s\in\TaylorSup N$
	then we obtain $\weakSolvable N$ by induction hypothesis,
	and then $\weakSolvable M$.
\end{proof}

\begin{lemma}
	\label{lemma:weakSolvable:Taylor:solvable}
	If $\weakSolvable M$, then there exists $s\in\TaylorSup M$ such that
	$\NormalForm s\not=0$.
\end{lemma}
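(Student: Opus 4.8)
The plan is to proceed by induction on the derivation of $\weakSolvable M$, exhibiting in each case an explicit resource term $s\in\TaylorSup M$ and checking that $s$ reduces to a normal resource term. Since $\resRed$ is confluent and strongly normalizing (Lemma \ref{lemma:resRed:SN}), I will use throughout that $\NormalForm s\neq 0$ if and only if $s\generates e$ for some $e\in\normExpressions$; this is the property I establish for the witness $s$. The six inductive rules are matched against the defining clauses of $\TaylorSupSym$, essentially reversing the analysis carried out in the proof of Lemma \ref{lemma:Taylor:solvable:weakSolvable}.

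Four of the cases are immediate. For the axiom $\weakSolvable{\appl x{M_1\cdots M_n}}$, I would take $s=\rappl{\cdots\rappl x{\mset{}}\cdots}{\mset{}}$, the head variable applied to $n$ copies of the empty monomial: since $\mset{}\in\prom{\calS}$ for every $\calS$ (it is the degree-$0$ component of the promotion), we have $s\in\TaylorSup M$, and $s$ is already normal, so $s\generates s\in\normExpressions$. For $M=\labs xS$ derived from $\weakSolvable S$, the induction hypothesis yields $t\in\TaylorSup S$ generating some normal $e$; then $\labs xt\in\labs x{\TaylorSup S}=\TaylorSup M$ and, by contextuality of $\resRed$, $\labs xt\generates\labs xe$ with $\labs xe$ normal. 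The cases $M=a\sm N$ and $M=N+P$ (or $P+N$) are settled at once, the induction hypothesis providing a witness already lying in $\TaylorSup N\subseteq\TaylorSup M$.

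The head-redex case is the crux, and the main obstacle. Here $M=\appl{\labs xS}{M_0\,M_1\cdots M_n}$ is derived from $\weakSolvable{M'}$ with $M'=\appl{\subst Sx{M_0}}{M_1\cdots M_n}$, and the induction hypothesis gives $s'\in\TaylorSup{M'}$ with $s'\generates e$ for some normal $e$. First I would decompose $s'$ along the clauses for $\TaylorSupSym$ as $s'=\rappl{\cdots\rappl{s''}{\ms s_1}\cdots}{\ms s_n}$ with $\ms s_i\in\prom{\TaylorSup{M_i}}$ and $s''\in\TaylorSup{\subst Sx{M_0}}$. By Lemma \ref{lemma:taylorsup:subst}, $\TaylorSup{\subst Sx{M_0}}=\lsubst{\TaylorSup S}x{\prom{\TaylorSup{M_0}}}$, so $s''\in\support{\lsubst tx{\ms s_0}}$ for some $t\in\TaylorSup S$ and $\ms s_0\in\prom{\TaylorSup{M_0}}$. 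I would then reassemble the redex $s=\rappl{\cdots\rappl{\pars{\rappl{\labs xt}{\ms s_0}}}{\ms s_1}\cdots}{\ms s_n}$, which lies in $\TaylorSup M$ by construction. Firing its head redex gives $\rappl{\labs xt}{\ms s_0}\resRed\lsubst tx{\ms s_0}$ whose support contains $s''$, whence $s\oneStepGenerates s'$ by contextual closure; combined with $s'\generates e$ this yields $s\generates e$, as required. The delicate point is precisely this bookkeeping: one must check that the reassembled redex genuinely belongs to $\TaylorSup M$ and that the fired reduction has $s'$ (and not merely some other summand) in the support of its result — both facts following directly from the clauses defining $\TaylorSupSym$ and from $s''\in\support{\lsubst tx{\ms s_0}}$.
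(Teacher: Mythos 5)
Your proposal is correct and follows essentially the same route as the paper: induction on the derivation with the same witnesses in the variable, abstraction, scalar and sum cases, and in the head-redex case the same use of Lemma \ref{lemma:taylorsup:subst} to decompose $s'$ and reassemble the redex $\rappl{\labs xt}{\ms u_0\cdots\ms u_n}\in\TaylorSup M$. The extra bookkeeping you supply (that $s\oneStepGenerates s'$ and hence $\NormalForm s\not=0$, using that coefficients in $\finiteResourceSums$ are natural numbers so no cancellation can occur) is left implicit in the paper but is exactly the intended justification.
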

\begin{proof}
	By induction on the derivation of $\weakSolvable M$.

	If $M=\appl{x}{M_1\cdots M_n}$,
	set $s=\rappl x{\mset{}\cdots\mset{}}$ ($x$ applied $n$ times 
	to the empty monomial): $s\in\TaylorSup M$ and $s$ is normal.

	If $M=\labs xT$ with $\weakSolvable T$: by induction hypothesis, we obtain 
	$t\in\TaylorSup T$ with $\NormalForm t\not=0$ and set $s=\labs xt$.

	If $M=\labs xT{M_0\,M_1\cdots M_n}$ and $M'=\appl{\subst Tx{M_0}}{M_1\cdots M_n}$
	with $\weakSolvable {M'}$, the induction hypothesis gives 
	$s'\in\TaylorSup {M'}$ such that $\NormalForm{s'}\not=0$.
	By Lemma \ref{lemma:taylorsup:subst}, there exist $t\in\TaylorSup T$
	and $\ms{u}_i\in\prom{\TaylorSup{M_i}}$ for $i\in\set{0,\dotsc,n}$ such that
	$s'\in\support{\rappl{\lsubst{t}x{\ms u_0}}{\ms u_1\cdots\ms u_n}}$.
	We then set $s=\rappl{\labs x{t}}{\ms u_0\cdots\ms u_n}$.

	If $M=a\sm N$, $M=N+P$ or $M=P+N$ with $\weakSolvable N$: the induction hypothesis
	gives $s\in\TaylorSup N\subseteq\TaylorSup M$ with $\NormalForm s\not=0$
	directly.
\end{proof}

Taylor unsolvable terms are thus exactly those that are not weakly solvable.\footnote{
	If we restrict to non-deterministic λ-terms 
	(\ie{} only add a sum operator to the usual λ-term constructs)
	then we obtain $\unsolvable M$ iff $\NormalForm{\TaylorSup M}=\emptyset$,
	which states the adequacy of $\NormalForm{\TaylorSup{\cdot}}$ for 
	the observational equivalence associated with may-style head normalization.
}
They are moreover stable under $\betaEq$:

\begin{lemma}
	\label{lemma:unsolvable:pbetaRed}
	If $M\pbetaRed M'$ then $\unsolvable M$ iff $\unsolvable{M'}$.
\end{lemma}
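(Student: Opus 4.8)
The plan is to route everything through the normal form of the boolean Taylor support, exploiting the equivalence between Taylor unsolvability and the vanishing of $\NormalForm{\TaylorSup{\cdot}}$, together with the stability of normal forms under $\vpresRed[\normStructure]$ established in Lemma \ref{lemma:vpresRed:NormalForm}. First I would record the key bridge: for any algebraic $λ$-term $N$, one has $\unsolvable N$ iff $\TaylorSup N\in\normTermStructure$ and $\NormalForm{\TaylorSup N}=0$ in $\termVectors[\booleans]$. Indeed, if $\NormalForm s=0$ for every $s\in\TaylorSup N$, then no $s\in\TaylorSup N$ can lie in $\cone e$ for a normal $e$, since $s\generates e$ would force $e\in\support{\NormalForm s}=\emptyset$; hence $\TaylorSup N\inter\cone e=\emptyset$ for all $e\in\normExpressions$ and, $\fv{\TaylorSup N}$ being finite, $\TaylorSup N\in\normTermStructure$. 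Then $\NormalForm{\TaylorSup N}=\Union_{s\in\TaylorSup N}\support{\NormalForm s}=\emptyset$, and this boolean normal form is $0$ exactly when every $\NormalForm s$ is, \ie exactly when $\unsolvable N$.

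Next I would invoke Lemma \ref{lemma:vpresRed:correct:TaylorSup}, which turns $M\pbetaRed M'$ into a reduction $\TaylorSup M\vpresRed[\TaylorSup M]\TaylorSup{M'}$ in $\termVectors[\booleans]$. For the forward implication, assuming $\unsolvable M$ gives $\TaylorSup M\in\normTermStructure$ by the bridge, so that this step is an instance of $\vpresRed[\normStructure]$ (since $\vpresRed[\normStructure]=\Union_{\calE\in\normStructure}\vpresRed[\calE]$ and $\TaylorSup M\in\normStructure$). Lemma \ref{lemma:vpresRed:NormalForm} then yields both $\TaylorSup{M'}\in\normTermStructure$ and $\NormalForm{\TaylorSup{M'}}=\NormalForm{\TaylorSup M}=0$, whence $\unsolvable{M'}$ by the bridge again.

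The reverse implication is where I expect the only real friction, because Lemma \ref{lemma:vpresRed:NormalForm} only propagates normal forms \emph{along} a reduction and requires the \emph{source} vector to be normalizable; I therefore cannot apply it to $M$ directly from a hypothesis about $M'$. The way around this is the invariance of Taylor normalizability: $\unsolvable{M'}$ gives $\TaylorSup{M'}\in\normTermStructure$, \ie $M'$ is Taylor normalizable, so Lemma \ref{lemma:betaRed:Taylor:normalizable} makes $M$ Taylor normalizable as well, \ie $\TaylorSup M\in\normTermStructure$. This re-enables the forward argument: $\TaylorSup M\vpresRed[\normStructure]\TaylorSup{M'}$, so Lemma \ref{lemma:vpresRed:NormalForm} gives $\NormalForm{\TaylorSup M}=\NormalForm{\TaylorSup{M'}}=0$, and the bridge delivers $\unsolvable M$. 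In short, both directions reduce to the single computation $\NormalForm{\TaylorSup M}=\NormalForm{\TaylorSup{M'}}$, the only subtlety being to secure membership of \emph{both} supports in $\normTermStructure$ before applying the normal-form preservation lemma — which is exactly what Lemma \ref{lemma:betaRed:Taylor:normalizable} supplies.
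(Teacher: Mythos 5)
Your proof is correct, and it reaches the same invariant as the paper — the normal form of the Taylor support — but by a noticeably heavier route. The paper's own proof simply defines $\NormalForm{\calE}\eqdef\Union\set{\support{\NormalForm e}\st e\in\calE}$ for an \emph{arbitrary} set $\calE$ of resource expressions and observes (leaving the verification as an exercise) that $\NormalForm{\TaylorSup M}=\NormalForm{\TaylorSup{M'}}$ whenever $M\pbetaRed M'$: the point it emphasizes is that on supports no summability condition is needed at all, since a union of sets always exists, so one never has to know that $\TaylorSup M\in\normTermStructure$. You instead insist on working inside the finiteness structure so as to quote Lemma \ref{lemma:vpresRed:NormalForm} verbatim over $\booleans$; this is sound (your ``bridge'' correctly shows $\unsolvable N$ forces $\TaylorSup N\inter\cone e=\emptyset$ for every normal $e$, hence membership in $\normTermStructure$, because a normal summand of a finite sum with natural coefficients persists under $\resRed$ into the normal form), but it costs you the extra appeal to Lemma \ref{lemma:betaRed:Taylor:normalizable} in the reverse direction, precisely because Lemma \ref{lemma:vpresRed:NormalForm} only propagates forward from a normalizable source. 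What your version buys is that nothing is left as an exercise — every step is discharged by a lemma already proved — and the dependencies you use (Lemmas \ref{lemma:vpresRed:correct:TaylorSup}, \ref{lemma:vpresRed:NormalForm} and \ref{lemma:betaRed:Taylor:normalizable}) all precede this statement, so there is no circularity. What the paper's version buys is brevity and the conceptual remark that scalars and summability are irrelevant here; if you wanted to match it, you would prove the set-level analogue of Lemma \ref{lemma:vpresRed:NormalForm} directly and skip the detour through $\normTermStructure$ entirely.
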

\begin{proof}
	If $\calE\subseteq\resourceExpressions$,
	we write $\NormalForm{\calE}\eqdef\Union{\set{\support{\NormalForm e}\st e\in\calE}}$.
	We leave as an exercise to the reader the proof that 
	$\NormalForm{\TaylorSup{M}}=\NormalForm{\TaylorSup{M'}}$
	as soon as $M\pbetaRed M'$: this is the analogue of Lemma
	\ref{lemma:vpresRed:NormalForm} on Taylor supports 
	(in particular there is no summability condition, and scalars 
	play absolutely no rôle).
\end{proof}

\subsection{Hereditarily determinable terms}

The Böhm tree construction is based on the fact that, for a pure λ-term $M$,
either $M$ is unsolvable, or it reduces to a head normal form; and then the
same holds for the arguments of the head variable.
We will be able to follow a similar construction for the class of \emph{hereditarily
determinable terms}: intuitively, a simple term is in \emph{determinate form} if it is
either unsolvable or a head normal form; and a term is hereditarily
determinable if it reduces to a sum of determinate forms, and this holds 
hereditarily in the arguments of head variables. Formally:

\begin{definition}
	Let $M\in\algebraicTerms$ be an algebraic λ-term.
	We say $M$ is \emph{$d$-determinable} if the judgement $\kConverges d M$
	can be derived inductively from the following rules:
	\begin{center}
		\begin{prooftree}
			\Infer0{\kConverges 0 M}
		\end{prooftree}
		\hfill
		\begin{prooftree}
			\Hypo  {\unsolvable M}
			\Infer1{\kConverges d {M}}
		\end{prooftree}
		\hfill
		\begin{prooftree}
			\Hypo  {\kConverges d S}
			\Infer1{\kConverges d{\labs xS}}
		\end{prooftree}
		\hfill
		\begin{prooftree}
			\Hypo  {\kConverges d{M_1}}
			\Hypo  {\cdots}
			\Hypo  {\kConverges d{M_n}}
			\Infer3{\kConverges {d+1}{\appl{x}{M_1\cdots M_n}}}
		\end{prooftree}
		\hfill
		\begin{prooftree}
			\Hypo  {\kConverges d M}
			\Infer1{\kConverges d {a\sm M}}
		\end{prooftree}
		\hfill
		\begin{prooftree}
			\Hypo  {\kConverges d M}
			\Hypo  {\kConverges d N}
			\Infer2{\kConverges d {M+N}}
		\end{prooftree}
		\hfill
		\begin{prooftree}
			\Hypo  {\kConverges d{\appl{\subst Sx{M_0}}{M_1\cdots M_n}}}
			\Infer1{\kConverges d{\appl{\labs xS}{M_0\,M_1\cdots M_n}}}
		\end{prooftree}
	\end{center}

	We say $M$ is \emph{hereditarily determinable} and write $\wConverges M$
	if $\kConverges d M$ for all $d\in\naturals$.
	We say $M$ is in \emph{$d$-determinate form} and write $\kDetermined d M$
	if $\kConverges dM$ is derivable from the above rules excluding the 
	last one.
\end{definition}
It should be clear that $\converges M$ implies $\wConverges M$.
Observing that $\unsolvable M$ for all unsolvable pure λ-terms (\ie
those pure λ-terms having no head normal form), we moreover obtain
$\wConverges M$ for all $M\in\lambdaTerms$.

We can already prove that hereditarily determinable terms are Taylor normalizable:\footnote{
	Observe that this fails if we replace $\TaylorSup M$ with
	$\support{\TaylorExp{M}}$ in the definition of $\unsolvable M$:
	write $I\eqdef \labs xx$ and 
	consider, \eg, $M=\appl{\labs x{\appl I{\pars{x+(-1)\sm\infty_y}}}}{\infty_y}$
	which head-reduces to $\appl I{\pars{\infty_y+(-1)\sm\infty_y}}\teq\appl I0$,
	with $\unsolvable{\appl I0}$ but of course $\TaylorExp M\not\in\kNormTermStructure 0$.
	The very same problem would occur if we were to consider terms up to $\algEq$.
}
\begin{lemma}
	\label{lemma:kConverges:kNorm}
	If $\kConverges dM$ then $\TaylorSup M\in\kNormTermStructure d$.
	If moreover $\wConverges M$ then $\TaylorSup M\in\normTermStructure$.
\end{lemma}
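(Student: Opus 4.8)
The plan is to prove the first statement by induction on the derivation of $\kConverges dM$, matching each inference rule of the judgement to the corresponding anti-reduction finiteness result established just above it, and then to obtain the second statement immediately from $\normTermStructure=\Inter_{d\in\naturals}\kNormTermStructure d$. First I would dispatch the structural rules using the inductive presentation of $\TaylorSupSym$. For the abstraction rule, $\TaylorSup{\labs xS}=\labs x{\TaylorSup S}$, so the induction hypothesis $\TaylorSup S\in\kNormTermStructure d$ gives $\TaylorSup{\labs xS}\in\kNormTermStructure d$ by Lemma \ref{lemma:normalizable:lambda}. For the head-variable rule, $\TaylorSup{\appl x{M_1\cdots M_n}}=\rappl x{\prom{\TaylorSup{M_1}}\cdots\prom{\TaylorSup{M_n}}}$, and from $\TaylorSup{M_i}\in\kNormTermStructure d$ Corollary \ref{corollary:normalizable:neutral} yields membership in $\kNormTermStructure{d+1}$, exactly matching the increment of the index in that rule. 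The scalar and sum rules are handled by the closure properties of $\kNormTermStructure d$ as a finiteness (hence resource) structure: $\TaylorSup{a\sm M}=\TaylorSup M$ and $\TaylorSup{M+N}=\TaylorSup M\union\TaylorSup N$, the latter being finitary since finiteness structures are closed under finite unions.

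The two cases drawing on the deeper machinery are the unsolvable rule and the head-redex rule. For the unsolvable rule, if $\unsolvable M$ then by definition $\NormalForm s=0$ for every $s\in\TaylorSup M$; hence for every normal resource term $e'$ we have $s\notin\cone{e'}$ (as $e'\in\support{\NormalForm s}=\emptyset$ is impossible), so $\TaylorSup M$ meets no cone $\cone{e'}$ and therefore lies in $\normTermStructure\subseteq\kNormTermStructure d$, as required for arbitrary $d$. The head-redex rule is the conceptual crux: I would first use the commutation of the Taylor support with substitution (Lemma \ref{lemma:taylorsup:subst}) to rewrite
\[
	\TaylorSup{\appl{\subst Sx{M_0}}{M_1\cdots M_n}}
	=\rappl{\lsubst{\TaylorSup S}x{\prom{\TaylorSup{M_0}}}}{\prom{\TaylorSup{M_1}}\cdots\prom{\TaylorSup{M_n}}},
\]
so that the induction hypothesis reads precisely as the premise of Lemma \ref{lemma:normalizable:head} with $\calS=\TaylorSup S$ and $\ms{\calT}_i=\prom{\TaylorSup{M_i}}$. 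Applying that lemma then gives $\rappl{\labs x{\TaylorSup S}}{\prom{\TaylorSup{M_0}}\cdots\prom{\TaylorSup{M_n}}}=\TaylorSup{\appl{\labs xS}{M_0\,M_1\cdots M_n}}\in\kNormTermStructure d$, closing the case.

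It remains to treat the axiom with conclusion $\kConverges 0M$, and then the second statement. For the latter, $\wConverges M$ means $\kConverges dM$ holds for every $d$, so the first part gives $\TaylorSup M\in\kNormTermStructure d$ for all $d$, whence $\TaylorSup M\in\Inter_{d\in\naturals}\kNormTermStructure d=\normTermStructure$. The step I expect to be the main obstacle is exactly the base case: the axiom forces me to show $\TaylorSup M\in\kNormTermStructure 0$ for a completely arbitrary $M$ (for instance a diverging term such as $\infty_x$, which is $0$-determinable but not hereditarily determinable). This is where the grading by monomial depth must be used carefully: $\kNormTermStructure 0$ is the largest member of the decreasing family $\pars{\kNormTermStructure d}_{d\in\naturals}$, and the finiteness constraint it imposes comes only from the cones of normal terms of monomial depth $0$. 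I would therefore pin down precisely which normal terms have monomial depth $0$ and verify that every Taylor support — being of bounded height — meets each such cone finitely; establishing this compatibility of the depth-$0$ structure with the axiom is the delicate point on which the whole induction rests, and I would check it before assembling the structural, algebraic, unsolvable and head-redex cases above.
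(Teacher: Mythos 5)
Your treatment of the inductive cases coincides with the paper's own proof: the same induction on the derivation of $\kConverges dM$, with the abstraction rule discharged by Lemma \ref{lemma:normalizable:lambda}, the head-variable rule by Corollary \ref{corollary:normalizable:neutral}, the head-redex rule by Lemma \ref{lemma:taylorsup:subst} followed by Lemma \ref{lemma:normalizable:head}, the scalar and sum rules by the closure properties of a resource (indeed finiteness) structure, the Taylor-unsolvable rule directly from the definition of $\unsolvable M$, and the second statement from $\normTermStructure=\Inter_{d\in\naturals}\kNormTermStructure d$. Up to the base case, your argument is correct and more explicit than the published one-line proof.

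The axiom $\kConverges 0M$ is indeed the one step you leave open, and the verification you propose there cannot succeed. Since $\depth x=0$ and $x$ is normal, $x\in\kNormTerms 0$, so $\TaylorSup M\in\kNormTermStructure 0$ would require $\TaylorSup M\inter\cone x$ to be finite for \emph{every} $M$. This fails for $M=\infty_x=\appl{\delta_x}{\delta_x}$ with $\delta_x=\labs y[]{x+\appl yy}$ (Example \ref{example:inconsistency}): here $\TaylorSup{\delta_x}=\set{\labs yx}\union\set{\labs y{\rappl y{y^k}}\st k\in\naturals}$, and the terms
\[
t_k\eqdef\rappl{\labs y{\rappl y{y^k}}}{\mset{\labs y{\rappl y{y^{k-1}}},\dotsc,\labs y{\rappl y{y^{0}}},\labs yx}}\in\TaylorSup{\infty_x}
\]
satisfy $t_k\oneStepGenerates t_{k-1}$ (fire the head redex and keep the summand that puts $\labs y{\rappl y{y^{k-1}}}$ in head position) and $t_0\generates\rappl{\labs yx}{\mset{}}\generates x$, so $\set{t_k\st k\in\naturals}$ is an infinite subset of $\TaylorSup{\infty_x}\inter\cone x$. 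Bounded height does not rescue the argument: all the $t_k$ have the same height. The base case holds only if the depth-$0$ layer of the grading is trivial, \ie{} if $\kNormExpressions 0$ is read as empty so that $\kNormTermStructure 0$ constrains nothing beyond the finiteness of free variables; as literally defined (with $\depth x=0$) it is not, which points to an off-by-one in the indexing of $\pars{\kNormStructure d}_{d\in\naturals}$ rather than to a combinatorial fact about Taylor supports awaiting verification. The discrepancy is harmless downstream, since every later use of the lemma needs only $d\ge 1$ or the intersection over all $d$; but your proof must either adopt that convention explicitly or restrict the first statement to $d\ge 1$ — the check you defer would otherwise come back negative.
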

\begin{proof}
	The second fact follows directly from the first one, 
	which we prove by induction on the derivation of $\kConverges dM$: 
	we use the definition of $\unsolvable M$ for the base case, 
	and rely on Lemma \ref{lemma:normalizable:lambda},
	Corollary \ref{corollary:normalizable:neutral},
	Lemma \ref{lemma:normalizable:head},
	or the fact that $\kNormTermStructure d$ is a resource structure
	to establish the induction in the other cases.
\end{proof}

On the other hand, there are Taylor normalizable terms that do not follow this
pattern: intuitively, hereditarily determinable terms rule out any
representation of an infinite sum of head normal forms, 
whereas Taylor normalizability allows to represent an infinite sum of normal forms
as long as their Taylor expansions are pairwise disjoint.
More formally:
\begin{example}
	Write $s_0\eqdef \labs xx$, 
	and $s_{n+1}\eqdef \labs x{s_n}$.
	Let $M_{step}=\labs y{\labs z z}
	+\labs y{\labs z{\labs x{\appl y{y\,z}}}}$ and then
	$M_{loop}=\appl{M_{step}}{M_{step}\,\labs xx}$.
	Write $u=\labs y{\labs z z}$ and
	$v_{n,k}=\labs y{\labs z{\labs x{\rappl y{y^n\,z^k}}}}$ so that 
	$\TaylorSup{M_{step}}=\set{u}\union\set{v_{n,k}\st n,k\in\naturals}$.
	Let $s\in\TaylorSup{M_{loop}}$ be such that $\NormalForm s\not=0$:
	a simple inspection shows that either $s=\rappl u{\mset{}\,\mset{s_0}}$
	and then $\NormalForm s=s_0$,
	or $s=\rappl{v_{n,1}}{\mset{v_{0,1},\dotsc,v_{n-1,1},u}\,\mset{s_0}}$
	and then $\NormalForm s=s_{n+1}$.
	It follows that $M_{loop}$ is Taylor normalizable.
	On the other hand, observe that $\nthLeftReduct 2{M_{loop}}=
	\labs xx+\labs x{M_{loop}}$, which is not $1$-determinate:
	hence no $\nthLeftReduct{2k}{M_{loop}}$ is $1$-determinate and 
	it will follow from Lemma \ref{lemma:kConverges:leftReduct:kDetermined}
	that $M_{loop}$ is not $1$-determinable.
\end{example}
Hence hereditarily determinable terms form a strict subclass of Taylor
normalizable terms, containing both pure λ-terms and normalizable algebraic
λ-terms.
For each level $d\in\naturals$, the class of $d$-determinable terms (resp.\ of
$d$-determinate terms) is moreover stable under left reduction:
\begin{lemma}
	\label{lemma:kConverges:leftReduct}
	If $\kConverges dM$ (resp.\ $\kDetermined dM$)
	then $\kConverges d{\leftReduct M}$ (resp.\ $\kDetermined d{\leftReduct M}$).
\end{lemma}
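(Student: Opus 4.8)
The plan is to argue by induction on the derivation of $\kConverges dM$ (resp.\ $\kDetermined dM$), proving both statements simultaneously and reasoning by cases on the last rule applied. The guiding observation is that, apart from the head-redex rule, every rule concludes a judgement about a term whose outermost constructor ($\labs x\cdot$, a head-variable application, scalar multiplication, or a sum) is exactly the one through which $\leftReductSym$ commutes: by the defining clauses of $\leftReductSym$ on algebraic $\lambda$-terms we have $\leftReduct{\labs xS}=\labs x{\leftReduct S}$, $\leftReduct{\appl {x}{M_1\cdots M_n}}=\appl {x}{\leftReduct{M_1}\cdots\leftReduct{M_n}}$, $\leftReduct{a\sm M}=a\sm\leftReduct M$ and $\leftReduct{M+N}=\leftReduct M+\leftReduct N$. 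Hence in each of these cases the very same rule reconstructs the desired judgement from the induction hypotheses applied to the immediate subderivations, with the level decoration ($d$, or $d+1$ for the head-variable rule) unchanged.

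Two base-style cases remain. The axiom $\kConverges 0M$ transfers verbatim to $\leftReduct M$. For the unsolvability rule I would use that $M\pbetaRed\leftReduct M$ holds for every term, so that Lemma \ref{lemma:unsolvable:pbetaRed} gives $\unsolvable M$ iff $\unsolvable{\leftReduct M}$; thus $\unsolvable{\leftReduct M}$ and the same rule applies. Both of these rules belong to the restricted rule set, so these cases serve equally for the $\kDetermined$ statement.

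The one case that is conceptually the heart of the statement --- though it turns out to be immediate --- is the head-redex rule, where $M=\appl{\labs xS}{M_0\,M_1\cdots M_n}$ is derived from the premise $\kConverges d{\appl{\subst Sx{M_0}}{M_1\cdots M_n}}$. Here I would simply note that $\leftReductSym$ is designed precisely to fire the head redex, so that $\leftReduct M=\appl{\subst Sx{M_0}}{M_1\cdots M_n}$ is literally that premise; the required judgement $\kConverges d{\leftReduct M}$ is therefore nothing but the immediate subderivation, and no appeal to the induction hypothesis is needed. For the determinate-form version the same analysis applies, with the additional remark that the head-redex rule is excluded from derivations of $\kDetermined dM$, so a term of head-redex shape can only be $d$-determinate through the axiom or the unsolvability rule, in both of which the argument above already preserves $\kDetermined d{\leftReduct M}$. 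The only (mild) obstacle is thus bookkeeping: verifying that the level decorations line up in the head-variable case and that no branch of a determinate derivation secretly invokes the excluded rule --- both of which follow directly from the shapes of the rules.
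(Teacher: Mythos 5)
Your proof is correct and follows essentially the same route as the paper's: induction on the derivation, with the constructor cases handled by commutation of $\leftReductSym$ with the syntactic constructs, the unsolvability case by Lemma \ref{lemma:unsolvable:pbetaRed} via $M\pbetaRed\leftReduct M$, and the head-redex case closed immediately because $\leftReduct M$ is literally the premise of the rule. Your remarks on the $\kDetermined$ variant (exclusion of the head-redex rule) and on the level decorations also match the paper's treatment.
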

\begin{proof}
	We give the proof for $d$-determinable terms,
	by induction on the derivation of $\kConverges dM$: 
	the case of $d$-determinate terms is similar, except 
	that we do not consider head redexes.
	
	If $d=0$ the result is direct. Otherwise, write $d=d'+1$.

	If $\unsolvable M$ then $\unsolvable{\leftReduct M}$
	by Lemma \ref{lemma:unsolvable:pbetaRed},
	and we conclude directly.

	If $M=\labs xS$ with $\kConverges dS$: by induction hypothesis
	$\kConverges d{\leftReduct  S}$, and then
	$\kConverges d{\labs x{\leftReduct S}}$.

	If $M=\appl {x}{ M_1\cdots M_n}$ 
	with $\kConverges{d'}{M_i}$ for $i\in\set{1,\dotsc,n}$:
	by induction hypothesis $\kConverges{d'}{\leftReduct{M_i}}$
	for $i\in\set{1,\dotsc,n}$, and then
	$\kConverges{d}{\appl{x}{{\leftReduct {M_1}}\cdots {\leftReduct{M_n}}}}$.

	If $M=\appl{\labs xS}{M_0\,M_1\cdots M_n}$
	with $\kConverges d{\appl {\subst Sx{M_0}}{ M_1\cdots M_n}}$
	then we conclude directly since
	$\leftReduct M=\appl {\subst Sx{M_0}}{ M_1\cdots M_n}$.

	If $M=a\sm N$ with $\kConverges dN$: by induction hypothesis
	$\kConverges d{\leftReduct  N}$, and then 
	$\kConverges d{a\sm\leftReduct  N}$.

	If $M=N+P$ with $\kConverges dN$ and $\kConverges dP$: by induction hypothesis
	$\kConverges d{\leftReduct  N}$ and $\kConverges d{\leftReduct  P}$,
	and then $\kConverges d{\leftReduct  N+\leftReduct  P}$.
\end{proof}

Now we can formally prove that applying the parallel left reduction strategy to
$d$-determinable terms does reach $d$-determinate forms.
\begin{lemma}
	\label{lemma:kConverges:leftReduct:kDetermined}
	If $\kConverges dM$ then there exists $k\in\naturals$
	such that $\kDetermined d{\nthLeftReduct k M}$.
\end{lemma}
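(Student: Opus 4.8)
The plan is to argue by induction on the derivation of $\kConverges dM$, and to reason by cases on the last rule applied. The guiding observation is that the inductive system for $\kDetermined d{\cdot}$ is obtained from that for $\kConverges d{\cdot}$ by deleting exactly one rule, namely the head-redex contraction rule, and that this missing rule is precisely what one step of $\leftReductSym$ performs at the head of a term. Thus ``converting'' a $d$-determinable term into a $d$-determinate one should amount to firing, by iterated left reduction, the head redexes witnessed by uses of that rule, while all other rules commute transparently with $\leftReductSym$.

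For the base cases there is nothing to reduce. If the derivation is the axiom $\kConverges 0M$, then $\kDetermined 0M$ holds by the same axiom, so $k=0$ works. If the last rule is $\unsolvable M\Rightarrow\kConverges dM$, then since this rule is retained in $\kDetermined d{\cdot}$ we again have $\kDetermined dM$ directly, and $k=0$ suffices (no appeal to Lemma~\ref{lemma:unsolvable:pbetaRed} is needed here).

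For the structural and branching cases (abstraction, scalar multiplication, sum, and head-variable application) I would apply the induction hypothesis to each immediate subterm to obtain step counts $k_1,\dots,k_r$, set $k=\max_j k_j$, and then use the \emph{determinate half} of Lemma~\ref{lemma:kConverges:leftReduct} to push each subterm up to exactly $k$ left-reduction steps (this is legitimate precisely because $d$-determinacy is stable under $\leftReductSym$). Since $\leftReductSym$ commutes with $\labs x{\cdot}$, $a\sm\cdot$, $\cdot+\cdot$, and leaves a head variable untouched while reducing its arguments, iterating yields $\nthLeftReduct k{\labs xS}=\labs x{\nthLeftReduct kS}$, $\nthLeftReduct k{a\sm N}=a\sm\nthLeftReduct kN$, $\nthLeftReduct k{N+P}=\nthLeftReduct kN+\nthLeftReduct kP$, and $\nthLeftReduct k{\appl x{M_1\cdots M_n}}=\appl x{\nthLeftReduct k{M_1}\cdots\nthLeftReduct k{M_n}}$. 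Applying the corresponding determinacy rule (at level $d$, or at level $d+1$ in the application case) then closes each case.

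The crux is the head-redex case $M=\appl{\labs xS}{M_0\,M_1\cdots M_n}$, whose derivation ends with the one rule absent from $\kDetermined d{\cdot}$, with premise $\kConverges d{M'}$ where $M'=\appl{\subst Sx{M_0}}{M_1\cdots M_n}$. Here I would note that $\leftReduct M=M'$ holds by the very definition of the left reduct, so the induction hypothesis applied to the subderivation of $\kConverges d{M'}$ provides $k'$ with $\kDetermined d{\nthLeftReduct{k'}{M'}}$, and then $k=k'+1$ works because $\nthLeftReduct{k'+1}M=\nthLeftReduct{k'}{\leftReduct M}=\nthLeftReduct{k'}{M'}$. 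I expect the only real bookkeeping difficulty to be the alignment of the per-subterm step counts in the branching cases, which is exactly what the stability of determinate forms under $\leftReductSym$ is designed to handle; everything else reduces to unfolding the definition of $\leftReductSym$.
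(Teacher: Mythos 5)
Your proposal is correct and follows essentially the same route as the paper's proof: induction on the derivation, taking the maximum of the step counts and using the determinate half of Lemma~\ref{lemma:kConverges:leftReduct} to align them in the branching cases, relying on the commutation of $\leftReductSym$ with the term constructors, and handling the head-redex case by observing $\leftReduct M=\appl{\subst Sx{M_0}}{M_1\cdots M_n}$ and adding one step. No gaps.
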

\begin{proof}
	By induction on the derivation of $\kConverges dM$.

	If $d=0$ or $\unsolvable M$, then $\kDetermined dM$.
	
	If $M=\labs x S$ with $\kConverges{d} S$:
	by induction hypothesis, we have $k\in\naturals$ such that 
	$\kDetermined {d}{\nthLeftReduct k{S}}$
	and then
	$
		\nthLeftReduct k{M}
		=\labs x{\nthLeftReduct k{S}}
	$ hence $\kDetermined {d}{\nthLeftReduct k{M}}$.

	If $M=\appl{x}{M_1\cdots M_n}$ with $d>0$ and $\kConverges{d-1} {M_i}$ for each $i\in\set{1,\dotsc,n}$:
	by induction hypothesis, we obtain $k_i\in\naturals$ such that 
	$\kDetermined{d-1}{\nthLeftReduct{k_i}{M_i}}$
	for each $i\in\set{1,\dotsc,n}$. Let $k=\max\set{k_i\st 1\le i\le n}$:
	by Lemma \ref{lemma:kConverges:leftReduct}, we also have
	$\kDetermined{d-1}{\nthLeftReduct{k}{M_i}}$ for all $i\in\set{1,\dotsc,n}$.
	Since
	$\nthLeftReduct k{M}
	=\nthLeftReduct k{\appl x{M_1\cdots M_n}}
	=\appl x{\nthLeftReduct k{M_1}\cdots\nthLeftReduct k{M_n}}
	$ we conclude that 
	$\kDetermined{d}{\nthLeftReduct k{M}}$.

	If $M=\appl{\labs x{S}}{M_0\,M_1\cdots M_n}$ 
	with $\kConverges{d} {\appl{\subst Sx{M_0}}{M_1\cdots M_n}}$:
	by induction hypothesis, we have $k_0\in\naturals$ such that
	$\kDetermined{d}{\nthLeftReduct{k_0}{\appl{\subst Sx{M_0}}{M_1\cdots M_n}}}$.
	It is then sufficient to observe that 
	$\leftReduct{M} =\appl{\subst Sx{M_0}}{M_1\cdots M_n}$
	and set $k=k_0+1$.

	If $M=a\sm N$ with $\kConverges d N$:
	by induction hypothesis, we have $k\in\naturals$ such that 
	$\kDetermined {d}{\nthLeftReduct k{S}}$
	and then
	$
		\nthLeftReduct k{M}
		=a\sm{\nthLeftReduct k{S}}
	$ hence $\kDetermined {d}{\nthLeftReduct k{M}}$.

	If $M=N+P$ with $\kConverges d N$ and $\kConverges d P$:
	by induction hypothesis, we have $k_0,k_1\in\naturals$ such that 
	$\kDetermined {d}{\nthLeftReduct {k_0}{N}}$
	and
	$\kDetermined {d}{\nthLeftReduct {k_1}{P}}$
	and then, setting $k=\max(k_0,k_1)$,
	$
		\nthLeftReduct k{M}
		=\nthLeftReduct k{N}+\nthLeftReduct k{P}
	$ hence $\kDetermined {d}{\nthLeftReduct k{M}}$ by the previous lemma.
\end{proof}

\subsection{Approximants of the normal form of Taylor expansion}

Now we introduce the analogue of finite Böhm trees for hereditarily
determinable terms:
\begin{definition}
	If $\kConverges d M$ then we define the 
	\emph{normal $d$-approximant} $\kApprox dM$ of $M$ inductively as follows:
	$\kApprox{d}M\eqdef 0$ if $d=0$ or $\unsolvable M$, and 
	\begin{align*}
		\kApprox{d} {\labs xS} &\eqdef \labs x{\kApprox{d} S}
		\\
		\kApprox{d} {\appl {x}{ M_1\cdots M_n}} &\eqdef 
		\appl {x}{\kApprox{d-1}{ M_1}\cdots\kApprox{d-1}{ M_n}}
		\\
		\kApprox{d} {\appl {\labs xS}{ M_0\, M_1\cdots M_n}} &\eqdef 
		\kApprox{d} {\appl {\subst Sx{M_0}}{ M_1\cdots M_n}} 
		\\
		\kApprox{d} {a\sm M} &\eqdef a\sm\kApprox{d} M
		\\
		\kApprox{d} {M+N} &\eqdef \kApprox{d} M+\kApprox{d} N
	\end{align*}
	otherwise.
\end{definition}

First observe that $d$ approximants are stable under parallel left reduction:
\begin{lemma}
	\label{lemma:kApprox:leftReduct}
	If $\kConverges dM$ then $\kApprox dM=\kApprox d{\leftReduct M}$.
\end{lemma}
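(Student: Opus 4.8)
The plan is to argue by induction on the derivation of $\kConverges dM$, exploiting the fact that $\kApprox d{\leftReduct M}$ is guaranteed to be defined: indeed $M\pbetaRed\leftReduct M$, and $\kConverges d{\leftReduct M}$ holds by Lemma~\ref{lemma:kConverges:leftReduct}. The only global ingredient beyond the inductive structure is the transfer of Taylor unsolvability along left reduction, which I would get for free from Lemma~\ref{lemma:unsolvable:pbetaRed} applied to $M\pbetaRed\leftReduct M$: thus $\unsolvable M$ iff $\unsolvable{\leftReduct M}$.

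First I would dispatch the cases in which the approximant collapses to $0$, namely $d=0$ or $\unsolvable M$. In both, $\kApprox dM=0$, and the remark above shows $\kApprox d{\leftReduct M}=0$ as well (using $\unsolvable{\leftReduct M}$ in the second case). This lets me assume, in all remaining cases, that $d>0$ and $M$ is not Taylor unsolvable, so that \emph{both} $\kApprox dM$ and $\kApprox d{\leftReduct M}$ are computed by the structural clauses of the definition. Under these assumptions the last rule of the derivation is necessarily a structural one.

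I would then treat each shape of $M$ in turn. For $M=\labs xS$, $M=a\sm N$ and $M=N+P$, the operator $\leftReductSym$ and the approximant both commute with the corresponding constructor, so the claim reduces to the induction hypothesis on the immediate premises ($\kConverges dS$, resp.\ $\kConverges dN$ and $\kConverges dP$). For a head-variable term $M=\appl x{M_1\cdots M_n}$ with $d=e+1$ and $\kConverges e{M_i}$ for each $i$, I would unfold $\leftReduct M=\appl x{\leftReduct{M_1}\cdots\leftReduct{M_n}}$ and apply the induction hypothesis componentwise to get $\kApprox e{M_i}=\kApprox e{\leftReduct{M_i}}$, which is exactly what the head-variable clause of the approximant requires.

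The decisive — and pleasingly trivial — case is the head redex $M=\appl{\labs xS}{M_0\,M_1\cdots M_n}$: here $\leftReduct M$ is, \emph{by definition}, the contractum $\appl{\subst Sx{M_0}}{M_1\cdots M_n}$, which is also precisely the term into which the head-redex clause of the approximant recurses; hence $\kApprox dM=\kApprox d{\leftReduct M}$ holds on the nose, with no appeal to the induction hypothesis. The main thing to be careful about is therefore not any single case but the bookkeeping of the two-level case split: before applying any structural clause on either side of the equality I must first check whether the term is Taylor unsolvable, and it is exactly the transfer of unsolvability through Lemma~\ref{lemma:unsolvable:pbetaRed} that keeps the $0$-valued and structural clauses in sync on $M$ and $\leftReduct M$.
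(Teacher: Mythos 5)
Your proposal is correct and follows essentially the same route as the paper, which simply notes that $\kApprox d{\leftReduct M}$ is well defined by Lemma~\ref{lemma:kConverges:leftReduct} and then proceeds by a straightforward induction on the derivation of $\kConverges dM$. Your additional care in dispatching the $d=0$ and $\unsolvable M$ cases first (via Lemma~\ref{lemma:unsolvable:pbetaRed}) and in observing that the head-redex case holds definitionally is exactly the bookkeeping the paper leaves implicit.
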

\begin{proof}
	Recall indeed that, by Lemma \ref{lemma:kConverges:leftReduct},
	$\kConverges d{\leftReduct M}$ so that $\kApprox d{\leftReduct M}$
	is well defined.
	The proof is then straightforward, by induction on $\kConverges dM$.
\end{proof}

We do not prove here that $d$-determinable terms and the associated
$d$-approximants are stable under arbitrary reduction:
if $\kConverges dM$ and $M\pbetaRed M'$ then $\kConverges d{M'}$
and then $\kApprox dM=\kApprox d{M'}$.
We believe it is a very solid conjecture, but it would require us to develop
a full standardization argument: in our non-deterministic setting, this is
known to be tedious at best \cite{alberti:phd,leventis:phd}.
Since we introduced hereditarily determinable terms \emph{ad-hoc}, only to be
able to define normal $d$-approximants, we feel that the general 
study of their computational behaviour is not worth the effort.

Our next step is to show that if $M$ is in $d+1$ determinate form, 
then $\kNormTermRestr d{\TaylorExp M}$ depends only on $\kApprox{d+1}M$.
\begin{lemma}
	\label{lemma:kApprox:TaylorExp}
	If $\kDetermined {d+1}M$
	then, for all $s\in\kNormTerms d$,  $\TaylorExp M_s=\TaylorExp{\kApprox{d+1}M}_s$.
\end{lemma}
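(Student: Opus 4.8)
The plan is to argue by induction on the derivation of $\kDetermined{d+1}M$. The structural feature I will exploit is that a determinate derivation is required to avoid the head-redex rule \emph{everywhere}, including in its sub-derivations: hence whenever the term has the form $M=\appl x{M_1\cdots M_n}$, the arguments themselves satisfy $\kDetermined d{M_i}$, which makes the induction hypothesis available for them at the decremented depth. Throughout I will use three elementary facts: for a normal resource term $s$ one has $\NormalForm s=s\neq 0$; by the multilinear-continuity of the syntactic constructors, the coefficient of a term in $\labs x{σ}$, $\rappl{σ}{\ms{τ}}$, etc.\ is read off componentwise in the expected way; and the coefficient $\prom{σ}_{\ms s}$ of a fixed monomial $\ms s$ in a promotion depends only on the coefficients $σ_t$ for $t\in\support{\ms s}$, which is immediate from $\prom{σ}=\sum_{p}\frac1{p!}\sm σ^p$.

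The cases where the last rule is not the head-variable rule are routine. If $\unsolvable M$, then $\kApprox{d+1}M=0$, and I must show $\TaylorExp M_s=0$ for every $s\in\kNormTerms d$; were this coefficient nonzero, $s$ would lie in $\support{\TaylorExp M}\subseteq\TaylorSup M$, so $\NormalForm s=0$ by the definition of $\unsolvable M$, contradicting $\NormalForm s=s\neq 0$ for the normal term $s$. The abstraction, scalar and sum cases follow directly from the matching clauses of Taylor expansion and of the normal approximant together with the induction hypothesis \emph{at the same depth} $d$, using $\depth{\labs x{s'}}=\depth{s'}$ so that $\kNormTerms d$ is stable under stripping the outer abstraction, and observing that both coefficients vanish on terms whose head does not match.

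The heart of the proof is the head-variable case $M=\appl x{M_1\cdots M_n}$ with $\kDetermined d{M_i}$ for each $i$, where $\TaylorExp M=\rappl x{\prom{\TaylorExp{M_1}}\cdots\prom{\TaylorExp{M_n}}}$ and $\kApprox{d+1}M=\appl x{\kApprox d{M_1}\cdots\kApprox d{M_n}}$. For $s\in\kNormTerms d$ not of the shape $\rappl x{\ms s_1\cdots\ms s_n}$ both coefficients are $0$; otherwise $\TaylorExp M_s=\prod_{i=1}^n \prom{\TaylorExp{M_i}}_{\ms s_i}$ and $\TaylorExp{\kApprox{d+1}M}_s=\prod_{i=1}^n \prom{\TaylorExp{\kApprox d{M_i}}}_{\ms s_i}$. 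Since $s$ is normal of monomial depth at most $d$, each $\ms s_i$ is a monomial of normal terms with $\support{\ms s_i}\subseteq\kNormTerms{d-1}$. When $d\geq 1$, the induction hypothesis applied to $\kDetermined d{M_i}$ gives $\TaylorExp{M_i}_t=\TaylorExp{\kApprox d{M_i}}_t$ for all $t\in\kNormTerms{d-1}$, and the locality of promotion coefficients then yields $\prom{\TaylorExp{M_i}}_{\ms s_i}=\prom{\TaylorExp{\kApprox d{M_i}}}_{\ms s_i}$, so the two products coincide. When $d=0$, a depth-$0$ head-variable term forces every $\ms s_i=\mset{}$, whose contribution $\prom{σ}_{\mset{}}=1$ is independent of $σ$, so the claim holds without appeal to the induction hypothesis.

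I expect the main obstacle to be exactly the bookkeeping of this last case: first, verifying that determinacy genuinely descends to the arguments, so that the induction hypothesis applies — this is essential, since a determinable-but-not-determinate argument (one hiding a head redex) already breaks the coefficient matching, as a small computation shows; and second, aligning the monomial-depth decrement $d\mapsto d-1$ with the drop in the index of the judgment, so that the inductive statement for the $M_i$ is precisely an instance of the lemma. The only other ingredient, the locality of $\prom{\cdot}$-coefficients, is an elementary consequence of the definition of promotion.
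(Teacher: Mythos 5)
Your proof is correct and follows essentially the same route as the paper's: induction on the derivation of $\kDetermined{d+1}M$, with the head-variable case resolved by descending determinacy to the arguments, the monomial-depth decrement, and the fact that $\prom{σ}_{\ms s}$ depends only on the coefficients $σ_t$ for $t\in\support{\ms s}$. The only (harmless) divergence is the $d=0$ sub-case, where the paper deduces $n=0$ from $s\in\kNormTerms 0$ while you allow empty bags and use $\prom{σ}_{\mset{}}=1$ — both yield the required equality of coefficients.
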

\begin{proof}
	By induction on the derivation of $\kDetermined{d+1}M$,
	writing $M'=\kApprox{d+1}M$.

	If $\unsolvable M$ then $M'=0$ and
	$\TaylorExp M_s=0$ for all $s\in\normTerms$,
	hence the result holds.

	If $M=\labs xT$ with $\kDetermined{d+1}T$ then 
	$M'=\labs x{\kApprox{d+1}T}$
	and we can assume $s=\labs xt$:
	otherwise $\TaylorExp M_s=0=\TaylorExp{M'}_s$.
	Then $t\in\kNormTerms d$ and by induction hypothesis
	$\TaylorExp M_s
	=\TaylorExp T_t
	=\TaylorExp{\kApprox{d+1}T}_t
	=\TaylorExp{M'}_s$.

	If $M=\appl x{N_1\cdots N_n}$ with $\kDetermined{d}{N_i}$ 
	for all $i\in\set{1,\dotsc,n}$ then 
	$M'=\appl x{N'_1\cdots N'_n}$ with $N'_i=\kApprox{d}{N_i}$
	and we can assume $s=\rappl{x}{\ms t_1\cdots\ms t_n}$:
	otherwise $\TaylorExp M_s=0=\TaylorExp{M'}_s$.
	If $d=0$, $s\in\kNormTerms 0$, hence $n=0$ and then $M=x=M'$.
	Otherwise write $d=d'+1$. For each $i\in\set{1,\dotsc,n}$,
	$\support{\ms t_i}\subseteq\kNormTerms{d'}$.
	By induction hypothesis we obtain 
	$\TaylorExp{N_i}_{u}=\TaylorExp{N'_i}_{u}$
	for all $u\in\support{\ms t_i}$: it follows that 
	$\prom{\TaylorExp{N_i}}_{\ms t_i}=\prom{\TaylorExp{N'_i}}_{\ms t_i}$
	by the definition of promotion. Then 
	$\TaylorExp{M}_s
	=\prod_{i=1}^n\TaylorExp{N_i}_{\ms t_i}
	=\prod_{i=1}^n\TaylorExp{N'_i}_{\ms t_i}
	=\TaylorExp{M'}_s
	$.

	If $M=a\sm N$ with $\kDetermined{d+1}N$ then 
	$\TaylorExp M_s
	=a\sm\TaylorExp N_s
	=a\sm\TaylorExp{\kApprox{d+1}N}_s
	=\TaylorExp{M'}_s$ by induction hypothesis.

	Similarly, if $M=N+P$ with $\kDetermined{d+1}N$ and $\kDetermined{d+1}P$ then
	$\TaylorExp M_s
	=\TaylorExp N_s+\TaylorExp P_s
	=\TaylorExp{\kApprox{d+1}N}_s+\TaylorExp{\kApprox{d+1}P}_s
	=\TaylorExp{M'}_s$
	by induction hypothesis.
\end{proof}

We obtain our final theorem:
\begin{theorem}
	\label{theorem:taylor:determinable}
	For all hereditarily determinable term $M$,
	the sequence
	$\pars{\TaylorExp{\kApprox dM}}_{d\in\naturals}$
	of normal vectors 
	converges to $\NormalForm{\TaylorExp{M}}$ in $\vectors{\normTerms}$.
\end{theorem}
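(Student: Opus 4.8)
The plan is to unfold the definition of convergence in $\vectors{\normTerms}$ (the product topology recalled before Theorem \ref{theorem:NormalForm:limit}) and reduce the statement to a pointwise, eventual-equality claim on coefficients. Since $M$ is hereditarily determinable, $\kConverges dM$ holds for every $d$, so each $\kApprox dM$ is defined and is a normal algebraic $\lambda$-term; hence $\support{\TaylorExp{\kApprox dM}}\subseteq\TaylorSup{\kApprox dM}\subseteq\normTerms$ and the approximants genuinely live in $\vectors{\normTerms}$. Moreover $\TaylorSup M\in\normTermStructure$ by Lemma \ref{lemma:kConverges:kNorm}, so $M$ is Taylor normalizable and $\NormalForm{\TaylorExp M}$ is well defined. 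Thus it suffices to fix a normal resource term $s'\in\normTerms$, set $d=\depth{s'}$, and prove that $\TaylorExp{\kApprox{d''}M}_{s'}=\NormalForm{\TaylorExp M}_{s'}$ for every $d''\ge d+1$.

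The key idea is to route both sides of this equality through the parallel left reduction strategy $\nthLeftReduct kM$. On the \emph{normalization} side, Theorem \ref{theorem:Taylor:normalizable:limit} (applicable because $M$ is Taylor normalizable) tells us that $\pars{\normTermRestr{\TaylorExp{\nthLeftReduct kM}}}_{k}$ converges to $\NormalForm{\TaylorExp M}$; reading off the $s'$-component and using that $s'$ is normal, so that $\normTermRestr{\TaylorExp{\nthLeftReduct kM}}_{s'}=\TaylorExp{\nthLeftReduct kM}_{s'}$, this yields a threshold $k_0$ such that $\TaylorExp{\nthLeftReduct kM}_{s'}=\NormalForm{\TaylorExp M}_{s'}$ for all $k\ge k_0$. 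On the \emph{approximant} side, Lemma \ref{lemma:kConverges:leftReduct:kDetermined} provides some $k_1$ with $\kDetermined{d''}{\nthLeftReduct{k_1}M}$, and Lemma \ref{lemma:kConverges:leftReduct} shows determinacy persists under further left reduction, so $\kDetermined{d''}{\nthLeftReduct kM}$ holds for all $k\ge k_1$; iterating Lemma \ref{lemma:kApprox:leftReduct} gives $\kApprox{d''}M=\kApprox{d''}{\nthLeftReduct kM}$.

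I would then choose $k=\max(k_0,k_1)$ so that both conditions hold at once. Applying Lemma \ref{lemma:kApprox:TaylorExp} to the $d''$-determinate term $\nthLeftReduct kM$, with $e=d''-1\ge d=\depth{s'}$ so that $s'\in\kNormTerms{d}\subseteq\kNormTerms{e}$ by monotonicity of $\kNormTerms{\cdot}$, gives $\TaylorExp{\nthLeftReduct kM}_{s'}=\TaylorExp{\kApprox{d''}{\nthLeftReduct kM}}_{s'}=\TaylorExp{\kApprox{d''}M}_{s'}$. Chaining this with the normalization side yields $\TaylorExp{\kApprox{d''}M}_{s'}=\NormalForm{\TaylorExp M}_{s'}$, as required, for every $d''\ge\depth{s'}+1$; this is exactly eventual componentwise equality, i.e.\ convergence.

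The bulk of the difficulty has been discharged by the preceding lemmas, so the main obstacle here is the reconciliation of the two independent lower bounds on $k$: the threshold $k_0$ governing when the left-reduct's coefficient has reached its normal-form value, and the threshold $k_1$ governing when the left-reduct has become $d''$-determinate. Taking their maximum is legitimate only because determinacy is \emph{stable} under further left reduction (Lemma \ref{lemma:kConverges:leftReduct}) and because $\kApprox{d''}{\cdot}$ is invariant along the left-reduction sequence (Lemma \ref{lemma:kApprox:leftReduct}); I would pay particular attention to the index bookkeeping linking $\depth{s'}$, the determinacy level $d''$, and the hypothesis $s\in\kNormTerms{d''-1}$ demanded by Lemma \ref{lemma:kApprox:TaylorExp}, since an off-by-one there is the one place the argument could silently break.
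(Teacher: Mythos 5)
Your proof is correct and follows essentially the same route as the paper's: fix a normal resource term, pass through the parallel left reduction strategy, use Lemmas \ref{lemma:kConverges:leftReduct}, \ref{lemma:kConverges:leftReduct:kDetermined} and \ref{lemma:kApprox:leftReduct} to reach a determinate form with unchanged approximant, then conclude by Lemma \ref{lemma:kApprox:TaylorExp} and Theorem \ref{theorem:Taylor:normalizable:limit}. Your treatment is if anything slightly more explicit than the paper's (spelling out the appeal to Lemma \ref{lemma:kConverges:kNorm} for Taylor normalizability and the reconciliation of the two thresholds on $k$), and the index bookkeeping around $\depth{s'}$ and $d''-1$ checks out.
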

\begin{proof}
	First observe that each
	$\TaylorExp{\kApprox dM}\in\vectors{\normTerms}$, because 
	$\kApprox dM$ is in normal form. Let $s\in\normTerms$
	and fix $d\ge\depth{s}+1$:
	by Lemmas \ref{lemma:kConverges:leftReduct},
	\ref{lemma:kConverges:leftReduct:kDetermined} and 
	\ref{lemma:kApprox:leftReduct},
	there exists $k_0\in\naturals$ such that 
	$\kDetermined {d}{\nthLeftReduct kM}$ and $\kApprox{d}{\nthLeftReduct kM}=\kApprox{d}{M}$
	whenever $k\ge k_0$.
	By Lemma \ref{lemma:kApprox:TaylorExp}, we moreover have
	$\TaylorExp{\kApprox{d}{M}}_s
	=\TaylorExp{\kApprox{d}{\nthLeftReduct kM}}_s
	=\TaylorExp{\nthLeftReduct kM}_s$.
	It follows that $\TaylorExp{\kApprox{d}{M}}_s=\NormalForm{\TaylorExp{M}}_s$,
	by Theorem \ref{theorem:Taylor:normalizable:limit}.
	Since this holds for any $d\ge\depth{s}+1$, we have just proved that 
	$\pars{\TaylorExp{\kApprox{d}{M}}_s}_{d\in\naturals}$ converges to $\NormalForm{\TaylorExp{M}}_s$,
	for the discrete topology.
\end{proof}

In the case of pure λ-terms, by identifying $0$ with the unsolvable Böhm tree
$\bot$, it should be clear that the sequence $\pars{\kApprox
dM}_{d\in\naturals}$ is nothing but the increasing sequence of finite
approximants of $\BohmTree M$: Theorem \ref{theorem:taylor:determinable} is thus 
a proper generalization of Theorem \ref{theorem:taylor:pure}
of which it provides a new proof.

\bibliographystyle{alpha}
\bibliography{bib.bib}

\end{document}